\pgfplotsset{compat=newest}
\numberwithin{equation}{section}
\newcommand{\raisemath}[1]{\mathpalette{\raisem@th{#1}}}
\newcommand{\raisem@th}[3]{\raisebox{#1}{$#2#3$}}
\newcommand{\CC}{\mathbb{C}}
\newcommand{\NN}{\mathbb{N}}
\newcommand{\RR}{\mathbb{R}}
\newcommand{\HH}{\mathcal{H}}
\newcommand{\KK}{\mathcal{K}}
\newcommand{\MM}{\mathcal{M}}
\newcommand{\Af}{\mathscr{A}}
\newcommand{\bs}{\boldsymbol{s}}
\newcommand{\bt}{\boldsymbol{t}}
\newcommand{\bx}{\boldsymbol{x}}
\newcommand{\won}{\text{\textwon}}
\DeclareMathOperator{\Bor}{Bor}
\DeclareMathOperator{\id}{id}
\DeclareMathOperator{\EOM}{EOM}
\DeclareMathOperator{\Intr}{Int}
\DeclareMathOperator{\Lin}{Lin}
\DeclareMathOperator*{\slim}{s-lim}
\DeclareMathOperator{\Weyl}{Weyl}
\DeclareMathOperator{\Ad}{Ad}
\DeclareMathOperator{\Aut}{Aut}
\newcommand{\one}{\mathds{1}}
\newcommand{\Sol}{\mathrm{Sol}}
\newcommand{\Floc}{\mathcal{F}_{\textup{loc}}}
\newcommand{\Fsloc}{\mathcal{F}_{\textup{sloc}}}
\newcommand{\SFloc}{\mathcal{SF}_{\textup{loc}}}
\newcommand{\SFsloc}{\mathcal{SF}_{\textup{sloc}}}
\renewcommand{\sc}{\textnormal{sc}}
\newcommand{\rel}{\textup{rel}}
\newcommand{\diff}{\textup{d}}
\newcommand{\nml}{\textup{n}}
\newcommand{\tang}{\textup{t}}
\newcommand{\ol}[1]{\overline{#1}}
\newcommand{\udl}[1]{\underline{#1}}
\newcommand{\ipc}[2]{\left\langle\,#1\,,\,#2\,\right\rangle} 
\DeclareMathOperator{\supp}{supp}
\DeclareMathOperator{\Data}{Data}
\newcommand{\PJ}{\textnormal{PJ}}
\newtheorem{theorem}{Theorem}[section]
\newtheorem{corollary}[theorem]{Corollary}
\newtheorem{proposition}[theorem]{Proposition}
\newtheorem{definition}[theorem]{Definition}
\newtheorem{lemma}[theorem]{Lemma}
\newtheorem{rem}[theorem]{Remark}
\definecolor{orcidlogocol}{HTML}{A6CE39}
\tikzset{
  orcidlogo/.pic={
    \fill[orcidlogocol] svg{M256,128c0,70.7-57.3,128-128,128C57.3,256,0,198.7,0,128C0,57.3,57.3,0,128,0C198.7,0,256,57.3,256,128z};
    \fill[white] svg{M86.3,186.2H70.9V79.1h15.4v48.4V186.2z}
                 svg{M108.9,79.1h41.6c39.6,0,57,28.3,57,53.6c0,27.5-21.5,53.6-56.8,53.6h-41.8V79.1z M124.3,172.4h24.5c34.9,0,42.9-26.5,42.9-39.7c0-21.5-13.7-39.7-43.7-39.7h-23.7V172.4z}
                 svg{M88.7,56.8c0,5.5-4.5,10.1-10.1,10.1c-5.6,0-10.1-4.6-10.1-10.1c0-5.6,4.5-10.1,10.1-10.1C84.2,46.7,88.7,51.3,88.7,56.8z};
  }
}
\newcommand\orcidicon[1]{\href{https://orcid.org/#1}{\mbox{\scalerel*{
\begin{tikzpicture}[yscale=-1,transform shape]
\pic{orcidlogo};
\end{tikzpicture}
}{|}}}}
\let\restriction\relax
\newcommand{\restriction}{\!\upharpoonright}
\title{Semi-local observables, edge modes and quantum reference frames in quantum electromagnetism: an algebraic approach}
\author{Christopher J. Fewster{\orcidicon{0000-0001-8915-5321}}${}^{1,2}$\thanks{\tt chris.fewster@york.ac.uk},
Daan W. Janssen\orcidicon{0000-0001-7809-5044}${}^{1}$\thanks{\tt daan.janssen@york.ac.uk}, and Kasia Rejzner\orcidicon{0000-0001-7101-5806}${}^{1,2}$\thanks{\tt kasia.rejzner@york.ac.uk} \\[6pt]  
	\small ${}^{1}$ Department of Mathematics, Ian Wand Building, Deramore Lane,
University of York, \\ \small York YO10 5GH, United Kingdom.\\[4pt]
	\small ${}^{2}$ York Centre for Quantum Technologies, University of York, York YO10 5DD, United Kingdom.}
\date{\today}
\begin{document}
\maketitle

\begin{abstract}
 Boundaries and corners of spacetime play a vital role in understanding physical concepts including entanglement entropy, the infrared problem in QFT and quantum gravity. Standard local quantum field theory struggles to accommodate such boundary-sensitive observables. In this paper we develop an algebraic framework for \emph{semi-local quantum electromagnetism} on finite Cauchy lenses: a class of compact spacetimes with boundaries and corner. At the classical level, we establish a decomposition of the reduced covariant phase space into bulk closed-loop and surface sectors and demonstrate how the covariant phase space approach relates to the Peierls bracket construction commonly used in perturbative algebraic quantum field theory. Upon quantisation, we obtain a Weyl $C^{*}$-algebra of semi-local observables transforming non-trivially under large gauge transformations (those with non-trivial boundary contribution). To recover gauge invariance, we invoke the notion of \emph{quantum reference frames} (QRFs) and construct a relativisation map, where we treat auxiliary surface degrees of freedom as QRFs for the large gauge transformations. The relativisation map is constructed directly on the level of $C^{*}$-algebras, making our construction state-independent. The QRF viewpoint on semi-local observables provides new tools for understanding gauge theories on manifolds with boundary, including the problem of gluing theories on Cauchy lenses with common boundaries.
 \end{abstract}

{
  \hypersetup{linkcolor=black!30!blue}
  \tableofcontents
}
\section{Introduction}
The interplay between gauge theories and the presence of boundaries has become an increasingly prominent theme in theoretical physics. From the study of holographic dualities and infrared phenomena to new approaches to quantum gravity, boundaries and corners of spacetime manifolds have emerged not merely as geometric complications, but as rich sources of physical structure matter systems to (quantum) gravity \cite{wenTheoryEdgeStates1992,balachandranEdgeStatesGravity1996,donnellyLocalSubsystemsGauge2016,mertensMinimalFactorizationChernSimons2025a}. 
The formulation of gauge theories in the presence of boundaries and corners has a long history \cite{reggeRoleSurfaceIntegrals1974,sniatyckiBoundaryConditionsYangMills1991,cattaneoClassicalBVTheories2014,strohmaierClassicalQuantumPhoton2021,rielloHamiltonianGaugeTheory2024a}.

In gauge theories, and in particular in electromagnetism, the presence of boundaries introduces novel degrees of freedom—\emph{edge modes} or \emph{surface degrees of freedom}—which are intricately linked to large gauge transformations, i.e., those gauge symmetries that do not vanish at the boundary. These structures play a crucial role in diverse areas such as condensed matter physics, topological phases, and quantum gravity.
One can also consider boundaries at infinity, replacing boundary degrees of freedom by asymptotic degrees of freedom \cite{brownCentralChargesCanonical1986,herdegenLongrangeEffectsAsymptotic1995,giuliniAsymptoticSymmetryGroups1995,ashtekarSymplecticGeometryRadiative1997,barnichCovariantTheoryAsymptotic2002,stromingerAsymptoticSymmetriesYangMills2014,stromingerLecturesInfraredStructure2018,rejznerAsymptoticSymmetriesBVBFV2021,araujo-regadoSoftEdgesMany2024,borsboomGlobalGaugeSymmetries2025}. In quantum field theory (QFT) these issues are related to the infrared problem \cite{kulishAsymptoticConditionsInfrared1970,frohlichInfraredProblemSpontaneous1979,herdegenSemidirectProductCCR1998}, superselection sectors \cite{buchholzLocalityStructureParticle1982,buchholzPhysicalStateSpace1982} and memory effects \cite{staruszkiewiczGAUGEINVARIANTSURFACE1981,satishchandranAsymptoticBehaviorMassless2019,rielloNullHamiltonianYangMills2025}.

Despite their physical importance, the mathematical formalisation of edge modes and related observables within the rigorous QFT remains subtle. 
Notably, algebraic QFT (AQFT)\cite{haagLocalQuantumPhysics1996,buchholzUniversalCalgebraElectromagnetic2016}, focused on local algebras generated by gauge-invariant observables with compact support, fails to account for observables associated with electric fluxes and surface-related quantities that naturally emerge in gauge theories defined on spacetimes with boundary. These \emph{semi-local observables}, which straddle the boundary between purely local and global quantities, demand a novel approach, which we term \emph{semi-local quantum physics}.

This paper offers a comprehensive algebraic approach to quantum electromagnetism in spacetimes with boundaries and corners, providing a rigorous construction of semi-local observables and clarifying their transformation properties under large gauge transformations. As a concrete setting, we consider pure electromagnetism with an external conserved current on a class of Lorentzian manifolds that we term \emph{finite Cauchy lenses} (see Def.~\ref{def:cauchy_lens}). These are compact regions $\ol{N}$ of spacetime bounded between two spacelike Cauchy surfaces whose boundaries intersect at a codimension-2 corner $\angle\ol{N}$. This geometric setting serves both to model realistic physical situations and to sharpen our understanding of gauge theories on manifolds with boundary. An example is given by the subset
of Minkowski space given by
\begin{equation}\label{eq:lens_example}
	|t|\le b-\sqrt{a^2+\|x\|^2}, \qquad \|x\|^2\le b^2-a^2
\end{equation}
in standard inertial coordinates $(t,x)$, where $b>a>0$, for which the corner is a sphere of radius $\sqrt{b^2-a^2}$ in the $t=0$ hypersurface.
A cross-section of this subset is shown in Figure~\ref{fig:FinCauchLens0}.
In general, finite Cauchy lenses may have multiple connected components, and the components may have nontrivial topology. 
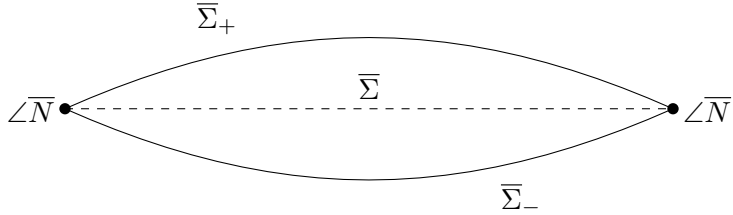
\begin{figure}
	\centering
	\begin{tikzpicture}[auto, scale=4, node distance=2cm,domain=0:2]
		\draw plot (\x,{sqrt(5)-sqrt(4+(\x-1)^2)});
		\draw plot (\x,{sqrt(4+(\x-1)^2)-sqrt(5)});
		\draw (1.5,-.3) node {$\overline{\Sigma}_-$};
		\draw (.5,.3) node {$\overline{\Sigma}_+$};
		\draw (1,0) node[above]{$\overline{\Sigma}$};
		\node at (0,0)[circle,fill,inner sep=1.5pt]{};
		\node at (2,0)[circle,fill,inner sep=1.5pt]{};
		\node at (0,0)[left]{$\angle \overline{N}$};
		\node at (2,0)[right]{$\angle \overline{N}$};
		\draw[dashed] (0,0)--(2,0);
	\end{tikzpicture}
	\caption{Cross-section of a finite Cauchy lens $\ol{N}$.}
	\label{fig:FinCauchLens0}
\end{figure}

At the classical level, we develop the covariant phase space formalism for electromagnetism on finite Cauchy lenses. We follow the approaches of \cite{kijowskiSymplecticFrameworkField1979,zuckermanActionPrinciplesGlobal1987,ashtekarPhaseSpaceFormulation1987,crnkovicCovariantDescriptionCanonical1987a,crnkovicSymplecticGeometryCovariant1988,leeLocalSymmetriesConstraints1990,harlowCovariantPhaseSpace2020,girelliNoetherCovariantStudy2025}, with similar ideas previously presented in \cite{peierlsCommutationLawsRelativistic1952,souriauStructureSystemesDynamiques1970,chernoffPropertiesInfiniteDimensional1974}. We also discuss how the covariant phase space formalism is related to the pAQFT approach \cite{rejznerPerturbativeAlgebraicQuantum2018}, where the appropriate space of observables is equipped with the Peierls bracket.

Pure electromagnetism on a finite Cauchy lens $\ol{N}$ with external current $J$ has configurations given by smooth $1$-forms $A$. The Maxwell equations are 
\begin{equation}\label{eq:Maxwell}
	-\diff^*\diff A = J
\end{equation}
where $\diff^*$ is the codifferential (see Sec.~\ref{sec:set-up}). Although the Maxwell
equations~\eqref{eq:Maxwell} are invariant under $A\mapsto A+\diff\Lambda$ for any smooth function $\Lambda$, the covariant phase space formalism for gauge theories on spacetimes with boundaries and corners \cite{harlowCovariantPhaseSpace2020} does not regard all gauge transformations as redundancies, but only those \emph{bulk gauge transformations} for which, up to a locally constant function, $\Lambda=0$ at the corner. It is these that form the radical of the pre-symplectic form on solutions to the homogeneous equations (see Sec.~\ref{sec:cov_phas}). The on-shell phase space consists of bulk gauge orbits of solutions and defines a symplectic manifold $\Sol^J_\mathscr{G}(\ol{N})$. The tangent space to an orbit $[A]$ may be identified with the bulk gauge orbits of homogeneous solutions. It therefore contains nontrivial \emph{large gauge directions} which are bulk gauge equivalence classes of arbitrary exact $1$-forms.

We show in Prop.~\ref{prop:ext_ini_dat} that the covariant phase space $\Sol^J_\mathscr{G}(\ol{N})$ is in one-to-one correspondence with initial data on well-behaved Cauchy surfaces. The Hodge decomposition applied to initial data leads to a decomposition 
\begin{equation}\label{eq:SolJG_decomp}
	\Sol^J_{\mathscr{G}}(\ol{N})\cong V^C(\ol{\Sigma})\times V^S(\ol{\Sigma}),
\end{equation}
of the classical phase space
into two symplectic vector spaces, namely, the \emph{closed loop} space $V^C(\ol{\Sigma})$ and the \emph{surface} configuration space $V^S(\ol{\Sigma})$, the latter being sensitive to boundary data and encapsulating the edge mode content (see Prop.~\ref{prop:sympl_decomp}).  
Concretely, $V^C(\ol{\Sigma})$ is a phase space of coclosed data with vanishing normal component at $\partial\ol{\Sigma}=\angle\ol{N}$ while
$V^S(\ol{\Sigma})$ is a phase space built from a subspace of smooth functions on $\angle\ol{N}$.

Classical observables are smooth functionals on $\Sol^J_{\mathscr{G}}(\ol{N})$. A natural class of observables, studied in Sec.~\ref{sec:class_loc_obs}, is the Poisson algebra of \emph{local observables} $\Floc(\ol{N})$ consisting of
polynomials in smearings of the vector potential against coclosed test $1$-forms that are compactly supported in the interior of $\ol{N}$. Observables of this type appear in treatments of electromagnetism on globally hyperbolic spacetimes without boundaries \cite{sandersElectromagnetismLocalCovariance2014}. They are invariant under large gauge transformations. However, a larger class of observables may be defined by a symplectic smearing procedure using arbitrary solutions to the Maxwell equations on $\ol{N}$, leading to a Poisson algebra of \emph{semi-local observables} $\Fsloc(\ol{N})$ discussed in Sec.~\ref{sec:class_semi-loc}. The  semi-local observables contain \emph{edge mode observables}, i.e., observables sensitive to large gauge transformations. 
Choosing a Cauchy surface $\ol{\Sigma}\subset\ol{N}$, the decomposition~\eqref{eq:SolJG_decomp} allows the identification of two mutually Poisson commuting Poisson subalgebras 
$\Fsloc^C(\ol{\Sigma})$ and $\Fsloc^S(\ol{\Sigma})$ that together generate
$\Fsloc(\ol{N})$ (Theorem~\ref{thm:class_obs_decomp}). These algebras are generated by observables that 
smear the initial data of $\Sol_{\mathscr{G}}^J(\ol{N})$ at $\ol{\Sigma}$ in different ways. To be specific (and with $J=0$ for simplicity) 
let $\mathbf{A}$ and $\mathbf{E}$ denote the vector potential and electric field at $\ol{\Sigma}$ of a field configuration in $\Sol^J_{\mathscr{G}}(\ol{N})$.
Then $\Fsloc^C(\ol{\Sigma})$ is generated by observables $\int_{\ol{\Sigma}} F\wedge \star \mathbf{A}$ and $\int_{\ol{\Sigma}} F\wedge \star \mathbf{E}$ 
for arbitrary coclosed $1$-form $F$ on $\ol{\Sigma}$ with vanishing normal component at $\partial\ol{\Sigma}$. Intuitively, one may regard such
observables as smeared out versions of 
closed Wilson loop observables and their normal derivative at $\ol{\Sigma}$,
\begin{equation}
	\oint_\gamma \mathbf{A}_i\diff\gamma^i,\qquad \oint_\gamma \mathbf{E}_i\diff\gamma^i,
\end{equation}
where $\gamma:S^1\to \ol{\Sigma}$ is a closed loop in $\ol{\Sigma}$. The analogy is that the closed loop observables are automatically gauge invariant under any gauge transformation, which holds for the surface integral of $\mathbf{A}$ by virtue of $F$ being coclosed. On the other hand,  
$\Fsloc^S(\ol{\Sigma})$ is generated by observables that are smearings of electric flux through $\ol{\Sigma}$ and integrals $\int_{\ol{\Sigma}} F\wedge\star \mathbf{A}$ where $F$ is exact and coclosed, with prescribed normal component at $\partial\ol{\Sigma}$. Such an $F$ is norm-minimizing among co-closed forms on $\ol{\Sigma}$ with fixed boundary normal component, and as such the latter observables have an intuitive interpretation as  
smeared analogues of geodesic (i.e.~minimal distance) Wilson line observables
\begin{equation}
	\int_\gamma \mathbf{A}_i\diff\gamma^i 
\end{equation}
where $\gamma$ is a geodesic with endpoints in $\partial\ol{\Sigma}$. For these reasons, we refer to elements of $\Fsloc^C(\ol{\Sigma})$ and $\Fsloc^C(\ol{\Sigma})$ as \emph{closed loop observables} and \emph{surface observables} respectively.

Quantisation in Section~\ref{sec:quant_alg} proceeds via the Weyl $C^*$-algebra construction applied to semi-local observables, generalizing the approach of \cite{dimockQuantizedElectromagneticField1992}, which only covers the local case, see Prop.~\ref{prop:CCR_BPI}. This algebra captures both bulk and boundary contributions and admits a decomposition mirroring the classical one into closed loop and surface observables at a Cauchy surface $\ol{\Sigma}\subset\ol{N}$,
\begin{equation}
	\label{eq:Afdecomp}
	\Af(\ol{N})\cong \Af^C(\ol{\Sigma})\otimes \Af^S(\ol{\Sigma})=\Af^{C\oplus S}(\ol{\Sigma}),
\end{equation}
While local observables  $\Af_{\textup{loc}}(\ol{N})\subset \Af(\ol{N})$ are invariant under all gauge transformations, semi-local observables transform non-trivially under large gauge transformations. This motivates a deeper investigation into their structure and their role in the quantum theory.

As a $C^*$-algebra, $\Af(\ol{N})$ automatically admits states and Hilbert space representations; however, it is not guaranteed that they have good physical properties. 
To begin an investigation of this issue, we prove that $\Af(\ol{N})$ admits quasifree states and corresponding Fock space representations. In particular, using the decomposition~\eqref{eq:Afdecomp}, we construct an \emph{$L^2$-representation} of $\Af(\ol{N})$ at a Cauchy surface $\ol{\Sigma}\subset\ol{N}$, see Thm.~\ref{thm:L2rep}. 
Note, however, that the associated Fock vacuum states are not expected to have satisfy the important \emph{Hadamard} condition that would allow for the construction of Wick ordered observables by point-splitting. As will be shown elsewhere, Hadamard states do exist, and examples can be constructed \cite{fewsterHadamardStatesSemilocal}.

One of the main results of our paper is to provide a fundamental understanding of edge modes as \emph{quantum reference frames} (QRFs) for large gauge transformations, using ideas proposed in \cite{donnellyLocalSubsystemsGauge2016,gomesObserversGhostFieldspace2017,carrozzaEdgeModesReference2022,kabelQuantumReferenceFrames2023}. Inspired by recent developments in the operational and algebraic formulations of QRFs \cite{caretteOperationalQuantumReference2024,fewsterQuantumReferenceFrames2025}, we introduce an auxiliary system—the \emph{surface field}—which transforms covariantly under large gauge transformations and serves as a QRF (this makes rigorous the ideas presented in \cite{donnellyLocalSubsystemsGauge2016}). The classical treatment of such extended system is discussed in Section~\ref{sec:classical_sfe}, where we  follow~\cite{donnellyLocalSubsystemsGauge2016} to construct the \emph{fusion product} of the original electromagnetic phase space with the surface field phase space. It consists of orbits under the diagonal action of large gauge transformations on the two phase spaces, so large gauge transformations are redundancies. The main take-home message of the classical theory is the \emph{equivalence between the description of electromagnetism using semi-local observables in which large gauge transformations act nontrivially, and the one in which they are treated as redundancies by invoking additional surface fields} (see Prop.~\ref{prop:class_edge_equiv}).

On the quantum level, by considering the joint observables of the electromagnetic field and the surface field (elements of $\Af(\ol{N})\otimes \Af^S(\ol{\Sigma})$) that are invariant under the diagonal action of large gauge transformations, we construct a new algebra $\widetilde{\Af}^{\mathscr{LG}}(\ol{N})$ (the subalgebra of invariants) that is related to the original semi-local algebra via a \emph{relativisation map}:
\begin{equation}
	\label{eq:rel_intro}
	\yen:\Af(\ol{N})\to \widetilde{\Af}^{\mathscr{LG}}(\ol{N}),
\end{equation}
with range $\Af_{\rel}(\ol{N})\subset\widetilde{\Af}^{\mathscr{LG}}(\ol{N})$. This map acts as a dressing operation (by the surface field), converting gauge-covariant quantities into gauge-invariant ones. We emphasize that here we construct the relativization map already \emph{on the level of $C^*$ algebras}, which is as state-independent version of the relativisation map used previously in the context of operational QRFs, including our previous work \cite{fewsterQuantumReferenceFrames2025}.

We show in Thm.~\ref{thm:Weyl_comm_thm} that $\widetilde{\Af}^{\mathscr{LG}}(\ol{N})$ is generated by the relativised observables $\Af_{\rel}(\ol{N})$ 
together with the unitaries implementing surface gauge transformations. This result provides a $C^*$-algebraic analogue of the characterisation of invariant von Neumann algebras in the presence of a quantum reference frame as given in \cite{fewsterQuantumReferenceFrames2025}. 

Another crucial result of our paper is the link between QRFs and superselection sectors. In Sec.~\ref{sec: superselection}, we show that projections $\Gamma_{\Phi}:\widetilde{\Af}^{\mathscr{LG}}(\ol{N})\to
\Af_{\rel}(\ol{N})$ onto the relativised observable algebra, labelled by 
the flux function $\Phi$ \emph{outside} the corner of $\ol{N}$,
define superselection sectors of $\widetilde{\Af}^{\mathscr{LG}}(\ol{N})$. We propose a superselection criterion (formula \eqref{eq: sups criterion}) and in  Thm.~\ref{thm:sup_sel} we show that all sufficiently regular superselection sectors of $\widetilde{\Af}^{\mathscr{LG}}(\ol{N})$ may be constructed this way. 
Our analysis differs from the classical analogue presented in \cite{rielloHamiltonianGaugeTheory2024a,rielloNullHamiltonianYangMills2025,rielloSymplecticReductionYangMills2021}, since the latter would also yield superselection sectors labelled by the fluxes associated to the \emph{inside} of the corner. In our setting, these are not superselected, since one can construct \emph{semi-local observables} that interpolate between states corresponding to such internal fluxes (discussion after Prop.~\ref{prop:int fluxes}). Our approach is similar in spirit to that of \cite{herdegenSemidirectProductCCR1998}, where the asympotic observables in QED are also quantised, which changes the superselection structure.

Finally, to cement the link between the $C^*$-algebraic relativisation map and the operational QRF formalism, we show that relativisation map Eq.~\eqref{eq:rel_intro} can alternatively be constructed from large gauge covariant projection valued measures, arising naturally for sufficiently regular representations of the algebra $\Af^S(\ol{\Sigma})$, see Thm.~\ref{thm:edge_mode_QRF}.

The QRF perspective on semi-local observables is not only conceptually appealing, but also has practical advantages, e.g. in the treatement of 
gluing algebras of semi-local observables and their corresponding states across a common boundary (see Sect.~\ref{sec:gluing}).  Here, we consider a globally hyperbolic spacetime without boundary $M$, with a Cauchy surface $\Sigma$ that is decomposed as
$\Sigma=\ol{\Sigma}_1\cup\ol{\Sigma}_2$, where $\ol{\Sigma}_i$ are Cauchy surfaces of finite Cauchy lenses $\ol{N}_i$ whose corners coincide.
The semi-local algebras $\Af(\ol{N}_i)$ can be glued across the common corner to produce a glued algebra $\Af_{\textnormal{glue}}(\ol{N}_1;\ol{N}_2)$ consisting of joint large gauge invariant observables. In a precise way, one may regard $\Af(\ol{N}_2)$ as providing a reference frame for $\Af(\ol{N}_1)$ and vice versa. The corresponding relativisation maps give parameterisations of the glued algebra, and can be used to construct glued  states on $\Af_{\textnormal{glue}}(\ol{N}_1;\ol{N}_2)$ from states on $\Af(\ol{N}_1)$ and $\Af(\ol{N}_2)$. Meanwhile, the gauge invariant algebra of $M$ can be embedded in a natural way within $\Af_{\textnormal{glue}}(\ol{N}_1;\ol{N}_2)$.

\paragraph{Structure of the paper}

The paper is organized as follows. Section~\ref{sec:set-up} includes the main conventions, the precise definition of Cauchy lenses and the necessary background on differential forms on manifolds with boundaries and corners. Section~\ref{sec:classical} develops the classical phase space of electromagnetism on Cauchy lenses, detailing both local and semi-local observables. Section~\ref{sec:quant_alg} presents the algebraic quantisation via the Weyl C*-algebra construction and examines its representations. Section~\ref{sec:qrfs} discusses the treatment of large gauge transformations, the construction of quantum reference frames and in Section~\ref{sec:gluing} we discuss gluing. We conclude in Sec.~\ref{sec:conclusion} with a discussion of the implications of our results and potential directions for future work. Appendices contain various technical details; in particular, Appendix~\ref{apx:symbols} gives a table of symbols and spaces to assist the reader.

\section{Set-up, notations and conventions}
\label{sec:set-up}

\subsection{Main conventions} 

We use a mostly minus spacetime metric signature and adopt a rationalised unit system in which $\hbar$, $c$ and the vacuum permittivity $\varepsilon_0$ are set to unity and Maxwell's equations with current $J_\nu$ take the form 
	\begin{equation}
		\nabla^\mu F_{\mu\nu} = J_\nu, 
	\end{equation}
	where $\nabla$ is the Levi--Civita derivative and the Faraday tensor is given in terms of the electromagnetic potential by $F_{\mu\nu} = \nabla_\mu A_\nu-\nabla_\nu A_\mu$,
	or $F=\diff A$ using differential form notation. The Lorentz $4$-force on a test particle with charge $q$ and $4$-velocity $u^\mu$ is $f_\mu = q F_{\mu\nu}u^\nu$. Charge is dimensionless in these units, as are the forms $A=A_\mu \diff x^\mu$ and $F=F_{\mu\nu}\diff x^\mu\wedge \diff x^\nu$, though their components $A_\mu$ and $F_{\mu\nu}$ have dimensions of mass and mass-squared respectively.
\subsection{Manifolds with corners, Lorentzian geometry and Cauchy lenses}
In this paper we consider electromagnetism on a class of compact 3+1 dimensional (Lorentzian) spacetime backgrounds with corners, which we call \emph{finite Cauchy lenses} (see Def.~\ref{def:cauchy_lens}).
Following \cite{leeIntroductionSmoothManifolds2012}, $n$-dimensional manifolds with corners are spaces locally modelled on $\ol{\RR}_+^n:=[0,\infty)^n$. Notions of smoothness (smooth curves, maps, tensor fields etc.) can be described locally in terms of restrictions of smooth objects on $\RR^n$ to $\ol{\RR}_+^n$, see Appx.~\ref{apx:corners} for more details. The boundary $\partial\ol{\RR}_+^n$ consists of points in $\ol{\RR}_+^n$ with at least one vanishing coordinate, while the (first order) corner $\angle\ol{\RR}_+^n$ comprises those points with at least two vanishing coordinates -- we will not consider higher order corners in this paper. 
As described in Appx.~\ref{apx:corners}, $\partial\ol{\RR}_+^n$ and $\angle\ol{\RR}_+^n$ locally model boundary $\partial\ol{M}$ and the corner $\angle\ol{M}$ of a manifold with corners $\ol{M}$.

Lorentzian geometry may be developed on manifolds with corners as usual -- see Appx.~\ref{apx:corners}. For convenience, we shall only consider Cauchy surfaces that are smooth spacelike hypersurfaces. In particular, in a time-oriented Lorentzian manifold with corners $(\ol{M},g)$, we define
a \emph{Cauchy surface with boundary} to be a topologically closed acausal smooth spacelike codimension-$1$ submanifold with boundary $\partial\ol{\Sigma}\subset\partial\ol{M}$, with domain of dependence $\mathscr{D}(\ol{\Sigma})=\ol{M}$. Aside from globally hyperbolic spacetimes without boundary or corners, our main examples of spacetimes possessing such Cauchy surfaces are of the following type 
(a specific example was given in~\eqref{eq:lens_example}).
\begin{definition}
    \label{def:cauchy_lens}
    A \emph{finite Cauchy lens} $\ol{N}$ is a compact (time)-oriented Lorentzian manifold with corners, such that 
    \begin{itemize}
        \item $\angle\ol{N}\subset\ol{N}$ is a smooth compact proper submanifold without boundaries whose causal future/past are causally closed, i.e., $\mathscr{J}^\pm(\angle \ol{N})=\angle \ol{N}$,
        \item the boundary decomposes as $\partial \ol{N}=\ol{\Sigma}_-\cup\ol{\Sigma}_+$, where $\ol{\Sigma}_\pm$ are Cauchy surfaces with boundaries, $\ol{\Sigma}_-\cap\ol{\Sigma}_+=\angle \ol{N}$ and $\ol{\Sigma}_\pm\subset \mathscr{J}^\pm(\ol{\Sigma}_\mp)$,
        \item every Cauchy surface with boundary $\ol{\Sigma}$ in $\ol{N}$
        has $\partial\ol{\Sigma}=\angle\ol{N}$,
        \item there exists a Cauchy surface with boundary $\ol{\Sigma}\subset \ol{N}$ that is \emph{regular}, meaning
        \begin{itemize}
            \item $\Sigma=\Intr(\ol{\Sigma})$ is a Cauchy surface of $N=\Intr(\ol{N})$,
            \item $\mathscr{J}^+(\ol{\Sigma})$ and $\mathscr{J}^-(\ol{\Sigma})$ are manifolds with corners where $\partial\mathscr{J}^\pm(\ol{\Sigma})=\ol{\Sigma}^\pm\cup \ol{\Sigma}$,
        \end{itemize}
        \item there exists a causally convex isometric embedding $\iota:\ol{N}\to M$ for some globally hyperbolic spacetime without boundaries $M$.
    \end{itemize}
\end{definition}
For an embedding $\iota:\ol{N}\to M$ as in Def.~\ref{def:cauchy_lens}, and $\ol{\Sigma}\subset \ol{N}$ a Cauchy surface with boundary, there always exists a Cauchy surface $\tilde{\Sigma}\subset M$ such that $\iota(\ol{\Sigma})\subset \tilde{\Sigma}$. This follows by \cite[Thm.~1.1]{bernalFurtherResultsSmoothability2006}
because $\iota(\ol{\Sigma})\subset M$ is a spacelike acausal compact co-dimension 1 submanifold with boundaries.

To study electromagnetism on finite Cauchy lenses, we first recall some features of differential forms on (compact) manifolds with corners.

\subsection{Differential forms and integration on manifolds with corners}
\label{sec:forms}

For any oriented $n$-manifold $M$ (possibly with corners), $\Omega^k(M)\subset \mathfrak{X}^{(0,k)}(M)$ ($0\le k\le n$) denotes the space of \emph{differential $k$-forms} i.e., smooth anti-symmetric tensor fields of rank $(0,k)$. The exterior derivative (or differential) $\diff:\Omega^k(M)\to \Omega^{k+1}(M)$, satisfies $\diff^2=0$ and the graded Leibniz rule $\diff (\alpha\wedge \beta)=\diff \alpha\wedge \beta+(-1)^k \alpha\wedge \diff \beta$ for $\alpha\in \Omega^k(M)$ and $\beta\in \Omega^l(M)$, where $\wedge$ is the wedge product;
its kernel defines the space of \emph{closed $k$-forms}, $\Omega^k_{\diff}(M)$. 
In particular, $\Omega^0_{\diff}(M)$ is the space of locally constant functions on $M$.
See e.g.~\cite[Ch.~6]{abrahamManifoldsTensorAnalysis1988} and~\cite{gurerDifferentialFormsManifolds2019} for details concerning differential forms on manifolds with boundaries and corners, respectively. 

For $S\subset M$ a submanifold and $\iota:S\to M$ the identity embedding, we write $\alpha\restriction_S=\iota^*\alpha$ as an alternative notation.
If $M$ is a compact oriented manifold with boundaries and (possibly) corners, a version of Stokes theorem  applies (see e.g.~\cite[Thm.~7.2.20]{abrahamManifoldsTensorAnalysis1988} or \cite[Prop.~16.21]{leeIntroductionSmoothManifolds2012}), namely
\begin{equation}
    \int_{M} \diff \alpha=\int_{\partial M}\alpha:=\sum_{i=1}^n\int_{S_i}\alpha\restriction_{S_i},
\end{equation}
for every $\alpha\in \Omega^{n-1}(M)$, 
where $\partial M=\bigcup_{i=1}^n S_i$ is a decomposition of $\partial M$ into appropriately oriented smooth manifolds with boundaries (and possibly corners) $S_i$  with mutually disjoint interiors. 

For the rest of this subsection, unless otherwise stated, $M$ will be an oriented $n$-manifold with boundaries and/or corners equipped with a non-degenerate metric $g$, whose signature (typically Riemannian or Lorentzian) will be specified where relevant and is otherwise general. The corresponding volume form will be denoted $\diff\textup{Vol}_g$.  

\paragraph{Hodge star} The Hodge star $\star:\Omega^k(M)\to\Omega^{n-k}(M)$ is the linear map such that
\begin{equation}
    \alpha\wedge \star\beta=(\,\alpha\,,\,\beta\,)_g\diff\textup{Vol}_g,
\end{equation}
for each $\alpha,\beta\in \Omega^k(M)$, where, in components induced by local coordinates $(x^\mu)$,
\begin{equation}
    (\,\alpha\,,\,\beta\,)_g=\frac{1}{k!}g^{\mu_1\nu_1}\cdots g^{\mu_k\nu_k}\alpha_{\mu_1...\mu_k}\beta_{\nu_1...\nu_k},
\end{equation}
using the Einstein summation convention and recalling that $\alpha$ agrees locally with $\frac{1}{k!}\alpha_{\mu_1...\mu_k}\diff x^{\mu_1}\wedge...\wedge \diff x^{\mu_k}$ (and similarly for $\beta$). The Hodge star is invertible, with $(\star)^{-1}=\star$
in the Riemannian case, whereas for Lorentzian metrics in the mostly minus sign convention one has $(\star)^{-1}=(-1)^{n-k-1}\star$ when evaluated on $k$-forms. We refer to $\star\alpha\in\Omega^{n-k}(M)$ as  the Hodge dual of $\alpha\in \Omega^k(M)$. The codifferential   $\diff^*:=(-1)^{k}(\star)^{-1}\circ\diff \circ\star:\Omega^k(M)\to\Omega^{k-1}(M)$ satisfies ${\diff^*}^2=0$, and its kernel $\Omega^k_{\diff^*}(M)$ is the space of \emph{co-closed $k$-forms}.
The Hodge star defines a natural  pairing $\ipc{\,\cdot\,}{\,\cdot\,}_M$ on $\Omega^k(M)$ by
\begin{equation}\label{eq:Hodge_pairing}
    \ipc{\alpha}{\beta}_M=\int_M \alpha\wedge \star\beta,
\end{equation}
which is an inner product if $M$ is compact and Riemannian. 

\paragraph{Laplace-de Rham operator} The Laplace-de Rham operator $\Delta_{k}:\Omega^k(M)\to\Omega^k(M)$, is  $\Delta_{k}=(\diff\diff^*+\diff^*\diff)=(\diff+\diff^*)^2$. In our sign convention, the zero-form Laplace-de Rham operator equals the sign-reversed Laplace-Beltrami operator 
\begin{equation}
    \Delta_{0}=-\frac{1}{\sqrt{\vert\det(g)\vert}}\partial_\mu \sqrt{\vert\det(g)\vert}g^{\mu\nu}\partial_\nu.
\end{equation} 
Keeping this convention in mind, we shall drop the subscript and simply denote the Laplace-de Rham operator on $k$-forms as $\Delta$.

\paragraph{Normals} If $S\subset M$ is a codimension-$1$ oriented submanifold with non-degenerate metric $g_S$ (not necessarily induced from $M$) defining a Hodge star $\star_S$ and a codifferential $\diff^*_S$, then the normal $\nml_S:\Omega^k(M)\to\Omega^{k-1}(S)$ is defined by
\begin{equation}
    \nml_S\alpha=(\star_S)^{-1}[\star\alpha]\restriction_S.
\end{equation}
A useful identity is that
\begin{equation}
    \diff_S^* \nml_S\alpha = -\nml_S\diff^*\alpha.
\end{equation}
Suppose $M$ is compact. Then its orientation defines a boundary orientation on (smooth components of) $\partial M$, see e.g.~\cite[Def.~7.2.7.]{abrahamManifoldsTensorAnalysis1988}. If the induced metric $g_\partial$ on $\partial M\subset M$ is non-degenerate, one has by Stokes' theorem that
\begin{equation}
\label{eq:diffcodiffbound}
    \ipc{\diff\alpha}{\beta}_M-\ipc{\alpha}{\diff^*\beta}_M=\ipc{\alpha}{\nml_{\partial M}\beta}_{\partial M},
\end{equation}
for any $\alpha\in \Omega^k(M)$ and $\beta\in \Omega^{k+1}(M)$, using the boundary orientation. 
The case $k=0$, $\alpha\equiv 1$ gives the divergence theorem
\begin{equation}
    \int_M -\star\diff^*\beta = \int_{\partial M} \nml_{\partial M}\beta
\end{equation}
for $\beta\in \Omega^{1}(M)$.

The Laplace-de Rham operator acts symmetrically on appropriate subspaces of $\Omega^k(M)$. For instance, let $\Omega^k_0(M)$ be the space of smooth $k$-forms with compact support on the interior of $M$.  
Then for $\alpha,\beta\in \Omega^k_{0}(M)$, the identity~\eqref{eq:diffcodiffbound} yields
\begin{equation}
    \ipc{\alpha}{\Delta\beta}_M=\ipc{\diff \alpha}{\diff\beta}_M+\ipc{\diff^* \alpha}{\diff^*\beta}_M=\ipc{\Delta\alpha}{\beta}_M.
\end{equation}

Now suppose that $M$ is furthermore Lorentzian and (time)-oriented, and $\Sigma\subset M$ is a spacelike submanifold with embedding map $\iota:\Sigma\to M$. Then
$g_\Sigma=-\iota^*g$ is a Riemannian metric on $\Sigma$, and we orient $\Sigma$ so that $\nml_{\Sigma}\alpha>0$ for some (and hence all) everywhere future-directed $\alpha\in \Omega^1(M)$. 
In particular, for $\ol{N}$ a finite Cauchy lens,  Eq.~\eqref{eq:diffcodiffbound} can be reexpressed as
\begin{equation}
    \ipc{\diff\alpha}{\beta}_{\ol{N}}-\ipc{\alpha}{\diff^*\beta}_{\ol{N}}=\ipc{\alpha\restriction_{\Sigma_+}}{\nml_{\Sigma_+}\beta}_{\Sigma_+}-\ipc{\alpha\restriction_{\Sigma_-}}{\nml_{\Sigma_-}\beta}_{\Sigma_-}.
\end{equation}

\paragraph{Hodge decomposition} We will employ a number of orthogonal decompositions.
\begin{lemma}\label{lem:Hodge}
Let $M$ be a compact Riemannian manifold with boundary. For $k>0$, $\Omega^k(M)$ admits the following orthogonal decompositions with respect to the $L^2$ inner product~\eqref{eq:Hodge_pairing}:
\begin{align}    
\Omega^k(M)&=\Omega^k_{\tang \diff^*}(M)\oplus\{\diff\lambda:\lambda\in\Omega^{k-1}(M),~\diff^*\diff\lambda=0\}\oplus \{\diff\lambda:\lambda\in\Omega^{k-1}(M),~\lambda|_{\partial M}=0\} \\
&=\Omega^k_{\diff^*}(M)\oplus\{\diff\lambda:\lambda\in\Omega^{k-1}(M),~\lambda|_{\partial M}=0\} \label{eq:nontang_Hodge}\\
&=\Omega^k_{\tang\diff^*}(M)\oplus\diff\Omega^{k-1}(M), \label{eq:tang_Hodge}
\end{align}
where
\begin{equation}\label{eq:Omegaktdstar_def}
    \Omega^k_{\tang \diff^*}(M)=\{\alpha\in \Omega^k(M):\diff^*\alpha=0,\,\nml_{\partial M}\alpha=0\}.
\end{equation}
In addition, if $\partial M\neq\emptyset$, $\Omega^0(\partial M)$ admits the orthogonal decomposition
\begin{equation}\label{eq:nmlcoclosed}
    \Omega^0(\partial M)=\nml_{\partial M}\Omega^1_{\diff^*}(M)\oplus \left(\Omega^0_{\diff}(M)\restriction_{\partial M}\right).
\end{equation} 
\end{lemma}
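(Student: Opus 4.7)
The strategy is to deduce all four decompositions from the classical Hodge-Morrey-Friedrichs theorem for compact Riemannian manifolds with boundary, combined with systematic use of the boundary identity~\eqref{eq:diffcodiffbound}. Pairwise orthogonality of every proposed summand is immediate: for $\alpha\in\Omega^k(M)$ with (some of) $\diff^*\alpha=0$, $\nml_{\partial M}\alpha=0$, and any $\diff\lambda$ with or without $\lambda\restriction_{\partial M}=0$,
\[\ipc{\diff\lambda}{\alpha}_M=\ipc{\lambda}{\diff^*\alpha}_M+\ipc{\lambda\restriction_{\partial M}}{\nml_{\partial M}\alpha}_{\partial M},\]
and both terms vanish in each case of interest. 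The mutual orthogonality of $\{\diff\lambda:\diff^*\diff\lambda=0\}$ and $\{\diff\lambda:\lambda\restriction_{\partial M}=0\}$ in the three-way split is dispatched by the same identity applied with $\alpha=\diff\lambda_1$.

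For fullness I would first establish~\eqref{eq:tang_Hodge}, the Hodge-Morrey decomposition, by solving, for given $\omega\in\Omega^k(M)$, the elliptic boundary value problem producing $\beta\in\Omega^{k-1}(M)$ with $\omega-\diff\beta\in\Omega^k_{\tang\diff^*}(M)$; for $k=1$ this reduces to a Neumann Laplace problem whose solvability condition is automatic from the divergence theorem recorded earlier. Decomposition~\eqref{eq:nontang_Hodge} then follows by applying~\eqref{eq:tang_Hodge} inside the closed subspace $\Omega^k_{\diff^*}(M)$: any coclosed form splits into a piece in $\Omega^k_{\tang\diff^*}(M)$ plus an exact piece $\diff\lambda_1$ whose primitive is automatically harmonic, so $\diff^*\diff\lambda_1=0$; meanwhile, the orthogonal complement of $\Omega^k_{\diff^*}(M)$ in $\Omega^k(M)$ can be represented by exact forms with vanishing boundary trace of the primitive, yielding the last summand. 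Combining these gives the three-way decomposition.

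For~\eqref{eq:nmlcoclosed}, orthogonality is immediate from~\eqref{eq:diffcodiffbound} with $\alpha$ coclosed and $\chi$ locally constant, since both $\diff\chi$ and $\diff^*\alpha$ vanish. For completeness, given $f\in\Omega^0(\partial M)$, I would first subtract its orthogonal projection $\chi\restriction_{\partial M}$ onto the finite-dimensional subspace $\Omega^0_{\diff}(M)\restriction_{\partial M}$; the remainder is orthogonal to every locally constant function on $M$, which is exactly the statement that its integral vanishes on the boundary of each connected component---precisely the solvability condition for the Neumann Laplace problem $\Delta u=0$ with $\partial_n u = f-\chi\restriction_{\partial M}$ on each component. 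Setting $\alpha=\diff u$ produces the desired coclosed $1$-form with the prescribed normal trace.

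The main obstacle throughout is the elliptic regularity and solvability theory underlying the Hodge-Morrey-Friedrichs decomposition and the Neumann Laplace problem on forms; I would invoke these as standard results rather than reprove them, so the novel content of the argument reduces to the bookkeeping of integration by parts and the careful identification of the Neumann solvability condition in terms of locally constant functions on the connected components of $M$.
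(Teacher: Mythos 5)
Your overall strategy matches the paper's: the $k>0$ decompositions are standard Hodge--Morrey--Friedrichs theory (the paper simply cites Schwarz, Cor.~2.4.9; you reconstruct the idea), and for~\eqref{eq:nmlcoclosed} you prove orthogonality via the boundary identity and completeness via the Neumann Laplacian after projecting off the locally constant part — this is essentially verbatim the paper's proof of Lem.~C.2 (which itself invokes Prop.~C.1 for Neumann solvability). Your orthogonality bookkeeping via~\eqref{eq:diffcodiffbound} is correct throughout.

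There is, however, one genuine gap in your sketch of the $k>0$ decompositions. To pass from~\eqref{eq:tang_Hodge} to~\eqref{eq:nontang_Hodge} you establish $\Omega^k_{\diff^*}(M)=\Omega^k_{\tang\diff^*}(M)\oplus\{\diff\lambda:\diff^*\diff\lambda=0\}$ (fine), but then simply \emph{assert} that the orthogonal complement of $\Omega^k_{\diff^*}(M)$ "can be represented by exact forms with vanishing boundary trace of the primitive." That is exactly the completeness half of~\eqref{eq:nontang_Hodge} — you have stated the thing to be proved. The missing ingredient is a \emph{Dirichlet} boundary-value problem: given $\omega\in\Omega^k(M)$, solve $\diff^*\diff\lambda=\diff^*\omega$ with $\lambda\restriction_{\partial M}=0$, and set $\alpha=\omega-\diff\lambda\in\Omega^k_{\diff^*}(M)$. (For $k=1$ this is the classical Dirichlet Poisson problem, always uniquely solvable, as in Prop.~C.1.) Notice that your closing paragraph enumerates only the Neumann problem among the elliptic tools you invoke, which suggests the Dirichlet step was genuinely overlooked rather than silently cited. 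A minor further imprecision: you call the primitive $\lambda_1$ "automatically harmonic," but you only obtain $\diff^*\diff\lambda_1=0$, not $\Delta\lambda_1=0$ (these coincide only for $k=1$, where $\lambda_1$ is a $0$-form); this is harmless since the lemma only requires $\diff^*\diff\lambda_1=0$, but the terminology should be corrected.
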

\begin{proof}
    For the $k>0$ statements, see \cite[Cor. 2.4.9]{schwarzHodgeDecompositionMethod1995}; for~\eqref{eq:nmlcoclosed}, see Lem.~\ref{lem:nmlcoclosed}.
\end{proof}
Note that $\Omega^0_{\diff}(M)\restriction_{\partial M}\subset \Omega^0_\diff(\partial M)$, i.e., the restriction of a locally constant function is locally constant, but the inclusion is strict if one or more of the components of $M$ has a disconnected boundary.

An important consequence of Lem.~\eqref{lem:Hodge} is that any $1$-form $\mathbf{A}\in \Omega^1(M)$ 
admits a unique \emph{tangential Hodge--Helmholtz decomposition}
\begin{equation}
    \mathbf{A}=\mathbf{A}^{\tang}+\diff \alpha,
\end{equation}
where $\mathbf{A}^{\tang}\in \Omega^1_{\tang \diff^*}(M)$ and $\alpha\in \Omega^0(M)$ solves the boundary value problem
\begin{equation}
    \diff^*\diff \alpha=\diff^*\mathbf{A},\qquad \nml_{\partial M} \diff \alpha=\nml_{\partial M} \mathbf{A},
\end{equation}
which determines $\alpha$ up to the addition of a locally constant function.
If $\partial M$ is nonempty, $\alpha$ may be chosen so that $\alpha\restriction_{\partial M}\in \nml_{\partial M}\Omega^1_{\diff^*}(M)$, by~\eqref{eq:nmlcoclosed}, which is equivalent to the condition that
$\ipc{c\restriction_{\partial M}}{\alpha\restriction_{\partial M}}_{\partial M}=0$ for all locally constant $c\in\Omega^0(M)$, or equally that $\int_{\partial M'}\star\alpha\restriction_{\partial M'}=0$ for each connected component $M'\subset M$.

\section{Classical electromagnetism on finite Cauchy lenses} \label{sec:classical}

\subsection{The covariant phase space formalism and the reduced phase space}
\label{sec:cov_phas}
Starting from an action functional,
the covariant phase space formalism provides an algorithm to define a symplectic structure on a space of (on-shell) field configurations -- see e.g.,~\cite{harlowCovariantPhaseSpace2020}.
In outline, if $\mathcal{E}$ is a space of (off-shell) field configurations on a suitable spacetime and $S:\mathcal{E}\to \RR$ is a local action functional, the covariant phase space formalism for a (classical) field theory assigns:
\begin{itemize}
    \item an equation of motion $EOM:\mathcal{E}\to\mathcal{E}^*$ specifying an on-shell configuration submanifold $\Sol=EOM^{-1}(\{0\})$,
    \item a (pre-)symplectic form $\mathbf{\Omega}$ on $\Sol$, where the vector fields $\mathfrak{g}\in \mathfrak{X}(\Sol)$ with $\iota_\mathfrak{g}\mathbf{\Omega}=0$ are referred to as the \emph{infinitesimal gauge transformations},
    \item a symplectic manifold $(\Sol_{\mathscr{G}},\mathbf{\Omega})$ obtained through the gauge reduction of $\Sol$.
\end{itemize}

While a systematic treatment of this framework requires a number of technical assumptions concerning infinite dimensional manifolds (see e.g.~\cite{krieglConvenientSettingGlobal1997}), the case of (pure) electromagnetism on the spacetimes we consider is sufficiently simple to avoid these issues. See~\cite{harlowCovariantPhaseSpace2020} for applications of the covariant phase space formalism to various classical field theories, including pure electromagnetism, on a more general class of spacetime backgrounds that may have time-like boundaries. Below, we give a detailed formulation of pure electromagnetism on an $n+1$ dimensional finite Cauchy lens $\ol{N}$, as a foundation for the discussion of semi-local observables and their quantisation.

\subsubsection{Action, field equation and gauge invariance}

The off-shell configuration space on $\ol{N}$ will be $\mathcal{E}=\Omega^1(\ol{N})$,\footnote{More generally, one may consider connections on some principal $U(1)$-bundle over $\ol{N}$, which may be locally identified with $1$-forms following a choice of trivialisation, see e.g.~\cite{beniniCalgebraQuantizedPrincipal2014}.} while the action $S^J:\mathcal{E}\to\RR$ for
pure electromagnetism with conserved external current $J\in \Omega^1_{\diff^*}(\ol{N})$ is 
\begin{equation}
    {S^J}(A)=-\int_{\ol{N}}\frac{1}{2}\diff A\wedge \star\diff A+J\wedge \star A=-\left(\ipc{\diff A}{\diff A}_{\ol{N}}+\ipc{J}{A}_{\ol{N}}\right).
\end{equation}
The equation of motion is obtained as a map $\EOM^J:\Omega^1(\ol{N})\to \Omega^1(\ol{N})$ (where the co-domain can be identified with $\Omega^1(\ol{N})^*$ through the pairing of forms) by writing the directional functional derivative
\begin{equation}
    \langle S^{J\,(1)}(A),h\rangle:=\partial_t S^{J}(A+th)\restriction_{t=0} 
\end{equation}
for all variations compactly supported away from the boundary, $h\in \Omega^1_0(N)$, in the form
\begin{equation}\label{eq:EOM_def}
    \langle S^{J\,(1)}(A),h\rangle=\ipc{\EOM^J(A)}{h}_{\ol{N}},
\end{equation}
which gives
\begin{equation}
    \EOM^J(A)=-\diff^*\diff A-J.
\end{equation} 
The \emph{on-shell configuration space} or \emph{pre-phase space} $\Sol^J(\ol{N})$ is then given by 
\begin{equation}\label{eq:SolJN_def}
    \Sol^J(\ol{N}) =\{A\in \Omega^1(\ol{N}):-\diff^*\diff A=J\}.
\end{equation}

In the case of pure electromagnetism, $\Sol^J(\ol{N})$ is an affine space and its underlying vectors space is the vector space of solutions to the linearised equation of motion around some background $A\in \Sol^J(\ol{N})$
\begin{equation}
\label{eq:solspace_lin}
    \Sol(\ol{N}) =\{\udl{A}\in \Omega^1(\ol{N}):\langle S^{J\,(2)}(A),\udl{A}\otimes h\rangle=0\text{ for all }h\in \Omega^1_{0}(N)\},
\end{equation}
i.e. $\Sol(\ol{N})=\Sol^0(\ol{N})$. $\Sol^J(\ol{N})$ is an infinite dimensional Fr\'echet manifold (see e.g.~\cite{krieglConvenientSettingGlobal1997}), modelled on the space $\Sol(\ol{N})$ with a Fr\'echet topology inherited from $\Omega^1(\ol{N})$. Accordingly, the kinematical tangent space 
at any $A\in \Sol^J(\ol{N})$, and the total manifold of the kinematical tangent bundle are 
\begin{equation}\label{eq:TSolJNbar_def}
    T_A\Sol^J(\ol{N})\cong \Sol(\ol{N}), \qquad T\Sol^J(\ol{N})\cong \Sol^J(\ol{N})\times \Sol(\ol{N}).
\end{equation}
For the purposes of this work, the spaces $T\Sol^J(\ol{N})$ and $\Sol^J(\ol{N})\times \Sol(\ol{N})$ are identified.

A major theme of this work is the gauge symmetries of the action $S^J$ under the transformation $A\mapsto A + s\diff\Lambda$ for $\Lambda\in\Omega^0(\ol{N})$, $s\in\RR$.
Noting that $S^J(A+s\diff\Lambda)-S^J(A)= s\int_{\ol{N}} \diff (\Lambda J)$, Noether's theorem yields corresponding currents given by 
\begin{equation}
    j_\Lambda = \Lambda J - \iota_{(\diff\Lambda)^\sharp} F,
\end{equation}
which are conserved on-shell, i.e., for any solution $A\in\Sol^J(\ol{N})$. On manifolds with boundary and corners, these gauge symmetries are classified in various ways,
according to their behaviour at the boundary and corner.

\subsubsection{Symplectic potential and pre-symplectic form}

By definition, the quantity $$\int_{\ol{N}}EOM^J(A)\wedge \udl{A}-\langle S^{J\,(1)}(A),\udl{A}\rangle$$ vanishes for $\udl{A}\in\Omega_0^1(N)$.
In general, any map $\Omega^1(\ol{N})\owns A\mapsto \mathbf{\Theta}_A\in\Lin(\Omega^1(\ol{N}),\Omega^3(\ol{N}))$ such that
\begin{equation}
    \int_{\ol{N}}EOM^J(A)\wedge \udl{A}-\langle S^{J\,(1)}(A),\udl{A}\rangle=\int_{\partial\ol{N}}\mathbf{\Theta}_A(\udl{A})
\end{equation}
for all $A,\udl{A}\in\Omega^1(\ol{N})$, will be called a \emph{symplectic potential}. A minimal choice is obtained by requiring
$\mathbf{\Theta}_A(\udl{A})(x)$ to be continuous in $x\in \ol{N}$ and depend on $\udl{A}$ linearly via its value $\udl{A}(x)$, and on $A$ via the $k$-jet prolongation at $x$, with $k\in\NN$ chosen as small as possible.
For electromagnetism, this gives 
\begin{equation}
    \mathbf{\Theta}_A(\udl{A})(x)=(\star\diff A)(x)\wedge \udl{A}(x).
\end{equation}
For each  $\ol{\Sigma}$, a Cauchy surface with boundary, $\int_{\ol{\Sigma}}\mathbf{\Theta}$ is a smooth one-form on $\Omega^1(\ol{N})$ (understood as an infinite dimensional manifold) and we define a pre-symplectic form $\mathbf{\Omega}^{\ol{\Sigma}}$ as a $2$-form on  $\Omega^1(\ol{N})$, satisfying $\mathbf{\Omega}^{\ol{\Sigma}}=\delta \int_{\ol{\Sigma}}\mathbf{\Theta}$, where $\delta$ is the configuration space exterior derivative. Explicitly, 
\begin{align}
       \mathbf{\Omega}^{\ol{\Sigma}}_A(\udl{A}_1\otimes \udl{A}_2)&= \partial_t\int_{\ol{\Sigma}}\mathbf{\Theta}_{A+t\udl{A}_1}(\udl{A}_2)-\mathbf{\Theta}_{A+t\udl{A}_2}(\udl{A}_1)\restriction_{t=0} \nonumber \\
       &=\ipc{\udl{A}_2}{\nml_{\ol{\Sigma}} \diff \udl{A}_1}_{\ol{\Sigma}}-\ipc{\udl{A}_1}{\nml_{\ol{\Sigma}} \diff \udl{A}_2}_{\ol{\Sigma}}
\end{align}
for $A,\udl{A}_1,\udl{A}_2\in \Omega^1(\ol{N})$. Clearly, $\mathbf{\Omega}^{\ol{\Sigma}}_A$ is independent of the base-point $A\in\Omega^1(\ol{N})$ and is (Fr\'echet) smooth on $\Sol(\ol{N})^{\otimes2}$.

As a shorthand, we write
$\mathbf{\Omega}^\pm = \mathbf{\Omega}^{\ol{\Sigma}^\pm}$ for the pre-symplectic forms induced by the bounding Cauchy surfaces $\ol{\Sigma}^\pm$ of $\ol{N}$. A crucial observation is summarised by the proposition below.
\begin{proposition}
   Let $\ol{\Sigma}\subset\ol{N}$ be a co-dimension 1 submanifold such that for every $\alpha\in \Omega^n(\ol{N})=\Omega^{\textup{top}-1}(\ol{N})$ we have 
\begin{equation}
\label{eq:SigmaSigmaMin}
    \int_{\mathscr{J}^-(\ol{\Sigma})}\diff\alpha=\int_{\ol{\Sigma}}\alpha-\int_{\ol{\Sigma}^-}\alpha.
\end{equation} 
(In particular, this holds when $\ol{\Sigma}$ is a regular Cauchy surface with boundary.)
 For $A\in\Sol^J(\ol{N})$, $\udl{A}_1,\udl{A}_2\in\Sol(\ol{N})$, one has 
\begin{equation}
 \mathbf{\Omega}^{\ol{\Sigma}}_A(\udl{A}_1\otimes \udl{A}_2)= \mathbf{\Omega}^{+}_A(\udl{A}_1\otimes \udl{A}_2)=\mathbf{\Omega}^{-}_A(\udl{A}_1\otimes \udl{A}_2)\,.
\end{equation}
\end{proposition}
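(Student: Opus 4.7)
The plan is to represent each pre-symplectic form $\mathbf{\Omega}^{\ol{\Sigma}}_A$ as the integral over $\ol{\Sigma}$ of a single $n$-form on $\ol{N}$ that is independent of the choice of hypersurface, then show that form is closed on shell, so that the hypothesis~\eqref{eq:SigmaSigmaMin} (or Stokes' theorem in the $\ol{\Sigma}^+$ case) makes the three surface integrals coincide.

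First I would rewrite $\mathbf{\Omega}^{\ol{\Sigma}}$ as a boundary integral of bulk data. Using the defining identity $\star_{\ol{\Sigma}}\nml_{\ol{\Sigma}}\beta=(\star\beta)\restriction_{\ol{\Sigma}}$ for 2-forms $\beta$ together with~\eqref{eq:Hodge_pairing}, the terms in the expression for $\mathbf{\Omega}^{\ol{\Sigma}}_A(\udl{A}_1\otimes \udl{A}_2)$ become $\int_{\ol{\Sigma}}(\udl{A}_i\wedge \star\diff\udl{A}_j)\restriction_{\ol{\Sigma}}$, so that
\begin{equation*}
    \mathbf{\Omega}^{\ol{\Sigma}}_A(\udl{A}_1\otimes \udl{A}_2)=\int_{\ol{\Sigma}}\omega\restriction_{\ol{\Sigma}},\qquad \omega:=\udl{A}_2\wedge\star\diff\udl{A}_1-\udl{A}_1\wedge\star\diff\udl{A}_2\in\Omega^n(\ol{N}).
\end{equation*}
Crucially, $\omega$ depends only on $\udl{A}_1,\udl{A}_2\in\Sol(\ol{N})$ (not on $A$ or on $\ol{\Sigma}$), since $\mathbf{\Omega}^{\ol{\Sigma}}_A$ is independent of the base-point $A$.

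Next I would show $\diff\omega=0$ on shell. The graded Leibniz rule gives $\diff(\udl{A}_j\wedge \star\diff\udl{A}_i)=\diff\udl{A}_j\wedge \star\diff\udl{A}_i-\udl{A}_j\wedge\diff\star\diff\udl{A}_i$. On 2-forms the codifferential acts as $\diff^*=\star^{-1}\diff\star$, so $\diff\star\diff\udl{A}_i=\star\diff^*\diff\udl{A}_i$, which vanishes because $\udl{A}_i\in\Sol(\ol{N})$ satisfies the homogeneous linearised equation $\diff^*\diff\udl{A}_i=0$. Hence $\diff\omega=\diff\udl{A}_2\wedge\star\diff\udl{A}_1-\diff\udl{A}_1\wedge\star\diff\udl{A}_2$, which vanishes by the symmetry $\alpha\wedge\star\beta=\beta\wedge\star\alpha$ valid for forms of equal degree.

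Finally I would apply the hypothesis~\eqref{eq:SigmaSigmaMin} with $\alpha=\omega$, yielding $0=\int_{\mathscr{J}^-(\ol{\Sigma})}\diff\omega=\int_{\ol{\Sigma}}\omega-\int_{\ol{\Sigma}^-}\omega$, i.e.\ $\mathbf{\Omega}^{\ol{\Sigma}}_A=\mathbf{\Omega}^-_A$. For the identity $\mathbf{\Omega}^+_A=\mathbf{\Omega}^-_A$, I would apply the same argument with $\ol{\Sigma}$ replaced by the regular Cauchy surface $\ol{\Sigma}^+$ (which satisfies the hypothesis by Stokes' theorem, noting $\mathscr{J}^-(\ol{\Sigma}^+)=\ol{N}$), concluding $\int_{\ol{\Sigma}^+}\omega=\int_{\ol{\Sigma}^-}\omega$. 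The main subtlety, rather than any hard step, is checking that the hypothesis \eqref{eq:SigmaSigmaMin} really captures all relevant boundary contributions from $\mathscr{J}^-(\ol{\Sigma})$ — the potential contribution from the timelike portion of its boundary being absorbed into the corner $\angle\ol{N}$, whose causal closure $\mathscr{J}^\pm(\angle\ol{N})=\angle\ol{N}$ and codimension-$2$ nature make it invisible to integration of $n$-forms; this is precisely why finite Cauchy lenses were defined with the causally-closed corner condition.
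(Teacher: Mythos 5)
Your proof is correct and takes a genuinely different route from the paper's. You explicitly identify the pre-symplectic current $\omega=\udl{A}_2\wedge\star\diff\udl{A}_1-\udl{A}_1\wedge\star\diff\udl{A}_2$ (correctly using $\star_{\ol{\Sigma}}\nml_{\ol{\Sigma}}\beta=(\star\beta)\restriction_{\ol{\Sigma}}$ to rewrite each Hodge pairing as a restricted bulk $n$-form), verify that $\diff\omega=0$ on shell via the graded Leibniz rule, the identity $\diff\star=\star\diff^*$ on $2$-forms, and the symmetry of $\alpha\wedge\star\beta$, and then apply the hypothesis (resp.\ Stokes on all of $\ol{N}$ for the $\ol{\Sigma}^+$ case). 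The paper instead differentiates the defining relation between $\int_{\partial\ol{N}}\mathbf{\Theta}$, the second variation $\langle S^{J\,(2)}(A),\cdot\rangle$, and the EOM term, then uses the antisymmetry of $S^{J\,(2)}$ and the fact that $\partial_t EOM^J(A+t\udl{A})\restriction_{t=0}=0$ on $\Sol(\ol{N})$ to kill all pieces at once. Your argument is the classic ``closed symplectic current'' proof (à la Crnković--Witten, Wald): it is more explicit and hands you a concrete conserved density, but the computation of $\diff\omega$ is specific to Maxwell theory. The paper's argument is derived entirely from the variational definitions of $\mathbf{\Theta}$ and $EOM$ and therefore transfers verbatim to any Lagrangian theory, at the cost of being less transparent about what local object is actually being conserved. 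Your closing remark about the corner being codimension-$2$ and therefore invisible to integration of $n$-forms is the right observation, and is precisely what the hypothesis~\eqref{eq:SigmaSigmaMin} packages; for finite Cauchy lenses there is in fact no timelike boundary component at all, so the only delicacy is the corner set.
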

\begin{proof}
We prove this for $\mathbf{\Omega}^{+}$ and $\mathbf{\Omega}^{-}$, but the same argument applies to a general  $\ol{\Sigma}\subset\ol{N}$ satisfying the condition above.
\begin{align}
    \mathbf{\Omega}^{+}_A(\udl{A}_1\otimes \udl{A}_2)-\mathbf{\Omega}^{-}_A(\udl{A}_1\otimes \udl{A}_2)=&\partial_t\int_{\partial\ol{N}}\mathbf{\Theta}_{A+t\udl{A}_1}(\udl{A}_2)-\mathbf{\Theta}_{A+t\udl{A}_2}(\udl{A}_1)\restriction_{t=0}\nonumber\\
    =&\langle S^{J\,(2)}(A),\udl{A}_2\otimes \udl{A}_1\rangle-\langle S^{J\,(2)}(A),\udl{A}_1\otimes \udl{A}_2\rangle\nonumber\\
    &+\partial_t\int_{\ol{N}}EOM^J(A+t\udl{A}_1)\wedge \udl{A}_2-EOM^J(A+t\udl{A}_2)\wedge \udl{A}_1\nonumber\\
    =&\partial_t\int_{\ol{N}}EOM^J(A+t\udl{A}_1)\wedge \udl{A}_2-EOM^J(A+t\udl{A}_2)\wedge \udl{A}_1.
\end{align}
By Eq.~\eqref{eq:solspace_lin}, it follows that for $(A,\udl{A})\in T\Sol^J(\ol{N})$ we have
\begin{equation}
    \partial_tEOM^J(A+t\udl{A})\restriction_{t=0}=0.
\end{equation}
Hence for $A\in\Sol^J(\ol{N})$, $\udl{A}_1,\udl{A}_2\in\Sol(\ol{N})$ (i.e.~$(A,\udl{A}_1\otimes \udl{A}_2)\in T^2\Sol^J(\ol{N})$) we have
\begin{equation}
    \mathbf{\Omega}^{+}_A(\udl{A}_1\otimes \udl{A}_2)-\mathbf{\Omega}^{-}_A(\udl{A}_1\otimes \udl{A}_2)=0.
\end{equation}
\end{proof}

Putting this together with the base-point independence of $\mathbf{\Omega}^{\ol{\Sigma}}$ mentioned above, we see that there exists a natural pre-symplectic form $\mathbf{\Omega}$ on $\Sol^J(\ol{N})$ and pre-symplectic structure $\sigma$ on $\Sol(\ol{N})$ such that
\begin{equation}
\label{eq:symp_struct}
    \mathbf{\Omega}_{A}(\udl{A}_1,\udl{A}_2)=\sigma(\udl{A}_1,\udl{A}_2)=\ipc{\udl{A}_2}{\nml_{\ol{\Sigma}} \diff \udl{A}_1}_{\ol{\Sigma}}-\ipc{\udl{A}_1}{\nml_{\ol{\Sigma}} \diff \udl{A}_2}_{\ol{\Sigma}},
\end{equation}
for all $A\in \Sol^J(\ol{N})$ and $\udl{A}_1,\udl{A}_2\in \Sol(\ol{N})$, and any regular Cauchy surface with boundaries $\ol{\Sigma}\subset \ol{N}$. 
Written in terms of configuration space one-forms, one has
\begin{equation}
    \mathbf{\Omega}=\int_{\ol{\Sigma}}\delta \mathbf{A}\wedge \star_{\Sigma}\delta \mathbf{E},
\end{equation}
where (suppressing the $\ol{\Sigma}$ dependence in the notation) we have for $(A,\udl{A})\in T\Sol^J(\ol{N})$\begin{equation}
    (\delta\mathbf{A})_{A}(\udl{A})=\udl{A}\restriction_{\ol{\Sigma}},\qquad (\delta\mathbf{E})_{A}(\udl{A})=-\nml_{\ol{\Sigma}}\diff \udl{A}.
\end{equation}

\subsubsection{Pre-symplectic reduction and large gauge transformations}

The reduced phase space is obtained by pre-symplectic reduction of the pre-phase space $(\Sol^J(\ol{N}),\mathbf{\Omega})$ with respect to the radical of $\sigma$, 
\begin{equation}\label{eq:GNbardef}
    \mathscr{G}(\ol{N}):=\{\udl{A}\in \Sol(\ol{N}):\sigma(\udl{A},\udl{A}')=0\text{ for all }\udl{A}'\in \Sol(\ol{N})\}.
\end{equation}
Note that this is a proper subspace of the full space $\diff\Omega^1(\ol{N})$.
\begin{definition}
\label{def:red_phas}
    Let $\ol{N}$ be a finite Cauchy lens and $J\in \Omega^1_{\diff^*}(\ol{N})$, then the \emph{reduced phase space} of electromagnetism with background current $J$ is given by
    \begin{equation}\label{eq:SolJGN_def}
        \Sol_{\mathscr{G}}^J(\ol{N})=\Sol^J(\ol{N})/\mathscr{G}(\ol{N}),
    \end{equation} 
    Furthermore, denoting $\Sol_\mathscr{G}(\ol{N}):=\Sol^0_\mathscr{G}(\ol{N})$ we define the symplectic structure on $\Sol_\mathscr{G}(\ol{N})$ by 
    \begin{equation}
        \sigma([\udl{A}_1]\otimes [\udl{A}_2])=\ipc{\udl{A}_2}{\nml_{\ol{\Sigma}} \diff \udl{A}_1}_{\ol{\Sigma}}-\ipc{\udl{A}_1}{\nml_{\ol{\Sigma}} \diff \udl{A}_2}_{\ol{\Sigma}},
    \end{equation}
    where $\ol{\Sigma}\subset\ol{N}$ is a regular Cauchy surface with boundary. 
    
    Identifying, similarly as above, $T^n\Sol^J_{\mathscr{G}}(\ol{N})=\Sol_\mathscr{G}^J(\ol{N})\times(\Sol_\mathscr{G}(\ol{N})^{\otimes n}$ we define the symplectic form $\mathbf{\Omega}:T^2\Sol_\mathscr{G}^J(\ol{N})\to \RR$ via
    \begin{equation}
        ([A],[\udl{A}_1]\otimes [\udl{A}_2])\mapsto \mathbf{\Omega}_{[A]}([\udl{A}_1]\otimes [\udl{A}_2]):=\sigma([\udl{A}_1]\otimes [\udl{A}_2]).
    \end{equation}
\end{definition}
Here we have denoted $[A]:=A+\mathscr{G}(\ol{N})\in \Sol^J_{\mathscr{G}}(\ol{N})$ for $A\in \Sol^J_{\mathscr{G}}$. As is customary, we use the same symbols $\mathbf{\Omega}$ and $\sigma$ for both the pre-symplectic form and structure associated with the pre-phase space as well as the symplectic form and structure associated with the reduced phase space, distinguishing meaning by context. 

\begin{rem}
The gauge reduction described above relies on the affine structure of $(\Sol^J(\ol{N}),\mathbf{\Omega})$. For a general pre-symplectic manifold $(M,\mathbf{\Omega})$, one quotients out the flow of vector fields $\mathfrak{g}\in \mathfrak{X}(M)$ for which $\iota_{\mathfrak{g}}\mathbf{\Omega}=0$.
\end{rem}

In Prop.~\ref{prop:sym_red} it is shown that
\begin{equation}\label{eq:GNbar:result}
    \mathscr{G}(\ol{N})=\{\diff\Lambda:\Lambda\in \Omega^0(\ol{N}),\,\Lambda\restriction_{\angle \ol{N}}=0\},
\end{equation} 
from which it follows that $\Lambda\in\Omega^0(\ol{N})$ obeys $\diff\Lambda\in\mathscr{G}(\ol{N})$ if and only if
\begin{equation}
\label{eq:G_angle}
    \Lambda\restriction_{\angle \ol{N}}\in  \mathscr{G}_{\angle}(\ol{N}):=   \Omega^0_{\diff}(\ol{N})\restriction_{\angle\ol{N}}.
\end{equation} 
Clearly, gauge reduction with respect to $\mathscr{G}(\ol{N})$ only quotients out a subspace of exact $1$-forms. The remaining exact $1$-forms lead to equivalence classes in $\Sol_\mathscr{G}(\ol{N})$ that we call \emph{large gauge directions}
\begin{equation}\label{eq:large_gauge_def}
    \mathscr{LG}(\ol{N}):=\{[\diff\Lambda]:\Lambda\in \Omega^0(\ol{N})\}\subset \Sol_\mathscr{G}(\ol{N}).
\end{equation}
Elements of $\mathscr{LG}(\ol{N})$ are generators for gauge transformations that in \cite{stromingerLecturesInfraredStructure2018} are referred to as \emph{large gauge transformations}.\footnote{Note that this terminology does not coincide with the notion of large gauge symmetries used in the context of disconnected Lie groups (see e.g.~\cite{gomesLargeGaugeTransformations2020}).} Note that $\mathscr{LG}(\ol{N})$ 
is nontrivial whenever $\angle\ol{N}\neq\emptyset$.

\begin{rem}
Elsewhere in the literature, e.g.~\cite{donnellyLocalSubsystemsGauge2016,rielloHamiltonianGaugeTheory2024a}, the group generated by $\mathscr{LG}(\ol{N})$ has been referred to as the group of \emph{surface symmetries} or \emph{the flux gauge group}. 
Similar symmetry groups, for theories on manifolds with either proper or asymptotic boundaries/corners, have been studied by many authors in the context of Yang-Mills, QED, gravitational and topological theories, see e.g.~\cite{reggeRoleSurfaceIntegrals1974,brownCentralChargesCanonical1986,giuliniAsymptoticSymmetryGroups1995,balachandranEdgeStatesGravity1996,barnichCovariantTheoryAsymptotic2002,araujo-regadoSoftEdgesMany2024}. 
\end{rem} 

\begin{proposition}\label{prop:large_gauge_boundary}
    For any finite Cauchy lens $\ol{N}$ with corner $\angle\ol{N}\subset\ol{N}$, the map
    \begin{align}\label{eq:LG_iso}
    \mathscr{LG}(\ol{N})& \to \Omega^0(\angle{\ol{N}})/\mathscr{G}_{\angle}(\ol{N})\nonumber\\
        [\diff\Lambda]&\mapsto \Lambda\restriction_{\angle \ol{N}}+\mathscr{G}_{\angle}( \ol{N}),
    \end{align}
    defines a linear isomorphism $\mathscr{LG}(\ol{N})\cong \Omega^0(\angle{\ol{N}})/\mathscr{G}_{\angle}(\ol{N})$. Consequently, defining $\mathscr{LG}_\angle(\ol{N})$
    as the orthogonal complement of $\mathscr{G}_{\angle}( \ol{N})$ in $\Omega^0(\angle\ol{N})$, 
    there is a unique linear isomorphism
    \begin{equation}
        \mathfrak{G}:\mathscr{LG}_{\angle}( \ol{N})\to \mathscr{LG}(\ol{N}) 
    \end{equation}
    so that, for all $\Lambda\in\Omega^0(\ol{N})$, $\mathfrak{G}^{-1}[\diff\Lambda]$ is the representative of $\Lambda\restriction_{\angle \ol{N}}+\mathscr{G}_{\angle}( \ol{N})$ that is 
    orthogonal to $\mathscr{G}_{\angle}( \ol{N})$.
\end{proposition}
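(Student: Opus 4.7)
The plan is to establish the isomorphism~\eqref{eq:LG_iso} by the usual sequence of checks---well-definedness, linearity, injectivity, surjectivity---and then reinterpret the quotient on the right-hand side as the orthogonal complement $\mathscr{LG}_\angle(\ol{N})$, using the finite-dimensionality of $\mathscr{G}_\angle(\ol{N})$.

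Well-definedness, linearity, and injectivity of~\eqref{eq:LG_iso} all follow directly from the equivalence~\eqref{eq:G_angle}, namely that $\diff\Lambda\in\mathscr{G}(\ol{N})$ iff $\Lambda\restriction_{\angle\ol{N}}\in\mathscr{G}_\angle(\ol{N})$. The forward direction shows the prescription $[\diff\Lambda]\mapsto\Lambda\restriction_{\angle\ol{N}}+\mathscr{G}_\angle(\ol{N})$ is insensitive to the choice of representative in $[\diff\Lambda]$, and the reverse direction shows its kernel is trivial. Linearity is obvious from the definition.

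The main obstacle is surjectivity, which reduces to showing that every $f\in\Omega^0(\angle\ol{N})$ extends to a smooth $\Lambda\in\Omega^0(\ol{N})$ with $\Lambda\restriction_{\angle\ol{N}}=f$. I would use the local corner structure: since $\angle\ol{N}$ is a smooth compact proper submanifold without boundary lying in the first-order corner, each $p\in\angle\ol{N}$ admits a chart in which $\ol{N}$ is modelled on $\ol{\RR}_+^2\times V_p$ for some open $V_p\subset\RR^{\dim\angle\ol{N}}$, with $\angle\ol{N}$ corresponding to $\{(0,0)\}\times V_p$; in such a chart $f$ extends by pulling back along the projection onto $V_p$. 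Compactness of $\angle\ol{N}$ provides a finite subcover $\{U_i\}$ of an open neighbourhood $W\supset\angle\ol{N}$; patching the local extensions by a partition of unity subordinate to $\{U_i\}$ on $W$ and then multiplying by a smooth cutoff that is $1$ on a neighbourhood of $\angle\ol{N}$ and supported in $W$, one obtains the required global $\Lambda$.

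For the final claim, $\mathscr{G}_\angle(\ol{N})=\Omega^0_\diff(\ol{N})\restriction_{\angle\ol{N}}$ is finite-dimensional, being spanned by restrictions of the indicator functions of the finitely many connected components of $\ol{N}$. Consequently the $L^2$-inner product $\ipc{\cdot}{\cdot}_{\angle\ol{N}}$ yields a genuine topological direct sum $\Omega^0(\angle\ol{N})=\mathscr{LG}_\angle(\ol{N})\oplus\mathscr{G}_\angle(\ol{N})$, so every coset in $\Omega^0(\angle\ol{N})/\mathscr{G}_\angle(\ol{N})$ contains a unique element of $\mathscr{LG}_\angle(\ol{N})$. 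Composing this canonical identification with the inverse of~\eqref{eq:LG_iso} defines $\mathfrak{G}$, and uniqueness of $\mathfrak{G}$ with the stated property is immediate from uniqueness of the orthogonal representative.
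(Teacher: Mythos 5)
Your proof is correct, and the first half (well-definedness, linearity, injectivity via~\eqref{eq:G_angle}, and surjectivity via smooth extension of boundary data) follows the paper's own route; where the paper simply cites a textbook extension lemma for surjectivity, you unpack it explicitly using the product corner charts $\ol{\RR}_+^2\times V_p$, a partition of unity, and a cutoff, which is the same idea in more detail. The one genuine divergence is in the last step: the paper obtains the orthogonal direct sum $\Omega^0(\angle\ol{N})=\mathscr{LG}_\angle(\ol{N})\oplus\mathscr{G}_\angle(\ol{N})$ by invoking the Hodge-type decomposition~\eqref{eq:nmlcoclosed}, which simultaneously identifies $\mathscr{LG}_\angle(\ol{N})$ as $\nml_{\partial\ol{\Sigma}}\Omega^1_{\diff^*}(\ol{\Sigma})$ (a fact reused elsewhere in the paper, e.g.\ to show that $V^S(\ol{\Sigma})$ depends only on $\ol{N}$), whereas you obtain the same direct sum by the elementary observation that $\mathscr{G}_\angle(\ol{N})$ is finite-dimensional, being spanned by restrictions of component indicators. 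Your route is more self-contained and avoids Hodge theory altogether; the paper's route is more economical within the document since it reuses a lemma needed anyway, and it yields the concrete description of the complement as normals of coclosed $1$-forms, which is needed later but not for this proposition.
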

\begin{proof}
    By definition, $[\diff\Lambda]=[\diff\Lambda']$ for $\Lambda,\Lambda'\subset \Omega^0(\ol{N})$ if and only if $\diff(\Lambda-\Lambda')\in \mathscr{G}(\ol{N})$; by \eqref{eq:G_angle}, this holds if and only if $(\Lambda-\Lambda')\restriction_{\angle\ol{N}}\in \mathscr{G}_{\angle}(\ol{N})$. Hence the map~\eqref{eq:LG_iso} is well-defined and injective; linearity is obvious
    and surjectivity holds because any smooth function on $\angle\ol{N}$ extends smoothly to $\ol{N}$ (see e.g.~\cite[Lem.~5.34]{leeIntroductionSmoothManifolds2012}) and
    elements of $\mathscr{G}_{\angle}(\ol{N})$ may be extended to locally constant functions. The consequence follows from~\eqref{eq:nmlcoclosed} because $\Omega^0(\angle{\ol{N}})/\mathscr{G}_{\angle}(\ol{N})\cong\mathscr{G}_{\angle}(\ol{N})^\perp$.
\end{proof}
Consequently, the space $\mathscr{LG}(\ol{N})$ depends not only on the geometry of $\angle{\ol{N}}$ but also on topological features of $\ol{N}$ (particularly on the connectedness of its corner components). A similar observation has previously been made from a Hamiltonian perspective in \cite[Sec.~5.4]{rielloHamiltonianGaugeTheory2024a}.
We will use the isomorphism $\mathfrak{G}$ to parametrise the large gauge directions $\mathscr{LG}(\ol{N})$ by $\mathscr{LG}_\angle(\ol{N})$.

\subsubsection{Initial value problem}

We finish this subsection with the existence and uniqueness (modulo appropriate gauge transformations) of the initial value problem for the equation of motion $EOM^J(A)=0$. One can use this to define a symplectomorphism between symplectic manifolds of solutions and initial data respectively.

\begin{proposition}
\label{prop:ext_ini_dat}
    Let $\ol{N}$ be a finite Cauchy lens, $\ol{\Sigma}\subset \ol{N}$ be a regular Cauchy surface with boundaries and $J\in \Omega^1_{\diff^*}(\ol{N})$. Set $\rho=\nml_{\ol{\Sigma}}J\in \Omega^0(\ol{\Sigma})$ and consider the spaces
    \begin{align}
        \mathscr{G}(\ol{\Sigma})&=\diff \{\Lambda\in \Omega^0(\ol{\Sigma}):\Lambda\restriction_{\partial\ol{\Sigma}}=0\} = \mathscr{G}(\ol{N})\restriction_{\ol{\Sigma}} \label{eq:Gsigma_def}\\
        \mathcal{A}(\ol{\Sigma})&=\Omega^1(\ol{\Sigma})/\mathscr{G}(\ol{\Sigma}) \label{eq:Asigma_def} \\
        \mathcal{E}^\rho(\ol{\Sigma})&=\{\mathbf{E}\in \Omega^1(\ol{\Sigma}):-\diff^*_{\ol{\Sigma}}\mathbf{E}=\rho\}. \label{eq:Erho_def}
    \end{align}
    Then the map
    \begin{equation}
        \mathfrak{I}^J_{\ol{\Sigma}}:\Sol_{\mathscr{G}}^J(\ol{N})\to \mathcal{A}(\ol{\Sigma})\times \mathcal{E}^\rho(\ol{\Sigma}),
    \end{equation}
    given by
    \begin{equation}
        \mathfrak{I}^J_{\ol{\Sigma}}([A])=(A\restriction_{\ol{\Sigma}}+\mathscr{G}(\ol{\Sigma}),-\nml_{\ol{\Sigma}}\diff A),
    \end{equation} 
    is a bijection.
\end{proposition}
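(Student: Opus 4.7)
The plan is to verify that $\mathfrak{I}^J_{\ol{\Sigma}}$ is well-defined and then to prove injectivity and surjectivity using the hyperbolic character of the Maxwell equations together with the causally convex embedding $\iota:\ol N\to M$ into a globally hyperbolic $M$ with Cauchy surface $\tilde\Sigma\supset\iota(\ol\Sigma)$. Well-definedness rests on two observations: first, if $A'=A+\diff\Lambda$ with $\Lambda\restriction_{\angle\ol N}=0$, then $A'\restriction_{\ol\Sigma}-A\restriction_{\ol\Sigma}=\diff_{\ol\Sigma}(\Lambda\restriction_{\ol\Sigma})\in\mathscr G(\ol\Sigma)$ because $\Lambda\restriction_{\ol\Sigma}$ vanishes on $\partial\ol\Sigma=\angle\ol N$, and $\diff A=\diff A'$; second, the Gauss constraint $-\diff^*_{\ol\Sigma}(-\nml_{\ol\Sigma}\diff A)=-\nml_{\ol\Sigma}\diff^*\diff A=\nml_{\ol\Sigma}J=\rho$ follows from the EOM and the identity $\diff^*_S\nml_S=-\nml_S\diff^*$.

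For injectivity, by the underlying affine-linear structure it suffices to show that if $A\in\Sol(\ol N)$ satisfies $A\restriction_{\ol\Sigma}=\diff_{\ol\Sigma}\Lambda_0$ with $\Lambda_0\restriction_{\partial\ol\Sigma}=0$ and $\nml_{\ol\Sigma}\diff A=0$, then $A\in\mathscr G(\ol N)$. I would first extend $\Lambda_0$ smoothly to $\Lambda\in\Omega^0(\ol N)$ with $\Lambda\restriction_{\angle\ol N}=0$ (possible since $\angle\ol N$ is a closed submanifold, cf.~\cite[Lem.~5.34]{leeIntroductionSmoothManifolds2012}) and subtract $\diff\Lambda\in\mathscr G(\ol N)$ to reduce to $A\restriction_{\ol\Sigma}=0$. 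The Faraday tensor $F=\diff A$ then satisfies the wave equation $\Delta F=0$ — from $\diff F=0$ and the homogeneous Maxwell equation $\diff^* F=0$ — with vanishing Cauchy data $F\restriction_{\ol\Sigma}=\diff_{\ol\Sigma}(A\restriction_{\ol\Sigma})=0$ and $\nml_{\ol\Sigma}F=0$. Uniqueness for the wave equation on the globally hyperbolic $M$, combined with the Cauchy lens condition $\mathscr J^\pm(\angle\ol N)=\angle\ol N$ (which forces $\mathscr J^\pm_M(p)\cap\tilde\Sigma\subset\iota(\ol\Sigma)$ for every $p\in\iota(\ol N)$), then yields $F=0$ on $\ol N$, so $A$ is closed. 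Because $\ol\Sigma\hookrightarrow\ol N$ is a homotopy equivalence via the temporal splitting associated with the regular Cauchy surface of Def.~\ref{def:cauchy_lens}, the restriction $H^1(\ol N)\to H^1(\ol\Sigma)$ is an isomorphism; the vanishing of $[A\restriction_{\ol\Sigma}]$ therefore forces $A=\diff\mu$ for some $\mu\in\Omega^0(\ol N)$. Finally $\mu\restriction_{\ol\Sigma}$ is locally constant on $\ol\Sigma$, and since connected components of $\ol\Sigma$ and $\ol N$ correspond, we can subtract a locally constant $c\in\Omega^0_\diff(\ol N)$ with $c\restriction_{\ol\Sigma}=\mu\restriction_{\ol\Sigma}$ to obtain $A=\diff(\mu-c)$ with $(\mu-c)\restriction_{\angle\ol N}=0$, proving $A\in\mathscr G(\ol N)$.

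For surjectivity, given $(\mathbf A+\mathscr G(\ol\Sigma),\mathbf E)$, I would extend a representative $\mathbf A$ smoothly to $\tilde{\mathbf A}\in\Omega^1(\tilde\Sigma)$, extend $J$ to a conserved current $\tilde J\in\Omega^1_{\diff^*}(M)$, and extend $\mathbf E$ to $\tilde{\mathbf E}\in\Omega^1(\tilde\Sigma)$ with $-\diff^*_{\tilde\Sigma}\tilde{\mathbf E}=\nml_{\tilde\Sigma}\tilde J$. The standard well-posedness of the Cauchy problem for Maxwell's equations on the globally hyperbolic $M$ — for instance through Lorenz gauge, where it reduces to the hyperbolic equation $\Delta\tilde A=-\tilde J$ with compatible initial data — then produces $\tilde A\in\Omega^1(M)$, and $A:=\iota^*\tilde A\in\Sol^J(\ol N)$ satisfies $\mathfrak{I}^J_{\ol\Sigma}([A])=(\mathbf A+\mathscr G(\ol\Sigma),\mathbf E)$ after absorbing any residual gauge at $\ol\Sigma$ into $\mathscr G(\ol\Sigma)$ by a further $\diff\Lambda$-adjustment with $\Lambda\restriction_{\angle\ol N}=0$.

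The main technical obstacle is the conserved extension of $J$ to $\tilde J$ on $M$, since generic smooth extensions violate $\diff^*\tilde J=0$. A workable approach is to use that $\star J$ is closed on $\ol N$, extend a local de Rham primitive (on a tubular neighborhood of $\ol N$) to a global smooth form $\tilde\eta$ on $M$, and define $\tilde J$ via $\star^{-1}\diff\tilde\eta$; alternatively one makes a Hodge-theoretic correction to a naive smooth extension. A secondary but non-trivial point is justifying the homotopy equivalence $\ol\Sigma\hookrightarrow\ol N$ used in the cohomological step of injectivity, which relies on the Gaussian temporal splitting around the regular Cauchy surface guaranteed by Definition~\ref{def:cauchy_lens}.
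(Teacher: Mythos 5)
Your overall structure is sound and your gauge-reduction step in injectivity and the pullback-to-$M$ strategy for surjectivity mirror the ideas the paper uses, but the route you take is genuinely different in both directions and has two real gaps.

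\textbf{Different routes.} For injectivity the paper does not touch the Faraday tensor or cohomology at all: it computes $\sigma(\tilde{A},A-A')=0$ directly from $\mathfrak{I}^J_{\ol\Sigma}([A])=\mathfrak{I}^J_{\ol\Sigma}([A'])$ and invokes Prop.~\ref{prop:sym_red} (the characterisation of the radical $\mathscr{G}(\ol{N})$), which in turn is proved via a Lorenz-gauge reduction of the vector potential to a wave equation with vanishing Cauchy data. Your approach instead argues on $F=\diff A$ and then climbs back to $A$ through de~Rham cohomology. Both are workable, but your route is longer and imports nontrivial topological input ($H^1(\ol N)\cong H^1(\ol\Sigma)$), whereas the paper's factors cleanly through machinery already needed for the reduced phase space. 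For surjectivity the paper proves Lem.~\ref{lem:IV_constr}, extends compactly supported $(\tilde{\mathbf A},\tilde{\mathbf E})$ \emph{first}, and only then repairs the conservation of $\tilde J$ by adding $\diff g$ where $g$ solves a wave equation with Cauchy data on $\tilde\Sigma$ chosen to enforce $\nml_{\tilde\Sigma}\tilde J'=-\diff^*_{\tilde\Sigma}\tilde{\mathbf E}$; the vanishing of $g$ on $\ol N$ is a domain-of-dependence statement, not a topological one.

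\textbf{Gap in surjectivity.} You prescribe the extensions in the opposite order: first a conserved $\tilde J$, then $\tilde{\mathbf E}$ with $-\diff^*_{\tilde\Sigma}\tilde{\mathbf E}=\nml_{\tilde\Sigma}\tilde J$ matching $\mathbf E$ on $\ol\Sigma$. This is where the argument actually fails as stated: once $\tilde J$ is fixed, the constraint on $\tilde{\mathbf E}$ is a PDE on $\tilde\Sigma$ with prescribed data on $\ol\Sigma$, and there is an integral obstruction — on each component of $\tilde\Sigma\setminus\Intr\ol\Sigma$ the flux of $\tilde{\mathbf E}$ through $\partial\ol\Sigma$ (fixed by $\mathbf E$) must equal the total charge $\int\nml_{\tilde\Sigma}\tilde J$ in that component, which your conserved extension does not control. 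The paper's order dissolves the problem: picking any compactly supported $\tilde{\mathbf E}$ extending $\mathbf E$ and \emph{then} adjusting the current via $\diff g$ lets one arrange the Gauss constraint exactly, with $g\restriction_{\ol N}=0$ guaranteeing $\tilde J'\restriction_{\ol N}=J$. Your de~Rham primitive construction for $\tilde J$ also quietly uses $[\star J]=0$ in $H^3(\ol N)$; this is true, but you would need to argue it (e.g.\ via $H^3(\ol\Sigma)=0$ for a $3$-manifold with nonempty boundary and the homotopy equivalence).

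\textbf{Gap in injectivity.} You assert that $F\restriction_{\ol\Sigma}=0$ and $\nml_{\ol\Sigma}F=0$ constitute "vanishing Cauchy data" for $\Delta F=0$. They do not: they are only the value of $F$ at $\ol\Sigma$, split into tangential and normal pieces, whereas the second-order wave equation also requires the normal derivative $\nabla_n F\restriction_{\ol\Sigma}$. The fix is to note that the first-order system $\diff F=0$, $\diff^*F=0$ determines $\nabla_n F\restriction_{\ol\Sigma}$ in terms of tangential derivatives of the data, so it vanishes too — or, more cleanly, to apply uniqueness for the first-order symmetric hyperbolic system $(\diff+\diff^*)F=0$ directly. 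You should also be explicit that the uniqueness statement you invoke is the domain-of-dependence property (for $\mathscr{D}(\ol\Sigma)=\ol N$), not global uniqueness on $M$, since you have not extended $F$ to $M$.
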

The proof relies on results shown in Appx~\ref{apx:IV_prob_lens}.
\begin{proof}
    Clearly the map $\mathfrak{I}^J_{\ol{\Sigma}}:\Sol_\mathscr{G}^J(\ol{N})\to \mathcal{A}(\ol{\Sigma})\times\mathcal{E}^\rho(\ol{\Sigma})$ is well defined. Surjectivity follows directly from Lem.~\ref{lem:IV_constr}. To prove injectivity, note that for any $A,A'\in \Sol^J(\ol{N})$ with $\mathfrak{I}^J_{\ol{\Sigma}}([A])=\mathfrak{I}^J_{\ol{\Sigma}}([A'])$, one has in particular that for any $\tilde{A}\in \Sol(\ol{N})$
    \begin{equation}
        \sigma(\tilde{A},A-A')=\int_{\ol\Sigma}\left((A-A')\wedge \star_{\ol{\Sigma}}\nml_{\ol{\Sigma}}\diff \tilde{A}- \tilde{A}\wedge \star_{\ol{\Sigma}}\nml_{\ol{\Sigma}}\diff (A-A')\right)=\int_{\ol\Sigma}g\wedge \star_{\ol{\Sigma}}\nml_{\ol{\Sigma}}\tilde{A}=0,
    \end{equation}
    for some $g\in \mathscr{G}(\ol{\Sigma})$. Thus, by Prop.~\ref{prop:sym_red}, one has $A-A'\in \mathscr{G}(\ol{N})$ and thus $[A]=[A']$.
\end{proof}
The space $\mathcal{A}(\ol{\Sigma})\times \mathcal{E}^\rho(\ol{\Sigma})$ can be treated as an affine space with respect to the vector space $\mathcal{A}(\ol{\Sigma})\oplus\mathcal{E}(\ol{\Sigma})$, where $\mathcal{E}(\ol{\Sigma})=\mathcal{E}^0(\ol{\Sigma})$. In particular, for any $(A,\udl{A})\in T\Sol^J(\ol{N})$, we have
\begin{equation}
    \mathfrak{I}^J_{\ol{\Sigma}}([A+\udl{A}])=\mathfrak{I}^J_{\ol{\Sigma}}([A])+\mathfrak{I}_{\ol{\Sigma}}([\udl{A}]),
\end{equation}
where $\mathfrak{I}_{\ol{\Sigma}}=\mathfrak{I}^0_{\ol{\Sigma}}$. Therefore $\mathfrak{I}^J_{\ol{\Sigma}}$ is an affine map. 

$\Sol^J_{\mathscr{G}}(\ol{N})$ is an affine symplectic space, i.e. an affine space whose associated vector space is symplectic. Similarly, we can endow $\mathcal{A}(\ol{\Sigma})\oplus\mathcal{E}(\ol{\Sigma})$ with a symplectic structure such that $\mathcal{A}(\ol{\Sigma})\times \mathcal{E}^\rho(\ol{\Sigma})$ is an affine symplectic space. In particular, since $\mathfrak{I}_{\ol{\Sigma}}$ is a linear isomorphism and $\Sol_{\mathscr{G}}(\ol{N})$ is endowed with a symplectic structure $\sigma$, we may define a symplectic structure $\sigma_{\ol{\Sigma}}$ on $\mathcal{A}(\ol{\Sigma})\oplus\mathcal{E}(\ol{\Sigma})$ by 
\begin{equation}
    \sigma_{\ol{\Sigma}}=\sigma\circ\left(\left(\mathfrak{I}_{\ol{\Sigma}}\right)^{-1}\times \left(\mathfrak{I}_{\ol{\Sigma}}\right)^{-1}\right).
\end{equation}
It is then straightforward to compute for $(\mathbf{A}_i+\mathscr{G}(\ol{\Sigma}))\in \mathcal{A}(\ol{\Sigma})$ and $ \mathbf{E}_i\in \mathcal{E}(\ol{\Sigma})$ with $i\in \{1,2\}$ that
\begin{equation}
        \label{eq:kin_symp}\sigma_{\ol{\Sigma}}((\mathbf{A}_1+\mathscr{G}(\ol{\Sigma}))\oplus \mathbf{E}_1,(\mathbf{A}_2+\mathscr{G}(\ol{\Sigma}))\oplus \mathbf{E}_2)=\ipc{\mathbf{A}_1}{\mathbf{E}_2}_{\ol{\Sigma}}-\ipc{\mathbf{A}_2}{\mathbf{E}_1}_{\ol{\Sigma}}.
    \end{equation}
    By construction, this elevates $\mathfrak{I}_{\ol{\Sigma}}$ to a symplectomorphism.
\begin{corollary}
    Let $\ol{N}$ be a finite Cauchy lens, $\ol{\Sigma}\subset \ol{N}$ be a regular Cauchy surface with boundaries. Endow $\mathcal{A}(\ol{\Sigma})\oplus\mathcal{E}(\ol{\Sigma})$ with the symplectic structure
    \begin{equation}
        \sigma_{\ol{\Sigma}}:\left(\mathcal{A}(\ol{\Sigma})\oplus\mathcal{E}(\ol{\Sigma})\right)^2\to \RR,
    \end{equation}
    given by Eq.~\eqref{eq:kin_symp}. Then the map $\mathfrak{I}_{\ol{\Sigma}}:\Sol_{\mathscr{G}}(\ol{N})\to \mathcal{A}(\ol{\Sigma})\oplus\mathcal{E}(\ol{\Sigma})$ is a linear symplectomorphism.
\end{corollary}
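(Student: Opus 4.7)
The corollary essentially packages the discussion immediately preceding it, so the plan is to verify the two nontrivial points that are left implicit: (i) that the formula on the right-hand side of Eq.~\eqref{eq:kin_symp} is a well-defined (anti)symmetric bilinear form on $\mathcal{A}(\ol{\Sigma})\oplus\mathcal{E}(\ol{\Sigma})$, and (ii) that it coincides with the pull-back of $\sigma$ on $\Sol_{\mathscr{G}}(\ol{N})$ under $\mathfrak{I}_{\ol{\Sigma}}$.

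First I would record that $\mathfrak{I}_{\ol{\Sigma}}$ is a linear bijection: bijectivity is the $J=0$ case of Prop.~\ref{prop:ext_ini_dat}, and linearity is immediate from the definitions of restriction and $\nml_{\ol{\Sigma}}$, together with the observation that in the homogeneous case $\Sol_{\mathscr{G}}(\ol{N})$, $\mathcal{A}(\ol{\Sigma})$ and $\mathcal{E}(\ol{\Sigma})$ are vector spaces (not merely affine spaces). Next I would check that the right-hand side of Eq.~\eqref{eq:kin_symp} descends to the quotient by $\mathscr{G}(\ol{\Sigma})$: if $\mathbf{A}_i$ is replaced by $\mathbf{A}_i+\diff\Lambda_i$ with $\Lambda_i\restriction_{\partial\ol{\Sigma}}=0$, the integration-by-parts identity~\eqref{eq:diffcodiffbound} applied on $\ol{\Sigma}$ gives
\begin{equation}
    \ipc{\diff\Lambda_i}{\mathbf{E}_j}_{\ol{\Sigma}} = \ipc{\Lambda_i}{\diff_{\ol{\Sigma}}^{*}\mathbf{E}_j}_{\ol{\Sigma}} + \ipc{\Lambda_i\restriction_{\partial\ol{\Sigma}}}{\nml_{\partial\ol{\Sigma}}\mathbf{E}_j}_{\partial\ol{\Sigma}}=0,
\end{equation}
since $\mathbf{E}_j\in\mathcal{E}(\ol{\Sigma})$ is coclosed and $\Lambda_i$ vanishes on $\partial\ol{\Sigma}$. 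Bilinearity and antisymmetry are then manifest, so the formula defines a 2-form on the quotient.

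The core of the proof is the direct computation identifying the pull-back. For $[\udl{A}_1],[\udl{A}_2]\in\Sol_{\mathscr{G}}(\ol{N})$, write $\mathbf{A}_i=\udl{A}_i\restriction_{\ol{\Sigma}}$ and $\mathbf{E}_i=-\nml_{\ol{\Sigma}}\diff\udl{A}_i$, so that $\mathfrak{I}_{\ol{\Sigma}}([\udl{A}_i])=(\mathbf{A}_i+\mathscr{G}(\ol{\Sigma}),\mathbf{E}_i)$. Since the pairing $\ipc{\,\cdot\,}{\,\cdot\,}_{\ol{\Sigma}}$ depends only on the restrictions of its arguments to $\ol{\Sigma}$, the formula of Def.~\ref{def:red_phas} yields
\begin{equation}
    \sigma([\udl{A}_1]\otimes[\udl{A}_2]) = \ipc{\mathbf{A}_2}{-\mathbf{E}_1}_{\ol{\Sigma}} - \ipc{\mathbf{A}_1}{-\mathbf{E}_2}_{\ol{\Sigma}} = \ipc{\mathbf{A}_1}{\mathbf{E}_2}_{\ol{\Sigma}} - \ipc{\mathbf{A}_2}{\mathbf{E}_1}_{\ol{\Sigma}},
\end{equation}
which is precisely the right-hand side of Eq.~\eqref{eq:kin_symp} evaluated on $\mathfrak{I}_{\ol{\Sigma}}([\udl{A}_1])$ and $\mathfrak{I}_{\ol{\Sigma}}([\udl{A}_2])$. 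Therefore $\mathfrak{I}_{\ol{\Sigma}}^{*}\sigma_{\ol{\Sigma}}=\sigma$, and combined with the linear bijectivity from step one this makes $\mathfrak{I}_{\ol{\Sigma}}$ a linear symplectomorphism. In particular this also shows that $\sigma_{\ol{\Sigma}}$, defined abstractly as the pull-back $\sigma\circ(\mathfrak{I}_{\ol{\Sigma}}^{-1}\times\mathfrak{I}_{\ol{\Sigma}}^{-1})$, is nondegenerate, and hence is a genuine symplectic form on $\mathcal{A}(\ol{\Sigma})\oplus\mathcal{E}(\ol{\Sigma})$.

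There is no real obstacle to this proof: the only mildly subtle point is the well-definedness check, which hinges on the combination of the coclosedness constraint defining $\mathcal{E}(\ol{\Sigma})$ and the boundary condition built into $\mathscr{G}(\ol{\Sigma})$ via~\eqref{eq:Gsigma_def}. Everything else is either a restatement of Prop.~\ref{prop:ext_ini_dat} or a one-line rewrite of the definition of $\sigma$ using the explicit form of $\mathfrak{I}_{\ol{\Sigma}}$.
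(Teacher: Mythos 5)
Your proof is correct and follows essentially the same route as the paper: the paper defines $\sigma_{\ol{\Sigma}}$ as the pull-back $\sigma\circ((\mathfrak{I}_{\ol{\Sigma}})^{-1}\times(\mathfrak{I}_{\ol{\Sigma}})^{-1})$ and then computes Eq.~\eqref{eq:kin_symp}, whereas you start from Eq.~\eqref{eq:kin_symp}, verify well-definedness on the quotient, and show it equals the pull-back — the same computation read in the opposite direction. The explicit well-definedness check (coclosedness of $\mathbf{E}$ plus the Dirichlet condition in $\mathscr{G}(\ol{\Sigma})$) is a welcome addition that the paper leaves implicit.
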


\subsubsection{Decomposition into closed-loop and surface configurations}
\label{sec:init_dat_hodge}
Given a choice of Cauchy surface, the reduced phase space of pure electromagnetism naturally decomposes as a product of two symplectic manifolds, see for example \cite{gomesQuasilocalDegreesFreedom2021,rielloHamiltonianGaugeTheory2024a}. 
This occurs in our setting as follows.
\begin{definition}
\label{def:hodge_spaces}
    For $\ol{\Sigma}\subset\ol{N}$ a finite Cauchy lens with regular Cauchy surface with boundary, we define the symplectic spaces $(V^C(\ol{\Sigma}),\sigma^C_{\ol{\Sigma}})$ and $(V^S(\ol{\Sigma}),\sigma^S_{\ol{\Sigma}})$ by
    \begin{align}
         V^C(\ol{\Sigma})=\Omega^1_{\tang\diff^*}(\ol{\Sigma})^{\oplus2},\qquad V^S(\ol{\Sigma})=\nml_{\partial\ol{\Sigma}}\Omega^1_{\diff^*}(\ol{\Sigma})^{\oplus2} \label{eq:VCVS_def}\\
         \sigma^C_{\ol{\Sigma}}(F_1\oplus H_1,F_2\oplus H_2)=\ipc{F_1}{H_2}_{\ol{\Sigma}}-\ipc{F_2}{H_1}_{\ol{\Sigma}},\\
         \sigma^S_{\ol{\Sigma}}(f_1\oplus h_1,f_2\oplus h_2)=\ipc{f_1}{h_2}_{\partial\ol{\Sigma}}-\ipc{f_2}{h_1}_{\partial\ol{\Sigma}}.
    \end{align}
\end{definition}
It is important to note that $\nml_{\partial\ol{\Sigma}}\Omega^1_{\diff^*}(\ol{\Sigma})$
is the orthogonal complement of $\mathscr{G}_\angle(\ol{N})$ in 
$\Omega^0(\partial\ol{\Sigma}) = \Omega^0(\angle\ol{N})$ by~\eqref{eq:nmlcoclosed} and therefore coincides with the space $\mathscr{LG}_\angle(\ol{N})$ (see Prop.~\ref{prop:large_gauge_boundary}) and is determined by $\ol{N}$ rather than the particular choice of $\ol{\Sigma}$. Thus, $V^S(\ol{\Sigma})$ is a symplectic space of configurations at the corner $\angle\ol{N}$.
We now give the decomposition of $\Sol^J_{\mathscr{G}}(\ol{N})$.
\begin{proposition}
\label{prop:sympl_decomp}
Let $\ol{\Sigma}$ be a regular Cauchy surface with boundary for a finite Cauchy lens $\ol{N}$, let $J\in \Omega^1_{\diff^*}(\ol{N})$ and $\rho=\nml_{\ol{\Sigma}}J$. Then there exist bijective symplectomorphisms
\begin{equation}
    \mathfrak{K}_{\ol{\Sigma}}:(\Sol_\mathscr{G}(\ol{N}),\sigma)\to (V^C(\ol{\Sigma})\oplus V^S(\ol{\Sigma}),\sigma^C_{\ol{\Sigma}}\oplus \sigma^S_{\ol{\Sigma}}),
\end{equation}
\begin{equation}
    \mathfrak{K}_{\ol{\Sigma}}^J:(\Sol_{\mathscr{G}}^J(\ol{N}),\sigma)\to (V^C(\ol{\Sigma})\times V^S(\ol{\Sigma}),\sigma^C_{\ol{\Sigma}}\oplus \sigma^S_{\ol{\Sigma}}),
\end{equation}
where $\mathfrak{K}_{\ol{\Sigma}}$ is linear and $\mathfrak{K}_{\ol{\Sigma}}^J$  is affine in the sense that for $([A],[\udl{A}])\in T\Sol_\mathscr{G}^J(\ol{N})$
\begin{equation}\label{eq:linearised_map_def}
    \mathfrak{K}_{\ol{\Sigma}}^J([A]+[\udl{A}])=\mathfrak{K}_{\ol{\Sigma}}^J([A])+\mathfrak{K}_{\ol{\Sigma}}([\udl{A}]).
\end{equation}
\end{proposition}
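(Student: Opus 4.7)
The plan is to compose the bijection $\mathfrak{I}^J_{\ol\Sigma}$ from Prop.~\ref{prop:ext_ini_dat} with explicit linear isomorphisms of the initial-data spaces built from the tangential Hodge--Helmholtz decomposition of Sect.~\ref{sec:forms}, and then to verify symplecticity by a direct integration-by-parts computation. Given $[\udl A]\in\Sol_{\mathscr{G}}(\ol N)$ with linearised initial data $\mathfrak{I}_{\ol\Sigma}([\udl A])=(\mathbf{A}+\mathscr{G}(\ol\Sigma),\mathbf{E})$, where $\mathbf{E}\in\Omega^1_{\diff^*}(\ol\Sigma)$, I apply the tangential Hodge--Helmholtz decomposition~\eqref{eq:tang_Hodge} separately to $\mathbf{A}$ and $\mathbf{E}$, writing
\begin{equation}
\mathbf{A}=\mathbf{A}^\tang + \diff\alpha,\qquad \mathbf{E}=\mathbf{E}^\tang + \diff\beta,
\end{equation}
with $\mathbf{A}^\tang,\mathbf{E}^\tang\in\Omega^1_{\tang\diff^*}(\ol\Sigma)$ and $\alpha$ normalised by $\alpha\restriction_{\partial\ol\Sigma}\in\mathscr{LG}_\angle(\ol N)=\nml_{\partial\ol\Sigma}\Omega^1_{\diff^*}(\ol\Sigma)$, and then define
\begin{equation}
\mathfrak{K}_{\ol\Sigma}([\udl A]):=\big((\mathbf{A}^\tang\oplus\mathbf{E}^\tang),\,(\alpha\restriction_{\partial\ol\Sigma}\oplus\nml_{\partial\ol\Sigma}\mathbf{E})\big)\in V^C(\ol\Sigma)\oplus V^S(\ol\Sigma).
\end{equation}

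Well-definedness on $\mathcal{A}(\ol\Sigma)$ is immediate since adding any $\diff\lambda\in\mathscr{G}(\ol\Sigma)$ (with $\lambda\restriction_{\partial\ol\Sigma}=0$) leaves $\mathbf{A}^\tang$ and $\alpha\restriction_{\partial\ol\Sigma}$ untouched. Bijectivity follows from the decompositions of Lem.~\ref{lem:Hodge} together with the identification~\eqref{eq:nmlcoclosed}: one recovers $\mathbf{A}$ from $(\mathbf{A}^\tang,\alpha\restriction_{\partial\ol\Sigma})$ by any smooth extension of the boundary value, and $\mathbf{E}$ from $(\mathbf{E}^\tang,\nml_{\partial\ol\Sigma}\mathbf{E})$ by solving the Neumann problem $\diff^*\diff\beta=0$, $\nml_{\partial\ol\Sigma}\diff\beta=\nml_{\partial\ol\Sigma}\mathbf{E}$. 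For the symplectic property, one plugs the decompositions into Eq.~\eqref{eq:kin_symp} and applies the integration-by-parts identity~\eqref{eq:diffcodiffbound}: the cross terms $\ipc{\mathbf{A}_i^\tang}{\diff\beta_j}_{\ol\Sigma}$ and $\ipc{\diff\alpha_i}{\mathbf{E}_j^\tang}_{\ol\Sigma}$ vanish thanks to the tangential-coclosed conditions on $\mathbf{A}^\tang,\mathbf{E}^\tang$, while the exact--exact term reduces to
\begin{equation}
\ipc{\diff\alpha_1}{\diff\beta_2}_{\ol\Sigma}=\ipc{\alpha_1\restriction_{\partial\ol\Sigma}}{\nml_{\partial\ol\Sigma}\mathbf{E}_2}_{\partial\ol\Sigma},
\end{equation}
using $\diff^*\diff\beta_j=\diff^*(\mathbf{E}_j-\mathbf{E}_j^\tang)=0$ and $\nml_{\partial\ol\Sigma}\diff\beta_j=\nml_{\partial\ol\Sigma}\mathbf{E}_j$. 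Antisymmetrising in $(1,2)$ reproduces $\sigma^C_{\ol\Sigma}\oplus\sigma^S_{\ol\Sigma}$ exactly.

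For the affine statement, I would fix an arbitrary base point $[A_0]\in\Sol^J_{\mathscr{G}}(\ol N)$ together with an arbitrary image $(F_0,f_0)\in V^C(\ol\Sigma)\times V^S(\ol\Sigma)$, and define $\mathfrak{K}^J_{\ol\Sigma}$ by Eq.~\eqref{eq:linearised_map_def}; bijectivity is then inherited from $\mathfrak{K}_{\ol\Sigma}$ via the affine structure on $\Sol^J_{\mathscr{G}}(\ol N)$, and~\eqref{eq:linearised_map_def} holds by construction. The main obstacle I anticipate is the careful book-keeping of boundary terms in the symplectic computation: one must check that the normalisation of $\alpha$ agrees with the orthogonal complement of $\mathscr{G}_\angle(\ol N)$ characterising $\mathscr{LG}_\angle(\ol N)$, so that the image lies in $V^S(\ol\Sigma)$ rather than a larger space, and that the integration by parts produces exactly the boundary pairing $\ipc{\,\cdot\,}{\,\cdot\,}_{\partial\ol\Sigma}$ defining $\sigma^S_{\ol\Sigma}$, with no residual bulk or cross terms.
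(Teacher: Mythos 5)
Your construction of the linear map $\mathfrak{K}_{\ol{\Sigma}}$ matches the paper's: compose $\mathfrak{I}_{\ol{\Sigma}}$ with the tangential Hodge--Helmholtz decomposition, normalise the scalar part's boundary value to lie in $\mathscr{LG}_\angle(\ol{N})$ so that it is well-defined on the class, and verify symplecticity by the same integration-by-parts computation (the cross terms dropping by the tangential/coclosed conditions and the exact--exact term becoming the boundary pairing). Where you diverge is the affine map $\mathfrak{K}^J_{\ol{\Sigma}}$: you propose fixing an arbitrary base point $[A_0]$ and an arbitrary image $(F_0,f_0)$ and declaring the map affine via~\eqref{eq:linearised_map_def}. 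That does suffice for the bare existence claim, so the proposal is logically correct, but the paper instead builds a canonical $\mathfrak{K}^J_{\ol{\Sigma}}$ by applying the Hodge--Helmholtz decomposition directly to the $\rho$-constrained initial data in $\mathcal{A}(\ol{\Sigma})\times\mathcal{E}^\rho(\ol{\Sigma})$: it solves the inhomogeneous Neumann problem $-\diff^*\diff\varepsilon=\rho$, $\nml_{\partial\ol{\Sigma}}\diff\varepsilon=f^\rho$ with the locally constant compatibility datum $f^\rho$, and subtracts $f^\rho$ from the flux component, yielding the explicit formula~\eqref{eq:KJSigma}. The payoff of the extra work is that the distinguished background $[A]$ with $\mathfrak{K}^J_{\ol{\Sigma}}([A])=0$ is thereby determined canonically by $\ol{\Sigma}$ and $J$, which is exactly what Definitions~\ref{def:class_obs_decomp} and~\ref{def:kin_transf} rely on when fixing the observables $o^{C/S}$ and the quantum CHH-isomorphism. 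Your shortcut proves the proposition but would not furnish that canonical structure for the downstream sections.
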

We say that $\mathfrak{K}_{\ol{\Sigma}}$ is the linearisation of the affine transformation $\mathfrak{K}_{\ol{\Sigma}}^J$,
which we call the \emph{Cauchy--Hodge--Helmholtz (CHH) decomposition}.

Prop.~\ref{prop:sympl_decomp} is proved in Appx~\ref{apx:proof_of_VCVS_decomp}.
The explicit formulae for $\mathfrak{K}_{\ol{\Sigma}}^J$ and $\mathfrak{K}_{\ol{\Sigma}}^J$ are given as follows. 
Starting with $[A]\in\Sol_\mathscr{G}^J(\ol{N})$, let $(\mathbf{A}+\mathscr{G}(\ol{\Sigma}),\mathbf{E})=\mathfrak{I}^J_{\ol{\Sigma}}([A])\in \mathcal{A}(\ol{\Sigma})\times \mathcal{E}^\rho(\ol{\Sigma})$ be the corresponding Cauchy data. By~\eqref{eq:nontang_Hodge}, 
$\mathbf{A}+\mathscr{G}(\ol{\Sigma})$ has a unique representative in $\Omega^1_{\diff^*}(\ol{\Sigma})$ (which vanishes if $\mathbf{A}\sim 0$), so we may assume $\diff^*\mathbf{A}=0$ without loss.
Applying the 
tangential Hodge--Helmholtz decomposition gives
\begin{align}
    \mathbf{A}=&\mathbf{A}^{\tang}+\diff\alpha,\nonumber\\
    \mathbf{E}=&\mathbf{E}^{\tang}+\diff \varepsilon,
\end{align}
where $\mathbf{A}^{\tang},\mathbf{E}^{\tang}\in \Omega^1_{\tang\diff^*}(\ol{\Sigma})$,
$\alpha\in \Omega^0(\ol{\Sigma})$ is chosen to obey $\alpha\restriction_{\partial\ol{\Sigma}}\in\nml_{\partial\ol{\Sigma}}\Omega^1_{\diff^*}(\ol{\Sigma})$ and $\Delta \alpha=\diff^*\mathbf{A}=0$, while $\varepsilon\in \Omega^0(\ol{\Sigma})$ solves $-\diff^*\diff\varepsilon=\rho$.  

Finally, let 
 $f^\rho\in\Omega^0(\partial\ol{\Sigma})$ be a locally constant function equal, on the boundary of each component of $\ol{\Sigma}$, to the total charge in that component 
 divided by the area of its boundary.
Then
\begin{equation}\label{eq:KJSigma}
 \mathfrak{K}_{\ol{\Sigma}}^J(A)  = \left(\mathbf{A}^{\tang},\mathbf{E}^{\tang};\alpha\restriction_{\partial\ol{\Sigma}},\nml_{\partial\ol{\Sigma}}\diff\varepsilon-f^\rho\right) = 
  \left(\mathbf{A}^{\tang},\mathbf{E}^{\tang};\alpha\restriction_{\partial\ol{\Sigma}},\nml_{\partial\ol{\Sigma}}\mathbf{E}-f^\rho\right)
\end{equation}
The linearised map $\mathfrak{K}_{\ol{\Sigma}}$ is obtained from~\eqref{eq:linearised_map_def}.  

The decomposition via $\mathfrak{K}^J_{\ol{\Sigma}}$ is a special case of those considered in \cite{rielloHamiltonianGaugeTheory2024a}, used for a local characterisation of the reduced phase space of Yang-Mills theories, where the decomposition of the electric field given above is called the `radiative-Coulombic decomposition'. As will be explained in Sec.~\ref{sec:class_semi-loc}, we will instead say that  $V^C(\ol{\Sigma})$ comprises \emph{closed loop} configurations and $V^S(\ol{\Sigma})$ \emph{surface} configurations. We stress that the decomposition relies on the choice of $\ol{\Sigma}$ and is non-local, as computing $\mathfrak{K}_{\ol{\Sigma}}$ involves solving an elliptic boundary problem on $\ol{\Sigma}$.  

The expression for $\mathfrak{K}_{\ol{\Sigma}}$ simplifies for large gauge directions. If  $\Lambda\in\Omega^0(\Sigma)$ then $\mathfrak{I}^J_{\ol{\Sigma}}([\diff\Lambda])=(\diff\Lambda\restriction_{\ol{\Sigma}}+\mathscr{G}(\ol{\Sigma}),0)$ and the tangential component of the Hodge--Helmholtz decomposition vanishes, while
the scalar part $\alpha$ is such that $\alpha\restriction_{\partial\ol{\Sigma}}$ is
the representative of $(\diff\Lambda+\mathscr{G}(\ol{\Sigma}))\restriction_{\partial\ol{\Sigma}}$ that is orthogonal to $\mathscr{G}_\angle(\ol{N})=\mathscr{G}(\ol{\Sigma})\restriction_{\partial\ol{\Sigma}}$, i.e., $\mathfrak{G}^{-1}[\diff\Lambda]$. Therefore 
\begin{equation}
    \mathfrak{K}_{\ol{\Sigma}}(\mathfrak{G}(\lambda))=0\oplus 0\oplus \lambda\oplus 0
\end{equation}
for all $\lambda\in \mathscr{LG}_\angle(\ol{N})$. 
 
Given the reduced phase space $(\Sol^J_{\mathscr{G}},\mathbf{\Omega})$, 
an algebra of classical observables is a commutative algebra $\mathcal{F}$ of sufficiently regular functionals on $\Sol^J(\ol{N})$ that is closed under a Poisson bracket associated to $\mathbf{\Omega}$. 
We discuss two such Poisson algebras, namely, the algebra of \emph{semi-local} observables $\Fsloc$ and the subalgebra of \emph{local} observables $\Floc$.

\subsection{Classical local observables}\label{sec:class_loc_obs}

Local observables for electromagnetism have been considered in various contexts, see e.g.~\cite{dimockQuantizedElectromagneticField1992,sandersElectromagnetismLocalCovariance2014,buchholzUniversalCalgebraElectromagnetic2016,strohmaierClassicalQuantumPhoton2021}. Our Poisson algebra
$\Floc$ coincides with the observables considered in \cite{sandersElectromagnetismLocalCovariance2014} and is generated by smeared fields. We begin by introducing Green operators for the wave equation.

\begin{proposition}
\label{prop:green}
    Let $\ol{N}$ be a finite Cauchy lens with interior $N=\Intr(\ol{N})$. There exist unique retarded/advanced Green operators $G^\pm:\Omega_0^k(N)\to \Omega^k(\ol{N})$ for the wave operator $-(\diff+\diff^*)^2$ so that  
    \begin{equation}
        -(\diff^*\diff +\diff\diff^*)G^\pm f =f,\qquad \supp(G^\pm f)\subset \mathscr{J}^\pm(\supp(f)).
    \end{equation}
    The difference $G^{\PJ}:=G^--G^+$ defines the \emph{Pauli--Jordan operator}. One has $\diff G^{\#}=G^\#\diff$ and $\diff^*G^{\#}=G^\#\diff^*$ for $\#\in\{+,-,\PJ\}$.
\end{proposition}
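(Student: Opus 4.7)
The strategy is to exploit the causally convex isometric embedding $\iota:\ol{N}\to M$ into a globally hyperbolic spacetime $M$ guaranteed by Def.~\ref{def:cauchy_lens}. Writing $P = -(\diff+\diff^*)^2 = -(\diff\diff^*+\diff^*\diff)$, this operator extends to a normally hyperbolic operator on $k$-forms on $M$, so the standard theory of wave equations on globally hyperbolic manifolds (e.g.\ B\"ar--Ginoux--Pf\"affle) provides unique retarded and advanced Green operators $\tilde{G}^\pm:\Omega_0^k(M)\to\Omega^k(M)$ with $\supp(\tilde{G}^\pm\tilde{f})\subset\mathscr{J}^\pm_M(\supp\tilde{f})$ and $P\tilde{G}^\pm\tilde{f} = \tilde{f}$. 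For $f\in\Omega_0^k(N)$, extend to $\iota_*f\in\Omega_0^k(M)$ by zero (permitted because $\supp f$ is compactly contained in the open interior $N$) and set
\begin{equation*}
    G^\pm f := (\tilde{G}^\pm \iota_* f)\restriction_{\iota(\ol{N})}.
\end{equation*}
That $PG^\pm f = f$ is immediate by restriction. The support condition $\supp(G^\pm f)\subset\mathscr{J}^\pm_{\ol{N}}(\supp f)$ follows from causal convexity of $\iota$: any causal curve in $M$ whose endpoints lie in $\iota(\ol{N})$ remains in $\iota(\ol{N})$, giving $\mathscr{J}^\pm_M(\iota(\supp f))\cap\iota(\ol{N}) = \iota(\mathscr{J}^\pm_{\ol{N}}(\supp f))$.

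For uniqueness, let $\hat{G}^\pm$ be a second operator satisfying the conditions and set $u := (G^\pm-\hat{G}^\pm)f$. Then $Pu = 0$ and $\supp u\subset\mathscr{J}^\pm_{\ol{N}}(\supp f)$. Since $\supp f$ is a compact subset of $N$ it has positive separation from $\ol{\Sigma}^\mp$, so $u$ vanishes on an open neighbourhood of $\ol{\Sigma}^\mp$ in $\ol{N}$ and hence has zero Cauchy data on $\ol{\Sigma}^\mp$. One extends this zero Cauchy data to a full Cauchy surface $\tilde\Sigma\subset M$ containing $\iota(\ol{\Sigma}^\mp)$ (using Def.~\ref{def:cauchy_lens} together with \cite[Thm.~1.1]{bernalFurtherResultsSmoothability2006}); the resulting Cauchy problem on $M$ has the trivial solution as its unique solution, and causal convexity together with interior PDE uniqueness forces this zero solution to agree with $u$ on $\iota(\ol{N})$, so $u = 0$. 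The commutation identities $\diff G^\# = G^\#\diff$ and $\diff^* G^\# = G^\#\diff^*$ for $\#\in\{+,-\}$ follow because $\diff$ and $\diff^*$ commute with $P$ (as $\diff^2 = 0 = (\diff^*)^2$) and do not enlarge supports: both sides then satisfy the defining conditions of the unique Green operator for $\diff f$ (respectively $\diff^* f$) on $(k\pm 1)$-forms. The case $\# = \PJ$ follows by linearity.

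The main obstacle is really the uniqueness of the Cauchy problem for $P$ on the manifold-with-corners $\ol{N}$, which cannot be read off verbatim from the standard globally hyperbolic theory because of the spacelike boundary $\partial\ol{N} = \ol{\Sigma}^-\cup\ol{\Sigma}^+$ and the corner $\angle\ol{N}$. Reducing this to Cauchy uniqueness on the ambient $M$ via $\iota$ is clean in principle, but the step asserting that the global solution on $M$ restricts to $u$ on $\iota(\ol{N})$ deserves care; I would supply it by invoking the Cauchy-problem results collected in Appx~\ref{apx:IV_prob_lens} rather than reproving the underlying PDE analysis here.
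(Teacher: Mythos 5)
Your construction coincides with the paper's: both define $G^\pm = \iota^* G_M^\pm \iota_*$ via a globally hyperbolic extension $M$ and obtain the support property from causal convexity of $\iota$. The paper handles uniqueness slightly more cleanly than you do: $N=\Intr(\ol{N})$ is itself globally hyperbolic (being a causally convex open subset of $M$), so it carries unique retarded/advanced Green operators, and any two candidate operators on $\ol{N}$ restrict to the unique one on $N$ and therefore agree on $\ol{N}$ by density and continuity. Your Cauchy-problem reduction is also valid, but involves exactly the domain-of-dependence bookkeeping you flag at the end.

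The intertwining step, however, has a genuine gap as written. You assert that $\diff G^\pm f$ and $G^\pm\diff f$ ``both satisfy the defining conditions of the unique Green operator for $\diff f$,'' but they do not: one has $\supp(\diff G^\pm f)\subset\mathscr{J}^\pm(\supp f)$, whereas the defining support condition for $G^\pm$ applied to $\diff f$ is $\supp(\cdot)\subset\mathscr{J}^\pm(\supp(\diff f))$, and $\supp(\diff f)$ can be strictly smaller than $\supp f$ (indeed empty if $f$ is closed and nonzero). So the characterisation of the Green operator by its defining conditions does not directly identify $\diff G^\pm f$ with $G^\pm\diff f$. The correct argument is to apply Cauchy-problem uniqueness to the difference: set $w=\diff G^\pm f - G^\pm\diff f$, observe $-(\diff^*\diff+\diff\diff^*)w=0$ and $\supp w\subset\mathscr{J}^\pm(\supp f)$, so $w$ has vanishing data on $\ol{\Sigma}^\mp$ and hence $w=0$ by exactly the uniqueness argument you already use. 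The paper sidesteps this by citing \cite{sandersElectromagnetismLocalCovariance2014} for the intertwining relations, where that argument (on the ambient $M$) is spelled out.
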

\begin{proof}
    Let $\iota:\ol{N}\to M$ be a globally hyperbolic extension of $\ol{N}$ with $\partial M=\emptyset$. The wave equation on $M$ has unique Green operators $G^\pm_M:\Omega_0^k(M)\to \Omega^k(M)$ because it is normally hyperbolic -- see e.g.~\cite[Cor.~3.4.3]{barWaveEquationsLorentzian2007}) -- and we set 
    \begin{equation}
        G^\pm = \iota^* G_M^\pm \iota_*,
    \end{equation}
    where the push-forward $\iota_*f\in\Omega_0^k(M)$ is the extension by zero of $f\in \Omega_0^k(N)$. These operators have the required properties. Uniqueness follows 
    of uniqueness of Green operators on $N$. The last statement follows from the intertwining identities of $\diff$ and $\diff^*$ -- see e.g.,~\cite{sandersElectromagnetismLocalCovariance2014}.
\end{proof}

The Pauli-Jordan operator plays a key role in the definition of the Peierls bracket on local observables \cite{peierlsCommutationLawsRelativistic1952}, defining a Poisson algebra.
\begin{definition}
\label{def:class-loc-obs}
    For $\ol{N}$ a finite Cauchy lens and $J\in \Omega^1_{\diff^*}(\ol{N})$, the \emph{algebra of local observables} $\Floc^{}
    (\ol{N})\subset \mathcal{C}^\infty(\Sol_{\mathscr{G}}^J(\ol{N}), \RR)$ is the unital algebra of functionals generated by \emph{smeared fields}  $A(f)\in\mathcal{C}^\infty(\Sol_{\mathscr{G}}^J(\ol{N}),\RR)$ with $f\in \Omega^1_{0\diff^*}(\ol{N})$, where\footnote{The definition of $A(f)([A])$ is independent of the choice of representative $A\in [A]$, as for each $f\in \Omega^1_{0\diff^*}(\ol{M})$ and $\diff\lambda\in \mathscr{G}(\ol{N})$ we have $\ipc{f}{\diff\lambda}=\ipc{\diff^*f}{\lambda}=0$.}
    \begin{equation}
        A(f)([A]):=\ipc{f}{A}_N,
    \end{equation}
    and the Poisson bracket on $\Floc^{}(\ol{N})$  (the Peierls bracket) is defined by
    \begin{equation}
        \{A(f),A(h)\}=-\ipc{f}{G^{\PJ}h}_N \one.
    \end{equation}
\end{definition}
The dependence of $\Floc^{}(\ol{N})$ on the background current $J$ has been suppressed in the notation, but the generators $A(f)$ are sensitive to the background current via the relation $A(-\diff^*\diff h)=\ipc{h}{J}\one$ for each $h\in \Omega^1_{0\diff^*}(\ol{N})$.

\subsubsection{Local symplectically smeared fields}

In what follows, we will show how the Peierls bracket on $\Floc^{}(\ol{N})$ is related to the canonical structure obtained via the covariant phase space formalism, as described in Sec.~\ref{sec:cov_phas}.\footnote{For a general discussion of the relation between Peierls' method and the covariant phase space formalism see e.g.~\cite{forgerCovariantPoissonBrackets2005,khavkineCovariantPhaseSpace2014}.} This can be shown by expressing smeared fields evaluated on $\Sol_{\mathscr{G}}(\ol{N})$ in terms of symplectically smeared fields. We first introduce the following spaces.

\begin{definition}
\label{def:sol_sc}
    For $\ol{N}$ a finite Cauchy lens with interior $N$, we denote forms of \emph{spatially compact support} by
    \begin{equation}
        \Omega^k_{\sc}(\ol{N}):=\{\alpha\in \Omega^k(\ol{N}):\supp(A)\subset \mathscr{J}(K)\text{ for some compact }K\subset N).
    \end{equation}
    Now let
    \begin{equation}
        \Sol_{\sc}(\ol{N})=\Sol(\ol{N})\cap \Omega^1_{\sc}(\ol{N}),
    \end{equation}
    and
    \begin{equation}
        \Sol_{\sc,\mathscr{G}}(\ol{N})=\{[A]\in \Sol_{\mathscr{G}}(\ol{N}):A\in \Sol_{\sc}(\ol{N})\}.
    \end{equation}
\end{definition}
Using the spatially compact solution orbits $\Sol_{\sc,\mathscr{G}}(\ol{N})$, we define the local symplectically smeared fields.
\begin{definition}\label{def:SFloc}
     Let $\ol{N}$ be a finite Cauchy lens and $J\in \Omega^1_{\diff^*}(\ol{N})$. We define the  \emph{local symplectically smeared fields}, $\SFloc(\ol{N})$,
     by
    \begin{equation}
        \SFloc(\ol{N}):=\{O_{A}([\udl{A}]): A\in\Sol^J(\ol{N}),\ [\udl{A}]\in \Sol_{\sc,\mathscr{G}}(\ol{N})\},
    \end{equation}
    where $O_{A}([\udl{A}]):\Sol^J_{\mathscr{G}}(\ol{N})\to \RR$ is given by
\begin{equation}
    O_{A}([\udl{A}])([\tilde{A}])=\sigma(\udl{A},\tilde{A}-A).
\end{equation}
\end{definition}
\begin{rem}
\label{rem:symp_func_dep} The symplectically smeared fields are not functionally independent, in fact we have
\begin{equation}
    \label{eq:symp_shift}
        O_{A'}([\udl{A}])=O_{A}([\udl{A}])+\sigma(\udl{A},A-A')\one,
\end{equation}
for each $A,A'\in \Sol^J(\ol{N})$ and $\udl{A}\in \Sol_{\sc}(\ol{N})$. In particular, we see that $O_A([\udl{A}])=O_{A'}([\udl{A}])$ if $A-A'\in \mathscr{G}(\ol{N})$.
\end{rem}
A priori it may not seem obvious that the observables introduced in Def.~\ref{def:SFloc} are in fact local, this is shown below. In fact, similarly to the smeared fields, the local symplectically smeared fields form a generating set of the local observables.

\begin{proposition}
\label{prop:Floc_sympl_smear}
    Let $\ol{N}$ be a finite Cauchy lens and $J\in \Omega^1_{\diff^*}(\ol{N})$. Then:
    \begin{enumerate}[a)]
    \item $\SFloc(\ol{N})\subset\Floc(\ol{N})$, i.e.~the local symplectically smeared fields are local observables, and each local observable can be written as a polynomial in local symplectically smeared fields.
    \item The Poisson bracket on $\Floc(\ol{N})$ evaluates on local symplectically smeared fields as
    \begin{equation}
    \label{eq:symp_pois_bra}
        \{O_{A}([\udl{A}]),O_{A'}([\udl{A}'])\}=\sigma(\udl{A},\udl{A}')\one,
    \end{equation}
    for each $A,A'\in\Sol^J(\ol{N})$ and $[\udl{A}],[\udl{A}']\in \Sol_{\sc,\mathscr{G}}(\ol{N})$.
    \item For each $\psi\in \Floc^{}(\ol{N})$ there exists a unique Hamiltonian vector field $X_\psi\in \mathfrak{X}(\Sol_\mathscr{G}^J(\ol{N}))$. That is, for each $[\tilde{A}]\in\Sol_\mathscr{G}^J(\ol{N})$ there exists a unique $X_\psi([\tilde{A}])\in \Sol_{\mathscr{G}}(\ol{N})$ such that
\begin{equation}
    \langle \psi^{(1)}([\tilde{A}]),[\udl{A}]\rangle=\mathbf{\Omega}_{[\tilde{A}]}(X_\psi([\tilde{A}])\otimes [\udl{A}]),
\end{equation}
for all $[\udl{A}]\in\Sol_\mathscr{G}(\ol{N})$.
\item  For $\varphi,\psi\in \Floc^{}(\ol{N})$,
\begin{equation}
    \{\varphi,\psi\}([\tilde{A}])=\mathbf{\Omega}_{[\tilde{A}]}(X_\varphi([\tilde{A}])\otimes X_\psi([\tilde{A}])).
\end{equation}
    \end{enumerate}
\end{proposition}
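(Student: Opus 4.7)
The plan is to derive all four statements from a single Green's identity linking the test-function pairing $\ipc{f}{\udl{A}}_N$ to the symplectic pairing $\sigma(G^{\PJ}f,\udl{A})$, and then to apply standard arguments from the interplay between the Peierls bracket and the covariant phase space symplectic structure.

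First, I would establish the central identity
\begin{equation}\label{eq:central_id_plan}
\sigma(G^{\PJ} f,\udl{A}) = -\ipc{f}{\udl{A}}_N
\end{equation}
for any $f\in \Omega^1_{0\diff^*}(\ol{N})$ and $\udl{A}\in \Sol(\ol{N})$. Note that $G^{\PJ}f\in\Sol_{\sc}(\ol{N})$: the intertwining property from Prop.~\ref{prop:green} gives $\diff^* G^{\PJ}f = G^{\PJ}\diff^* f = 0$ and $(\diff^*\diff+\diff\diff^*)G^{\PJ}f = 0$, so $\diff^*\diff G^{\PJ}f = 0$. To prove~\eqref{eq:central_id_plan}, one writes $f = -(\diff^*\diff+\diff\diff^*)G^+ f$ and integrates by parts on $\ol{N}$ using~\eqref{eq:diffcodiffbound}: the bulk terms collapse because $\diff^*\diff\udl{A}=0$ and $\diff^* G^+ f = 0$, while $G^+f$ vanishes on $\ol{\Sigma}_-$ by its support property. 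On $\ol{\Sigma}_+$ one has $G^- f = 0$ (since $\supp f\subset N$) so that $G^+ f = -G^{\PJ}f$ there, and the surviving boundary integrals reassemble into $-\sigma(G^{\PJ}f,\udl{A})$ computed on $\ol{\Sigma}_+$; base-point independence of $\sigma$ identifies this with its value on any regular Cauchy surface.

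With~\eqref{eq:central_id_plan} in hand, part (a) follows at once. For $A\in\Sol^J(\ol{N})$ and $\udl{A}\in \Sol(\ol{N})$,
\begin{equation}
O_A([G^{\PJ}f])([A+\udl{A}]) = \sigma(G^{\PJ}f,\udl{A}) = -\ipc{f}{\udl{A}}_N = -\bigl(A(f)([A+\udl{A}]) - A(f)([A])\bigr),
\end{equation}
so $A(f) = -O_A([G^{\PJ}f]) + A(f)([A])\,\one$; since $\Floc(\ol{N})$ is generated by the smeared fields, this yields both inclusions provided every $[\udl{A}]\in\Sol_{\sc,\mathscr{G}}(\ol{N})$ is of the form $[G^{\PJ}f]$ for some $f\in\Omega^1_{0\diff^*}(\ol{N})$. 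The latter follows by prescribing Cauchy data for $\udl{A}$ on a regular Cauchy surface $\ol{\Sigma}$ and realising them as the Cauchy data of $G^{\PJ}f$ for an appropriately chosen coclosed $f$ supported in a small collar of $\ol{\Sigma}$, using Prop.~\ref{prop:ext_ini_dat} together with the Hodge decompositions of Lem.~\ref{lem:Hodge} to enforce coclosedness.

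For part (b), the shift relation~\eqref{eq:symp_shift} shows that the Poisson brackets of the $O_A([G^{\PJ}f])$ and of the $A(f)$ coincide, and~\eqref{eq:central_id_plan} applied with $\udl{A} = G^{\PJ}h$ converts the Peierls expression into the symplectic one:
\begin{equation}
\{O_A([G^{\PJ}f]),O_A([G^{\PJ}h])\} = \{A(f),A(h)\} = -\ipc{f}{G^{\PJ}h}_N\,\one = \sigma(G^{\PJ}f,G^{\PJ}h)\,\one,
\end{equation}
which is~\eqref{eq:symp_pois_bra} on the generators; the general case follows by writing arbitrary $[\udl{A}], [\udl{A}']\in\Sol_{\sc,\mathscr{G}}(\ol{N})$ as $[G^{\PJ}f], [G^{\PJ}h]$. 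For (c) and (d) I would set $X_{A(f)}([\tilde{A}]) := -[G^{\PJ}f]$ on the generators, which by~\eqref{eq:central_id_plan} obeys
\begin{equation}
\mathbf{\Omega}_{[\tilde{A}]}(X_{A(f)}([\tilde{A}])\otimes [\udl{A}]) = -\sigma(G^{\PJ}f,\udl{A}) = \ipc{f}{\udl{A}}_N = \langle A(f)^{(1)}([\tilde{A}]), [\udl{A}]\rangle
\end{equation}
for all $[\udl{A}]\in\Sol_\mathscr{G}(\ol{N})$, and then extend to polynomials by the Leibniz rule $X_{\varphi\psi} = \varphi\, X_\psi + \psi\, X_\varphi$. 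Uniqueness is inherited from the nondegeneracy of $\sigma$ on $\Sol_\mathscr{G}(\ol{N})$ built into Def.~\ref{def:red_phas}. Part (d) is then the identity $\{\varphi,\psi\} = \mathbf{\Omega}(X_\varphi,X_\psi)$, which holds on the generators by part (b) and propagates to polynomials by the same Leibniz rule.

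The main obstacle is the central identity~\eqref{eq:central_id_plan}: the Stokes-theorem computation must be organised so that no contributions arise from the corner $\angle\ol{N}$, which is ensured by the compactness of $\supp f$ in the interior $N$ (forcing $G^\pm f$ to vanish near $\angle\ol{N}$ on the bounding Cauchy surfaces) together with the cancellation provided by $\diff^* f = 0$. Once this identity is established, the remaining work is essentially bookkeeping.
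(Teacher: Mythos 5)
Your proposal follows essentially the same route as the paper: the engine is the Green's identity $\sigma(G^{\PJ}f,\udl{A}) = -\ipc{f}{\udl{A}}_N$, which the paper establishes as Lemma~\ref{lem:st_to_symp_smear}, and parts (a)--(d) are then read off exactly as you describe (writing $A(f) = -O_A([G^{\PJ}f]) + A(f)([A])\one$, using the shift relation~\eqref{eq:symp_shift} for (b), and defining the Hamiltonian vector field on generators and extending by Leibniz for (c)--(d)). Two cosmetic differences are worth noting. First, the paper proves the central identity by splitting $\ol{N}$ at the intermediate Cauchy surface $\ol{\Sigma}$ and applying Stokes separately to $N\cap\mathscr{J}^\pm(\Sigma)$ with $G^\mp f$, whereas you integrate by parts once on all of $\ol{N}$ with $G^+f$ alone, reading off the result on $\ol{\Sigma}_+$ where $G^-f$ vanishes; both computations work and give the same identity. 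Second, for the surjectivity fact that every $[\udl{A}]\in\Sol_{\sc,\mathscr{G}}(\ol{N})$ equals some $[G^{\PJ}f]$ with $f\in\Omega^1_{0\diff^*}(\ol{N})$ (the paper's Lemma~\ref{lem:green_sc}), you sketch a Cauchy-data-in-a-collar construction, while the paper instead takes a cutoff $\chi$ vanishing to the past of a compact set and sets $f=\diff^*\diff(\chi\udl{A})$, verifying by a boundary-term computation that $[G^{\PJ}f]=[\udl{A}]$; the cutoff route has the advantage of also yielding the support control needed elsewhere (Prop.~\ref{prop:loc_localise}), but for the present proposition either works.
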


In particular, point \emph{d)} shows that the Peierls bracket and the Poisson bracket associated with $\mathbf{\Omega}$ agree on the local observables, as similarly discussed in \cite{forgerCovariantPoissonBrackets2005,khavkineCovariantPhaseSpace2014,harlowCovariantPhaseSpace2020}. To prove Prop.~\ref{prop:Floc_sympl_smear}, we need two preparatory lemmas, adapting standard results to our setting (see e.g.~\cite[Lem.~2.4 \& Thm.~3.3]{sandersElectromagnetismLocalCovariance2014}). Lem.~\ref{lem:st_to_symp_smear} is used to show that one can rewrite smeared fields in symplectically smeared fields, whereas Lem.~\ref{lem:green_sc} shows that each local symplectically smeared field may be obtained this way. 

\begin{lemma}\label{lem:localvssymp}
\label{lem:st_to_symp_smear}
If $\ol{N}$ is a finite Cauchy lens, then  
\begin{equation}
\label{eq:st_to_symp_smear}
    A(f)([\udl{A}])=-\sigma(G^{\PJ}f,\udl{A})
\end{equation}
for all $\udl{A}\in \Sol(\ol{N})$ and $f\in \Omega^1_{0\diff^*}(\ol{N})$.
\end{lemma}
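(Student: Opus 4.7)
The plan is to use the fact that $G^{\PJ}f$ is a spatially compact homogeneous solution, and to convert the bulk pairing $\ipc{f}{\udl{A}}_N$ into the symplectic boundary pairing via two applications of Green's identity on $\ol{N}$.

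First I would note that because $f$ is coclosed, the intertwining relations of Prop.~\ref{prop:green} give $\diff^* G^\pm f = G^\pm\diff^* f = 0$, so the wave equation $-(\diff^*\diff+\diff\diff^*)G^\pm f = f$ collapses to $-\diff^*\diff G^\pm f = f$ on $\ol{N}$. Hence $-\diff^*\diff G^{\PJ}f = 0$, and since $\supp G^{\PJ}f\subset \mathscr{J}(\supp f)$ is spatially compact, we obtain $G^{\PJ}f\in \Sol_{\sc}(\ol{N})$; in particular, $\sigma(G^{\PJ}f,\udl{A})$ is well-defined and independent of the Cauchy surface used to evaluate it. Crucially, as $\supp f$ is compact in $N$ and hence bounded away from $\partial\ol{N}$, the support conditions on $G^\pm$ imply that $G^\pm f$, and therefore $\diff G^\pm f$, vanish on a neighbourhood of $\ol{\Sigma}_\mp$.

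The central step is to rewrite
\begin{equation*}
    \ipc{f}{\udl{A}}_N = -\ipc{\udl{A}}{\diff^*\diff G^+ f}_{\ol{N}}
\end{equation*}
and apply the identity~\eqref{eq:diffcodiffbound} twice to transfer $\diff^*\diff$ onto $\udl{A}$: one application shifts $\diff^*$, producing a boundary term involving $\nml_{\partial\ol{N}}\diff G^+ f$, and a second shifts $\diff$, producing a boundary term involving $\nml_{\partial\ol{N}}\diff\udl{A}$. The resulting bulk integral $\ipc{G^+ f}{\diff^*\diff\udl{A}}_{\ol{N}}$ vanishes because $\udl{A}\in\Sol(\ol{N})$, and the contributions on $\ol{\Sigma}_-$ vanish by the support property of $G^+$. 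What remains on $\ol{\Sigma}_+$, after collecting signs from the two Stokes applications, is exactly
\begin{equation*}
    \ipc{\udl{A}\restriction_{\ol{\Sigma}_+}}{\nml_{\ol{\Sigma}_+}\diff G^+ f}_{\ol{\Sigma}_+} - \ipc{G^+ f\restriction_{\ol{\Sigma}_+}}{\nml_{\ol{\Sigma}_+}\diff\udl{A}}_{\ol{\Sigma}_+},
\end{equation*}
which is the expression~\eqref{eq:symp_struct} for $\sigma(G^+ f,\udl{A})$ evaluated at $\ol{\Sigma}_+$ (using that the defining bilinear form makes sense for any pair of $1$-forms at a Cauchy surface, regardless of whether they solve the homogeneous equation). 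Thus $\ipc{f}{\udl{A}}_N$ equals this boundary expression.

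To finish, I would use bilinearity and the vanishing of $G^- f$ on a neighbourhood of $\ol{\Sigma}_+$ to conclude that the same boundary expression equals $-\sigma(G^{\PJ}f,\udl{A})|_{\ol{\Sigma}_+}$, and then invoke the Cauchy-surface independence of $\sigma(G^{\PJ}f,\udl{A})$ on $\Sol_{\sc}(\ol{N})$. The main obstacle is purely bookkeeping: tracking the signs produced by two successive applications of Stokes' theorem together with the boundary-orientation conventions for the two halves of $\partial\ol{N}$. These align cleanly because both $\nml_{\ol{\Sigma}_\pm}$ are future-directed and the contributions on $\ol{\Sigma}_-$ drop out, leaving a single Cauchy-surface worth of boundary data organised exactly as in the symplectic form.
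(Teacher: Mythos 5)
Your argument is correct, and it reaches the same double-Stokes computation as the paper, but with a different organisation. The paper splits the integration region along an interior Cauchy surface $\Sigma$, writes $f=-\diff^*\diff G^\mp f$ on $N\cap J^\pm(\Sigma)$, and applies Stokes on each half so that the two halves contribute only boundary terms at $\Sigma$; adding the retarded and advanced pieces directly produces $-\sigma(G^{\PJ}f,\udl{A})$ at $\Sigma$, with no contributions at $\partial\ol{N}$ and no appeal to surface-independence. You instead work globally with a single Green operator $G^+$ on all of $\ol{N}$, which pushes the residual boundary term all the way to $\ol{\Sigma}_+$; you then convert $G^+$ to $G^{\PJ}$ by observing that the $G^- f$ contribution vanishes there, and finally invoke the Cauchy-surface independence of $\sigma$ on $\Sol_{\sc}(\ol{N})$ to move back to a generic surface. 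This buys you a proof with a single Green operator and no domain split, at the cost of an extra ingredient (surface-independence, which the paper has already established in the proposition preceding Def.~\ref{def:red_phas}, so it is legitimate to use). One cosmetic caution: $\ol{\Sigma}_+$ is not a \emph{regular} Cauchy surface in the sense of Def.~\ref{def:cauchy_lens}, so when you invoke surface-independence you should cite the proposition showing $\mathbf{\Omega}^{\ol{\Sigma}}=\mathbf{\Omega}^\pm$ on $\Sol(\ol{N})\otimes\Sol(\ol{N})$ rather than the definition of $\sigma$ in Def.~\ref{def:red_phas}, which is stated only for regular surfaces.
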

\begin{proof}
One observes that
\begin{equation}
A(f)([\udl{A}])=\ipc{f}{\udl{A}}_N=-\ipc{\diff^*\diff G^-(f)}{\udl{A}}_{N\cap J^+(\Sigma)}-\ipc{\diff^*\diff G^+(f)}{\udl{A}}_{N\cap J^-(\Sigma)}
\end{equation}
and uses Stokes' theorem in the form~\eqref{eq:diffcodiffbound} to compute
\begin{align}
-\ipc{\diff^*\diff G^\pm(f)}{\udl{A}}_{N\cap J^\mp(\Sigma)} &= \pm \ipc{\nml_{\Sigma}\diff G^{\pm}(f)}{\udl{A}}_{\Sigma} - \ipc{\diff G^\pm(f)}{\diff \udl{A}}_{N\cap J^\mp(\Sigma)} \\
&= \pm \ipc{\nml_{\Sigma}\diff G^{\pm}(f)}{\udl{A}}_{\Sigma}\mp  \ipc{G^{\pm}(f)}{\nml_{\Sigma} \diff\udl{A}}_{\Sigma}
- \ipc{G^\pm(f)}{\diff^*\diff \udl{A}}_{N\cap J^\mp(\Sigma)} 
\nonumber
\end{align}
where the normal is chosen to be future-pointing in both cases. Adding,
\begin{equation}
\ipc{f}{\udl{A}}_N =-\ipc{\nml_{\Sigma}\diff G^{\PJ}(f)}{\udl{A}}_{\Sigma}+\ipc{G^{\PJ}(f)}{\nml_{\Sigma}\diff \udl{A}}_{\Sigma} = -\sigma(G^{\PJ}(f),\udl{A}),
\end{equation}
as required.
\end{proof}

\begin{lemma}
\label{lem:green_sc}
    Let $\ol{N}$ be a finite Cauchy lens. Then
    \begin{equation}\label{eq:Solsc_calc}
         \Sol_{\sc,\mathscr{G}}(\ol{N})=\{[G^{\PJ}f]:f\in \Omega^1_{0\diff^*}(\ol{N})\}.
    \end{equation}
    Moreover, let $K\subset U \subset \Intr(\ol{N})$ with $K$ compact, and $U$ open and causally convex. Then for each $\udl{A}\in \Sol(\ol{N})$ with $\supp(\udl{A})\subset \mathscr{J}(K)$, there exists an $f\in \Omega^1_{0\diff^*}(U)$ with $[\udl{A}]=[G^{\PJ}f]$. 
\end{lemma}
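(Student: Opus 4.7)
The plan is to prove the set equality by establishing both inclusions and then to refine the construction to obtain the localisation $\supp f\subset U$.

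For the inclusion $\{[G^{\PJ}f]:f\in\Omega^1_{0\diff^*}(\ol{N})\}\subset \Sol_{\sc,\mathscr{G}}(\ol{N})$, I would argue directly: the support properties of $G^\pm$ in Prop.~\ref{prop:green} give $\supp(G^{\PJ}f)\subset \mathscr{J}(\supp f)$, so $G^{\PJ}f$ is spatially compact. The intertwining $\diff^* G^{\PJ} = G^{\PJ}\diff^*$ together with $\diff^*f=0$ gives $\diff^* G^{\PJ}f=0$, while $G^{\PJ}$ solves the homogeneous wave equation $-(\diff+\diff^*)^2 G^{\PJ}f=0$; combining these yields $-\diff^*\diff G^{\PJ}f = -(\diff+\diff^*)^2 G^{\PJ}f+\diff\diff^* G^{\PJ}f=0$, so $G^{\PJ}f\in\Sol_{\sc}(\ol{N})$.

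For the converse inclusion, I would use a time cutoff. Given $\udl{A}\in\Sol_{\sc}(\ol{N})$, choose $\chi\in\Omega^0(\ol{N})$ equal to $1$ on a neighbourhood of $\ol{\Sigma}_-$ and to $0$ on a neighbourhood of $\ol{\Sigma}_+$, with $\supp(\diff\chi)$ a compact subset of $N$ (obtained from a smooth time function on $\ol{N}$ and chosen level sets in the interior). Define
\begin{equation*}
f:=-\diff^*\diff(\chi\udl{A}).
\end{equation*}
Coclosedness is automatic since $(\diff^*)^2=0$. For the support: $f$ vanishes outside $\supp(\udl{A})$; on $\{\chi=0\}$ we have $\chi\udl{A}=0$, so $f=0$; and on $\{\chi=1\}$ we have $\chi\udl{A}=\udl{A}$, so $f=-\diff^*\diff\udl{A}=0$ as $\udl{A}$ is a solution. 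Hence $\supp f\subset \supp(\diff\chi)\cap\supp(\udl{A})$ is compact in $N$, and $f\in \Omega^1_{0\diff^*}(\ol{N})$.

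The identification $[G^{\PJ}f]=[\udl{A}]$ would then rest on computing $\ipc{f}{\udl{A}'}_{\ol{N}}$ for an arbitrary $\udl{A}'\in\Sol(\ol{N})$ in two ways. Using~\eqref{eq:diffcodiffbound} twice together with $\diff^*\diff\udl{A}'=0$ reduces the integral to boundary terms on $\partial\ol{N}=\ol{\Sigma}_+-\ol{\Sigma}_-$ (future-pointing normals): the $\ol{\Sigma}_+$ contribution vanishes because $\chi\udl{A}$ and $\diff(\chi\udl{A})$ vanish on a neighbourhood there, while the $\ol{\Sigma}_-$ contribution (where $\chi=1$, $\diff\chi=0$) assembles via~\eqref{eq:symp_struct} into $-\sigma(\udl{A},\udl{A}')$. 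On the other hand, Lem.~\ref{lem:localvssymp} gives $\ipc{f}{\udl{A}'}_{\ol{N}}=-\sigma(G^{\PJ}f,\udl{A}')$. Equating the two yields $\sigma(G^{\PJ}f-\udl{A},\udl{A}')=0$ for every $\udl{A}'\in\Sol(\ol{N})$, and by~\eqref{eq:GNbardef} this forces $G^{\PJ}f-\udl{A}\in\mathscr{G}(\ol{N})$, hence $[G^{\PJ}f]=[\udl{A}]$.

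For the localisation statement, I would tailor the cutoff to $U$. Since $K\subset U$ is compact inside a causally convex open set of $N$, standard Lorentzian causality arguments produce smooth spacelike hypersurfaces of $\ol{N}$ sandwiching $K$ sufficiently closely that the slab between them meets $\mathscr{J}(K)$ only within $U$. Choosing $\chi$ with $\supp(\diff\chi)$ inside this slab then gives $\supp f\subset \supp(\diff\chi)\cap\supp(\udl{A})\subset \supp(\diff\chi)\cap\mathscr{J}(K)\subset U$. The main obstacle I anticipate is the bookkeeping in the Stokes computation—matching $\ipc{f}{\udl{A}'}_{\ol{N}}$ with $-\sigma(\udl{A},\udl{A}')$ with the correct boundary orientation—since this is the substantive analytic content of the argument; the localisation, by contrast, is a purely geometric refinement.
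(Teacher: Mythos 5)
Your argument for the set equality is sound and matches the paper's strategy in its essentials: construct $f$ as $\pm\diff^*\diff(\chi\udl{A})$ for a cutoff $\chi$ that interpolates between the two boundary Cauchy surfaces, check that $f$ is compactly supported and coclosed, and then identify $[G^{\PJ}f]$ with $[\udl{A}]$ by computing $\sigma(G^{\PJ}f,\udl{A}')=\sigma(\udl{A},\udl{A}')$ for all $\udl{A}'$ via Lem.~\ref{lem:st_to_symp_smear} and Stokes. (You flipped both the sign of $f$ and the orientation of $\chi$ relative to the paper, but the two flips cancel, so that part is fine.)

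The localisation step, however, has a genuine gap. You claim that one can choose spacelike hypersurfaces sandwiching $K$ closely enough that the slab between them meets $\mathscr{J}(K)$ only within $U$. This fails in general: any slab containing $K$ must have temporal extent at least that of $K$, and the causal future of the past end of $K$, evaluated near the future end of the slab, spreads out laterally at the speed of light — so $\mathscr{J}(K)\cap\text{slab}$ can extend far outside $U$ no matter how tightly the slab hugs $K$. For instance, if $K$ is a causal diamond $\mathscr{J}^+(p)\cap\mathscr{J}^-(q)$ with $p\ll q$ and $U$ a slightly larger causal diamond, the future light cone of $p$ at the time of $q$ already lies well outside $U$. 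A time-slab cutoff therefore cannot produce a compactly supported $f$ localised in $U$.

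The paper's fix is to use a cutoff adapted to the causal geometry rather than a foliation: pick a precompact open $V$ with $K\subset V\subset\ol{V}\subset U$, and choose $\chi\in\Omega^0(N)$ with $\chi=0$ on $\mathscr{J}^-(K)$ and $\chi=1$ on $N\setminus\mathscr{J}^-(V)$. Then $\supp(\diff\chi)\subset\mathscr{J}^-(V)\setminus\mathscr{J}^-(K)$, and intersecting with $\supp(\udl{A})\subset\mathscr{J}(K)$ gives
\[
\supp(f)\subset\mathscr{J}(K)\cap\bigl(\mathscr{J}^-(V)\setminus\mathscr{J}^-(K)\bigr)\subset\mathscr{J}^+(K)\cap\mathscr{J}^-(V)\subset U,
\]
the last inclusion by causal convexity of $U$ (since $\mathscr{J}^+(K)\cap\mathscr{J}^-(V)$ is contained in the causal diamond of $V$). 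The Stokes computation then proceeds as before, with the only surviving boundary terms on the relevant portion of $\ol{\Sigma}^+$. Note also that in the paper the set equality is deduced from the localised statement (applied with $U$ a suitable causally convex exhaustion), rather than the other way round.
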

\begin{proof}
    For each $f\in \Omega^1_{0\diff^*}(\ol{N})$, one has $\diff^*G^{\PJ}f=G^{\PJ}\diff^*f=0$, and thus $-\diff^*\diff G^{\PJ}f=-(\diff^*\diff +\diff\diff^*)G^{\PJ}f=0$. It thus follows that $G^{\PJ}f\in \Sol(\ol{N})$. Since by Prop.~\ref{prop:green}, one has $\supp(G^{\PJ}f)\subset \mathscr{J}(\supp(f))$, we find $[G^{\PJ}f]\in \Sol_{\sc,\mathscr{G}}(\ol{N})$, so the right-hand side of~\eqref{eq:Solsc_calc} is contained in the left.

    The reverse inclusion follows from the final statement, so it suffices to prove that. With $K$ and $U$ as in the hypotheses, consider an $\udl{A}\in \Sol_{\sc}(\ol{N})$ with $\supp(\udl{A})\subset \mathscr{J}(K)$. We will show that $[\udl{A}]=[G^{\PJ}f]$ for some  $f\in \Omega^1_{0\diff^*}(U)\subset \Omega^1_{0\diff^*}(\ol{N})$. 
    
    Choose a precompact open set $V$ with $K\subset V\subset U$ and choose $\chi\in \Omega^0(N)$ such that $\chi=1$ on $N\setminus \mathscr{J}^-(V)$ and $\chi=0$ on $\mathscr{J}^-(K)$. Defining $f=\diff^*\diff \chi \udl{A}$, one has $\supp(f)\subset \mathscr{J}(K)\cap (\mathscr{J}^-(V)\setminus \mathscr{J}^-(K))\subset \mathscr{J}^+(K)\cap \mathscr{J}^-(V)\subset U$ by causal convexity of $U$. This means $f\in \Omega^1_{0\diff^*}(U)$. By Lem.~\ref{lem:st_to_symp_smear}, we have for each $\udl{A}'\in \Sol(\ol{N})$
    \begin{align}
        \sigma(G^{\PJ}f,\udl{A}')=&-\ipc{f}{\udl{A}'}_{\ol{N}}=\ipc{-\diff^*\diff \chi \udl{A}}{\udl{A}'}_{\ol{N}}\nonumber\\
        =&\ipc{\nml_{\partial\ol{N}}\diff\chi\udl{A}}{\udl{A}'}_{\partial\ol{N}}-\ipc{\chi\udl{A}}{\nml_{\partial\ol{N}}\diff \udl{A}'}_{\partial\ol{N}}\nonumber\\
        =&\ipc{\nml_{\ol{\Sigma}^+}\diff\udl{A}}{\udl{A}'}_{\ol{\Sigma}^+}-\ipc{\udl{A}}{\nml_{\ol{\Sigma}^+}\diff \udl{A}'}_{\ol{\Sigma}^+}=\sigma(\udl{A},\udl{A}'),
    \end{align}
    where we have used Stokes' theorem, $-\diff^*\diff\udl{A}'=0$, and the behaviour of $\chi$ in neighbourhoods of $\supp(\udl{A})\cap\partial\ol{N}$. Since $\sigma$ is non-degenerate on $\Sol_{\mathscr{G}}(\ol{N})$, it follows that $[G^{\PJ}f]=[\udl{A}]$.
\end{proof}

Having establish these results, we can complete the following proof.
\begin{proof}[Proof of Prop.~\ref{prop:Floc_sympl_smear}]
Fix a background configuration $A'\in \Sol^J(\ol{N})$, then for each $[A'']\in \Sol_\mathscr{G}^J(\ol{N})$ and $f\in \Omega^1_{0\diff^*}(\ol{N})$ one has
\begin{equation}
\label{eq:sttosymp}
    A(f)([A''])=A(f)([A''-A'])+A(f)([A'])=-\sigma(G^{\PJ}(f),A''-A')+A(f)([A'])\,,
\end{equation}
where we applied Lemma~\ref{lem:localvssymp} to $A(f)([A''-A'])$. Using the definition of $O_{A'}$, we can then write
\begin{equation}
    A(f)=-O_{A'}([G^{\PJ}f])+A(f)([A'])\one,
\end{equation}
from which it follows that $O_{A'}([G^{\PJ}f])\in \Floc$ and that each functional in $\Floc$ can be expressed as a polynomial in symplectically smeared fields. By Lem.~\ref{lem:green_sc}, all local symplectically smeared observables are of the form $O_{A}([G^{\PJ}f])$, thus $\SFloc\subset \Floc$.

b) We calculate 
\begin{equation}
    \{O_{A}([G^{\PJ}f]),O_{A'}([G^{\PJ}h])\}=\{A(f),A(h)\}=-\ipc{f}{G^{\PJ}h}=\sigma(G^{\PJ}f,G^{\PJ}h)\one,
\end{equation}
where we have once again used Lem.~\ref{lem:localvssymp}.

c,d) For $\psi=O_{A}([\udl{A}])$, one finds that
\begin{equation}
    X_{\psi}([\tilde{A}])=[\udl{A}].
\end{equation}
By Prop.~\ref{prop:Floc_sympl_smear}, any $\psi\in \Floc^{}(\ol{N})$ can be written as a polynomial in symplectically smeared fields, and hence $X_\psi$ is fixed by linearity and the Leibniz rule. The resulting expression for the Poisson bracket can be read off from Eq.~\eqref{eq:symp_pois_bra}.
\end{proof}

\subsubsection{Background shift transformations}
\label{sec:field_shift}
We isolate an important subgroup of diffeomorphisms of the infinite dimensional affine manifold  $\Sol^J_{\mathscr{G}}(\ol{N})$, which will be called background shifts. 

\begin{definition}
    Let $\ol{N}$ be a finite Cauchy lens and $J\in \Omega^1_{\diff^*}(\ol{N})$. Each $[\udl{A}]\in \Sol_{\mathscr{G}}(\ol{N})$ defines a background shift   $\xi([\udl{A}]):\Sol^J_{\mathscr{G}}(\ol{N})\to \Sol^J_{\mathscr{G}}(\ol{N})$ by
    \begin{equation}
        \xi([\udl{A}])([A])=[A+\udl{A}].
    \end{equation}

If $[\udl{A}]\in \mathscr{LG}(\ol{N})$ then the background shift $\xi([\udl{A}])$ is a large gauge transformation on the phase space.
\end{definition}
Background shifts act on observables by pullback, and this action
preserves both the set of local observables and the set of symplectically smeared fields.
\begin{proposition}
    Let $\ol{N}$ be a finite Cauchy lens, $J\in \Omega^1_{\diff^*}(\ol{N})$ and $[\udl{A}']\in \Sol_{\mathscr{G}}(\ol{N})$. If $\psi\in \Floc(\ol{N})$, the background shifted observable $\xi([\udl{A}'])^*\psi=\psi\circ \xi([\udl{A}'])$ is local and $\xi([\udl{A}'])^*:\Floc(\ol{N})\to \Floc(\ol{N})$ is an isomorphism. Furthermore, for each $A\in \Sol^J(\ol{N})$ and $[\udl{A}]\in \Sol_{\sc,\mathscr{G}}(\ol{N})$, we have
    \begin{equation}
\label{eq:symp_fieldshift}\xi([\udl{A}'])^*O_A([\udl{A}])=O_{A-A'}([\udl{A}])=O_{A}([\udl{A}])+\sigma(\udl{A},\udl{A}')\one.
    \end{equation}
\end{proposition}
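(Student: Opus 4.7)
The plan is to invoke Prop.~\ref{prop:Floc_sympl_smear}(a), which writes every element of $\Floc(\ol{N})$ as a polynomial in local symplectically smeared fields $O_A([\udl{A}])$. Because pullback along any map between smooth manifolds is automatically a unital algebra homomorphism on the respective function algebras, it is enough to verify formula~\eqref{eq:symp_fieldshift} on such generators; locality of $\xi([\udl{A}'])^*\psi$ for arbitrary $\psi\in\Floc(\ol{N})$ then follows by multiplicativity.

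The verification on generators is a direct unpacking of definitions. For $[\tilde A]\in\Sol_{\mathscr{G}}^J(\ol{N})$, I would compute
\begin{equation}
    (\xi([\udl{A}'])^*O_A([\udl{A}]))([\tilde A]) = O_A([\udl{A}])([\tilde A+\udl{A}']) = \sigma(\udl{A},\tilde A+\udl{A}'-A) = \sigma(\udl{A},\tilde A-A)+\sigma(\udl{A},\udl{A}'),
\end{equation}
using bilinearity of $\sigma$, which establishes the last equality in~\eqref{eq:symp_fieldshift}. For the middle equality, one chooses any representative $\udl{B}\in\Sol(\ol{N})$ of $[\udl{A}']$, so that $A-\udl{B}\in\Sol^J(\ol{N})$; applying Rem.~\ref{rem:symp_func_dep} to the pair $(A,A-\udl{B})$ yields $O_{A-\udl{B}}([\udl{A}])=O_A([\udl{A}])+\sigma(\udl{A},\udl{B})\one$, matching the right-hand side already computed (and independent of the choice of representative, since any two differ by an element of $\mathscr{G}(\ol{N})$, which pairs to zero with $\udl{A}$ under $\sigma$).

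For the isomorphism statement, observe that $\xi([\udl{A}'])$ is a bijection of $\Sol^J_{\mathscr{G}}(\ol{N})$ with inverse $\xi(-[\udl{A}'])$, so $\xi([\udl{A}'])^*$ is automatically an algebra automorphism of $\mathcal{C}^\infty(\Sol^J_{\mathscr{G}}(\ol{N}),\RR)$. Having shown that it maps $\SFloc(\ol{N})$ into $\Floc(\ol{N})$, applying the identical argument to $-[\udl{A}']$ shows that the inverse automorphism also preserves $\Floc(\ol{N})$; hence $\xi([\udl{A}'])^*$ restricts to an algebra automorphism of $\Floc(\ol{N})$. The only nontrivial structural input is the reduction to symplectically smeared generators supplied by Prop.~\ref{prop:Floc_sympl_smear}; beyond that, the proof is purely a matter of tracking definitions, and I anticipate no real obstacle.
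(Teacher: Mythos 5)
Your argument is correct and follows essentially the same route as the paper: reduce to the symplectically smeared generators via Prop.~\ref{prop:Floc_sympl_smear}(a), use that pullback is a unital algebra homomorphism with inverse $\xi(-[\udl{A}'])^*$, and verify~\eqref{eq:symp_fieldshift} by a direct computation with $\sigma$ together with Rem.~\ref{rem:symp_func_dep}. Your brief remark on independence of the choice of representative $\udl{B}\in[\udl{A}']$ (since $\mathscr{G}(\ol{N})$ is in the radical of $\sigma$) is a small but welcome addition that the paper's proof leaves implicit.
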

\begin{proof}
    It is enough to verify Eq.~\eqref{eq:symp_fieldshift}, because
    pullback maps define algebra homomorphisms on the space of functionals and $\xi([\udl{A}'])^*$ has an inverse $\xi([-\udl{A}'])^*$. Let $[A']\in \Sol^J_{\mathscr{G}}(\ol{N})$, then
    \begin{equation}
        (\xi([\udl{A}'])^*O_A([\udl{A}]))([A'])=O_A([\udl{A}])([A'+\udl{A}'])=\sigma(\udl{A},A'+\udl{A}'-A)=O_{A-\udl{A}'}([\udl{A}])([A']),
    \end{equation}
    Hence $(\xi([\udl{A}'])^*O_A([\udl{A}]))=O_{A-\udl{A}'}([\udl{A}])\in \SFloc(\ol{N})$. Furthermore
    \begin{equation}
        O_{A-\udl{A}'}([\udl{A}])([A'])=\sigma(\udl{A},A'-(A-\udl{A}'))=\sigma(\udl{A},A'-A)+\sigma(\udl{A},\udl{A}')=O_{A'}([\udl{A}])([A'])+\sigma(\udl{A},\udl{A}'),
    \end{equation}
    and thus $O_{A-\udl{A}'}([\udl{A}])=O_{A'}([\udl{A}])+\sigma(\udl{A},\udl{A}')\one$.
\end{proof}
\subsubsection{Large gauge invariance of local observables and Poisson degeneracies} 

Let $\Lambda\in\Omega^0(\ol{N})$ and $\udl{A}'\in \Sol_{\sc}(\ol{N})$. Then one may compute
\begin{equation}\label{eq:sigmadLambdaAp}
        \sigma(\diff\Lambda,\udl{A}')=\ipc{\diff\Lambda\restriction_{\ol{\Sigma}}}{-\nml_{\ol{\Sigma}}\diff \udl{A}'}_{\ol{\Sigma}}=-\ipc{\Lambda\restriction_{\partial{\Sigma}}}{\nml_{\partial\ol{\Sigma}}\nml_{\ol{\Sigma}}\diff \udl{A}'}_{\partial\ol{\Sigma}}=0,
    \end{equation}
using the fact that $\diff^*\diff\udl{A}'=0$ and that $\udl{A}'$ vanishes near the $\partial\ol{\Sigma}$. This calculation has two consequences. First, local observables are invariant under large gauge transformations.
\begin{proposition}
\label{prop:c_loc_inv}
    Let $\psi\in \Floc$, then for each $[\diff\Lambda]\in \mathscr{LG}(\ol{N})$, we have 
    \begin{equation}
        \xi([\diff\Lambda])^*\psi=\psi.
    \end{equation}
\end{proposition}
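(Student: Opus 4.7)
The plan is to reduce the statement to the generators of $\Floc(\ol{N})$ and then apply the vanishing calculation~\eqref{eq:sigmadLambdaAp} that immediately precedes the proposition. Since pullback by a diffeomorphism is an algebra homomorphism, and by Prop.~\ref{prop:Floc_sympl_smear}(a) every local observable can be written as a polynomial in the local symplectically smeared fields $O_A([\udl{A}])$ with $A\in \Sol^J(\ol{N})$ and $[\udl{A}]\in \Sol_{\sc,\mathscr{G}}(\ol{N})$, it suffices to verify that
\[
	\xi([\diff\Lambda])^* O_A([\udl{A}]) = O_A([\udl{A}])
\]
for all such $A$ and $[\udl{A}]$, and all $\Lambda\in \Omega^0(\ol{N})$.

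First, I would apply formula~\eqref{eq:symp_fieldshift} with $[\udl{A}']=[\diff\Lambda]$, which yields
\[
	\xi([\diff\Lambda])^* O_A([\udl{A}]) = O_A([\udl{A}]) + \sigma(\udl{A},\diff\Lambda)\one.
\]
The task is thus to show that $\sigma(\udl{A},\diff\Lambda)=0$ whenever $\udl{A}\in \Sol_{\sc}(\ol{N})$ is a representative of a class in $\Sol_{\sc,\mathscr{G}}(\ol{N})$. By the antisymmetry of $\sigma$ this is equivalent to $\sigma(\diff\Lambda,\udl{A})=0$, which is precisely the content of the computation~\eqref{eq:sigmadLambdaAp}: choosing a regular Cauchy surface $\ol{\Sigma}\subset \ol{N}$ such that $\udl{A}$ vanishes near $\partial\ol{\Sigma}=\angle\ol{N}$ (possible by spatial compactness of $\udl{A}$), Stokes' theorem shifts the exterior derivative from $\Lambda$ onto $\nml_{\ol{\Sigma}}\diff\udl{A}$, and the boundary term vanishes because $\udl{A}$ has support disjoint from $\partial\ol{\Sigma}$.

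I expect no real obstacle here: the proposition is essentially a corollary of the identity~\eqref{eq:sigmadLambdaAp} combined with the generating property of symplectically smeared fields established in Prop.~\ref{prop:Floc_sympl_smear}(a). The only minor subtlety is to ensure that one can pick a representative $\udl{A}\in \Sol_{\sc}(\ol{N})$ of $[\udl{A}]\in \Sol_{\sc,\mathscr{G}}(\ol{N})$ whose support is compact in $\Intr(\ol{N})$ along some regular Cauchy surface, so that the boundary integral in~\eqref{eq:sigmadLambdaAp} is meaningful and genuinely vanishes — but this is exactly the defining feature of spatially compact solutions as in Def.~\ref{def:sol_sc}. Once this is in place, algebra homomorphism properties of $\xi([\diff\Lambda])^*$ propagate the invariance from generators to all of $\Floc(\ol{N})$.
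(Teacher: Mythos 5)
Your proof is correct and follows essentially the same route as the paper's own argument: reduce to the generators $O_{A}([\udl{A}])$ via Prop.~\ref{prop:Floc_sympl_smear}(a) and the homomorphism property of pullbacks, then invoke the background-shift formula and the calculation~\eqref{eq:sigmadLambdaAp} to conclude $\sigma(\udl{A},\diff\Lambda)=0$. The only cosmetic difference is that you cite Eq.~\eqref{eq:symp_fieldshift} explicitly while the paper evaluates $(\xi([\diff\Lambda])^*-\id)O_{A'}([\udl{A}'])$ directly; the content is the same.
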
 
\begin{proof}
    By Prop.~\ref{prop:Floc_sympl_smear}, it is enough to establish the case where $\psi=O_{A'}([\udl{A}'])$ for
    any $A'\in\Sol^J(\ol{N})$ and $[\udl{A}']\in  \Sol_{\sc,\mathscr{G}}(\ol{N})$. Let $[A]\in \Sol^J_{\mathscr{G}}(\ol{N})$, then
    \begin{equation}
        ((\xi([\diff\Lambda])^*-\id)O_{A'}([\udl{A}']))([A]) = \sigma(\udl{A}',\diff\Lambda) = 0
    \end{equation}
    by~\eqref{eq:sigmadLambdaAp},
    choosing representatives $\udl{A}'\in \Sol_\sc(\ol{N})$ and $\Lambda\in\Omega^0(\ol{N})$.
\end{proof}
Accordingly, the algebra of observables must be extended in order to consider observables sensitive to the `large gauge degrees of freedom' (or edge modes). This is done in the next subsection.

Second, the calculation~\eqref{eq:sigmadLambdaAp} can be used to provide examples of Poisson-degenerate observables, i.e.~non vanishing functionals $\varphi\in \Floc^{}(\ol{N})$ such that $\{\varphi,\psi\}=0$ for all $\psi\in \Floc^{}(\ol{N})$.
\begin{proposition}
\label{prop:loc_poiss_degen}
    Let $[\udl{A}]\in \mathscr{LG}(\ol{N})\cap \Sol_{\sc,\mathscr{G}}(\ol{N})$, then for any background $A\in \Sol^J(\ol{N})$, the observable $O_{A}([\udl{A}])\in \Floc^{}(\ol{N})$ is Poisson degenerate.
\end{proposition}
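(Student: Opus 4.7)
The plan is to reduce the claim to the two-generator case by using the Leibniz property of the Poisson bracket, and then invoke the identity~\eqref{eq:sigmadLambdaAp} verbatim. By Prop.~\ref{prop:Floc_sympl_smear}(a), every $\psi\in \Floc(\ol{N})$ is a polynomial in local symplectically smeared fields $O_{A'}([\udl{A}'])$ with $A'\in \Sol^J(\ol{N})$ and $[\udl{A}']\in \Sol_{\sc,\mathscr{G}}(\ol{N})$. Since the Peierls bracket is a biderivation on the unital algebra $\Floc(\ol{N})$, it suffices to verify that $\{O_A([\udl{A}]),O_{A'}([\udl{A}'])\}=0$ for every such generator.

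By Prop.~\ref{prop:Floc_sympl_smear}(b) this bracket equals $\sigma(\udl{A},\udl{A}')\,\one$ and is in particular independent of the backgrounds $A,A'$. Picking $\udl{A}=\diff\Lambda$ as a representative of the large gauge class $[\udl{A}]\in\mathscr{LG}(\ol{N})$, and a spatially compact representative $\udl{A}'\in \Sol_\sc(\ol{N})$ of $[\udl{A}']$, the calculation~\eqref{eq:sigmadLambdaAp} applies directly: the boundary integral on $\partial\ol{\Sigma}$ produced after integration by parts vanishes because $\udl{A}'$ is supported away from $\partial\ol{\Sigma}$ on any regular Cauchy surface. Hence $\sigma(\diff\Lambda,\udl{A}')=0$, and the bracket of $O_A([\udl{A}])$ with every generator of $\Floc(\ol{N})$ vanishes.

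Finally, to check that $O_A([\udl{A}])$ is genuinely non-vanishing (as required by the definition of Poisson degeneracy), note that $O_A([\udl{A}])\equiv 0$ on $\Sol^J_\mathscr{G}(\ol{N})$ would force $\sigma(\udl{A},\udl{B})=0$ for every $\udl{B}\in \Sol(\ol{N})$, whence $\udl{A}\in\mathscr{G}(\ol{N})$ by~\eqref{eq:GNbardef}, i.e.\ $[\udl{A}]=0$. Thus whenever $[\udl{A}]\in \mathscr{LG}(\ol{N})\cap \Sol_{\sc,\mathscr{G}}(\ol{N})$ is non-trivial, $O_A([\udl{A}])$ is a genuine non-zero local observable that Poisson commutes with all of $\Floc(\ol{N})$. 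There is no real obstacle to the argument: the substantive work has already been carried out in Prop.~\ref{prop:Floc_sympl_smear} and the boundary identity~\eqref{eq:sigmadLambdaAp}, and the proposition is a compact repackaging of these facts.
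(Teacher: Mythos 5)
Your proof is correct and takes essentially the same route as the paper: reduce to generators via Prop.~\ref{prop:Floc_sympl_smear}, compute the bracket via part (b), and apply the calculation~\eqref{eq:sigmadLambdaAp} to conclude. The short remark at the end addressing non-vanishing (needed to match the paper's definition of Poisson degeneracy, and implicitly requiring $[\udl{A}]\neq 0$) is a useful addition the paper leaves tacit, but it does not constitute a different argument.
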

\begin{proof}
Choose $\Lambda\in\Omega_\sc^0(\ol{N})$ such that $\diff\Lambda\in [\udl{A}]$. Then for 
arbitrary $A'\in \Sol^J(\ol{N})$ and $[\udl{A}']\in \Sol_{\sc,\mathscr{G}}(\ol{N})$,~\eqref{eq:sigmadLambdaAp} gives
\begin{equation}
        \{O_{A}([\udl{A}]),O_{A'}([\udl{A}'])\}=\sigma(\diff\Lambda,\udl{A}') \one=0.
    \end{equation}
    By Prop.~\ref{prop:Floc_sympl_smear} we have $\{O_{A}([\udl{A}]),\psi\}=0$ for every $\psi\in \Floc$, i.e., $O_{A}([\udl{A}])$ is Poisson degenerate.
\end{proof}

We conclude this subsection by recalling a prototypical example of a local Poisson degenerate observable (see e.g.\ \cite[Sec.~3.3]{sandersElectromagnetismLocalCovariance2014}). Consider a finite Cauchy lens $\ol{N}$ with Cauchy surface with boundaries $\ol{\Sigma}\cong S^2\times [0,1]$, 
so the corner $\angle\ol{N}=C_1\sqcup C_2$ has components $C_i\cong S^2$. Set $J=0$ for convenience and choose a function $\Lambda\in \Omega^0(\ol{N})$ with $\Lambda\restriction_{U_1}=0$ and $\Lambda\restriction_{U_2}=1$ for some open neighbourhoods $U_i\subset\ol{N}$ of $C_i$ as illustrated in Fig.~\ref{fig:degen_obs}. Then $[\diff\Lambda]\in \mathscr{LG}(\ol{N})\cap \Sol_{\sc,\mathscr{G}}(\ol{N})$ and for any configuration $A\in \Sol_\mathscr{G}(\ol{N})$ we can evaluate (using the first part of~\eqref{eq:sigmadLambdaAp})
\begin{equation}
    O_0([\diff\Lambda])([A])=\ipc{1}{\nml_{C_2}\mathbf{E}}_{C_2},
\end{equation}
where $\mathbf{E}=-\nml_{\ol{\Sigma}}\diff A$. By Prop.~\ref{prop:loc_poiss_degen}, this observable is Poisson degenerate. The observable $O_0([\diff\Lambda])$ measures the total outwards electric flux through $C_2$, which due to the Gauss constraint $\ipc{1}{\nml_{C_2}\mathbf{E}}_{C_2}+\ipc{1}{\nml_{C_1}\mathbf{E}}_{C_1}=0$ is equivalent to the total total inwards electric flux through $C_1$. 
Similarly, this electric flux can be re-expressed in terms of the total electric flux through any compact (co-dimension 1) surface $C\subset \Intr(\ol{\Sigma})$ for which $\ipc{1}{\nml_{C_2}\mathbf{E}}_{C_2}=\ipc{1}{\nml_{C}\mathbf{E}}_{C}$, as illustrated by Fig.~\ref{fig:degen_obs}.\footnote{$\ol{\Sigma}\setminus C$ must have a connected component containing $C_2$, whose closure is a manifold with boundary $C\sqcup C_2$.} 
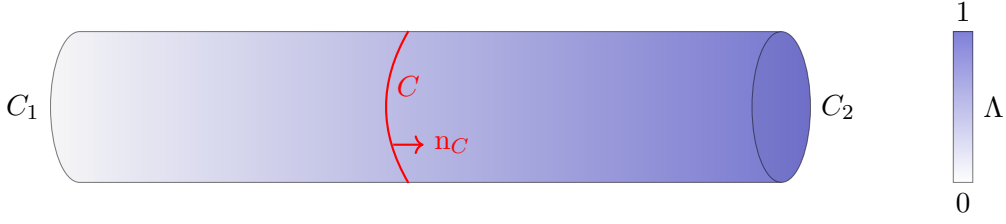
\begin{figure}
    \centering
    \begin{tikzpicture}
\node[cylinder, draw, shape aspect=3, cylinder uses custom fill,  minimum height=10cm,  minimum width=2cm, cylinder body fill=gray!25, opacity=0.5,left color=white,
  right color=blue!70!black,
  cylinder end fill=gray!50] (c) {};
\draw[bend right, thick, red]
 (c.north) to (c.south)  node [at start, above, red] {$C$};
 \draw[->,thick, red] (-.2,-.5)--(.2,-.5) node [right] {$\nml_C$};

\draw[black]  (c.bottom) node [left] {$C_1$} (c.top) node [right] {$C_2$};

\draw (c.top)+(2,0) node[rectangle, draw, opacity=0.5,bottom color=white,
  top color=blue!70!black, minimum height=2cm,  minimum width=.1cm] (r) {};
\draw[black] (r.south) node [below] {$0$} (r.north) node [above] {$1$} (r.east) node [right] {$\Lambda$};
 \end{tikzpicture}
    \caption{A colour plot on the Cauchy surface $\ol{\Sigma}$ (with one dimension suppressed) for a choice of $\Lambda$ yielding the Poisson degenerate observable $O_0([\diff\Lambda]):A\mapsto \ipc{1}{\nml_{C}\mathbf{E}}$}
    \label{fig:degen_obs}
\end{figure}

\subsection{Semi-local classical observables on a finite Cauchy lens}
\label{sec:class_semi-loc}
While a restriction to local observables is in line with the usual philosophy of local (quantum) physics (see e.g.~\cite{haagLocalQuantumPhysics1996}), 
we have seen that they are insensitive to large gauge transformations. Therefore we need to extend $\Floc$ to include \emph{semi-local observables}, thus making contact with notions of boundary observables that play a role in the literature on large gauge transformations (see e.g.~\cite{donnellyLocalSubsystemsGauge2016,stromingerLecturesInfraredStructure2018,herdegenAsymptoticStructureElectrodynamics2017,gomesQuasilocalDegreesFreedom2021,rielloHamiltonianGaugeTheory2024a}) and quantum reference frames (e.g.~\cite{carrozzaEdgeModesReference2022,kabelQuantumReferenceFrames2023}). 
\subsubsection{Semi-local symplectically smeared fields}
We introduce semi-local observables directly using symplectically smeared fields, analogous to the description of $\Floc$ as given in 
Prop.~\ref{prop:Floc_sympl_smear}.

\begin{definition}
\label{def:class-sloc-obs}
    Let $\ol{N}$ be a finite Cauchy lens and $J\in\Omega^1_{\diff^*}(\ol{N})$. We denote by $\Fsloc^{}(\ol{N})$ 
    the unital real algebra of \emph{semi-local observables} on $\Sol_{\mathscr{G}}^J(\ol{N})$ generated by 
    \begin{equation}
        \SFsloc^{}(\ol{N})=\{O_{A}([\udl{A}]):(A,\udl{A})\in T\Sol^J(\ol{N})\},
    \end{equation}
    where  $O_{A}([\udl{A}]):\Sol_{\mathscr{G}}^J(\ol{N})\to \RR$ is given by
    \begin{equation}
        O_{A}([\udl{A}])([\tilde{A}])=\sigma(\udl{A},\tilde{A}-A).
    \end{equation}
    This algebra is given a Poisson bracket specified by
    \begin{equation}
        \{O_{A}([\udl{A}]),O_{A'}(\udl{A}')\}=\sigma(\udl{A},\udl{A}') \one.
    \end{equation}
\end{definition}

Here we have once again suppressed the dependence of $\Fsloc^{}(\ol{N})$ on the background current $J$ in the notation. One clearly has $\Floc^{}(\ol{N})\subset \Fsloc^{}(\ol{N})$. 

Just as for $\Floc^{}(\ol{N})$, the Poisson structure on $\Fsloc^{}(\ol{N})$ corresponds to the symplectic form $\mathbf{\Omega}$ on $\Sol_{\mathscr{G}}^J(\ol{N})$: specifically, one has
\begin{equation}
    \{\varphi,\psi\}([\tilde{A}])=\mathbf{\Omega}_{[\tilde{A}]}(X_\varphi([\tilde{A}])\otimes X_\psi([\tilde{A}])),
\end{equation}
for arbitrary $\varphi,\psi\in \Fsloc^{}(\ol{N})$ and $[\tilde{A}]\in \Sol_{\mathscr{G}}^J(\ol{N})$, where $X_\varphi,X_\psi$ are the Hamiltonian vector fields of $\varphi,\psi$, defined so that $X_{\psi}([\tilde{A}])=[\udl{A}]$ in the case $\psi=O_{A}([\udl{A}])$. 

Furthermore, the action of background shifts on local observables discussed in Sec.~\ref{sec:field_shift} extends straightforwardly to semi-local observables. For each $[\udl{A}']\in \Sol_{\mathscr{G}}(\ol{N})$ the pullback of \begin{equation}
    \xi([\udl{A}']):\Sol^J_{\mathscr{G}}(\ol{N})\to \Sol^J_{\mathscr{G}}(\ol{N}),\qquad [A]\mapsto [A]+[\udl{A}],
\end{equation} defines an automorphism $\xi([\udl{A}'])^*:\Fsloc(\ol{N})\to \Fsloc(\ol{N})$, for which
\begin{equation}
\label{eq:sym_field_shift}
    \xi([\udl{A}'])^*O_{A}([\udl{A}])=O_{A-\udl{A'}}([\udl{A}])=O_{A}([\udl{A}])+\sigma(\udl{A},\udl{A'})\one.
\end{equation}

Some remarks are in order concerning the relation between semi-local observables and the notion of \emph{Wilson lines}. For simplicity, set $J=0$, let $\ol{\Sigma}\subset \ol{N}$ be a regular Cauchy surface with boundaries and $F\in \Omega^1_{\diff^*}(\ol{\Sigma})$. Denote $[A]=(\mathfrak{I}_{\ol{\Sigma}})^{-1}(0\oplus -F)$. Then by Eq.~\eqref{eq:symp_struct} and Prop.~\ref{prop:ext_ini_dat}, we have
\begin{equation}
    O_0([A])=\ipc{F}{\cdot}_{\ol{\Sigma}},
\end{equation}
as a semi-local observable on $\Sol_{\mathscr{G}}(\ol{N})$.
The map 
\begin{equation}
    \Omega^1(\ol{\Sigma})\ni \mathbf{A}\mapsto \ipc{F}{\mathbf{A}}_{\ol{\Sigma}},
\end{equation}
can be viewed as a smeared analogue of a spacelike Wilson line observable  
\begin{equation}
    \Omega^1(\ol{\Sigma})\ni \mathbf{A}\mapsto \int_0^1 \mathbf{A}_i(\gamma(t)) \dot{\gamma}^i\diff t.
\end{equation}
Here $\gamma:[0,1]\to \ol{\Sigma}$ is either a closed smooth curve for which we say that the observable above is a \emph{Wilson loop}, or $\gamma:[0,1]\to \ol{\Sigma}$ connects to the boundary, i.e.~$\gamma(0),\gamma(1)\in \partial\ol{\Sigma}$. Such observables, in particular the loop variant, were introduced in \cite{wilsonConfinementQuarks1974} 
in the context of lattice QFT. As a one-form distribution, the map 
\begin{equation}
    \psi_\gamma:\Omega^1_0(\Sigma)\to \RR,\qquad \psi_\gamma(u)=\int_0^1 u_i(\gamma(t)) \dot{\gamma}^i\diff t
\end{equation}
for $\gamma$ as above, is weakly co-closed, in the sense that for $h\in \Omega^0_0(\Sigma)$, one has $(\diff^*\psi_\gamma)(h):=\psi_\gamma(\diff h)=h(\gamma(1))-h(\gamma(0))=0$. Thus Wilson line observables are invariant under $\mathscr{G}(\ol{N})$. However, these observables are too singular to be evaluated on distributional configurations (such as considered in \cite{sandersElectromagnetismLocalCovariance2014}), which suggests that their quantisation may be ill-behaved. For this reason, it is preferable to work with their smeared analogues $O_0([A])$; nonetheless, the Wilson line picture can provide useful intuition.

\subsubsection{Edge mode observables}
\label{sec:edge_mode_obs}
Unlike the local observables, some semi-local observables transform non-trivially under large gauge transformations. Let us refer to such observables as follows.
\begin{definition}
    For $\ol{N}$ a finite Cauchy lens and $J\in \Omega^1_{\diff^*}(\ol{N})$, we say that $\psi\in \Fsloc^{}(\ol{N})$ is an \emph{edge-mode observable} if it is not invariant under large gauge transformations, i.e., there exists a $[\diff\Lambda]\in \mathscr{LG}(\ol{N})$ such that
    \begin{equation}
        \xi([\diff\Lambda])^*\psi\neq \psi.
    \end{equation}
\end{definition}

Using the ($\ol{\Sigma}$ dependent) decomposition $\mathfrak{K}_{\ol{\Sigma}}^J:\Sol_\mathscr{G}^J(\ol{N})\to V^C(\ol{\Sigma})\times V^S(\ol{\Sigma})$ as in Prop.~\ref{prop:sympl_decomp}, we can define two subalgebras of $\Fsloc^{}(\ol{N})$. In particular, in defining these subalgebras we will single out particular edge mode observables.
\begin{definition}
    \label{def:class_obs_decomp}
    Let $\ol{\Sigma}\subset \ol{N}$ be a regular Cauchy surface with boundaries of a finite Cauchy lens, and let $J\in \Omega^1_{\diff^*}(\ol{N})$. Let $\iota^{C/S}:V^{C/S}(\ol{\Sigma})\to V^{C}(\ol{\Sigma})\times V^{S}(\ol{\Sigma}) $ be the injections defined by 
    $\iota^C(F\oplus H)=(F\oplus H;0\oplus 0)$, $\iota^S(f\oplus h)=(0\oplus 0;f\oplus h)$ and define
    \begin{equation}
        o^{C/S} = O_A\circ \left(\mathfrak{K}_{\ol{\Sigma}}\right)^{-1} \circ \iota^{C/S}: V^{C/S}(\ol{\Sigma})\to \Fsloc^{}(\ol{N}),
    \end{equation}
    where $A\in \Sol^J(\ol{N})$ obeys $\mathfrak{K}_{\ol{\Sigma}}^J([A])=0$.\footnote{Note that the map $O_A:\Sol_{\mathscr{G}}(\ol{N})\to \Fsloc(\ol{N})$ is uniquely fixed by the gauge orbit $[A]\in \Sol_{\mathscr{G}}^J(\ol{N})$. This can be seen from Eq.~\eqref{eq:symp_shift}.}
    Then $\Fsloc^{C/S}(\ol{\Sigma})$ denotes the unital subalgebra of $\Fsloc^{}(\ol{N})$ generated by all observables of the form $o^{C/S}(\udl{v})$ for $\udl{v}\in V^{C/S}(\ol{\Sigma})$.
\end{definition}
From this definition, we obtain the following identity.
    \begin{lemma}
    \label{lem:symp_to_CS}
        Let $\ol{\Sigma}\subset \ol{N}$ and $J\in \Omega^1_{\diff^*}(\ol{N})$ be as in Def.~\ref{def:class_obs_decomp}. Let $(A',\udl{A})\in T\Sol^J(\ol{N})$, then
        \begin{equation}
        \label{eq:symp_to_CS}
            O_{A'}([\udl{A}])=o^C(F\oplus H)+o^S(f\oplus h)-(o^C(F\oplus H)([A'])+o^S(f\oplus h)([A']))\one,
        \end{equation}
        with 
        \begin{equation}
        \label{eq:o_sigma}
            o^C(F\oplus H)([A'])=\sigma^C(F\oplus H, F'\oplus H'),\qquad o^S(f\oplus h)([A'])=\sigma^S(f\oplus h, f'\oplus h'),
        \end{equation}
        where $(F\oplus H\oplus f\oplus h)=\mathfrak{K}_{\ol{\Sigma}}([\udl{A}])$ and $(F'\oplus H'\oplus f'\oplus h')=\mathfrak{K}_{\ol{\Sigma}}^J([A'])$.
    \end{lemma}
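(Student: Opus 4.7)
The plan is straightforward: fix a reference basepoint in $\Sol^J(\ol{N})$, decompose $[\udl{A}]$ using the linear isomorphism $\mathfrak{K}_{\ol{\Sigma}}$, and then combine the background-shift identity with bilinearity of $\sigma$. The main distinction to keep track of throughout is between the affine symplectomorphism $\mathfrak{K}^J_{\ol{\Sigma}}$ and its linearization $\mathfrak{K}_{\ol{\Sigma}}$, via~\eqref{eq:linearised_map_def}.

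First, I would pick $A_0\in\Sol^J(\ol{N})$ with $\mathfrak{K}^J_{\ol{\Sigma}}([A_0])=0$ (which exists by bijectivity of $\mathfrak{K}^J_{\ol{\Sigma}}$, Prop.~\ref{prop:sympl_decomp}). By Def.~\ref{def:class_obs_decomp} one then has $o^C(F\oplus H)=O_{A_0}([\udl{A}^C])$ and $o^S(f\oplus h)=O_{A_0}([\udl{A}^S])$, where $[\udl{A}^C]=\mathfrak{K}_{\ol{\Sigma}}^{-1}(F\oplus H\oplus 0\oplus 0)$ and $[\udl{A}^S]=\mathfrak{K}_{\ol{\Sigma}}^{-1}(0\oplus 0\oplus f\oplus h)$. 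Linearity of $\mathfrak{K}_{\ol{\Sigma}}$ gives the decomposition $[\udl{A}]=[\udl{A}^C]+[\udl{A}^S]$ in $\Sol_{\mathscr{G}}(\ol{N})$.

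Second, I would apply the shift identity~\eqref{eq:symp_shift} to move the basepoint from $A'$ to $A_0$, using bilinearity of $\sigma$ in the first argument (permitted because $\sigma$ vanishes on the radical $\mathscr{G}(\ol{N})$ by definition):
\[
O_{A'}([\udl{A}]) = O_{A_0}([\udl{A}^C]) + O_{A_0}([\udl{A}^S]) + \sigma(\udl{A}^C+\udl{A}^S,\,A_0-A')\one.
\]
Since $\sigma(\udl{A}^{C/S},A_0-A')=-O_{A_0}([\udl{A}^{C/S}])([A'])$, this rearranges exactly to Eq.~\eqref{eq:symp_to_CS}.

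Third, to verify Eq.~\eqref{eq:o_sigma} I would use that the affine property~\eqref{eq:linearised_map_def} forces $\mathfrak{K}_{\ol{\Sigma}}([A'-A_0])=\mathfrak{K}^J_{\ol{\Sigma}}([A'])-\mathfrak{K}^J_{\ol{\Sigma}}([A_0])=F'\oplus H'\oplus f'\oplus h'$. Because $\mathfrak{K}_{\ol{\Sigma}}$ intertwines $\sigma$ with $\sigma^C\oplus\sigma^S$ and $\mathfrak{K}_{\ol{\Sigma}}([\udl{A}^C])\in V^C(\ol{\Sigma})\oplus 0$, the $V^S$-contribution drops out, yielding
\[
o^C(F\oplus H)([A']) = \sigma(\udl{A}^C,\,A'-A_0) = \sigma^C(F\oplus H,\,F'\oplus H'),
\]
with the surface case handled symmetrically. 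The whole argument is bookkeeping; the only obstacle to watch is that $o^{C/S}$ is well-defined independently of the specific choice of $A_0$, which is precisely the content of Rem.~\ref{rem:symp_func_dep} applied to vectors in $\mathscr{G}(\ol{N})$.
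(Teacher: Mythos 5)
Your proof is correct and follows essentially the same route as the paper's: fix the reference configuration with $\mathfrak{K}^J_{\ol{\Sigma}}$-image zero, use linearity of $O_{A_0}(\cdot)$ together with the basepoint-shift identity~\eqref{eq:symp_shift} to get~\eqref{eq:symp_to_CS}, and then invoke the symplectomorphism property of $\mathfrak{K}_{\ol{\Sigma}}$ together with~\eqref{eq:linearised_map_def} to read off~\eqref{eq:o_sigma}. The only cosmetic difference is that the paper first groups $o^C+o^S$ into a single $O_{A_0}$ and then applies the shift, whereas you split into $C$- and $S$-pieces before shifting; the content is identical.
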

    \begin{proof}
    Note that
    \begin{align}
        \sigma^C(F\oplus H) + \sigma^S(f\oplus h) &= 
        O_A((\mathfrak{K}_{\ol{\Sigma}})^{-1}(F\oplus H\oplus 0\oplus 0) +O_A((\mathfrak{K}_{\ol{\Sigma}})^{-1}(0\oplus 0\oplus f\oplus h) \nonumber\\ &=
        O_A((\mathfrak{K}_{\ol{\Sigma}})^{-1}(F\oplus H\oplus f\oplus h)),
    \end{align}
    which together with Eq.~\eqref{eq:symp_shift} implies Eq.~\eqref{eq:symp_to_CS}.
    Furthermore
    \begin{align}
        O_A((\mathfrak{K}_{\ol{\Sigma}})^{-1}(F\oplus H\oplus f\oplus h))&((\mathfrak{K}_{\ol{\Sigma}}^J)^{-1}(F'\oplus H'\oplus f'\oplus h'))\nonumber\\
        =&\sigma((\mathfrak{K}_{\ol{\Sigma}})^{-1}(F\oplus H\oplus f\oplus h),(\mathfrak{K}_{\ol{\Sigma}}^J)^{-1}(F'\oplus H'\oplus f'\oplus h')-A)\nonumber\\
        =&\sigma((\mathfrak{K}_{\ol{\Sigma}})^{-1}(F\oplus H\oplus f\oplus h),(\mathfrak{K}_{\ol{\Sigma}})^{-1}(F'\oplus H'\oplus f'\oplus h'))\nonumber\\=&\sigma^{C\oplus S}(F\oplus H\oplus f\oplus h,F'\oplus H'\oplus f'\oplus h'),
    \end{align}
    from which one can read off Eq.~\eqref{eq:o_sigma}. 
    \end{proof}
One now confirms the existence of edge mode observables, recalling that
$\mathscr{LG}_\angle(\ol{N})$ is nontrivial.
\begin{proposition}
\label{prop:edge_existence}
    Let $\ol{N}$ be a finite Cauchy lens and $J\in \Omega^1_{\diff^*}(\ol{N})$. Every observable 
    $o^S(0\oplus f)\in \Fsloc(\ol{N})$ with $f\in \mathscr{LG}_\angle(\ol{N})\setminus\{0\}$ is an edge mode observable.
\end{proposition}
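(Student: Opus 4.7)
The plan is to show directly that $\psi=o^S(0\oplus f)$ fails to be invariant under a specific, well-chosen large gauge transformation, namely the one associated with $f$ itself via the isomorphism $\mathfrak{G}$ from Proposition~\ref{prop:large_gauge_boundary}. The key tool is that $\mathfrak{K}_{\ol{\Sigma}}$ is a symplectomorphism, so symplectic pairings on $\Sol_{\mathscr{G}}(\ol{N})$ can be computed entirely in terms of $\sigma^C_{\ol{\Sigma}}\oplus\sigma^S_{\ol{\Sigma}}$.

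First, I would unwrap the definition: by Definition~\ref{def:class_obs_decomp}, $o^S(0\oplus f)=O_A([\udl{A}_f])$ where $[\udl{A}_f]\in\Sol_\mathscr{G}(\ol{N})$ is the unique orbit with $\mathfrak{K}_{\ol{\Sigma}}([\udl{A}_f])=(0\oplus 0;\,0\oplus f)\in V^C(\ol{\Sigma})\oplus V^S(\ol{\Sigma})$. Second, recall from the discussion immediately after~\eqref{eq:KJSigma} that for any $\lambda\in \mathscr{LG}_\angle(\ol{N})$ the corresponding large gauge direction satisfies
\begin{equation}
    \mathfrak{K}_{\ol{\Sigma}}(\mathfrak{G}(\lambda))=(0\oplus 0;\,\lambda\oplus 0).
\end{equation}
Taking $\lambda=f$ (which is legitimate because $f\in\mathscr{LG}_\angle(\ol{N})$), I would then use the symplectomorphism property to compute
\begin{equation}
    \sigma(\udl{A}_f,\mathfrak{G}(f))=\sigma^C_{\ol{\Sigma}}(0\oplus 0,\,0\oplus 0)+\sigma^S_{\ol{\Sigma}}(0\oplus f,\,f\oplus 0)=-\ipc{f}{f}_{\partial\ol{\Sigma}}.
\end{equation}
Since the pairing $\ipc{\,\cdot\,}{\,\cdot\,}_{\partial\ol{\Sigma}}$ is an $L^2$ inner product and $f\neq 0$, this is strictly negative, hence nonzero.

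Finally, I would apply the background-shift identity~\eqref{eq:sym_field_shift} with $[\udl{A}']=\mathfrak{G}(f)\in\mathscr{LG}(\ol{N})$, giving
\begin{equation}
    \xi(\mathfrak{G}(f))^*o^S(0\oplus f)=o^S(0\oplus f)+\sigma(\udl{A}_f,\mathfrak{G}(f))\,\one=o^S(0\oplus f)-\ipc{f}{f}_{\partial\ol{\Sigma}}\,\one,
\end{equation}
which is distinct from $o^S(0\oplus f)$ as functionals on $\Sol^J_{\mathscr{G}}(\ol{N})$. This exhibits a large gauge transformation under which $o^S(0\oplus f)$ fails to be invariant, so it is an edge mode observable by definition.

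There is no real obstacle here: the entire argument hinges on the explicit form of $\mathfrak{K}_{\ol{\Sigma}}$ on large gauge directions and on the nondegeneracy of the $L^2$ pairing on $\partial\ol{\Sigma}$. The only subtlety to verify is that $\udl{A}_f$ and its translate by $\mathfrak{G}(f)$ are genuinely different orbits in $\Sol_\mathscr{G}(\ol{N})$, but this is automatic from the fact that $\mathfrak{K}_{\ol{\Sigma}}$ is a bijection and the two produce distinct CHH data.
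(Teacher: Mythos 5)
Your proof is correct and follows essentially the same route as the paper's: both reduce the claim to the single computation $\sigma^S_{\ol{\Sigma}}(0\oplus f,\,f\oplus 0)=-\ipc{f}{f}_{\partial\ol{\Sigma}}<0$, using the formula $\mathfrak{K}_{\ol{\Sigma}}(\mathfrak{G}(f))=0\oplus 0\oplus f\oplus 0$ and the fact that $\mathfrak{K}_{\ol{\Sigma}}$ is a symplectomorphism. The only cosmetic difference is that you package the final step through the background-shift identity~\eqref{eq:sym_field_shift}, while the paper goes through Lemma~\ref{lem:symp_to_CS} to evaluate the functional difference directly; these are the same calculation. One small remark: the ``subtlety'' you flag at the end (that $[\udl{A}_f]$ and its translate by $\mathfrak{G}(f)$ are distinct orbits) is not actually the point that needs verification — what matters is that the pullback shifts $o^S(0\oplus f)$ by the nonzero constant $-\ipc{f}{f}_{\partial\ol{\Sigma}}\,\one$, which you have already established, so the two functionals differ at every point of $\Sol^J_{\mathscr{G}}(\ol{N})$.
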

\begin{proof}
    By Lem.~\ref{lem:symp_to_CS}, and recalling that $\mathfrak{K}_{\ol{\Sigma}}(\mathfrak{G}(f))=0\oplus 0\oplus f\oplus 0$, we compute
    \begin{equation}
        o^S(0\oplus f)([A]+\mathfrak{G}(f))-o^S(0\oplus f)([A])=\sigma^S(0\oplus f;f\oplus 0)=-\ipc{f}{f}_{\partial\ol{\Sigma}}<0.
    \end{equation}
    Hence $\xi(\mathfrak{G}(f))^*o^S(0\oplus f)\neq o^S(0\oplus f)$ and thus $o^S(0\oplus f)\in \Fsloc(\ol{N})$ is an edge mode observable.    
\end{proof}

The decomposition of the symplectic manifold $\Sol^J_{\mathscr{G}}(\ol{N})$ given in Prop.~\ref{prop:sympl_decomp} is inherited by the observables via a decomposition of Poisson algebras (see also \cite{gomesQuasilocalDegreesFreedom2021,rielloEdgeModesEdge2021,rielloHamiltonianGaugeTheory2024a}). In particular, the subalgebras introduced in Def.~\ref{def:class_obs_decomp} are closed under the Poisson bracket, mutually Poisson commuting and generate the full algebra of semi-local observables.
\begin{theorem}
\label{thm:class_obs_decomp}
    Let $\ol{\Sigma}\subset \ol{N}$ be a regular Cauchy surface with boundaries of a finite Cauchy lens and $J\in \Omega^1_{\diff^*}(\ol{N})$. Then $\Fsloc^{}(\ol{N})$ is generated by $\Fsloc^{C}(\ol{\Sigma})\cup\Fsloc^{S}(\ol{\Sigma})$, and  
    \begin{equation}
        \Fsloc^{C}(\ol{\Sigma})\cap \Fsloc^{S}(\ol{\Sigma})=\CC\cdot \one,
    \end{equation} 
    with $1$ the unit of $\Fsloc^{}(\ol{N})$.
    Moreover, $\Fsloc^C(\ol{N})$ and $\Fsloc^S(\ol{N})$ are Poisson subalgebras
    obeying
    \begin{equation}
        \{\Fsloc^C(\ol{N}),\Fsloc^S(\ol{N})\}=\{0\}
    \end{equation}  
    and 
    \begin{equation}
        \{o^{C/S}(\udl{v}_1^{C/S}),o^{C/S}(\udl{v}_2^{C/S})\}=\sigma^{C/S}_{\ol{\Sigma}}(\udl{v}_1^{C/S},\udl{v}_2^{C/S})\cdot \one
    \end{equation}
    for each $\udl{v}_1^{C/S},\udl{v}_s^{C/S}\in V^{C/S}(\ol{\Sigma})$.  
\end{theorem}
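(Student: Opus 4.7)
The plan is to leverage Lemma~\ref{lem:symp_to_CS} and the fact that $\mathfrak{K}_{\ol{\Sigma}}$ is a symplectomorphism with target $(V^C(\ol{\Sigma})\oplus V^S(\ol{\Sigma}),\sigma^C_{\ol{\Sigma}}\oplus\sigma^S_{\ol{\Sigma}})$, which essentially does all the work.

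First I would establish the generation statement. By definition $\Fsloc(\ol{N})$ is generated by symplectically smeared fields $O_{A'}([\udl{A}])$, and Lemma~\ref{lem:symp_to_CS} writes any such generator as a sum of $o^C(F\oplus H)$, $o^S(f\oplus h)$, and a multiple of $\one$. Since each summand lies in $\Fsloc^C(\ol{\Sigma})\cup \Fsloc^S(\ol{\Sigma})\cup \CC\one$, this gives the required generation.

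Next I would compute the Poisson brackets on generators. For $\udl{v}_i^{C/S}\in V^{C/S}(\ol{\Sigma})$, set $\udl{A}_i^{C/S}=(\mathfrak{K}_{\ol{\Sigma}})^{-1}\iota^{C/S}(\udl{v}_i^{C/S})\in\Sol_{\mathscr{G}}(\ol{N})$, so that $o^{C/S}(\udl{v}_i^{C/S})=O_A([\udl{A}_i^{C/S}])$ for the chosen background $A$ with $\mathfrak{K}^J_{\ol{\Sigma}}([A])=0$. The Poisson bracket defined in Def.~\ref{def:class-sloc-obs} evaluates to $\sigma(\udl{A}_1,\udl{A}_2)\one$, and since $\mathfrak{K}_{\ol{\Sigma}}$ is a symplectomorphism this equals
\begin{equation}
(\sigma^C_{\ol{\Sigma}}\oplus\sigma^S_{\ol{\Sigma}})\bigl(\iota^{C/S}(\udl{v}_1^{C/S}),\iota^{C/S}(\udl{v}_2^{C/S})\bigr)\one,
\end{equation}
which gives the claimed formula in each of the same-sector cases and vanishes in the cross-sector case because $\iota^C$ and $\iota^S$ land in orthogonal symplectic summands. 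The Leibniz rule then extends this to show $\{\Fsloc^C,\Fsloc^S\}=\{0\}$ and that $\Fsloc^{C/S}(\ol{\Sigma})$ are each closed under the Poisson bracket (since the central term $\sigma^{C/S}_{\ol{\Sigma}}(\cdot,\cdot)\one$ lies in the subalgebra).

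Finally, for the intersection statement, the cleanest route is to use $\mathfrak{K}^J_{\ol{\Sigma}}$ to transport everything to $V^C(\ol{\Sigma})\times V^S(\ol{\Sigma})$. Under this identification, the generators of $\Fsloc^C(\ol{\Sigma})$ become affine functionals depending only on the $V^C$ coordinate, and similarly for $\Fsloc^S(\ol{\Sigma})$, so $\Fsloc^C(\ol{\Sigma})$ consists of polynomials in functionals pulled back from $V^C$ and $\Fsloc^S(\ol{\Sigma})$ consists of polynomials pulled back from $V^S$. Any element $\psi$ of the intersection is then constant along both the $V^S$-directions and the $V^C$-directions, hence constant, giving $\psi\in\CC\cdot\one$.

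The one point requiring a little care is the intersection statement: one needs to know that the generating linear functionals in each sector really do separate points on their respective factor, so that a nonzero polynomial in $V^C$-generators cannot accidentally depend only on $V^S$. This follows from non-degeneracy of $\sigma^{C/S}_{\ol{\Sigma}}$ on $V^{C/S}(\ol{\Sigma})$ combined with the fact that $O_A$ pulled back via $(\mathfrak{K}_{\ol{\Sigma}})^{-1}\circ\iota^{C/S}$ yields all the linear functionals $\sigma^{C/S}_{\ol{\Sigma}}(\udl{v},\,\cdot\,)$; I expect this to be the only mildly technical step, handled either by an explicit monomial-basis argument or by noting that differentiating along a $V^S$-direction annihilates anything in $\Fsloc^C(\ol{\Sigma})$.
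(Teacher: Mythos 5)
Your argument is correct and is essentially the paper's: generation follows from Lemma~\ref{lem:symp_to_CS}, the Poisson brackets follow directly from $\mathfrak{K}_{\ol{\Sigma}}$ being a symplectomorphism onto $(V^C(\ol{\Sigma})\oplus V^S(\ol{\Sigma}),\sigma^C_{\ol{\Sigma}}\oplus\sigma^S_{\ol{\Sigma}})$, and the intersection statement follows because each sector's observables factor through the corresponding projection $p^{C/S}\circ\mathfrak{K}^J_{\ol{\Sigma}}$. The closing caveat about generating functionals needing to "separate points" and non-degeneracy of $\sigma^{C/S}_{\ol{\Sigma}}$ is superfluous: once one knows every element of $\Fsloc^C(\ol{\Sigma})$ depends only on the $V^C$-coordinate and every element of $\Fsloc^S(\ol{\Sigma})$ only on the $V^S$-coordinate, evaluating any $\psi$ in the intersection at $([\tilde{A}]^C)^S=(\mathfrak{K}^J_{\ol{\Sigma}})^{-1}(0)$ gives constancy immediately, with no further input.
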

\begin{proof}
    The first statement is immediate from~\eqref{eq:symp_to_CS}. 
    Now suppose that $\psi\in \Fsloc^{C}(\ol{\Sigma})\cap \Fsloc^{S}(\ol{\Sigma})$, and for any $[\tilde{A}]\in\Sol^J_{\mathscr{G}}(\ol{N})$ define 
    \begin{equation}
        [\tilde{A}]^{C/S} = (\mathfrak{K}^J_{\ol{\Sigma}})^{-1}(p^{C/S}\mathfrak{K}^J_{\ol{\Sigma}}([\tilde{A}])), 
    \end{equation}
    where $p^{C/S}:V^C(\ol{\Sigma)})\times V^S(\ol{\Sigma})
    \to V^{C/S}(\ol{\Sigma})$
    are the canonical projections.
    As $\psi\in \Fsloc^{C}(\ol{\Sigma})$ one has $\psi([\tilde{A}])=\psi([\tilde{A}]^C)$ and as $\psi\in \Fsloc^{S}(\ol{\Sigma})$ one has $\psi([\tilde{A}])=\psi([\tilde{A}^S])$. As a result $\psi([\tilde{A}])=\psi(([\tilde{A}]^C)^S)=\psi((\mathfrak{K}^J_{\ol{\Sigma}})^{-1}(0,0,0,0))$ for all $[\tilde{A}]$ and hence $\psi\in\CC\one$.
\end{proof}

The observable $o^C(F\oplus H)\in \Fsloc^{C}(\ol{\Sigma})$ for $F,H\in \Omega^1_{\tang\diff^*}(\ol{\Sigma})$, evaluated on $[\tilde{A}]\in \Sol_\mathscr{G}^J(\ol{N})$ with $\mathbf{A}=[\tilde{A}]\restriction_{\ol{\Sigma}}$ and $\mathbf{E}=-\nml_{\ol{\Sigma}}\diff [\tilde{A}]$, yields
\begin{equation}
    o^C(F\oplus H)([\tilde{A}])=\ipc{F}{\mathbf{E}}_{\ol{\Sigma}}-\ipc{H}{\mathbf{A}}_{\ol{\Sigma}}.
\end{equation}
Following the analogy with Wilson lines as described earlier in Sec.~\ref{sec:class_semi-loc}, one can thus view the algebra  $\Fsloc^{C}(\ol{\Sigma})$ as generated by smeared analogues of closed loop observables
\begin{equation}
    c_1\oint_{\gamma_1} \mathbf{A}_i\diff\gamma^i_1+c_2 \oint_{\gamma_2} \mathbf{E}_i\diff\gamma^i_2,
\end{equation}
for smooth closed curves $\gamma_{1,2}:S^1\to\ol{\Sigma}$ and $c_1,c_2\in \RR$. In particular, both the loop integral observables and $o^C(F\oplus H)$ vanish when evaluated on any $[\tilde{A}]\in \Sol_\mathscr{G}^J(\ol{N})$ for which $\mathbf{A},\mathbf{E}$ as defined above are exact forms (i.e.~$[\tilde{A}]$ is a surface configuration at $\ol{\Sigma}$). As such, we will refer to $\Fsloc^{C}(\ol{\Sigma})$ as \emph{closed loop observables}. Note that this means that $\Fsloc^{C}(\ol{\Sigma})$ are invariant under large gauge transformations, meaning that they do not contain any edge mode observables. We note here that although the closed loop observables of the vector potential (Wilson loops) are widely regarded as natural objects to study, whereas closed loop observables of the electric field arise as time-derivatives of such spacelike Wilson loops. 

The observables $o^S(f\oplus h)\in \Fsloc^{S}(\ol{\Sigma})$ for $f,h\in \nml_{\partial\ol{\Sigma}}\Omega^1_{\diff^*}(\ol{\Sigma})$ evaluate on $[\tilde{A}]\in \Sol^J_G(\ol{N})$ as
\begin{equation}
    o^S(f\oplus h)([\tilde{A}])=\ipc{f}{\nml_{\partial\ol{\Sigma}}\mathbf{E}}_{\partial\ol{\Sigma}}-\ipc{\diff \varphi}{\mathbf{A}}_{\ol{\Sigma}},
\end{equation}
where $\varphi\in \Omega^0(\ol{\Sigma})$ such that $\Delta \varphi=0$ and $\nml_{\partial\ol{\Sigma}}\diff \varphi=h$. Note that, as a consequence of the Hodge decompostion, we have
\begin{equation}
    \ipc{\diff \varphi}{\diff \varphi}_{\ol{\Sigma}}\leq \ipc{H}{H}_{\ol{\Sigma}}
\end{equation}
for any $H\in \Omega^1_{\diff^*}(\ol{\Sigma})$ with $\nml_{\partial\ol{\Sigma}}H=h$. In fact $H=\diff \varphi$ is the unique one-form in this class minimizing $\ipc{H}{H}_{\ol{\Sigma}}$. We can view $[\tilde{A}]\mapsto \ipc{H}{\mathbf{A}}_{\ol{\Sigma}}$ as a smeared analogue of a Wilson line associated to a curve $\gamma$ connecting two points on the boundary $\partial\ol{\Sigma}$. In particular, for such a curve $\gamma$ the map \begin{equation}
    [\udl{A}]\mapsto \int \mathbf{A}_i\diff\gamma^i,
\end{equation} can always be approximated by a smearing with co-closed form $H$ such that
\begin{equation}
    \ipc{H}{H}_{\ol{\Sigma}}\propto \int_0^1 g_{ij}(\gamma(t))\dot{\gamma}^i(t)\dot\gamma^j(t)\diff t,
\end{equation}
i.e. $\ipc{H}{H}_{\ol{\Sigma}}$ is proportional to the length of the curve $\gamma$ (with the proportionality constant diverging in the limit $\ipc{H}{\mathbf{A}}_{\ol{\Sigma}}\to\int \mathbf{A}_i\diff\gamma^i$). As such, we can loosely view the observable $[\tilde{A}]\mapsto \ipc{\diff\varphi}{\mathbf{A}}_{\ol{\Sigma}}$ as a smeared analogue of Wilson line observable along a path of minimal length, i.e.~a geodesic.\footnote{See  \cite{alexanderGeodesicsRiemannianManifoldswithBoundary1981} for a discussion of geodesics on Riemannian manifolds with boundary.} Such geodesic Wilson line observables are considered for instance in the context of holography, see e.g.~\cite{ammonWilsonLinesEntanglement2013}. 

The observable $[\tilde{A}]\mapsto \ipc{f}{\nml_{\partial\ol{\Sigma}}\mathbf{E}}_{\partial\ol{\Sigma}}$ is a smearing of the electric flux density through the surface $\partial\ol{\Sigma}$. In the context of large gauge transformations, these observables are sometimes referred to as surface charges, see e.g.~\cite{donnellyLocalSubsystemsGauge2016}. As such, we will refer to $\Fsloc^{S}(\ol{\Sigma})$, i.e.~the algebra of observables generated by surface charges $[\tilde{A}]\mapsto \ipc{f}{\nml_{\partial\ol{\Sigma}}\mathbf{E}}_{\partial\ol{\Sigma}}$ as well as the edge mode observable $[\tilde{A}]\mapsto \ipc{\diff\varphi}{\mathbf{A}}_{\ol{\Sigma}}$, as the \emph{surface observables}.

It should be pointed out that, even though one may associate $\Fsloc^{S}(\ol{\Sigma})$ with the boundary $\partial\ol{\Sigma}$, in general the algebra $\Fsloc^{S}(\ol{\Sigma})\cap\Floc^{}(\ol{N})$ is non-trivial. In particular, all observables of the form described in Prop.~\ref{prop:loc_poiss_degen} are contained in this subalgebra. Moreover, it can be seen that in general $\{\Floc^{}(\ol{N}),\Fsloc^{S}(\ol{\Sigma})\}\neq \{0\}$. This is interpreted as meaning that the observables in $\Fsloc^{S}(\ol{\Sigma})$ are generally not localisable on $\partial\ol{\Sigma}$.

Similarly to the decomposition of Prop.~\ref{prop:sympl_decomp}, the decomposition of observables made above depends on the choice of Cauchy surface $\ol{\Sigma}$ and thus the notion of closed loop and surface observables are not locally covariant. Nevertheless, the decomposition of semi-local observables into surface and closed loop observables will prove convenient in the discussion of quantisation of the semi-local observables introduced above, particularly in the construction of Fock space representations of the algebra of observables for the resulting quantum field theory, see Sec.~\ref{sec:reps}.

\section{Semi-local observables in quantum electromagnetism}
\label{sec:quant_alg}

A quantum field theory (QFT) can be obtained from a (well-behaved) classical field theory through various quantization schemes. A fairly general scheme applicable to (perturbative) algebraic quantum field theory is deformation quantization (see e.g.~\cite{fedosovSimpleGeometricalConstruction1994,duetschPerturbativeAlgebraicField2001,hawkinsStarProductInteracting2020}). However, for sufficiently simple field theories (for instance certain affine theories, see e.g.~\cite{fewsterLocallyCovariantQuantum2015}), one can construct an algebra of quantized observables more directly.

Our goal is to construct a (non-commutative) *-algebra of semi-local observables for electromagnetism, which is representable on a Hilbert space. In particular, for a finite Cauchy lens $\ol{N}$ and background current $J\in \Omega^1_{\diff^*}(\ol{N})$, we wish to construct a unital algebra containing self-adjoint observables $\mathbf{O}_A([\udl{A}])$ for $(A,\udl{A})\in T\Sol^J(\ol{N})$ (interpreted as quantized symplectically smeared fields), satisfying the canonical commutation relations
\begin{equation}
\label{eq:CCR}
    [\mathbf{O}_A([\udl{A}]),\mathbf{O}_{A'}([\udl{A}'])] =i\sigma(\udl{A},\udl{A}')\one,
\end{equation}
as the Dirac quantization of\eqref{eq:symp_pois_bra},
as well as an analogue of the relation \eqref{eq:sym_field_shift},
\begin{equation}
\label{eq:Q_symp_shift}
    \mathbf{O}_{A-\udl{A'}}([\udl{A}])=\mathbf{O}_{A}([\udl{A}])+\sigma(\udl{A},\udl{A'})\one.
\end{equation}
As any non-trivial representation of such observables on a Hilbert space necessarily gives rise to unbounded operators, it is more convenient to work with formally exponentiated observables, giving rise to a $C^*$-algebra.

\subsection{A $C^*$-algebra of semi-local observables}\label{sec:semilocal obs}
\subsubsection{Definition of the algebra}
We introduce a $C^*$-algebra\footnote{Recall that a (unital) $C^*$-algebra is a (unital) associative algebra $\mathcal{A}$ over $\CC$ is which admits an anti-linear involution $a\mapsto a^*$ and a norm $a\mapsto \Vert a\Vert$, such that
\begin{equation}
    (ab)^*=b^*a^*,\qquad \Vert ab\Vert\leq \Vert a\Vert\Vert b\Vert,\qquad \Vert aa^*\Vert=\Vert a\Vert^2,
\end{equation}
and $\mathcal{A}$ is complete w.r.t.~the norm topology \cite{dixmierCalgebras1977}.} of semi-local observables that is closely related to the Weyl algebra associated with a symplectic space. We recall some relevant background.
Let $(V,\sigma)$ be a pre-symplectic space and let $\Delta(V,\sigma)$ be the complex linear hull of the free complex vector space over $V$, writing the basis vector corresponding to $v\in V$ as $W(v)$. Then $\Delta(V,\sigma)$ becomes a unital $*$-algebra with
 \begin{align}
        W(0)=&\one\label{eq:weyl1},\\
        W(v)^*=&W(-v)\label{eq:weyl2},\\
        W(v)W(w)=&\exp\left(-\frac{i}{2}\sigma(v,w)\ \right)W(v+w)\label{eq:weyl3}.
    \end{align}
If $(V,\sigma)$ is a symplectic space then there is a unique $C^*$-norm on $\Delta(V,\sigma)$ whose completion defines a $C^*$-algebra $\Weyl(V,\sigma)$. 
In the general pre-symplectic case a $C^*$-norm on $\Delta(V,\sigma)$ with good properties has been constructed in~\cite{binzConstructionUniquenessWeyl2004} and again, we take the completion in that norm to form $\Weyl(V,\sigma)$, which 
is specified uniquely by $(V,\sigma)$ up to unital ${}^*$-isomorphisms. 

Any symplectic map $\iota:(V,\sigma)\to (V',\sigma')$ between pre-symplectic spaces (i.e.~a linear map preserving the pre-symplectic structure) determines a unital $*$-homomorphism of $*$-algebras $\Delta(\iota):\Delta(V,\sigma)\to\Delta(V',\sigma')$ which extends uniquely\footnote{This statement does not appear in~\cite{binzConstructionUniquenessWeyl2004} but follows by the first two sentences of the proof of Prop.~3.8 of that reference.} to a unital $C^*$-homomorphism $\Weyl(\iota):\Weyl(V,\sigma)\to\Weyl(V',\sigma')$ by $\Weyl(\iota)W_{V}(v)=W_{V'}(\iota v)$ (adding subscripts for emphasis) which makes
$\Weyl$ into a functor between the categories of pre-symplectic spaces with symplectic maps and unital $C^*$-algebras with unit-preserving ${}^*$-homomorphisms. In particular, if $\iota$ is a symplectomorphism [i.e., has a symplectic inverse] then $\Weyl(\iota)$ is a unital ${}^*$-isomorphism and, in fact, an isometry (see Prop.~3.8 of~\cite{binzConstructionUniquenessWeyl2004}).

In the symplectic case, 
it is known that $\Weyl(V,\sigma)$ is \emph{nuclear} \cite[Thm.~10.10]{evansDilationsIrreversibleEvolutions1977}, and hence 
has a uniquely specified $C^*$-tensor product with any other $C^*$-algebra. In particular, one has
\begin{equation}
    \Weyl(V\oplus W,\sigma_V\oplus \sigma_W) \cong
    \Weyl(V\oplus \sigma_V)\otimes \Weyl(W,\sigma_W)
\end{equation}
under the mapping $W_{V\oplus W}(v\oplus w)\mapsto W_V(v)\otimes W_W(w)$, writing subscripts for emphasis.

For linear theories, such as the free scalar field \cite{dimockAlgebrasLocalObservables1980} or free electromagnetism \cite{dimockQuantizedElectromagneticField1992,pfenningQuantizationMaxwellField2009,sandersElectromagnetismLocalCovariance2014}, a $C^*$-algebra of local observables can be realised through a Weyl algebra. Here we construct a similar algebra for semi-local observables on the affine theory of electromagnetism with a background current on finite Cauchy lens.

\begin{proposition}
\label{prop:CCR_BPI}
   Let $\ol{N}$ a finite Cauchy lens and $J\in \Omega^1_{\diff^*}(\ol{N})$ a background current. Then there exists a non-zero nuclear unital $C^*$-algebra $\Af(\ol{N})$ (determined uniquely up to unital ${}^*$-isomorphisms) generated by elements
\begin{equation}
\{W_{A}([\udl{A}]):(A,\udl{A})\in T\Sol^J(\ol{N})\}
\end{equation} 
satisfying the relations
\begin{align}
        W_{A}(0)=&\one,\\
        W_{A}([\udl{A}])^*=&W_{A}(-[\udl{A}]),\\
        W_{A}([\udl{A}])W_{A}([\udl{A}'])=&\exp\left(-\frac{i}{2}\sigma([\udl{A}],[\udl{A}'])\right)W_{A}([\udl{A}+\udl{A}']),\\
        W_{A-\udl{A}'}([\udl{A}])=&\exp(i\sigma([\udl{A}],[\udl{A}']))W_{A}([\udl{A}]).
\end{align}
\end{proposition}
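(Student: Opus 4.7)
The plan is to realise $\Af(\ol{N})$ as a phase-twisted copy of the Weyl $C^{*}$-algebra $\Weyl(\Sol_\mathscr{G}(\ol{N}),\sigma)$, whose existence, non-triviality, nuclearity, and universal property over the symplectic space of Def.~\ref{def:red_phas} have been recalled above. The main content is to absorb the affine background-shift relation~(4) on top of the standard Weyl structure. First I would fix an arbitrary base-point $A_0 \in \Sol^J(\ol{N})$, set $\Af(\ol{N}) := \Weyl(\Sol_\mathscr{G}(\ol{N}),\sigma)$ with standard generators denoted $W([\udl{A}])$, and define
\begin{equation}
    W_A([\udl{A}]) := \exp\bigl(-i\,\sigma([\udl{A}],[A-A_0])\bigr)\, W([\udl{A}])
\end{equation}
for each $(A,\udl{A}) \in T\Sol^J(\ol{N})$. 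Since $\Sol^J(\ol{N})$ is affine over $\Sol(\ol{N})$, the class $[A-A_0]\in \Sol_\mathscr{G}(\ol{N})$ is well-defined and the phase is unambiguous.

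Next I would verify the four relations by direct computation. Relations (1) and (2) are immediate from bilinearity and antisymmetry of $\sigma$. For (3) the identity
\begin{equation}
    \sigma([\udl{A}],[A-A_0]) + \sigma([\udl{A}'],[A-A_0]) = \sigma([\udl{A}+\udl{A}'],[A-A_0])
\end{equation}
absorbs the cross-phases, while the standard Weyl product~\eqref{eq:weyl3} supplies the half-commutator factor. For (4) one uses linearity in the second slot to obtain
\begin{equation}
    \sigma([\udl{A}],[(A-\udl{A}')-A_0]) = \sigma([\udl{A}],[A-A_0]) - \sigma([\udl{A}],[\udl{A}']),
\end{equation}
which yields the stated shift law. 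Because $W_{A_0}([\udl{A}]) = W([\udl{A}])$, the family $\{W_A([\udl{A}])\}$ generates all of $\Af(\ol{N})$, and nuclearity passes down from $\Weyl(\Sol_\mathscr{G}(\ol{N}),\sigma)$.

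For uniqueness up to unital ${}^*$-isomorphism, suppose $\mathcal{B}$ is any non-zero $C^{*}$-algebra with generators $W^{\mathcal{B}}_A([\udl{A}])$ satisfying the same relations. Fixing the same base-point, the sub-family $W^{\mathcal{B}}_{A_0}([\udl{A}])$ obeys \eqref{eq:weyl1}--\eqref{eq:weyl3}, so the functoriality of $\Weyl$ yields a unital ${}^*$-homomorphism $\Af(\ol{N}) \to \mathcal{B}$ sending $W([\udl{A}]) \mapsto W^{\mathcal{B}}_{A_0}([\udl{A}])$; relation (4) in $\mathcal{B}$ then guarantees that its image contains every $W^{\mathcal{B}}_A([\udl{A}])$, hence surjectivity onto the whole of $\mathcal{B}$. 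The main obstacle I foresee is establishing injectivity of this comparison map, which reduces to invoking the uniqueness of the $C^{*}$-norm on the Weyl algebra of a genuinely symplectic vector space (as in~\cite{binzConstructionUniquenessWeyl2004}); once this standard fact is cited, the remainder of the argument is bookkeeping with bilinear phases, and independence of the construction from the auxiliary choice of $A_0$ follows because a change $A_0 \to A_0 + \udl{C}$ merely rescales each $W_A([\udl{A}])$ by an overall scalar that drops out of the defining relations.
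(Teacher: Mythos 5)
Your proposal is correct and follows essentially the same route as the paper: both fix a reference background, realise $\Af(\ol{N})$ as the Weyl algebra $\Weyl(\Sol_\mathscr{G}(\ol{N}),\sigma)$, and define $W_A([\udl{A}])$ by multiplying the standard generator by the phase $\exp(-i\sigma([\udl{A}],[A-A_0]))$, after which the four relations are immediate and nuclearity and uniqueness are inherited from the Weyl construction. The paper's proof is terser — it skips the explicit verification of relations and the universal-property discussion — but the underlying idea is identical.
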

\begin{proof}
    For a fixed background $A\in \Sol^J(\ol{N})$, the objects $\{W_A([\udl{A}]):[\udl{A}]\in \Sol_{\mathscr{G}}(\ol{N})$ satisfy the relations \eqref{eq:weyl1} to \eqref{eq:weyl3} for the symplectic space $(\Sol_{\mathscr{G}}(\ol{N}),\sigma)$. These therefore generate an algebra $\Weyl(\Sol_{\mathscr{G}}(\ol{N}),\sigma)$ defined uniquely up to unital ${}^*$-isomorphisms. 
    
    As for any other background $A'\in \Sol^J(\ol{N})$, one sets
    \begin{equation}
        W_{A'}([\udl{A}])=\exp(i\sigma([\udl{A}],[A-A']))W_{A}([\udl{A}])\in \Weyl(\Sol_{\mathscr{G}}(\ol{N}),\sigma),
    \end{equation} whereupon the relations~\eqref{eq:weyl1} to~\eqref{eq:weyl3} automatically hold for these generators. Therefore $\Af(\ol{N})=\Weyl(\Sol_{\mathscr{G}}(\ol{N}),\sigma)$. 
\end{proof}

\begin{rem}
    As can be seen from the proof of Prop.~\ref{prop:CCR_BPI}, the algebra $\Af(\ol{N})\cong \Weyl(\Sol_{\mathscr{G}}(\ol{N}),\sigma)$ is (up to isomorphisms) independent of the chosen background current $J\in \Omega^1_{\diff^*}(\ol{N})$. This is in line with observations made on the quantised scalar field with classical source, see e.g.~\cite{fewsterLocallyCovariantQuantum2015}. The current does play a role in the labelling of the generators $W_A([\udl{A}])$, where $J=-\diff^*\diff A$, which is used to provide an interpretation of these operators.
\end{rem}

The defining relations of $W_A([\udl{A}])$ are chosen such that, when formally identifying
\begin{equation}
    W_A([\udl{A}])\simeq \exp(i\mathbf{O}_A([\udl{A}])),
\end{equation}
they follow from Eq.~\eqref{eq:CCR} and \eqref{eq:Q_symp_shift}. This identification is made more precise at the level of sufficiently regular representations, see Sec.~\ref{sec:reps}.

For any $\udl{A}'\in \Sol(\ol{N})$, the map 
\begin{equation}
    W_A([\udl{A}])\mapsto W_{A-\udl{A}'}([\udl{A}])=\exp(i\sigma([\udl{A}],[\udl{A}']))W_{A}([\udl{A}])
\end{equation} defines a unital ${}^*$-isomorphism on $\Af(\ol{N})$ that we shall denote as the \emph{quantum background shift} $\alpha([\udl{A}'])\in \Aut(\Af(\ol{N}))$. We view these automorphisms as the quantum analogue of the background shifts $\xi([\udl{A}])^*\in \Aut(\Fsloc(\ol{N}))$, with large gauge transformations corresponding to the case $[\udl{A}']\in\mathscr{LG}(\ol{N})$.
In particular, we have the following.
\begin{proposition}
\label{prop:field_shift}
    The quantum background shifts $\alpha:\Sol_{\mathscr{G}}(\ol{N})\to \Aut(\Af(\ol{N}))$ define a group representation of $\Sol_{\mathscr{G}}(\ol{N})$ as an additive group. Furthermore, given a background $A\in \Sol^J(\ol{N})$, we have
    \begin{equation}
    \label{eq:field_shift_unitary}
        \alpha([\udl{A}])(a)=W_A([\udl{A}])aW_A([\udl{A}])^*,
    \end{equation}
    for each $a\in \Af(\ol{N})$.
\end{proposition}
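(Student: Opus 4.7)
The plan is to proceed in three steps, all of which reduce to algebraic manipulations of the Weyl relations in Prop.~\ref{prop:CCR_BPI}, exploiting the bilinearity and antisymmetry of $\sigma$.

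First, I would check that $\alpha([\udl{A}'])$ is indeed a well-defined unital ${}^*$-automorphism of $\Af(\ol{N})$ for each $[\udl{A}']\in\Sol_{\mathscr{G}}(\ol{N})$. Fixing any background $A\in\Sol^J(\ol{N})$ and using the identification $\Af(\ol{N})\cong\Weyl(\Sol_{\mathscr{G}}(\ol{N}),\sigma)$ via $W_A([\udl{A}])\leftrightarrow W([\udl{A}])$ from the proof of Prop.~\ref{prop:CCR_BPI}, the defining action $W_A([\udl{A}])\mapsto \exp(i\sigma([\udl{A}],[\udl{A}']))W_A([\udl{A}])$ is a twist of each generator by a character of the abelian group $(\Sol_{\mathscr{G}}(\ol{N}),+)$. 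Bilinearity of $\sigma$ implies that this character respects the Weyl composition law~\eqref{eq:weyl3}, so by the universal property of the Weyl ${}^*$-algebra $\Delta(\Sol_{\mathscr{G}}(\ol{N}),\sigma)$ the map extends to a unital ${}^*$-homomorphism, and then uniquely and isometrically to $\Weyl(\Sol_{\mathscr{G}}(\ol{N}),\sigma)$ (as recorded after Prop.~3.8 of~\cite{binzConstructionUniquenessWeyl2004}). Since $\alpha(-[\udl{A}'])$ is a two-sided inverse on generators, it is a ${}^*$-automorphism.

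Second, to obtain the group law $\alpha([\udl{A}'_1]+[\udl{A}'_2])=\alpha([\udl{A}'_1])\circ\alpha([\udl{A}'_2])$, it suffices by density to verify the identity on generators, where
\begin{equation}
    \alpha([\udl{A}'_1])(\alpha([\udl{A}'_2])(W_A([\udl{A}])))=\exp(i\sigma([\udl{A}],[\udl{A}'_2]))\,\exp(i\sigma([\udl{A}],[\udl{A}'_1]))\,W_A([\udl{A}])=\exp(i\sigma([\udl{A}],[\udl{A}'_1]+[\udl{A}'_2]))\,W_A([\udl{A}]),
\end{equation}
and the right-hand side equals $\alpha([\udl{A}'_1]+[\udl{A}'_2])(W_A([\udl{A}]))$ by linearity of $\sigma$ in its second argument. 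Moreover $\alpha(0)=\id$ is immediate.

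Third, for the inner implementation~\eqref{eq:field_shift_unitary}, it again suffices to check the identity on generators $W_A([\udl{A}])$. Applying the Weyl composition~\eqref{eq:weyl3} twice and using $W_A([\udl{A}'])^*=W_A(-[\udl{A}'])$ together with $\sigma([\udl{A}'],[\udl{A}'])=0$, I compute
\begin{align}
W_A([\udl{A}'])W_A([\udl{A}])W_A([\udl{A}'])^* &= \exp\!\bigl(\tfrac{i}{2}\sigma([\udl{A}],[\udl{A}'])\bigr)\,W_A([\udl{A}]+[\udl{A}'])W_A(-[\udl{A}']) \\
&= \exp\!\bigl(\tfrac{i}{2}\sigma([\udl{A}],[\udl{A}'])+\tfrac{i}{2}\sigma([\udl{A}]+[\udl{A}'],[\udl{A}'])\bigr)W_A([\udl{A}]) \\
&= \exp(i\sigma([\udl{A}],[\udl{A}']))\,W_A([\udl{A}]),
\end{align}
which by construction equals $\alpha([\udl{A}'])(W_A([\udl{A}]))$. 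Since both $a\mapsto \alpha([\udl{A}'])(a)$ and $a\mapsto W_A([\udl{A}'])aW_A([\udl{A}'])^*$ are continuous unital ${}^*$-homomorphisms on $\Af(\ol{N})$ agreeing on a generating set, they coincide on all of $\Af(\ol{N})$.

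No step is a genuine obstacle; the only bookkeeping point is ensuring the cocycle phases in the Weyl relations cancel correctly in the inner-automorphism computation, which follows from antisymmetry of $\sigma$.
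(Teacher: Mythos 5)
Your proof is correct and follows essentially the same route as the paper: a direct check of the conjugation formula on Weyl generators using the CCR relations of Prop.~\ref{prop:CCR_BPI}, with the group law and global extension handled by (bi)linearity, density and continuity. The only cosmetic difference is that you verify $\Ad W_A([\udl{A}'])$ on generators $W_A([\udl{A}])$ at the single background $A$, while the paper checks it on $W_{A'}([\udl{A}'])$ at an arbitrary background $A'$ using the background-shift relation; both generating sets span the same dense subalgebra, so the two verifications are equivalent, and your preliminary step establishing that $\alpha([\udl{A}'])$ is a well-defined automorphism is in fact superfluous once the inner implementation is shown, since conjugation by a unitary is automatically a continuous ${}^*$-automorphism.
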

\begin{proof}
    The group law is immediate from the definition. 
    Furthermore, for any $(A,\udl{A}),(A',\udl{A}')\in T\Sol^J(\ol{N})$, we have
    \begin{align}
        W_A([\udl{A}])W_{A'}([\udl{A}'])W_{A}([\udl{A}])^*&=\exp(-i\sigma(\udl{A},\udl{A}'))W_{A'}([\udl{A}'])=W_{A'-\udl{A}}([\udl{A}']) \nonumber\\
        &=\alpha([\udl{A}])(W_{A'}([\udl{A}'])).
    \end{align}
    By continuity and linearity of $\alpha([\udl{A}])$, this proves Eq.~\eqref{eq:field_shift_unitary}. 
\end{proof}

In what follows, we shall simply refer to the automorphisms $\alpha([\udl{A}])$ as background shifts. The background shifts shall be used to define a notion of localisation on the algebra $\Af(\ol{N})$ at the end of this section.

Just as for $\Fsloc$, we can decompose the algebra $\Af(\ol{N})$ in closed loop and surface observables. Recall that for any regular Cauchy surface with boundaries $\ol{\Sigma}\subset \ol{N}$, we defined the symplectic spaces $(V^{C/S},\sigma^{C/S})$ in Sec.~\ref{sec:init_dat_hodge}.
\begin{definition}
\label{def:bulk_corn_weyls}
     For finite Cauchy lens $\ol{N}$ with regular Cauchy surface with boundary $\ol{\Sigma}\subset\ol{N}$,
     let 
    \begin{equation}
        \Af^{C}(\ol{\Sigma})=\Weyl\left(V^{C}(\ol{\Sigma}),\sigma_{\ol{\Sigma}}^C\right), \quad \Af^{S}(\ol{\Sigma})=\Weyl\left(V^{S}(\ol{\Sigma}),\sigma_{\ol{\Sigma}}^S\right),
    \end{equation}     
    writing the respective generators as $W^{C}(\udl{v}^C)$ (for $\udl{v}^C\in V^{C}(\ol{\Sigma})$) and $W^{S}(\udl{v}^S)$ (for $\udl{v}^S\in V^{S}(\ol{\Sigma})$). Lastly, we define the algebra $\Af^{C\oplus S}(\ol{\Sigma})=\Af^{C}(\ol{\Sigma})\otimes \Af^{S}(\ol{\Sigma})$.
\end{definition}
The Weyl relations of $W^{\bullet}(\udl{v}^{\bullet})$ (for $\bullet\in \{C,S\}$) are consistent with the identification (made precise at the level of sufficiently regular representations)
\begin{equation}
\label{eq:Weyl-exponentials}
    W^{\bullet}(\udl{v}^{\bullet})\simeq\exp(i\mathbf{o}^{\bullet}(\udl{v}^{\bullet})),
\end{equation}
where $\mathbf{o}^{\bullet}(\udl{v}^{\bullet})$ are interpreted as quantizations of $o^{\bullet}(\udl{v}^{\bullet})\in \Fsloc^{\bullet}(\ol{\Sigma})$, satisfying the canonical commutation relations
\begin{equation}
    [\mathbf{o}^{\bullet}(\udl{v}_1^{\bullet}),\mathbf{o}^{\bullet}(\udl{v}_2^{\bullet})]=i\sigma_{\ol{\Sigma}}^{\bullet}(\udl{v}_1^{\bullet},\udl{v}_2^{\bullet})\one.
\end{equation}

Since $(\Sol_{\mathscr{G}}(\ol{N}),\sigma)\cong \left(V^{C}(\ol{\Sigma})\oplus V^{S}(\ol{\Sigma}),\sigma_{\ol{\Sigma}}^C\oplus \sigma_{\ol{\Sigma}}^S\right)$, it follows that $\Af(\ol{N})\cong \Af^{C\oplus S}(\ol{\Sigma})$. In particular, we can define the following isomorphisms.
\begin{definition}
\label{def:kin_transf}
    Let $\ol{N}$ be a finite Cauchy lens, $J\in \Omega^1_{\diff^*}(\ol{N})$ and $\ol{\Sigma}\subset\ol{N}$ a regular Cauchy surface with boundaries. Let $[A]\in\Sol_{\mathscr{G}}^J(\ol{N})$ be the background (depending on $\ol{\Sigma}$) satisfying
    \begin{equation}
        \mathfrak{K}^J_{\ol{\Sigma}}([A])=0.
    \end{equation}
    We define the \emph{quantum CHH isomorphism} $\varphi_{\ol{\Sigma}}:\Af(\ol{N})\to\Af^{C\oplus S}(\ol{\Sigma})$ by the relation
    \begin{equation}
    \label{eq:kin-iso}
        \varphi_{\ol{\Sigma}}(W_{A}([\udl{A}]))=W^C(F\oplus H)\otimes W^S(f\oplus h).
    \end{equation}
    for all $[\udl{A}]\in \Sol_{\mathscr{G}}(\ol{N})$ where $(F\oplus H\oplus f\oplus h)=\mathfrak{K}_{\ol{\Sigma}}([\udl{A}])$
    (see Prop.~\ref{prop:sympl_decomp} for the classical version).
\end{definition}
That Eq.~\eqref{eq:kin-iso} indeed defines an isomorphism, follows directly from the fact that $\mathfrak{K}_{\ol{\Sigma}}$ is a symplectomorphism. The quantum CHH-isomorphism yields a quantized analogue to Lem.~\ref{lem:symp_to_CS}. 
\begin{proposition}
Let $\ol{N}$ be a finite Cauchy lens, $J\in \Omega^1_{\diff^*}(\ol{N})$ and $\ol{\Sigma}\subset\ol{N}$ a regular Cauchy surface with boundaries. Then for any $[A']\in \Sol_{\mathscr{G}}^J(\ol{N})$, $[\udl{A}]\in \Sol_{\mathscr{G}}(\ol{N})$, we have
\begin{equation}
\label{eq:kin-iso-gen}
    \varphi_{\ol{\Sigma}}(W_{A'}([\udl{A}]))=\exp(-i(o^C(F\oplus H)([A'])+o^S(f\oplus h)([A'])))W^C(F\oplus H)\otimes W^S(f\oplus h),
\end{equation}
where $(F\oplus H\oplus f\oplus h)=\mathfrak{K}_{\ol{\Sigma}}([\udl{A}])$.
\end{proposition}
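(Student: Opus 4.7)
The plan is to reduce the claim to a direct computation using three ingredients already at hand: the background-shift relation in Prop.~\ref{prop:CCR_BPI}, the definition of $\varphi_{\ol{\Sigma}}$ (Def.~\ref{def:kin_transf}), and the identification of the pairings $\sigma^{C/S}$ with values of the $o^{C/S}$ observables (Lem.~\ref{lem:symp_to_CS}).

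First I would pick the distinguished background $A \in \Sol^J(\ol{N})$ with $\mathfrak{K}^J_{\ol{\Sigma}}([A]) = 0$, and use the last Weyl relation in Prop.~\ref{prop:CCR_BPI}, rewritten in the form $W_{A'}([\udl{A}]) = \exp\bigl(i\sigma([\udl{A}],[A-A'])\bigr)\,W_A([\udl{A}])$. Applying $\varphi_{\ol{\Sigma}}$, which is a unital ${}^*$-homomorphism by construction, and invoking the defining relation \eqref{eq:kin-iso}, the statement reduces to identifying the scalar phase $\sigma([\udl{A}],[A-A'])$ with $-\bigl(o^C(F\oplus H)([A']) + o^S(f\oplus h)([A'])\bigr)$.

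Next I would exploit the affine property of the CHH map, Eq.~\eqref{eq:linearised_map_def}: from $\mathfrak{K}^J_{\ol{\Sigma}}([A]) = 0$ one deduces $\mathfrak{K}_{\ol{\Sigma}}([A - A']) = -\mathfrak{K}^J_{\ol{\Sigma}}([A'])$. Writing $\mathfrak{K}^J_{\ol{\Sigma}}([A']) = F'\oplus H' \oplus f'\oplus h'$ and using that $\mathfrak{K}_{\ol{\Sigma}}$ is a symplectomorphism from $(\Sol_\mathscr{G}(\ol{N}),\sigma)$ onto $(V^C(\ol{\Sigma})\oplus V^S(\ol{\Sigma}), \sigma^C_{\ol{\Sigma}}\oplus\sigma^S_{\ol{\Sigma}})$ (Prop.~\ref{prop:sympl_decomp}), one obtains
\begin{equation}
\sigma([\udl{A}],[A-A']) = -\sigma^C_{\ol{\Sigma}}(F\oplus H, F'\oplus H') - \sigma^S_{\ol{\Sigma}}(f\oplus h, f'\oplus h').
\end{equation}
By Eq.~\eqref{eq:o_sigma} of Lem.~\ref{lem:symp_to_CS}, each of these two $\sigma^{C/S}$ pairings is exactly $o^{C/S}(\cdot)([A'])$, which finishes the identification of the phase and yields Eq.~\eqref{eq:kin-iso-gen}.

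The argument is essentially a bookkeeping exercise — no analytical obstacle appears, since everything has been prepared in the previous subsections. The only point requiring a little care is the sign: one must track consistently the convention that $\mathfrak{K}^J_{\ol{\Sigma}}$ is affine with linearisation $\mathfrak{K}_{\ol{\Sigma}}$, and that the background relation in Prop.~\ref{prop:CCR_BPI} is written in terms of $A - \udl{A}'$ rather than $A + \udl{A}'$, so that the minus sign in the final exponential comes out correctly.
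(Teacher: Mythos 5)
Your proof is correct and follows essentially the same route as the paper: apply $\varphi_{\ol{\Sigma}}$ to the background-shift relation from Prop.~\ref{prop:CCR_BPI} with the distinguished background $A$ satisfying $\mathfrak{K}^J_{\ol{\Sigma}}([A])=0$, then identify the scalar phase $\sigma(\udl{A},A'-A)$ with $o^C(F\oplus H)([A'])+o^S(f\oplus h)([A'])$ using Lem.~\ref{lem:symp_to_CS}. The only cosmetic difference is that you re-derive that identity explicitly from the affine property of $\mathfrak{K}^J_{\ol{\Sigma}}$, the symplectomorphism property of $\mathfrak{K}_{\ol{\Sigma}}$ and Eq.~\eqref{eq:o_sigma}, whereas the paper cites Lem.~\ref{lem:symp_to_CS} directly; the sign bookkeeping you flag is handled correctly.
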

\begin{proof}
    For general background $[A']\in\Sol_{\mathscr{G}}^J(\ol{N})$, we have
    \begin{equation}
    \varphi_{\ol{\Sigma}}(W_{A'}([\udl{A}]))=\exp(-i\sigma(\udl{A},A'-A))\varphi_{\ol{\Sigma}}(W_{A}([\udl{A}])),
\end{equation}
where $[A]=(\mathfrak{K}^J_{\ol{\Sigma}})^{-1}(0)$. Using Lem.~\ref{lem:symp_to_CS}, we have
\begin{equation}
    \sigma(\udl{A},A'-A)=o^C(F\oplus H)([A'])+o^S(f\oplus h)([A']),
\end{equation}
for $(F\oplus H\oplus f\oplus h)=\mathfrak{K}_{\ol{\Sigma}}([\udl{A}])$. Eq.~\eqref{eq:kin-iso-gen} now follows directly from Def.~\ref{def:kin_transf}.
\end{proof}
We shall use the quantum CHH-isomorphism to construct Hilbert space representation of $\Af(\ol{N})$.

\subsubsection{Regaining an algebra of local observables}
The algebra $\Af(\ol{N})$ extends the usual Weyl algebra of local observables of the electromagnetic field associated with a region $N\subset M$ (see e.g.~\cite{dimockQuantizedElectromagneticField1992}), in the same way that $\Floc(N)\subset\Fsloc(\ol{N})$.
\begin{definition}
\label{def:loc_obs_qft}
Let $\ol{N}$ be a finite Cauchy lens, $J\in \Omega^1_{\diff^*}(\ol{N})$ and $A\in \Sol^J(\ol{N})$. We define for $f\in\Omega^1_{0\diff^*}(\ol{N})$
\begin{equation}
    W_{\textup{loc}}^J(f)=\exp(i\ipc{f}{A})W_{A}(-[G^{\PJ}(f)]),
\end{equation} 
and denote the algebra generated by these observables as $\Af_{\textup{loc}}(\ol{N})\subset \Af(\ol{N})$. 
\end{definition}
These generators indeed satisfy the usual relations of local Weyl generators for electromagnetism in presence of a source term.
\begin{proposition}
    The observables $W_{\textup{loc}}^J(f)$ defined in Def.~\ref{def:loc_obs_qft} are independent of the choice of $A\in \Sol^J(\ol{N})$, and satisfy the relations
    \begin{align}
        W_{\textup{loc}}^J(0)&=1,\\
        W_{\textup{loc}}^J(f)^*&=W_{\textup{loc}}(-f),\\
        W_{\textup{loc}}^J(f)W_{\textup{loc}}^J(h)&=\exp\left(\frac{i}{2}\ipc{f}{G^{\PJ}(h)}\right)W_{\textup{loc}}^J(f+h),\\
        W_{\textup{loc}}^J(-\diff^*\diff f')&=\exp(i\ipc{f'}{J})\one,
    \end{align}
    for any $f,h\in \Omega^1_{0\diff^*}(\ol{N})$, $f'\in \Omega^1_0(N)$, as well as transform under the background shift
    \begin{equation}
        \alpha([\udl{A}])(W_{\textup{loc}}^J(f))=\exp(i\ipc{f}{\udl{A}})W_{\textup{loc}}^J(f).
    \end{equation}
\end{proposition}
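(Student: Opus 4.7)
The plan is to verify each item by unfolding the definition
\begin{equation*}
W_{\textup{loc}}^J(f) = \exp(i\ipc{f}{A}_N)\,W_A(-[G^{\PJ}(f)])
\end{equation*}
and reducing to the defining axioms of $\Af(\ol{N})$ from Prop.~\ref{prop:CCR_BPI}, together with two classical inputs already established: the identity $\ipc{f}{\udl{A}}_N = -\sigma(G^{\PJ}f,\udl{A})$ for $\udl{A}\in\Sol(\ol{N})$ from Lem.~\ref{lem:st_to_symp_smear}, and the intertwining relations $\diff G^{\#}=G^{\#}\diff$, $\diff^* G^{\#}=G^{\#}\diff^*$ from Prop.~\ref{prop:green}. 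For independence of $A$, I would observe that if $A,A'\in\Sol^J(\ol{N})$ then $A-A'\in\Sol(\ol{N})$, so the affine relation $W_{A'}([\udl{A}])=\exp(i\sigma([\udl{A}],[A-A']))W_A([\udl{A}])$ applied to $\udl{A}=-G^{\PJ}f$ produces a phase $\exp(-i\sigma([G^{\PJ}f],[A-A']))=\exp(i\ipc{f}{A-A'}_N)$ by Lem.~\ref{lem:st_to_symp_smear}, which exactly cancels the discrepancy between the prefactors $\exp(i\ipc{f}{A}_N)$ and $\exp(i\ipc{f}{A'}_N)$.

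The Weyl-type relations (i)--(iii) and the background-shift formula then follow mechanically. Relations (i) and (ii) are immediate from $W_A(0)=\one$ and $W_A(v)^*=W_A(-v)$ together with linearity of $G^{\PJ}$. For the product relation (iii), I would multiply out using the Weyl rule for $W_A$ and rewrite the resulting symplectic form $\sigma([G^{\PJ}f],[G^{\PJ}h]) = -\ipc{f}{G^{\PJ}h}_N$ via Lem.~\ref{lem:st_to_symp_smear} to recover the stated sign. The background shift is a one-line computation: $\alpha([\udl{A}])$ sends $W_A(-[G^{\PJ}f])$ to $W_{A-\udl{A}}(-[G^{\PJ}f])$, and the phase produced by the affine relation evaluates to $\exp(i\ipc{f}{\udl{A}}_N)$ by the same lemma, yielding the stated transformation.

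The main obstacle will be the current relation (iv), $W_{\textup{loc}}^J(-\diff^*\diff f')=\exp(i\ipc{f'}{J}_N)\one$. Using the intertwining relations gives
\begin{equation*}
G^{\PJ}(-\diff^*\diff f') = -\diff^*\diff\, G^{\PJ}f' = \diff\diff^*\, G^{\PJ}f',
\end{equation*}
the second equality holding because $G^{\PJ}f'$ lies in the kernel of the wave operator $-(\diff^*\diff+\diff\diff^*)$. Writing this as $\diff\Lambda$ with $\Lambda=\diff^* G^{\PJ}f'$, it remains, by~\eqref{eq:GNbar:result}, to show $\Lambda|_{\angle\ol{N}}=0$, i.e.~that $[G^{\PJ}(-\diff^*\diff f')]=0$ in $\Sol_\mathscr{G}(\ol{N})$. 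Here I would exploit the geometric condition $\mathscr{J}^{\pm}(\angle\ol{N})=\angle\ol{N}$ from Def.~\ref{def:cauchy_lens}: since $\supp f'\subset N=\Intr(\ol{N})$ is compact and disjoint from $\angle\ol{N}$, its causal hull is also disjoint from $\angle\ol{N}$, so $G^{\PJ}f'$ and therefore $\Lambda$ vanish in a neighbourhood of the corner. This geometric input is the crux of the argument. The remaining phase is handled by two integrations by parts via Eq.~\eqref{eq:diffcodiffbound}, with both boundary terms vanishing because $f'$ and $\diff f'$ have compact support in $N$, giving $\ipc{-\diff^*\diff f'}{A}_N=-\ipc{f'}{\diff^*\diff A}_N=\ipc{f'}{J}_N$ on using the equation of motion $-\diff^*\diff A = J$.
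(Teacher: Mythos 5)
Your proof is correct and the independence argument matches the paper's verbatim; the paper then dismisses the remaining items as ``straightforward computation'', so you have supplied the details it omits, and all of them check out. The one point worth flagging is relation (iv), where you prove $[G^{\PJ}(-\diff^*\diff f')]=0$ by exhibiting $G^{\PJ}(-\diff^*\diff f')=\diff\Lambda$ with $\Lambda=\diff^* G^{\PJ}f'$ and using the causal closedness $\mathscr{J}^\pm(\angle\ol{N})=\angle\ol{N}$ to see that $\supp G^{\PJ}f'\subset\mathscr{J}(\supp f')$ avoids the corner. That is a valid route, but there is a quicker one that sidesteps the geometric condition and the identification of $\Lambda$: since $-\diff^*\diff f'\in\Omega^1_{0\diff^*}(\ol{N})$, Lem.~\ref{lem:st_to_symp_smear} gives, for arbitrary $\udl{A}\in\Sol(\ol{N})$,
\begin{equation*}
\sigma\bigl(G^{\PJ}(-\diff^*\diff f'),\udl{A}\bigr)=-\ipc{-\diff^*\diff f'}{\udl{A}}_N=-\ipc{f'}{-\diff^*\diff\udl{A}}_N=0,
\end{equation*}
the boundary terms vanishing because $f'$ is compactly supported in $N$ and the last step using the equation of motion for $\udl{A}$. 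Thus $G^{\PJ}(-\diff^*\diff f')$ lies in the radical $\mathscr{G}(\ol{N})$ by definition~\eqref{eq:GNbardef}, hence $[G^{\PJ}(-\diff^*\diff f')]=0$ directly. Both approaches work; yours invokes a sharper structural fact about finite Cauchy lenses, while the alternative relies only on material already mobilised for the other relations.
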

\begin{proof}
    We show that Def.~\ref{def:loc_obs_qft} is independent of the choice of $A\in \Sol^J(\ol{N})$. For $A,A'\in \Sol^J(\ol{N})$, we have
    \begin{align}
        \exp(i\ipc{f}{A'})W_{A'}(-[G^{\PJ}(f)])=&\exp(i\ipc{f}{A'})\exp(-i\sigma(G^{\PJ}(f),A-A'))W_{A}(-[G^{\PJ}(f)])\nonumber\\
        =&\exp(i\ipc{f}{A'})\exp(i\ipc{f}{A-A'})W_{A}(-[G^{\PJ}(f)])\nonumber\\
        =&\exp(i\ipc{f}{A})W_{A}(-[G^{\PJ}(f)]),
    \end{align}
    where we have used Lem.~\ref{lem:st_to_symp_smear}. The remaining relations follow from straightforward computation.
\end{proof}

\subsubsection{Large gauge transformations}
In analogy with the the classical setting, we take large gauge transformations on the algebra $\Af(\ol{N})$ as subgroup of the background shifts characterised in Prop.~\ref{prop:field_shift}. We shall use the parametrisation of large gauge directions $\mathfrak{G}:\mathscr{LG}_\angle(\ol{N})\to \mathscr{LG}(\ol{N})$ given above Prop.~\ref{prop:large_gauge_boundary}. 
\begin{definition}
\label{def:q_lar_gauge_transf}
    Let $\ol{N}$ be a finite Cauchy lens. For $\lambda\in \mathscr{LG}_\angle(\ol{N})$,  $\alpha_{\mathscr{LG}}(\lambda)\in \Aut(\Af(\ol{N}))$ 
    denotes the \emph{large gauge automorphism}
    \begin{equation}
        \alpha_{\mathscr{LG}}(\lambda)=\alpha(\mathfrak{G}(\lambda)),
    \end{equation}
    which is unitarily implemented as $\alpha_{\mathscr{LG}}(\lambda)=\Ad U_{\mathscr{LG}}(\lambda)$, where 
    \begin{equation}
        U_{\mathscr{LG}}(\lambda)=W_A(\mathfrak{G}(\lambda))\in \Af(\ol{N}),
    \end{equation}
    for some choice of background $A\in \Sol^J(\ol{N})$.
\end{definition}
For fixed background, the map $\lambda\mapsto U_{\mathscr{LG}}(\lambda)$
 is a homomorphism from the additive group $\mathscr{LG}_\angle(\ol{N})$ to the group of unitary elements in $\Af(\ol{N})$, which is background-independent up to a phase.
 Thus $\alpha_{\mathscr{LG}}$ defines a group action of inner automorphisms on $\Af(\ol{N})$ that is independent of the background and implements large gauge transformations
\begin{equation}
    \alpha_{\mathscr{LG}}(\lambda)(W_{A}([\udl{A}]))=W_{A-\diff\Lambda}([\udl{A}])=\exp(i\sigma(\udl{A},\diff\Lambda))W_{A}([\udl{A}])=\exp\left(i\ipc{\nml_{\partial\ol{\Sigma}}\nml_{\ol{\Sigma}}\diff\udl{A}}{\lambda}_{\partial\ol{\Sigma}}\right)W_{A}([\udl{A}]),
\end{equation}
for $(A,\udl{A})\in T\Sol^J(\ol{N})$, $\lambda\in \mathscr{LG}_\angle(\ol{N})$ and $[\diff\Lambda]=\mathfrak{G}(\lambda)$. One finds a direct analogue to Prop.~\ref{prop:c_loc_inv}
\begin{proposition}
\label{prop:q_loc_inv}
    Let $\ol{N}$ be a finite Cauchy lens. Then for each $b\in \Af_{\textup{loc}}(\ol{N})$ and $\lambda\in \mathscr{LG}_\angle(\lambda)$, we have
    \begin{equation}
        \alpha_{\mathscr{LG}}(\lambda)(b)=b.
    \end{equation}
\end{proposition}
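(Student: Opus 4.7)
The plan is to reduce the claim to a check on the generating set $\{W_{\textup{loc}}^J(f):f\in\Omega^1_{0\diff^*}(\ol{N})\}$ of $\Af_{\textup{loc}}(\ol{N})$ and exploit the explicit transformation rule for $W_{\textup{loc}}^J(f)$ under background shifts established just above. Since $\alpha_{\mathscr{LG}}(\lambda)$ is a unital $*$-automorphism and hence norm-continuous, and since $\Af_{\textup{loc}}(\ol{N})$ is by definition the $C^*$-subalgebra generated by $\{W_{\textup{loc}}^J(f)\}$, it suffices to show $\alpha_{\mathscr{LG}}(\lambda)(W_{\textup{loc}}^J(f))=W_{\textup{loc}}^J(f)$ for every $f\in\Omega^1_{0\diff^*}(\ol{N})$.

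For this, I would unwind the definitions. By Def.~\ref{def:q_lar_gauge_transf}, $\alpha_{\mathscr{LG}}(\lambda)=\alpha(\mathfrak{G}(\lambda))$, and the background-shift formula derived in the proof of the previous proposition gives
\begin{equation}
    \alpha(\mathfrak{G}(\lambda))(W_{\textup{loc}}^J(f)) = \exp\bigl(i\ipc{f}{\udl{A}}_{\ol{N}}\bigr)\,W_{\textup{loc}}^J(f),
\end{equation}
where $\udl{A}\in\Sol(\ol{N})$ is any representative of $\mathfrak{G}(\lambda)\in\mathscr{LG}(\ol{N})$. Choose $\udl{A}=\diff\Lambda$ for some $\Lambda\in\Omega^0(\ol{N})$, which is possible by the very definition~\eqref{eq:large_gauge_def} of $\mathscr{LG}(\ol{N})$.

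The key step is then to verify that $\ipc{f}{\diff\Lambda}_{\ol{N}}=0$. Since $f\in\Omega^1_{0\diff^*}(\ol{N})$ has compact support in the interior $N=\Intr(\ol{N})$, the Stokes identity~\eqref{eq:diffcodiffbound} applies with vanishing boundary contribution, giving
\begin{equation}
    \ipc{f}{\diff\Lambda}_{\ol{N}} = \ipc{\diff^*f}{\Lambda}_{\ol{N}} = 0
\end{equation}
because $f$ is coclosed. Hence the phase factor is trivial and $\alpha_{\mathscr{LG}}(\lambda)(W_{\textup{loc}}^J(f))=W_{\textup{loc}}^J(f)$, which extends to all of $\Af_{\textup{loc}}(\ol{N})$ by the homomorphism and continuity properties of $\alpha_{\mathscr{LG}}(\lambda)$.

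There is no real obstacle here; the result is the quantum counterpart of Prop.~\ref{prop:c_loc_inv} and all the analytic work has already been done. The only thing to be cautious about is that the generator $W_{\textup{loc}}^J(f)$ is defined via an auxiliary choice of background $A\in\Sol^J(\ol{N})$, but as verified in the preceding proposition it is background-independent, so the computation above is unambiguous.
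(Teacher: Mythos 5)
Your proof is correct and takes essentially the same approach as the paper: reduce to the Weyl generators $W_{\textup{loc}}^J(f)$ and show the large-gauge phase factor is trivial. The only difference is how the vanishing of the phase is established: you compute $\ipc{f}{\diff\Lambda}_{\ol{N}}$ directly via Stokes, whereas the paper applies the corner formula $\alpha_{\mathscr{LG}}(\lambda)(W_A([\udl{A}]))=\exp(i\ipc{\nml_{\partial\ol{\Sigma}}\nml_{\ol{\Sigma}}\diff\udl{A}}{\lambda}_{\partial\ol{\Sigma}})W_A([\udl{A}])$ with $\udl{A}=-G^{\PJ}f$ and cites Eq.~\eqref{eq:sigmadLambdaAp}; these are two equivalent expressions for $\sigma(G^{\PJ}f,\diff\Lambda)=0$, linked by Lem.~\ref{lem:localvssymp}.
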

\begin{proof}
    For $f\in \Omega^1_{0\diff^*}(\ol{N})$, we compute
    \begin{equation}
        \alpha_{\mathscr{LG}}(\lambda)W_{\textup{loc}}(f)=\exp(-i\ipc{\nml_{\partial\ol{\Sigma}}\nml_{\Sigma}\diff G^{\PJ}f}{\lambda})W_{\textup{loc}}(f)=W_{\textup{loc}}(f),
    \end{equation}
    where we have used Eq.~\eqref{eq:sigmadLambdaAp}. Since operators of this form generate $\Af_{\textup{loc}}(\ol{N})$, the proposition follows.
\end{proof}
We say that $\Af_{\textup{loc}}(\ol{N})\subset \Af(\ol{N})^{\alpha_{\mathscr{LG}}}$, where $\Af(\ol{N})^{\alpha_{\mathscr{LG}}}$ denotes the algebra of invariants w.r.t. the action $\alpha_{\mathscr{LG}}:\mathscr{LG}_\angle(\ol{N})\to \Aut(\Af(\ol{N})$.

Under the usual interpretation 
\begin{equation}
    U_{\mathscr{LG}}(\lambda)=W_A([\diff\Lambda])\simeq \exp(i\mathbf{O}_A([\diff\Lambda]))
\end{equation}
for the chosen background $A$. As a classical observable, $O_A([\diff\Lambda])$ evaluates on $[\tilde{A}]\in \Sol_{\mathscr{G}}^J(\ol{N})$ as
\begin{equation}
    O_A([\diff\Lambda])(\tilde{A})=\ipc{\lambda}{\nml_{\partial\ol{\Sigma}}(\tilde{\mathbf{E}}-\mathbf{E}_{\textnormal{bg}})}_{\partial\ol{\Sigma}},
\end{equation}
for $\tilde{\mathbf{E}}=-\nml_{\ol{\Sigma}}\diff \tilde{A}$ and $\mathbf{E}_{bg}=-\nml_{\ol{\Sigma}}\diff A$ the electric fields at $\ol{\Sigma}$ associated with $\tilde{A}$ and the background $A$. Thus, loosely speaking, large gauge transformations $\alpha_{\mathscr{LG}}(\lambda)$ are generated by observables given by smearings of electric flux density through the corner $\angle\ol{N}=\partial\ol{\Sigma}$ of $\ol{N}$, which can be made precise at the level of sufficiently regular representations such as discussed in Sec.~\ref{sec:reps}. The fact that large gauge transformations are generated by electric flux observables is in line with \cite{donnellyLocalSubsystemsGauge2016,stromingerLecturesInfraredStructure2018,rielloHamiltonianGaugeTheory2024a}.

\subsubsection{Localisability of observables}
\label{sec:localise}
Using the background shifts, we define a notion of localisability of observables in $\Af(\ol{N})$. Here, 
a \emph{region} in $\ol{N}$ is a relatively open subset of $\ol{N}$ with causally convex interior.
\begin{definition}
\label{def:localisability}
    Let $\ol{N}$ be a finite Cauchy lens.  
    For any compact $K\subset\ol{N}$, let 
    $\Af(\ol{N};K)$ be the subalgebra of $\Af(\ol{N})$ 
    consisting of fixed points under all background shifts
    $\alpha([\udl{A}])$ for $\udl{A}\in \Sol(\ol{N})$ with $\supp(\udl{A})\cap K=\emptyset$. For any region $U$ in $\ol{N}$, let 
\begin{equation}
        \Af(\ol{N};U)=C^*\left(\bigcup_{K\subset U\text{ compact}}\Af(\ol{N};K)\right)
    \end{equation}
    be the smallest $C^*$-subalgebra containing all $\Af(\ol{N};K)$ for $K\subset U$ compact.
    An element $a\in\Af(\ol{N})$ is said to be \emph{localisable} in $K$ (resp., $U$) if 
    $a\in\Af(\ol{N};K)$ (resp.,  $a\in\Af(\ol{N};K)$).     
    Finally, the algebra of local observables localisable in $U$ is defined by
    \begin{equation}
        \Af_{\textup{loc}}(\ol{N};U)=\Af(\ol{N};U)\cap\Af_{\textup{loc}}(\ol{N}).
    \end{equation}
\end{definition}
    The net structure determined by this notion of localisability is preserved by background shifts.
\begin{proposition}
    Let $\ol{N}$ be a finite Cauchy lens, $U\subset \ol{N}$ a region, and $[\udl{A}]\in \Sol_\mathscr{G}(\ol{N})$, then 
    \begin{equation}
        \alpha([\udl{A}])(\Af(\ol{N};U))=\Af(\ol{N};U).
    \end{equation}
\end{proposition}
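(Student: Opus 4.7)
My plan is to prove the statement first for compact $K \subset \ol{N}$, where $\Af(\ol{N};K)$ is defined as a fixed-point algebra, and then transfer the result to regions $U$ by exploiting the fact that every $*$-automorphism of a $C^*$-algebra is an isometry (hence continuous) and therefore commutes with $C^*$-closures.

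For the compact case, the key ingredient is the abelian nature of the background-shift representation, which is guaranteed by Prop.~\ref{prop:field_shift}: since $\alpha:\Sol_\mathscr{G}(\ol{N})\to\Aut(\Af(\ol{N}))$ is a representation of an additive (hence abelian) group, one has $\alpha([\udl{A}_1])\circ\alpha([\udl{A}_2])=\alpha([\udl{A}_2])\circ\alpha([\udl{A}_1])$ for all $[\udl{A}_1],[\udl{A}_2]\in\Sol_\mathscr{G}(\ol{N})$. Consequently, for any $a\in\Af(\ol{N};K)$ and any $[\udl{A}']\in\Sol_\mathscr{G}(\ol{N})$ with $\supp(\udl{A}')\cap K=\emptyset$, I would compute
\begin{equation}
\alpha([\udl{A}'])\big(\alpha([\udl{A}])(a)\big) = \alpha([\udl{A}])\big(\alpha([\udl{A}'])(a)\big) = \alpha([\udl{A}])(a),
\end{equation}
where the second equality uses that $a$ is fixed by $\alpha([\udl{A}'])$ because $\supp(\udl{A}')\cap K=\emptyset$. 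This shows $\alpha([\udl{A}])(a)\in\Af(\ol{N};K)$, and since $\alpha([\udl{A}])$ has inverse $\alpha(-[\udl{A}])$, which preserves $\Af(\ol{N};K)$ by the same argument, we conclude $\alpha([\udl{A}])(\Af(\ol{N};K))=\Af(\ol{N};K)$.

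For a general region $U \subset \ol{N}$, the compact case yields
\begin{equation}
\alpha([\udl{A}])\bigg(\bigcup_{K\subset U\text{ compact}}\Af(\ol{N};K)\bigg) = \bigcup_{K\subset U\text{ compact}}\Af(\ol{N};K).
\end{equation}
Since $\alpha([\udl{A}])$ is a unital $*$-automorphism of the $C^*$-algebra $\Af(\ol{N})$, it is an isometry and in particular norm-continuous; the same holds for its inverse. Taking $C^*$-closures on both sides of the above equality (using continuity of $\alpha([\udl{A}])$ and $\alpha(-[\udl{A}])$) yields $\alpha([\udl{A}])(\Af(\ol{N};U))=\Af(\ol{N};U)$ directly from the definition of $\Af(\ol{N};U)$ as the $C^*$-algebra generated by this union.

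There is essentially no obstacle here: the result follows almost immediately from the abelianness of the shift action, the invertibility of $\alpha([\udl{A}])$, and the continuity of $C^*$-automorphisms. The only point requiring a moment of care is verifying that one may freely interchange $\alpha([\udl{A}])$ with the $C^*$-closure in passing from compact sets to regions, which is a standard consequence of the isometry property of $*$-automorphisms on $C^*$-algebras.
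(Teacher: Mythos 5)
Your proof is correct and follows the same essential strategy as the paper's: exploiting abelianness of the background-shift action to show that each $\alpha([\udl{A}'])$ with support disjoint from $K$ fixes $\alpha([\udl{A}])(a)$, then invoking invertibility via $\alpha(-[\udl{A}])$. You are in fact a bit more careful than the paper's terse proof, which applies the fixed-point argument directly to $\Af(\ol{N};U)$ even though only $\Af(\ol{N};K)$ for compact $K$ is literally defined as a fixed-point subalgebra; your explicit two-step passage through the compact case and then the $C^*$-closure (using that $*$-automorphisms are isometries) makes the region case airtight.
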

\begin{proof}
   Suppose that $a\in \Af(\ol{N};U)$, then for any $\udl{A}'\in \Sol(\ol{N})$ with $U\cap \supp(\udl{A}')=\emptyset$ we have 
    \begin{equation}
        \alpha([\udl{A}'])(\alpha([\udl{A}])(a))=\alpha([\udl{A}])(\alpha([\udl{A}'])(a))=\alpha([\udl{A}])(a),
    \end{equation}
    because $\alpha$ is an Abelian group action, and therefore $\alpha([\udl{A}])(\Af(\ol{N};U))\subset \Af(\ol{N};U)$
    for each $[\udl{A}]\in \Sol_\mathscr{G}(\ol{N})$. 
    The reverse inclusion follows because $\alpha([\udl{A}])^{-1}=\alpha(-[\udl{A}])$.
\end{proof}

For a sufficiently well behaved region $U$, the observables localisable in $U$ coincide with the algebra generated by fields smeared with forms supported in $U$. This is made precise as follows.
\begin{proposition}
    \label{prop:loc_localise}
    Let $\ol{N}$ be a finite Cauchy lens and $U\subset\ol{N}$ be a region such that
    \begin{itemize}
        \item $U\cap \partial{\ol{N}}=\emptyset$,
        \item for each compact set $K\subset U$ there exists a pre-compact $U'\subset U$ (i.e.~$\overline{U'}\subset U$ compact) with $K\subset U'$ and a regular Cauchy surface with boundaries $\ol{\Sigma}\subset \ol{N}$ such that $\ol{N}\setminus \mathscr{I}(U')$ is a compact manifold with corners and $\ol{\Sigma}\setminus\mathscr{I}(U')$ is a compact manifold with smooth boundaries.
    \end{itemize}
    Then both $\Af_{\textup{loc}}(\ol{N};U)$ and $\Af(\ol{N};U)$ coincide with the smallest $C^*$-subalgebra of $\Af(\ol{N})$ containing all $W_{\textup{loc}}^J(f)$ with $\supp(f)\subset U$.
\end{proposition}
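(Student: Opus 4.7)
The plan is to establish the chain $\Bf(U)\subset\Af_{\textup{loc}}(\ol{N};U)\subset\Af(\ol{N};U)\subset\Bf(U)$, writing $\Bf(U)$ for the $C^{*}$-subalgebra generated by $\{W_{\textup{loc}}^J(f):\supp(f)\subset U\}$. The middle inclusion is immediate from Def.~\ref{def:localisability}. For $\Bf(U)\subset\Af_{\textup{loc}}(\ol{N};U)$, take $f\in\Omega^1_{0\diff^*}(\ol{N})$ with $\supp(f)\subset U$ and set $K=\supp(f)$; since $\alpha([\udl{A}'])(W_{\textup{loc}}^J(f))=\exp(i\ipc{f}{\udl{A}'})W_{\textup{loc}}^J(f)$, for any $\udl{A}'$ with $\supp(\udl{A}')\cap K=\emptyset$ the pairing vanishes and $W_{\textup{loc}}^J(f)$ is fixed, giving $W_{\textup{loc}}^J(f)\in\Af(\ol{N};K)\subset\Af(\ol{N};U)$; combined with $W_{\textup{loc}}^J(f)\in\Af_{\textup{loc}}(\ol{N})$ this yields $W_{\textup{loc}}^J(f)\in\Af_{\textup{loc}}(\ol{N};U)$.

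The substantive inclusion $\Af(\ol{N};U)\subset\Bf(U)$ reduces to showing $\Af(\ol{N};K)\subset\Bf(U)$ for every compact $K\subset U$. Invoke the hypothesis to fix a pre-compact $U'$ with $K\subset U'\subset\overline{U'}\subset U$ together with a regular Cauchy surface $\ol{\Sigma}$ such that $\ol{N}\setminus\mathscr{I}(U')$ and $\ol{\Sigma}\setminus\mathscr{I}(U')$ have the stated manifold structures. Let
\begin{equation*}
V_K=\{[\udl{A}']\in\Sol_{\mathscr{G}}(\ol{N}):\supp(\udl{A}')\cap K=\emptyset\}
\end{equation*}
and denote its symplectic orthogonal by $V_K^\sigma=\{[\udl{A}]:\sigma([\udl{A}],[\udl{A}'])=0\text{ for all }[\udl{A}']\in V_K\}$. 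By Prop.~\ref{prop:field_shift}, $\Af(\ol{N};K)$ is the fixed-point algebra of $\{\Ad W_A([\udl{A}']):[\udl{A}']\in V_K\}$, and I first identify it with the Weyl subalgebra $\Weyl(V_K^\sigma,\sigma\restriction V_K^\sigma)$: one inclusion is immediate from the Weyl relation, while the reverse uses linear independence of Weyl generators on the dense $*$-subalgebra $\Delta(\Sol_\mathscr{G}(\ol{N}),\sigma)$ together with a conditional-expectation argument via Gaussian averaging over finite-dimensional affine subspaces of $V_K$.

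Next, verify the key symplectic inclusion
\begin{equation*}
V_K^\sigma\subset\{[G^{\PJ}f]:f\in\Omega^1_{0\diff^*}(\ol{N}),\supp(f)\subset U\},
\end{equation*}
which, by Def.~\ref{def:loc_obs_qft}, places $W_A([\udl{A}])\in\Bf(U)$ up to a scalar phase for every $[\udl{A}]\in V_K^\sigma$. Under $\mathfrak{I}_{\ol{\Sigma}}$ (Prop.~\ref{prop:ext_ini_dat}), solutions with Cauchy data compactly supported in $\ol{\Sigma}\setminus\mathscr{I}(U')$ propagate into $\mathscr{J}(\ol{\Sigma}\setminus\mathscr{I}(U'))\subset\ol{N}\setminus K$ and hence belong to $V_K$. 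Testing symplectic orthogonality of $\mathfrak{I}_{\ol{\Sigma}}([\udl{A}])$ for $[\udl{A}]\in V_K^\sigma$ against this rich family via
\begin{equation*}
\sigma_{\ol{\Sigma}}((\mathbf{A}_1+\mathscr{G}(\ol{\Sigma}))\oplus\mathbf{E}_1,(\mathbf{A}_2+\mathscr{G}(\ol{\Sigma}))\oplus\mathbf{E}_2)=\ipc{\mathbf{A}_1}{\mathbf{E}_2}_{\ol{\Sigma}}-\ipc{\mathbf{A}_2}{\mathbf{E}_1}_{\ol{\Sigma}}
\end{equation*}
and invoking the Hodge decompositions of Lem.~\ref{lem:Hodge} on the compact manifold-with-boundary $\ol{\Sigma}\setminus\mathscr{I}(U')$ forces the Cauchy data of $[\udl{A}]$ to be supported in $\ol{\Sigma}\cap\overline{\mathscr{I}(U')}$, a compact subset of $\Intr(U)$. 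Since $\Intr(U)$ is open and causally convex, Lem.~\ref{lem:green_sc} then produces $f\in\Omega^1_{0\diff^*}(\Intr(U))$ with $[G^{\PJ}f]=[\udl{A}]$, completing the argument.

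The principal obstacle is the support analysis of $V_K^\sigma$ in the previous step: because $V_K$ generally contains solutions whose Cauchy data extends into $\mathscr{J}(K)\cap\ol{\Sigma}$ but whose propagation conspires to avoid $K$, one cannot read off the desired characterisation of $V_K^\sigma$ by naively dualising initial-data supports. The regularity hypothesis that $\ol{\Sigma}\setminus\mathscr{I}(U')$ is a compact manifold with smooth boundary is exactly what enables the Hodge-theoretic non-degeneracy argument to pin down the Cauchy-data support of $V_K^\sigma$ elements within $\Intr(U)$. A secondary but still delicate technicality is the conditional-expectation step identifying the fixed-point algebra with $\Weyl(V_K^\sigma,\sigma)$; one must choose the Gaussian averaging family carefully, given that $V_K$ is not locally compact.
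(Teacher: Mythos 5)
Your framing is sound and the first two inclusions are fine: the step $\Bf(U)\subset\Af_{\textup{loc}}(\ol{N};U)$ is exactly the paper's first paragraph, the middle inclusion is immediate, and the identification of the fixed-point algebra with the Weyl subalgebra over the symplectic complement $V_K^\sigma$ is what the paper's Lem.~\ref{lem:Weyl_fixedpoint} delivers (the paper does this by a cosine-averaging/minimality trick on finite sums rather than a Gaussian conditional expectation, but both can be made to work and the paper's version sidesteps your worry about $V_K$ not being locally compact).

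The gap is in the final support analysis, and it is the heart of the matter. Your claim that testing symplectic orthogonality against solutions propagating from $\ol{\Sigma}\setminus\mathscr{I}(U')$ ``forces the Cauchy data of $[\udl{A}]$ to be supported in $\ol{\Sigma}\cap\overline{\mathscr{I}(U')}$'' is false. Carry the computation out: taking test data $(\mathbf{A}_2,0)$ compactly supported in $\ol{\Sigma}\setminus\overline{U'}$ does give $\mathbf{E}\restriction_{\ol{\Sigma}\setminus\overline{U'}}=0$; but taking $(0,\mathbf{E}_2)$ only lets you pair against \emph{co-closed} $\mathbf{E}_2$, and the resulting conclusion (via Poincar\'e duality, the paper's Lem.~\ref{lem:poincare_dual}) is merely that $\mathbf{A}$ is \emph{exact} outside $\overline{U'}$, say $\mathbf{A}=\diff\lambda$ there, not that it vanishes. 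To kill this exact piece by a gauge transformation in $\mathscr{G}(\ol{N})$ you would need $\lambda\restriction_{\angle\ol{N}}$ to be locally constant, and that is precisely what the orthogonality against solutions with data in $\ol{\Sigma}\setminus\overline{U'}$ does \emph{not} give you for free. In other words, an element of $V_K^\sigma$ can, after subtracting off a solution with data inside $U'$, still differ from zero by a \emph{large} gauge direction $[\diff\tilde\Lambda]\in\mathscr{LG}(\ol{N})$, and large gauge directions are by construction not quotiented out, so $[\udl{A}]\in\Sol_{\mathscr{G}}(\ol{N})$ is generally nonzero even after this reduction.

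This is exactly the residual difficulty that occupies the second half of the paper's proof: they write $\udl{A}=\tilde{\udl{A}}+\diff\tilde\Lambda$ with $\tilde{\udl{A}}$ supported in $\mathscr{J}(\overline{U'})$ (handled by Lem.~\ref{lem:green_sc}), and then use the full strength of the orthogonality hypothesis—including against solutions $\udl{A}'$ with $\supp(\udl{A}')\cap\overline{U'}=\emptyset$ whose electric data is prescribed on $\partial\ol{\Sigma}$ via Lem.~\ref{lem:form_deform}—to conclude that $\tilde\Lambda$ is constant on each component of $\angle\ol{N}$ that bounds a component of $\ol{\Sigma}\setminus\mathscr{I}(U')$. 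Only then can $\tilde\Lambda$ be traded for a $\Lambda'$ whose differential is supported in $\mathscr{J}(\overline{U'})$ while matching on the corner, which puts $[\diff\tilde\Lambda]=[\diff\Lambda']$ into the range of $G^\PJ$ on $\Omega^1_{0\diff^*}(U)$. Your sketch skips this entirely by asserting a support statement that the symplectic test family cannot enforce; the regularity hypothesis on $\ol{\Sigma}\setminus\mathscr{I}(U')$ that you invoke as the crux of the Hodge argument is in the paper used for a different purpose (Lem.~\ref{lem:form_deform}), namely to produce test solutions with prescribed boundary normal component so that the corner trace of $\tilde\Lambda$ can be controlled.
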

The proof relies on some technical lemmas given in Appx~\ref{apx:localise}.
\begin{proof}
For each $f\in \Omega^1_{0\diff^*}(U)$  and $\udl{A}\in \Sol(\ol{N})$ with $\supp(\udl{A})\cap \supp(f)=\emptyset$ we have
    \begin{equation}
        \alpha([\udl{A}])W_{\textup{loc}}^J(f)=\exp(i\ipc{f}{[\udl{A}]})W_{\textup{loc}}^J(f)=W_{\textup{loc}}^J(f),
    \end{equation}
    hence $W_{\textup{loc}}^J(f)\in \Af(\ol{N};\supp(f))\cap\Af_{\textup{loc}}(\ol{N})\subset \Af_{\textup{loc}}(\ol{N};U)$.

    Conversely, it suffices to show that for each compact $K\subset U$, the algebra $\Af(\ol{N};K)$
    is generated by operators of the form $W_{\textup{loc}}^J(f)$ with $f\in \Omega^1_{0\diff^*}(U)$. Using
    Lem.~\ref{lem:Weyl_fixedpoint}, which characterises fixed point subalgebras of Weyl algebras, we may infer that $\Af(\ol{N};K)\subset \Af(\ol{N})$ is generated by 
    $W_A([\udl{A}])$ with $(A,\udl{A})\in T\Sol^J(\ol{N};K)$, where
    \begin{equation}
     T\Sol^J(\ol{N};K):=   \{
     (A,\udl{A})\in T\Sol^J(\ol{N}):\sigma(\udl{A}',
    \udl{A})=0\text{ for all }\udl{A}'\in \Sol(\ol{N})\text{ with }\supp(\udl{A}')\cap K=\emptyset\}.
    \end{equation}
    To complete the proof, we show that for
    each $(A,\udl{A})\in T\Sol^J(\ol{N};K)$, there exists an $f\in \Omega^1_{0\diff^*}(U)$ such that $[\udl{A}]=[G^{PJ}f]$,
    which implies 
    \begin{equation}
        W_A([\udl{A}])=W_A([G^{PJ}f])=\exp(i\ipc{f}{A})W_{\textup{loc}}(-f),
    \end{equation}
    thus establishing the result. Accordingly,
    let $(A,\udl{A})\in T\Sol^J(\ol{N};K)$. 
    Denoting $V=\ol{N}\setminus \mathscr{J}(K)$, for each $h\in \Omega^1_{0\diff^*}(V)$, we have
    \begin{equation}
        \ipc{h}{\udl{A}\restriction_V}_{V}= \ipc{h}{\udl{A}}_{\ol{N}}=-\sigma(G^{\PJ}h,\udl{A})=
        0
    \end{equation}
    because $\supp(G^{\PJ}h)\cap K\subset \mathscr{J}(\supp (h))\cap K=\emptyset$. 
    By Lem.~\ref{lem:poincare_dual} (a consequence of Poincar\'e duality), this means that $\udl{A}\restriction_V=\diff\Lambda$ for some $\Lambda\in \Omega^0(\Intr(V))$. Choose $K\subset U'\subset U$ as above and denote $V'=\ol{N}\setminus\mathscr{I}(U')\subset V$. Applying Lem.~\ref{lem:0form_ext} to $\Lambda\restriction_{\Intr(V')}$, means that there exists a $\ol{\Lambda}\in \Omega^0(V')$ such that $\udl{A}\restriction_{V'}=\diff\ol{\Lambda}$. By smoothly extending $\ol{\Lambda}$, one may find a $\tilde{\Lambda}\in \Omega^0(\ol{N})$ such that $\tilde{\Lambda}\restriction_{V'}=\ol{\Lambda}$. 
    We thus conclude that $\tilde{\udl{A}}:=\udl{A}-\diff\tilde{\Lambda}$ satisfies $\supp(\tilde{\udl{A}})\subset \mathscr{J}(\overline{U'})$, and thus by Lem.~\ref{lem:green_sc}, there exists an $f_1\in \Omega^1_{0\diff^*}(U)$ such that $[\tilde{\udl{A}}]=[G^{\PJ}f_1]$.
    
    What remains to be shown, is that there exists an $f_2\in \Omega^1_{0\diff^*}(U)$ such that $[\diff\tilde{\Lambda}]=[G^{\PJ}f_2]$, 
    whereupon $[\udl{A}]=[G^\PJ(f_1+f_2)]\in [G^\PJ\Omega_{0\diff^*}^1(U)]$. Note that for each $\udl{A}'\in \Sol(\ol{N})$ with $\supp(\udl{A}')\cap \overline{U'}=\emptyset$, we have
    \begin{equation}
        \ipc{\nml_{\partial\ol{\Sigma}}\nml_{\ol{\Sigma}}\diff\udl{A}'}{\tilde{\Lambda}}_{\partial\ol{\Sigma}}=\sigma(\udl{A}',\diff\tilde{\Lambda})=\sigma(\udl{A}',\udl{A})=0,
    \end{equation}
    for any regular Cauchy surface with boundaries $\ol{\Sigma}\subset\ol{N}$. Choose $\ol{\Sigma}$ such that $\ol{S}=\ol{\Sigma}\setminus\mathscr{I}(U')$ is a compact manifold with boundaries, we apply Lem.~\ref{lem:form_deform} to find for each $f\in \nml_{\partial \ol{S}}\Omega^1_{\diff^*}(\ol{S})$ with $\supp(f)\subset \partial\ol{S}\cap \partial\ol{\Sigma}$ a form $\mathbf{E}\in \Omega^1_{\diff^*}(\ol{\Sigma})$ with $\nml_{\partial\ol{S}}\mathbf{E}=f$ and $\supp(\mathbf{E})\subset \ol{S}$. Now let $\udl{A}'\in \Sol(\ol{N})$ such that $\mathfrak{I}_{\ol{\Sigma}}([\udl{A}'])=0\oplus \mathbf{E}$, 
    which by Lem.~\ref{lem:E_sol_support} can always be chosen such that $\supp(\udl{A}')\subset \mathscr{J}(\supp(\mathbf{E}))$, and thus $\supp(\udl{A}')\cap \ol{U'}=\emptyset$. As a consequence $\sigma(\udl{A}',\diff\tilde{\Lambda})=-\ipc{f}{\tilde{\Lambda}}_{\partial\ol{\Sigma}}=0$. For any $f\in \Omega^0(\partial\ol{S}\cap\partial\ol{\Sigma})$, we  have $f\in \nml_{\partial \ol{S}}\Omega^1_{\diff^*}(\ol{S})$ if and only if $\ipc{f}{1}_{\partial\ol{S}_i}$ for each connected component $\ol{S}_i\subset\ol{S}$. Consequently, it follows that $\tilde{\Lambda}$ is constant on $\partial\ol{\Sigma}\cap \partial\ol{S}_i$ for each connected component $\ol{S}_i\subset\ol{S}$ . 
    As a result, one can always choose a $\Lambda'\in \Omega^0(\ol{N})$ with $\diff\Lambda'\subset \mathscr{J}(\overline{U'})$ (and thus in particular locally constant on $\ol{S}$) such that $\Lambda'\restriction_{\angle \ol{N}}=\tilde{\Lambda}\restriction_{\angle \ol{N}}$. By Lem.~\ref{lem:green_sc}, there thus exists an $f_2\in \Omega^1_{0\diff^*}(\ol{N})$ with $\supp(f_2)\subset \overline{U'}\subset U$, for which $[\diff\Lambda']=[\diff\tilde{\Lambda}]=[G^{\PJ} f_2]$, which was the remaining statement required. 
    \end{proof}
    
\subsection{States and representations of the semi-local algebra and background covariance}
\label{sec:reps}
We now turn to the question of whether $\Af(\ol{N})$ (or equivalently $\Af^{C\oplus S}(\ol{\Sigma})$) admits any physically meaningful representations. In particular, do these Weyl algebras admit Fock representations? (For examples of Weyl algebras with no Fock representations, 
see \cite{robinsonSYMPLECTICPATHOLOGY1993}.) 

Fock representations are uniquely specified by a \emph{one-particle structure} on the underlying symplectic space $\Sol_\mathscr{G}(\ol{N})\cong V^{C}(\ol{\Sigma})\oplus V^{S}(\ol{\Sigma})$ (see e.g.~\cite{kayLinearSpinzeroQuantum1978,kayTheoremsUniquenessThermal1991,khavkineAlgebraicQFTCurved2015}), or equivalently by a quasi-free state \cite{manuceauQuasifreeStatesCCRAlgebra1968}.
\begin{definition}
\label{def:quasi-free-kin}
    A state $\omega:\Af^{C\oplus S}(\ol{\Sigma})\to \CC$ (i.e.~a continuous positive definite linear unital map) is called \emph{quasi-free} if there exists a real inner product
    \begin{equation}
        \mu:\left(V^{C}(\ol{\Sigma})\oplus V^{S}(\ol{\Sigma})\right)^2\to \RR,
    \end{equation}
    defining a norm $\Vert .\Vert_\mu$ satisfying for $\udl{v}_1^{C/S},\udl{v}_2^{C/S}\in V^{C/S}(\ol{\Sigma})$
    \begin{equation}\label{eq:norm_domination}
        \vert \sigma_{\ol{\Sigma}}^C(\udl{v}_1^C,\udl{v}_2^C)+\sigma_{\ol{\Sigma}}^S(\udl{v}_1^S,\udl{v}_2^S)\vert\leq 2\Vert \udl{v}_1^C\oplus \udl{v}_1^S\Vert_\mu\Vert \udl{v}_2^C\oplus \udl{v}_2^S\Vert_\mu,
    \end{equation}
    such that 
    \begin{equation}
    \label{eq:quasi-free-kin}
        \omega(W^C(\udl{v}^C)\otimes W^S(\udl{v}^S))=\exp\left(-\frac{1}{2}\Vert \udl{v}^C\oplus \udl{v}^S\Vert_\mu^2\right),
    \end{equation}
    for all $\udl{v}^{C/S}\in V^{C/S}(\ol{\Sigma})$.
\end{definition}
Concretely, a quasi-free state $\omega$ is uniquely specified by its symmetrised two-point function $\mu$ on the relevant symplectic space, such that the $n$-point correlation functions defined through these states satisfy the Isserlis--Wick theorem.

Given a norm $\mu$ as above, we will typically denote the associated quasi-free state uniquely determined by Eq.~\eqref{eq:quasi-free-kin} as $\omega_\mu$. By \cite[Thm.~3.1]{kayTheoremsUniquenessThermal1991}, a quasi-free state $\omega_\mu$ associated with a norm $\mu$ uniquely corresponds (up to unitary equivalence) to a one-particle structure $K:V^{C}(\ol{\Sigma})\oplus V^{S}(\ol{\Sigma})\to \KK$, where $\KK$ a (complex) Hilbert space, $K$ a real linear (not necessarily injective) map such that
\begin{itemize}
    \item $K(V^{C}(\ol{\Sigma})\oplus V^{S}(\ol{\Sigma}))+iK(V^{C}(\ol{\Sigma})\oplus V^{S}(\ol{\Sigma}))$ is dense in $\KK$,
    \item For any  $\udl{v}_1^{C/S},\udl{v}_2^{C/S}\in V^{C/S}(\ol{\Sigma})$ we have
    \begin{equation}
        \ipc{K(\udl{v}^C_1\oplus \udl{v}_1^S)}{K(\udl{v}^C_2\oplus \udl{v}_2^S)}_{\KK}=\mu(\udl{v}^C_1\oplus \udl{v}_1^S,\udl{v}^C_2\oplus \udl{v}_2^S)+\frac{i}{2}\left(\sigma^C(\udl{v}_1^C,\udl{v}_2^C)+\sigma^S(\udl{v}_1^S,\udl{v}_2^S)\right).
    \end{equation}
\end{itemize}

Each state on $\omega $ on $\Af^{C\oplus S}(\ol{\Sigma})$ defines a representation via the GNS construction, see e.g.~\cite[Thm.~2.3.16]{bratteliOperatorAlgebrasQuantum1987}. In particular, there exists (uniquely up to unitary equivalence) a representation $(\HH,\pi,\Omega)$ where $\HH$ a complex Hilbert space, $\pi:\Af^{C\oplus S}(\ol{\Sigma})\to B(\HH)$ a unital ${}^*$-isomorphism and $\Omega\in \HH$ such that for each $a\in \Af^{C\oplus S}(\ol{\Sigma})$
\begin{equation}
    \ipc{\Omega}{\pi(a)\Omega}=\omega(a),
\end{equation}
and $\pi(\Af^{C\oplus S}(\ol{\Sigma}))\Omega$ is dense in $\HH$. Applying this construction to a quasi-free state $\omega_\mu$ yields a (bosonic) Fock representation $(\HH_\mu,\pi_\mu,\Omega_\mu)$, see \cite[Sec.~3.2]{kayTheoremsUniquenessThermal1991}
given (up to equivalence) as follows,
using the conventions of \cite[Thm.~5.2.24]{khavkineAlgebraicQFTCurved2015}. The GNS Hilbert space is
\begin{equation}
    \HH_\mu=\overline{\bigoplus_{n=0}^\infty \KK^{\otimes_S n}},
\end{equation}
where $\otimes_S$ denotes the symmetrized tensor product and is $\KK$ the one-particle Hilbert space of $\mu$, while the GNS vector is the Fock vacuum vector
\begin{equation}
    \Omega_\mu=1\oplus 0\oplus0....
\end{equation}
The representation $\pi_\mu$ is defined in terms of the
annihilation and creation operators $a(\psi)$ and $a^*(\psi)$ for $\psi\in \KK$, densely defined on $\HH_\mu$ and satisfying $a(\psi)\Omega_\mu=0$, $[a(\psi),a(\psi')]=[a^*(\psi),a^*(\psi')]=0$ and $[a(\psi),a^*(\psi')]=\ipc{\psi}{\psi'}_{\KK}\cdot 1_{\HH}$. Using the one-particle structure $K:V^{C}(\ol{\Sigma})\oplus V^{S}(\ol{\Sigma})\to \KK$, there then exist essentially self-adjoint operators
\begin{equation}
    \mathbf{o}^C(\udl{v}^C)=a(K(\udl{v}^C\oplus 0))+a^*(K(\udl{v}^C\oplus 0)), \qquad \mathbf{o}^S(\udl{v}^s)=a(K(0\oplus \udl{v}^S))+a^*(K(0\oplus \udl{v}^S)),
\end{equation}
on $\HH_\mu$ for $\udl{v}^C\in V^C(\ol{\Sigma})$ and $\udl{v}^S\in V^S(\ol{\Sigma})$, whereupon $\pi_\mu$ is defined by 
\begin{equation}
    \pi_\mu(W^C(\udl{v}^C)\otimes W^S(\udl{v}^S))=\exp\left(i\left( \mathbf{o}^C(\udl{v}^C)+\mathbf{o}^S(\udl{v}^S)\right)\right),
\end{equation}
defined using functional calculus. In particular,  Eq.~\eqref{eq:Weyl-exponentials} is given a precise meaning in such representations.

In line with the discussion on background dependent generators for $\Af(\ol{N})$ in Sec.~\ref{sec:quant_alg}, we introduce the following terminology to describe quasi-free states of $\Af(\ol{N})$.
\begin{definition}
\label{def:quasi-free-sloc}
    A state $\omega:\Af(\ol{N})\to \CC$  is called \emph{quasi-free around a background $A\in \Sol^J(\ol{N})$} if there exists a real inner product
    \begin{equation}
        \mu:\left(\Sol_\mathscr{G}(\ol{N})\right)^2\to \RR,
    \end{equation}
    defining a norm $\Vert .\Vert_\mu$ satisfying for $[\udl{A}_1],[\udl{A}_2]\in \Sol_{\mathscr{G}}(\ol{N})$
    \begin{equation}
        \vert \sigma(\udl{A}_1,\udl{A}_2)\vert\leq 2\Vert [\udl{A}_1]\Vert_\mu\Vert [\udl{A}_2]\Vert_\mu,
    \end{equation}
    such that 
    \begin{equation}
    \label{eq:quasi-free-A}
        \omega(W_{A}([\udl{A}]))=\exp\left(-\frac{1}{2}\Vert [\udl{A}]\Vert_\mu^2\right),
    \end{equation}
    for all $[\udl{A}]\in \Sol_{\mathscr{G}}(\ol{N})$.
\end{definition}

Given a current $J\in \Omega^1_{\diff^*}(\ol{N})$, recall that a choice of Cauchy surface with boundary $\ol{\Sigma}\subset \ol{N}$ singles out a background $[A]=(\mathfrak{K}^J_{\ol{\Sigma}})^{-1}(0)\in \Sol_\mathscr{G}^J(\ol{N})$. A quasi-free state $\omega_{\ol{\Sigma}}$ on $\Af^{C\oplus S}(\ol{\Sigma})$ defines a quasi-free state $\omega$ on $\Af(\ol{N})$ around $[A]$ given by 
\begin{equation}
    \omega=\omega_{\ol{\Sigma}}\circ \varphi_{\ol{\Sigma}},
\end{equation}
where $\varphi_{\ol{\Sigma}}$ is the quantum CHH-isomorphism of Def.~\ref{def:kin_transf}.

Given any quasi-free state $\omega_\mu$ around a background $[A]\in \Sol_\mathscr{G}^J(\ol{N})$ on $\Af(\ol{N})$, evaluating $\omega$ on the generators $W_{\textup{loc}}^J(f)$ of $\Af_{\textup{loc}}(\ol{N})\subset\Af(\ol{N})$ (with $f\in \Omega^1_{0\diff^*}(\ol{N})$) yields
\begin{equation}
    \omega_\mu(W_{\textup{loc}}^J(f))=\exp\left(i\ipc{f}{A}-\frac{1}{2}\Vert [G^\PJ f]\Vert^2_{\mu}\right).
\end{equation} 
This means a quasi-free state on $\Af(\ol{N})$ around a background $[A]\in \Sol_{\mathscr{G}}^J(\ol{N})$ defines a Gaussian state with one-point function $[A]$ when restricted to the algebra of local observables $\Af_{\textup{loc}}(\ol{N})$.

One can see from Def.~\ref{def:quasi-free-kin}, that existence of a quasi-free state on $\Af^{C\oplus S}(\ol{\Sigma})$ is equivalent to existence of a norm on $V^C(\ol{\Sigma})\oplus V^S(\ol{\Sigma})$ for which $\sigma^C\oplus\sigma^S$ is continuous. For general infinite dimensional symplectic spaces, such a norm need not exist \cite{robinsonSYMPLECTICPATHOLOGY1993}. Here, however, examples are easily given.
\begin{theorem}
\label{thm:L2rep}
    Let $\ol{\Sigma}\subset \ol{N}$ be a regular Cauchy surface with boundaries and let \begin{equation}
        \mu_{L^2}:\left(V^{C}(\ol{\Sigma})\oplus V^{S}(\ol{\Sigma})\right)^2\to \RR
    \end{equation} be given by
    \begin{equation}
        \mu_{L^2}(F\oplus H\oplus f\oplus h,F'\oplus H'\oplus f'\oplus h')=\frac{1}{2}\left(\ipc{F}{F'}_{\Sigma}+\ipc{H}{H'}_{\Sigma}+\ipc{f}{f'}_{\partial\ol{\Sigma}}+\ipc{h}{h'}_{\ol{\Sigma}}\right).
    \end{equation}
    Then there exists a unique quasi-free state $\omega_{L^2}$ on $\Af^{C\oplus S}(\ol{\Sigma})$ whose associated inner product as in Def.~\ref{def:quasi-free-kin} is given by $\mu_{L^2}$.
\end{theorem}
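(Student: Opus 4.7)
My plan is to reduce the claim to the standard existence theorem for quasi-free states on a Weyl algebra: whenever $\mu$ is a real inner product on a pre-symplectic space $(V,\sigma)$ satisfying $|\sigma(u,v)|\le 2\|u\|_\mu\|v\|_\mu$, the prescription $W(v)\mapsto\exp(-\tfrac12\|v\|_\mu^2)$ extends uniquely to a state on $\Weyl(V,\sigma)$; this is implemented, for instance, via the one-particle structure of \cite[Thm.~3.1]{kayTheoremsUniquenessThermal1991} together with the Fock construction recalled in the text. The proof therefore reduces to two points: (i) $\mu_{L^2}$ is a real inner product on $V^C(\ol\Sigma)\oplus V^S(\ol\Sigma)$, and (ii) the domination~\eqref{eq:norm_domination} holds for $\mu_{L^2}$ against $\sigma^C_{\ol\Sigma}\oplus\sigma^S_{\ol\Sigma}$. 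Uniqueness of $\omega_{L^2}$ follows automatically, since the Weyl generators span a norm-dense subspace of $\Af^{C\oplus S}(\ol\Sigma)$ and states are determined by their values on such a subspace.

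Step (i) is essentially immediate: $\mu_{L^2}$ is a direct sum of (rescaled) $L^2$-inner products restricted to closed subspaces of $\Omega^1(\ol\Sigma)^{\oplus 2}$ and $\Omega^0(\partial\ol\Sigma)^{\oplus 2}$, hence symmetric, bilinear and positive definite.

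For step (ii) I would handle the $C$- and $S$-sectors separately. On $V^C(\ol\Sigma)$, two applications of Cauchy--Schwarz (first in $L^2(\ol\Sigma)$ to each pairing, then in $\RR^2$ via $ab+cd\le\sqrt{a^2+c^2}\sqrt{b^2+d^2}$) yield
\begin{equation*}
    |\sigma^C_{\ol\Sigma}(F\oplus H,F'\oplus H')| \le \sqrt{\|F\|^2_{\ol\Sigma}+\|H\|^2_{\ol\Sigma}}\,\sqrt{\|F'\|^2_{\ol\Sigma}+\|H'\|^2_{\ol\Sigma}}.
\end{equation*}
Since $\|F\oplus H\|_{\mu_{L^2}}^2=\tfrac12(\|F\|^2_{\ol\Sigma}+\|H\|^2_{\ol\Sigma})$ by definition, the right-hand side is exactly $2\|F\oplus H\|_{\mu_{L^2}}\|F'\oplus H'\|_{\mu_{L^2}}$. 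An identical argument on $V^S(\ol\Sigma)$, with $L^2(\partial\ol\Sigma)$ replacing $L^2(\ol\Sigma)$, gives the analogous bound for $\sigma^S_{\ol\Sigma}$. The combined bound on $\sigma^C_{\ol\Sigma}\oplus\sigma^S_{\ol\Sigma}$ then follows by the triangle inequality and one further Cauchy--Schwarz in $\RR^2$.

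There is no genuine obstacle here: the factor $\tfrac12$ in the definition of $\mu_{L^2}$ is precisely what calibrates the Cauchy--Schwarz estimate to produce the constant $2$ required by~\eqref{eq:norm_domination}, with no slack to spare. Once the domination is in place, the existence of $\omega_{L^2}$ is delivered by the general theorem, and uniqueness is automatic as noted above.
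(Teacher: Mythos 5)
Your proof is correct and takes essentially the same approach as the paper: reduce to the norm-domination inequality~\eqref{eq:norm_domination} and verify it via Cauchy--Schwarz. The paper invokes \cite[Prop.~10]{manuceauQuasifreeStatesCCRAlgebra1968} for the reduction where you cite \cite[Thm.~3.1]{kayTheoremsUniquenessThermal1991}; both are standard and deliver the same conclusion, and your explicit spelling-out of the three Cauchy--Schwarz applications simply fills in what the paper compresses into a single sentence.
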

\begin{proof}
    By \cite[Prop.~10]{manuceauQuasifreeStatesCCRAlgebra1968}, it is sufficient to show that Eq.~\eqref{eq:norm_domination} holds for all $\udl{v}^{C/S}\in V^{C/S}(\ol{\Sigma})$. This follows directly from the Cauchy-Schwarz inequality.
\end{proof}
We denote the Fock representation of $\Af^{C\oplus S}(\ol{\Sigma})$ associated with $\omega_{L^2}$ by $(\HH_{L^2},\pi_{L^2},\Omega_{L^2})$, which we call the \emph{$L^2$-representation} of $\Af^{C\oplus S}(\ol{\Sigma})$. Given $\varphi_{\ol{\Sigma}}:\Af(\ol{N})\to \Af^{C\oplus S}(\ol{\Sigma})$ the quantum CHH-isomorphism of Def.~\ref{def:kin_transf}, we say that $(\HH_{L^2},\pi_{L^2}\circ \varphi_{\ol{\Sigma}},\Omega_{L^2})$is the \emph{$L^2$-representation at $\ol{\Sigma}$} of $\Af(\ol{N})$.
These representations are the analogue of the trivial Fock representations of a scalar field such as considered for instance in \cite{dimockAlgebrasLocalObservables1980}. Representations of this type demonstrate the existence of Fock representations and are technically useful. However they are usually of limited physical utility, as the associated state is not expected to satisfy the Hadamard condition in the interior of $N$. The Hadamard condition plays an important role in the construction of interacting (algebraic) quantum field theories through perturbation theory \cite{brunettiMicrolocalAnalysisInteracting2000,hollandsExistenceLocalCovariant2002a,rejznerPerturbativeAlgebraicQuantum2018}. A more detailed discussion on existence of Hadamard states and representations for the semi-local observables described above will be given in \cite{fewsterHadamardStatesSemilocal}.

A representation $(\HH,\pi)$ of a $C^*$-algebra $\mathcal{A}$ is \emph{irreducible} if the commutant \begin{equation}
    \pi(\mathcal{A})'=\{b\in B(\HH):[b,\pi(a)]=0\text{ for all }a\in \mathcal{A}\}
\end{equation} 
is trivial (i.e.\ $\pi(\mathcal{A})'=\CC\cdot 1_{\HH}$); a representation is reducible if and only if there exists a non-trivial subspace $\KK\subset \HH$ such that $\pi(\mathcal{A})\KK\subset \KK$, defining a sub-representation $(\pi_\KK,\KK)$ with $\pi_\KK(a)=\pi(a)\restriction_{\KK}$ for every $a\in \mathcal{A}$. In particular the $L^2$ representations defined above provide an example of irreducible representations of $\Af(\ol{N})$, due to \cite[Lem.~A.2]{kayTheoremsUniquenessThermal1991}. An irreducible representation of $\Af(\ol{N})$ need not restrict to $\Af_{\textup{loc}}(\ol{N})$ as an irreducible representation, but may exhibit superselection sectors \cite{buchholzPhysicalStateSpace1982,fewsterHadamardStatesSemilocal}.
In the reverse direction, not all representations of $\Af_{\textup{loc}}(\ol{N})$ can be extended to representations of $\Af(\ol{N})$ on the same Hilbert space. The following provides a necessary condition for such an extension to exist.
\begin{definition}
    Let $\ol{N}$ be a finite Cauchy lens. A representation $(\HH,\pi)$ of $\Af_{\textup{loc}}(\ol{N})$ is \emph{background covariant}
    if background shifts are unitarily implementable, i.e., 
    for each $[\underline{A}]\in \Sol_\mathscr{G}(\ol{N})$ there exists a unitary operator $U_{[\underline{A}]}\in B(\HH)$ such that for each $a\in \Af_{\textup{loc}}(\ol{N})$ we have
    \begin{equation}
        \pi(\alpha([\underline{A}])(a))=U_{[\underline{A}]}\pi(a)U_{[\underline{A}]}^*.
    \end{equation}
\end{definition}
Due to the fact that background shifts are inner automorphisms of $\Af(\ol{N})$, we have the following:
\begin{proposition}\label{prop:int fluxes}
     Let $\ol{N}$ be a finite Cauchy lens, and suppose $(\HH,\pi)$ is a representation of $\Af_{\textup{loc}}(\ol{N})$ which is a restriction of a representation $(\HH,\tilde{\pi})$ of $\Af(\ol{N})$. Then $(\HH,\pi)$ is background covariant.
\end{proposition}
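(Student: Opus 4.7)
The proof plan is essentially to unwrap the fact that background shifts are inner automorphisms of $\Af(\ol{N})$ (Prop.~\ref{prop:field_shift}) and simply push this structure through the representation $\tilde{\pi}$ of the ambient algebra.

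Concretely, first I would fix any background $A\in\Sol^J(\ol{N})$ and, for each $[\udl{A}]\in\Sol_\mathscr{G}(\ol{N})$, set
\begin{equation}
    U_{[\udl{A}]} := \tilde{\pi}(W_A([\udl{A}])) \in B(\HH).
\end{equation}
Since Weyl generators $W_A([\udl{A}])\in\Af(\ol{N})$ are unitary and $\tilde{\pi}$ is a unital ${}^*$-representation, $U_{[\udl{A}]}$ is a unitary operator on $\HH$.

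Next I would verify that $\alpha([\udl{A}])$ maps $\Af_{\textup{loc}}(\ol{N})$ into itself, so that the left-hand side $\pi(\alpha([\udl{A}])(a))$ is well-defined for $a\in\Af_{\textup{loc}}(\ol{N})$. This is immediate on generators: by the transformation rule just below Def.~\ref{def:loc_obs_qft},
\begin{equation}
    \alpha([\udl{A}])(W_{\textup{loc}}^J(f)) = \exp\bigl(i\ipc{f}{\udl{A}}\bigr)\,W_{\textup{loc}}^J(f),
\end{equation}
so each generator of $\Af_{\textup{loc}}(\ol{N})$ is sent to a scalar multiple of itself, and by continuity and multiplicativity $\alpha([\udl{A}])$ preserves $\Af_{\textup{loc}}(\ol{N})$.

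Finally, combining these, for any $a\in\Af_{\textup{loc}}(\ol{N})$ I apply Prop.~\ref{prop:field_shift}, which gives $\alpha([\udl{A}])(a) = W_A([\udl{A}])\,a\,W_A([\udl{A}])^*$ inside $\Af(\ol{N})$, and use that $\pi$ is the restriction of $\tilde{\pi}$:
\begin{equation}
    \pi(\alpha([\udl{A}])(a)) = \tilde{\pi}\bigl(W_A([\udl{A}])\,a\,W_A([\udl{A}])^*\bigr) = U_{[\udl{A}]}\,\tilde{\pi}(a)\,U_{[\udl{A}]}^* = U_{[\udl{A}]}\,\pi(a)\,U_{[\udl{A}]}^*,
\end{equation}
establishing background covariance. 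There is no real obstacle: the only subtlety is the bookkeeping check that $\alpha([\udl{A}])$ restricts to $\Af_{\textup{loc}}(\ol{N})$ so that both sides of the covariance identity make sense, and this is handled by the generator-level computation above.
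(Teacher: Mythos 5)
Your proof is correct and follows the same route as the paper's: fix a background $A$, set $U_{[\udl{A}]}=\tilde{\pi}(W_A([\udl{A}]))$, and use Prop.~\ref{prop:field_shift} together with the fact that $\pi=\tilde{\pi}\restriction_{\Af_{\textup{loc}}(\ol{N})}$. Your added check that $\alpha([\udl{A}])$ preserves $\Af_{\textup{loc}}(\ol{N})$ is a reasonable piece of bookkeeping that the paper leaves implicit.
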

\begin{proof}
Fix $A\in \Sol^J(\ol{N})$; then, as $\alpha([\underline{A}])(a)=W_A([\underline{A}])aW_A([\underline{A}])^*$, the implementation is given by $U_{[\underline{A}]}=\tilde{\pi}(W_A([\underline{A}]))$. 
    \end{proof}
This means that there exist semi-local observables that change the background and in particular the electric flux through the corner, so such fluxes are not superselected. This is in contrast to superselection sector structure proposed in  \cite{rielloHamiltonianGaugeTheory2024a,rielloNullHamiltonianYangMills2025,rielloSymplecticReductionYangMills2021} (in the classical analogue). We will come back to this discussion in section~\ref{sec: superselection} and the upcoming paper \cite{fewsterHadamardStatesSemilocal}.

\section{Semi-local observables and quantum reference frames}\label{sec:qrfs}

The existence of edge mode observables implies that large gauge transformations are not redundancies in our formulation of electromagnetism. However, it was shown in~\cite{donnellyLocalSubsystemsGauge2016} that  
large gauge transformations become redundancies in an enlarged theory, which we call
the \emph{surface field extension},
incorporating additional fields on the corner of spacetime that 
transform non-trivially under large gauge transformations and contribute to the pre-symplectic form via a corner term. The large gauge transformations constitute a symmetry of the enlarged theory that can be removed by pre-symplectic reduction.

Subsequent work has systematised this observation using geometric and homological tools, see e.g.~\cite{gomesUnifiedGeometricFramework2019,mathieuHomologicalPerspectiveEdge2020,rielloHamiltonianGaugeTheory2024a}, and various authors have identified the 
corner fields for electromagnetism and other theories as \emph{(quantum) reference frames} \cite{carrozzaEdgeModesReference2022,kabelQuantumReferenceFrames2023,araujo-regadoSoftEdgesMany2024}.\footnote{Related ideas, invoking an observer, appear in \cite{gomesObserversGhostFieldspace2017}.} This section will make the link precise
using the \emph{operational} QRF framework \cite{caretteOperationalQuantumReference2024}, which we have also employed in a recent study of measurement schemes for QFT in curved spacetimes with symmetries~\cite{fewsterQuantumReferenceFrames2025}.
Other formulations of QRFs in the literature include the \emph{information theoretic} \cite{bartlettReferenceFramesSuperselection2007},  \emph{perspective neutral} \cite{vanrietveldeChangePerspectiveSwitching2020}, and \emph{purely perspectival} \cite{giacominiQuantumMechanicsCovariance2019} approaches.

Our goal is to show that a quantised version of the surface field extension theory can be understood as the result of using the surface fields as a QRF for the large gauge symmetries. To do this, we propose a generalised definition of QRFs adapted to $C^*$-algebras rather than the usual von Neumann algebraic setting. First, however, we describe the classical surface field extension.
Throughout this section, we restrict to the case of vanishing external current, $J=0$, for simplicity.

\subsection{The classical surface field extension}
\label{sec:classical_sfe}

We give a brief account of the classical surface field extension for electromagnetism, showing -- in line with observations made in \cite{rielloSymplecticReductionYangMills2021,gomesQuasilocalDegreesFreedom2021} -- that it is mathematically equivalent to the description of Section~\ref{sec:cov_phas}.  

With $J=0$, the reduced phase space of pure electromagnetism $\Sol_\mathscr{G}(\ol{N})$ on a finite Cauchy lens $\ol{N}$ consists of gauge orbits of solutions in $\Sol(\ol{N})$ modulo $\mathscr{G}(\ol{N})$. Let $\ol{\Sigma}\subset\ol{N}$ be a fixed regular Cauchy surface with boundaries and introduce an additional surface field $\varphi\in \Omega^0(\partial\ol{\Sigma})$ and surface charge density\footnote{The volume density on $\partial\ol{\Sigma}$ is used to identify densities and $0$-forms.} $\tau\in  \nml_{\partial\ol{\Sigma}}\Omega^1_{\diff^*}(\partial\ol{\Sigma})$.
By Lem.~\ref{lem:Hodge}, $\tau$ is orthogonal to the space of restrictions to $\partial\ol{\Sigma}$ of locally constant functions on $\ol{\Sigma}$. Equivalently, $\star_{\partial\ol{\Sigma}}\tau$ has vanishing integral over the boundary of each connected component of $\ol{\Sigma}$; each component carries zero total surface charge. 

The space of surface fields and charge densities,  
$\Omega^0(\partial\ol{\Sigma})\oplus \nml_{\partial\ol{\Sigma}}\Omega^1_{\diff^*}(\partial\ol{\Sigma})$, may be equipped with a pre-symplectic form
\begin{equation}\label{eq:surface_sigma}
    \sigma^\partial(\varphi_1\oplus\tau_1,\varphi_2\oplus\tau_2)=\ipc{\varphi_1}{\tau_2}_{\partial\ol{\Sigma}}-\ipc{\varphi_2}{\tau_1}_{\partial\ol{\Sigma}},
\end{equation}
which is degenerate due to the constraint on surface charges.
The surface field configuration space is obtained by a pre-symplectic (or gauge) reduction in the same vein as described in Sec.~\ref{sec:cov_phas}, the gauge (or degenerate) directions being given by $\Omega^0_{\diff}(\ol{\Sigma})\restriction_{\partial\ol{\Sigma}}\oplus \{0\}$. Since $\Omega^0(\partial\ol{\Sigma})/\left(\Omega^0_{\diff}(\ol{\Sigma})\restriction_{\partial\ol{\Sigma}}\right)\cong \nml_{\partial\ol{\Sigma}}\Omega^1_{\diff^*}(\partial\ol{\Sigma})$ by Lem.~\ref{lem:Hodge}, the reduced phase space is a copy of $(V^S(\ol{\Sigma}),\sigma^S)$.

We shall denote
\begin{equation}
\label{eq:soltilde}
    (\widetilde{\Sol}(\ol{N}),\widetilde{\sigma})=(\Sol_\mathscr{G}(\ol{N})\oplus V^S(\ol{\Sigma}), \sigma\oplus\sigma^\partial).
\end{equation}
Both $\Sol_\mathscr{G}(\ol{N})$ and $V^S(\ol{\Sigma})$ transform nontrivially under large gauge transformations. 
Following \cite{donnellyLocalSubsystemsGauge2016}, we pass to 
a combined symplectic space of gauge orbits under large gauge transformations formed by the
\emph{fusion product} of $\Sol_\mathscr{G}(\ol{N})$ and $V^S(\ol{\Sigma})$ with respect to the
\emph{joint large gauge directions}, 
\begin{equation}
\label{eq:jlgd}
    \widetilde{\mathscr{G}}(\ol{N}):=\{(\mathfrak{G}(\lambda)\oplus\lambda\oplus 0)
    \in \widetilde{\Sol}(\ol{N}): \lambda\in \mathscr{LG}_\angle(\ol{N})\}.
\end{equation}

To do this, we first define a pre-symplectic space $(\widetilde{\mathscr{G}}(\ol{N})^\perp,\widetilde{\sigma})$ as the symplectic complement of $\widetilde{\mathscr{G}}(\ol{N})$ in $(\widetilde{\Sol}(\ol{N}),\widetilde{\sigma})$,
\begin{multline}
    \widetilde{\mathscr{G}}(\ol{N})^\perp=\{([\udl{A}]\oplus\varphi\oplus\tau)\in\widetilde{\Sol}(\ol{N}):
    \sigma([\udl{A}],\mathfrak{G}(\lambda))+\sigma^\partial(\varphi\oplus\tau,\lambda\oplus0)=0\\\text{ for all }\lambda\in \mathscr{LG}_\angle(\ol{N})\},
\end{multline} 
which can be described explicitly as 
\begin{equation}
\label{eq:Gperp_set}
    \widetilde{\mathscr{G}}(\ol{N})^\perp=\{[\udl{A}]\oplus\varphi\oplus\nml_{\partial\ol{\Sigma}}\nml_{\ol{\Sigma}}\diff \udl{A}:[\udl{A}]\in \Sol_\mathscr{G}(\ol{N}),\varphi\in \nml_{\partial\ol{\Sigma}}\Omega^1_{\diff^*}(\ol{\Sigma})\}.
\end{equation}
Setting $\mathbf{E}=-\nml_{\ol{\Sigma}}\diff \udl{A}$ and interpreting $\tau$ as a physical electromagnetic charge on $\angle \ol{N}$, the interface relation $\tau=-\nml_{\partial\ol{\Sigma}}\mathbf{E}$
implies by Gauss' law that $\nml_{\partial(\tilde{\Sigma}\setminus\ol{\Sigma})}\mathbf{E}=0$, where $\tilde{\Sigma}$ is any Cauchy surface extending $\ol{\Sigma}$ in a globally hyperbolic extension of $\ol{N}$.

Arguing as in Prop.~\ref{prop:sym_red}, we find that the space of degenerate directions for $\tilde{\sigma}$ is $\widetilde{\mathscr{G}}(\ol{N})$,\footnote{This amounts to the statement that $\widetilde{\mathscr{G}}(\ol{N})$ is a Lagrangian subspace of $\widetilde{\Sol}(\ol{N})$.} 
and we define the \emph{surface field extended phase space} as the gauge-reduced phase space 
\begin{equation}
\label{eq:sfeps}
    \Sol_\textnormal{ext}(\ol{N}) = \widetilde{\mathscr{G}}(\ol{N})^\perp/\widetilde{\mathscr{G}}(\ol{N}),
\end{equation}
completing the fusion product construction. 
The gauge orbit of $[\udl{A}]\oplus\varphi\oplus\tau\in\widetilde{\mathscr{G}}(\ol{N})^\perp$ is denoted $[[\udl{A}]\oplus \varphi\oplus \tau]\in\Sol_\textnormal{ext}(\ol{N})$. 
Although the construction of $\Sol_\textnormal{ext}(\ol{N})$ introduced additional degrees of freedom at intermediate steps, the symplectic spaces $(\Sol_\textnormal{ext}(\ol{N}),\widetilde{\sigma})$ and $(\Sol_\mathscr{G}(\ol{N}),\sigma)$ are equivalent. 

\begin{proposition}\label{prop:class_edge_equiv}
    Define a linear map $\hat{E}:\Sol_{\mathscr{G}}(\ol{N})\to \widetilde{\Sol}(\ol{N})$ by
    \begin{equation}
    \hat{E}([\udl{A}])= [\udl{A}]\oplus 0\oplus\nml_{\partial\ol{\Sigma}}\nml_{\ol{\Sigma}}\diff\udl{A}.
    \end{equation}
    Then $\hat{E}$ is symplectic and its range $\widetilde{\Sol}_0(\ol{N}):=\hat{E}(\Sol_{\mathscr{G}}(\ol{N}))$ (equipped with the restriction of $\widetilde{\sigma}$) is contained in $\widetilde{\mathscr{G}}(\ol{N})^\perp$. Consequently, there are unique symplectic maps $E$ and $\check{E}$ so that the diagram 
    \begin{equation}
\label{eq:Emap_diag2}
    \begin{tikzcd}
        && \widetilde{\Sol}(\ol{N})&\\
        \Sol_{\mathscr{G}}(\ol{N})\arrow[r,"E"] \arrow[drr,bend right=10,"\check{E}"',"\cong"] \arrow[urr,bend left=10,"\hat{E}"]&\widetilde{\mathscr{G}}(\ol{N})^\perp\arrow[ur,"\hat{\iota}"] \arrow[rr,"q"] &&\Sol_\textnormal{ext}(\ol{N}) \\
        &&\widetilde{\Sol}_{0}(\ol{N})\arrow[ul,"\check{\iota}"]\arrow[ur,"q\circ\check{\iota}","\cong"']&
    \end{tikzcd}
\end{equation}
commutes, where $\hat{\iota}$ and $\check{\iota}$ are subspace inclusions, $\check{E}$ is an isomorphism and $q:\widetilde{\mathscr{G}}(\ol{N})^\perp\to \Sol_\textnormal{ext}(\ol{N})$ is the quotient map. Moreover, $\check{E}^{-1}$ is the restriction of the canonical projection $\Sol_\mathscr{G}(\ol{N})\oplus V^S(\ol{\Sigma})\to \Sol_\mathscr{G}(\ol{N})$ to $\widetilde{\Sol}_0(\ol{N})$, and $q\circ\check{\iota}$ is an isomorphism with inverse $c$ given by
\begin{equation}\label{eq:cdef}
        c([[\udl{A}]\oplus\varphi\oplus \nml_{\partial\ol{\Sigma}}\nml_{\ol{\Sigma}}\diff\udl{A}])
        = [\udl{A}]\oplus\varphi\oplus \nml_{\partial\ol{\Sigma}}\nml_{\ol{\Sigma}}\diff\udl{A} - 
        \widetilde{\mathfrak{G}}(\varphi) =
        ([\udl{A}]-\mathfrak{G}(\varphi))\oplus 0\oplus \nml_{\partial\ol{\Sigma}}\nml_{\ol{\Sigma}}\diff\udl{A},
    \end{equation}
    where $\widetilde{\mathfrak{G}}(\lambda) = (\mathfrak{G}(\lambda)\oplus\lambda\oplus 0)\in\widetilde{\mathscr{G}}(\ol{N})\subset\widetilde{\mathscr{G}}(\ol{N})^\perp$.
    In other words, $c$ maps any gauge orbit in $\Sol_\textnormal{ext}(\ol{N})$ to its unique representative with vanishing surface field, while
    $c\circ q$ is a gauge-fixing map that gauges away the surface field.
\end{proposition}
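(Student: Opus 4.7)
I would verify the claims in three stages: (i) elementary checks for $\hat E$; (ii) the substantive computation that $\hat E$ takes values in $\widetilde{\mathscr{G}}(\ol N)^\perp$; (iii) diagrammatic verifications producing the maps $E$, $\check E$ and identifying the inverse of $q\circ\check\iota$.

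For (i), that $\hat E$ is symplectic is immediate from~\eqref{eq:surface_sigma}: since the $V^S$-entries of $\hat E([\udl A_1])$ and $\hat E([\udl A_2])$ both have vanishing surface-field component, only the $\Sol_\mathscr{G}(\ol N)$-contribution to $\widetilde\sigma$ survives. For (ii), the key calculation is to show that for every $\lambda \in \mathscr{LG}_\angle(\ol N)$,
\[
\sigma([\udl A],\mathfrak G(\lambda)) = \ipc{\lambda}{\nml_{\partial \ol\Sigma}\nml_{\ol\Sigma}\diff \udl A}_{\partial \ol\Sigma}\,,
\]
which then exactly cancels the contribution $-\ipc{\lambda}{\nml_{\partial \ol\Sigma}\nml_{\ol\Sigma}\diff \udl A}_{\partial \ol\Sigma}$ coming from $\sigma^\partial$, confirming that $\hat E([\udl A]) \in \widetilde{\mathscr{G}}(\ol N)^\perp$. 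I would pick a representative $\Lambda\in \Omega^0(\ol N)$ with $\mathfrak G(\lambda) = [\diff \Lambda]$; then~\eqref{eq:symp_struct} together with $\diff\diff\Lambda = 0$ gives $\sigma([\udl A],[\diff \Lambda]) = \ipc{\diff \Lambda}{\nml_{\ol\Sigma}\diff \udl A}_{\ol\Sigma}$, and Stokes' theorem~\eqref{eq:diffcodiffbound} yields $\ipc{\Lambda}{\nml_{\partial \ol\Sigma}\nml_{\ol\Sigma}\diff \udl A}_{\partial \ol\Sigma}$ plus a bulk term that vanishes because $\diff^*_{\ol\Sigma}\nml_{\ol\Sigma}\diff \udl A = -\nml_{\ol\Sigma}\diff^*\diff \udl A = 0$ (as $\udl A$ solves the homogeneous equation). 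Finally $\nml_{\partial\ol\Sigma}\nml_{\ol\Sigma}\diff\udl A$ lies in $\mathscr{LG}_\angle(\ol N)$ and so pairs trivially with the $\mathscr{G}_\angle(\ol N)$-component of $\Lambda\restriction_{\angle \ol N}$ by~\eqref{eq:nmlcoclosed}, leaving only the $\lambda$-component. I expect this integration-by-parts step to be the main substantive obstacle; the remaining assertions are essentially formal.

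For (iii): the inclusions $\hat\iota$ and $\check\iota$ are injective, so $\hat E = \hat\iota\circ E$ uniquely determines $E$, and $\hat E = \hat\iota\circ\check\iota\circ\check E$ uniquely determines $\check E$, both being symplectic by restriction of $\widetilde\sigma$. The map $\check E$ is surjective by construction of $\widetilde{\Sol}_0(\ol N)$ and injective because projection onto the first $\widetilde{\Sol}(\ol N)$-summand is a left inverse --- giving simultaneously the stated description of $\check E^{-1}$. For the map $q\circ\check\iota$, injectivity follows because if two elements of $\widetilde{\Sol}_0(\ol N)$ differ by $\widetilde{\mathfrak G}(\lambda) = \mathfrak G(\lambda)\oplus\lambda\oplus 0 \in \widetilde{\mathscr G}(\ol N)$, matching the vanishing surface-field components forces $\lambda = 0$. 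For surjectivity one checks that $c$ in~\eqref{eq:cdef} lands in $\widetilde{\Sol}_0(\ol N)$ and inverts $q\circ\check\iota$: by~\eqref{eq:Gperp_set} every representative has $\tau = \nml_{\partial\ol\Sigma}\nml_{\ol\Sigma}\diff\udl A$, and since $\diff\mathfrak G(\varphi) = 0$ at the level of two-forms, subtracting $\widetilde{\mathfrak G}(\varphi)$ kills the surface field while preserving the $\tau$-entry; independence of the representative reduces to the same argument and completes the proof.
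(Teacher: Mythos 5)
Your proof is correct and follows essentially the same strategy as the paper's: a direct check that $\hat E$ is symplectic, verification that the range of $\hat E$ lies in $\widetilde{\mathscr{G}}(\ol{N})^\perp$, and diagrammatic identification of the remaining maps together with the two-sided verification that $c$ inverts $q\circ\check{\iota}$. The only cosmetic difference is that you re-derive via Stokes' theorem the containment $\widetilde{\Sol}_0(\ol{N})\subset\widetilde{\mathscr{G}}(\ol{N})^\perp$, whereas the paper reads it off directly from the explicit description~\eqref{eq:Gperp_set} given a few lines earlier, so citing that display would shorten your argument without changing its content.
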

\begin{proof}
    Direct calculation shows that $\hat{E}$ is symplectic, and by inspection
    we see that it maps bijectively onto
    \begin{equation}
        \widetilde{\Sol}_{0}(\ol{N})=\{[\udl{A}]\oplus 0\oplus\nml_{\partial\ol{\Sigma}}\nml_{\ol{\Sigma}}\diff\udl{A}:[\udl{A}]\in \Sol_{\mathscr{G}}(\ol{N})\},
    \end{equation} i.e.~the subspace of $\widetilde{\Sol}(\ol{N})$ with vanishing surface field. The existence and uniqueness of $E$ and $\check{E}$ is immediate, as is commutativity of the diagram, and $E$, $\check{E}$, $\hat{\iota}$ and $\check{\iota}$ are all symplectic once $\widetilde{\Sol}_0(\ol{N})$ is endowed with the restriction of $\widetilde{\sigma}$. The invertibility of $\check{E}$ with the stated inverse is straightforward. It remains to check that $c$ is indeed the inverse of $q\circ\check{\iota}$. Let $([\udl{A}]\oplus 0\oplus\nml_{\partial\ol{\Sigma}}\nml_{\ol{\Sigma}}\diff\udl{A})\in  \widetilde{\Sol}_0(\ol{N})$, then
    \begin{equation}
        (c\circ q\circ\check{\iota})([\udl{A}]\oplus 0\oplus\nml_{\partial\ol{\Sigma}}\nml_{\ol{\Sigma}}\diff\udl{A})=c[[\udl{A}]\oplus 0\oplus\nml_{\partial\ol{\Sigma}}\nml_{\ol{\Sigma}}\diff\udl{A}]=[\udl{A}]\oplus 0\oplus\nml_{\partial\ol{\Sigma}}\nml_{\ol{\Sigma}}\diff\udl{A}-\widetilde{\mathfrak{G}}(0)=[\udl{A}]\oplus 0\oplus\nml_{\partial\ol{\Sigma}}\nml_{\ol{\Sigma}}\diff\udl{A},
    \end{equation}
    and let $[[\udl{A}]\oplus \varphi\oplus\nml_{\partial\ol{\Sigma}}\nml_{\ol{\Sigma}}\diff\udl{A}]\in  \Sol_{\textnormal{ext}}(\ol{N})$, then
    \begin{equation}
        (q\circ\check{\iota}\circ c)[[\udl{A}]\oplus \varphi\oplus\nml_{\partial\ol{\Sigma}}\nml_{\ol{\Sigma}}\diff\udl{A}]=[[\udl{A}]\oplus \varphi\oplus\nml_{\partial\ol{\Sigma}}\nml_{\ol{\Sigma}}\diff\udl{A}-\widetilde{\mathfrak{G}}(\varphi)]=[[\udl{A}]\oplus \varphi\oplus\nml_{\partial\ol{\Sigma}}\nml_{\ol{\Sigma}}\diff\udl{A}],
    \end{equation}
    where we have used that $\widetilde{\mathfrak{G}}(\varphi)\in \widetilde{\mathscr{G}}(\ol{N})$.
    We conclude that $c=(q\circ\check{\iota})^{-1}$.
\end{proof}
Since $q\circ E=q\circ\check{\iota}\circ\check{E}$ is a symplectomorphism, the classical description of pure electromagnetism on a finite Cauchy lens is mathematically equivalent to its surface field extension.

Note that the composition of $\hat{E}$ with the inverse CHH-isomorphism takes the simple form
\begin{equation}\label{eq:EhatCHH}
    (\hat{E}\circ\mathfrak{K}_{\ol{\Sigma}}^{-1})(F\oplus H\oplus f\oplus h) = 
    \mathfrak{K}_{\ol{\Sigma}}^{-1}(F\oplus H\oplus f\oplus h) \oplus 0 \oplus - h
\end{equation}
for all $F\oplus H\in V^C(\ol{\Sigma})$ and $f\oplus h\in V^S(\ol{\Sigma})$. 
This is because the last component of $\mathfrak{K}_{\ol{\Sigma}}(\udl{A})$ is  $h=\nml_{\partial\ol{\Sigma}}\mathbf{E}$ where $\mathbf{E}=-\nml_{\ol{\Sigma}}\diff \udl{A}$ -- see Eq.~\eqref{eq:KJSigma}.

\subsection{The quantised surface field extension}\label{sec:quantum_sfe}
We now formulate the surface field extension at the level of quantised observables. The analogue of
$\widetilde{\mathscr{G}}(\ol{N})^\perp\subset \Sol_\mathscr{G}(\ol{N})\oplus V^S(\ol{\Sigma})$ will be a subalgebra $\widetilde{\Af}^{\mathscr{LG}} (\ol{N})\subset\Af(\ol{N})\otimes\Af^\partial(\ol{\Sigma})$. Here, 
$\Af^\partial(\ol{\Sigma})$ is a copy of the Weyl algebra $\Af^S(\ol{\Sigma})=\Weyl(V^S,\sigma^S)$, with generators $W^\partial(f\oplus h)$ for $(f\oplus h)\in V^S(\ol{\Sigma})$ interpreted via
\begin{equation}
    W^\partial(f\oplus h)\simeq \exp(i(\tau(f)-\varphi(h))),
\end{equation}
where $\varphi(h)$ and $\tau(f)$ are smearings of the surface field and surface charge density respectively.

The joint large gauge transformations, given classically by $[\udl{A}]\oplus\varphi\oplus\tau\mapsto ([\udl{A}]+\mathfrak{G}(\lambda))\oplus(\varphi+\lambda)\oplus\tau$ for $\lambda\in\mathscr{LG}_\angle(\ol{N})$, are implemented 
on $\widetilde{\Af}(\ol{N}):=\Af(\ol{N})\otimes \Af^\partial(\ol{\Sigma})$
as follows.
\begin{definition}
\label{def:bound_gauge_gen}
    Let $\ol{\Sigma}\subset \ol{N}$ be a regular Cauchy surface with boundary in a finite Cauchy lens $\ol{N}$. For each $\lambda\in \mathscr{LG}_\angle(\ol{N})$ 
    the \emph{boundary large gauge automorphism} $\alpha_{\mathscr{LG}}^\partial(\lambda)=\Ad U_{\mathscr{LG}}^\partial(\lambda)\in \Aut(\Af^\partial(\ol{\Sigma}))$ is unitarily implemented by  
    \begin{equation}
        U_{\mathscr{LG}}^\partial(\lambda)=W^\partial(\lambda\oplus 0)\in \Af^\partial(\ol{\Sigma}),
    \end{equation}
    and the \emph{joint large gauge automorphism} 
    \begin{equation}
        \widetilde{\alpha}_{\mathscr{LG}}(\lambda)=\alpha_{\mathscr{LG}}(\lambda)\otimes \alpha_{\mathscr{LG}}^\partial(\lambda)=\Ad \widetilde{U}_{\mathscr{LG}}(\lambda) \in \Aut(\widetilde{\Af}(\ol{N}))
    \end{equation}
    is unitarily implemented by $\widetilde{U}_{\mathscr{LG}}(\lambda)=U_{\mathscr{LG}}(\lambda)\otimes U_{\mathscr{LG}}^\partial(\lambda)$ with $U_{\mathscr{LG}}(\lambda)$ as in Def. \ref{def:q_lar_gauge_transf}. The subalgebra of invariants under joint large gauge automorphisms is 
    \begin{equation}
        \widetilde{\Af}^{\mathscr{LG}}(\ol{N}):=(\widetilde{\Af}(\ol{N}))^{\widetilde{\alpha}_{\mathscr{LG}}}=\{a\in \widetilde{\Af}(\ol{N}):\widetilde{\alpha}_{\mathscr{LG}}(\lambda)(a)=a\text{ for all }\lambda\in \mathscr{LG}_\angle(\ol{N})\}.
    \end{equation}
\end{definition}
The unitaries $U_{\mathscr{LG}}(\lambda)$ and $U_{\mathscr{LG}}^\partial(\lambda)$ can be interpreted, respectively, as the exponential of the electric flux through the corner and the exponential of the surface charge density, smeared with $\lambda$ in each case. Their combination $\widetilde{U}_{\mathscr{LG}}(\lambda)$ can be interpreted as the exponential of the smeared electric flux outside the boundary.
The algebra $\widetilde{\Af}^{\mathscr{LG}}(\ol{N})$ comprises the large-gauge invariant joint observables of the electromagnetic field on $\ol{N}$ and the surface field and charge density on $\angle{\ol{N}}$. 

We will characterise $\widetilde{\Af}^{\mathscr{LG}}(\ol{N})$ using the language of quantum reference frames, with the idea that the surface fields provide a QRF for the joint large-gauge transformations. We show that each electromagnetic field observable $a\in\Af(\ol{N})$ has a corresponding invariant joint observable $\yen(a)\in\widetilde{\Af}^{\mathscr{LG}}(\ol{N})$ that may be regarded as a dressing of $a$ with surface observables, or a relativisation of $a$ with respect to large gauge transformations on $\widetilde{\Af}(\ol{N})$. The relativisation map $\yen$ is an injective $*$-homomorphism and defines an isomorphism between $\Af(\ol{N})$ and its image
$\Af_{\rel}(\ol{N})=\yen(\Af(\ol{N}))\subset \widetilde{\Af}^{\mathscr{LG}}(\ol{N})$. 
Heuristically, the edge mode observables (i.e.~non large gauge invariant observables) in $\Af(\ol{N})$ can be understood as smeared Wilson lines with endpoints on $\angle\ol{N}$ (see Sec.~\ref{sec:edge_mode_obs}). In that perspective, the relativisation map $\yen$ turns edge mode observables into joint large-gauge invariant observables of the electromagnetic and surface field, by attaching the appropriate surface field observables to the end-points of of these Wilson lines.

To construct $\yen$, we concretely identify $\Af(\ol{N})$ with $\Weyl(\Sol_\mathscr{G}(\ol{N}))$, so that a standard Weyl generator $W([\udl{A}])$ in the latter is identified with $W_0([\udl{A}])$ in the former, exploiting the fact that the vanishing solution provides a natural background when $J=0$. Then let  
$\hat{\yen}=\Weyl(\hat{E})$ be the quantisation of the symplectic map $\hat{E}:\Sol_{\mathscr{G}}(\ol{N})\to \Sol_{\mathscr{G}}(\ol{N})\oplus V^S(\ol{\Sigma})$ from  Prop.~\ref{prop:class_edge_equiv}. Thus $\hat{\yen}$ is
an injective ${}^*$-homomorphism from the algebra of bulk observables $\Af(\ol{N})=\Weyl(\Sol_{\mathscr{G}}(\ol{N}),\sigma)$ to the joint algebra 
\begin{equation}
\label{eq:atilde}
\widetilde{\Af}(\ol{N})=\Weyl(\widetilde{\Sol}(\ol{N}),\widetilde{\sigma})= \Weyl(\Sol_{\mathscr{G}}(\ol{N}),\sigma)\otimes\Weyl(V^S(\ol{\Sigma}), \sigma^S).
\end{equation}
Denoting the generators of $\widetilde{\Af}(\ol{N})$ by $\widetilde{W}(\udl{v})$ for $\udl{v}\in \widetilde{\Sol}(\ol{N})$, 
the map $\hat{\yen}$ acts on generators as
\begin{equation}
\hat{\yen}(W([\udl{A}]))= \widetilde{W}(\hat{E}([\udl{A}])) = \widetilde{W}([\udl{A}]\oplus 0\oplus \nml_{\partial\ol{\Sigma}}\nml_{\ol{\Sigma}}\diff \udl{A}) ,
\end{equation}
or, in tensor product form (and reinserting a general background),
\begin{equation}
\label{eq:C*QRF}
    \hat{\yen}(W_A([\udl{A}]))=W_A([\udl{A}])\otimes W^\partial(0\oplus \nml_{\partial\ol{\Sigma}}\nml_{\ol{\Sigma}}\diff \udl{A}),
\end{equation}
for $(A,\udl{A})\in T\Sol^J(\ol{N})$.  In particular, it follows that $\hat{\yen}(U_{\mathscr{LG}}(\lambda))=U_{\mathscr{LG}}(\lambda)\otimes 1$ for $\lambda\in\mathscr{LG}_{\angle}(\ol{N})$.
Crucially, every 
$\hat{\yen}(a)\in \widetilde{\Af}(\ol{N})$ is invariant under the action $\tilde{\alpha}_{\mathscr{LG}}$.
\begin{lemma}
\label{lem:relativise}
    The range $\Af_{\rel}(\ol{N}):=\hat{\yen}(\Af(\ol{N}))$ is a subalgebra of $\widetilde{\Af}^{\mathscr{LG}}(\ol{N})$.
    Consequently there are unique unital $*$-homomorphisms $\yen$ and $\check{\yen}$ so that the diagram
    \begin{equation}
\label{eq:Ymap_diag}
    \begin{tikzcd}
        && \widetilde{\Af}(\ol{N})\\
        \Af(\ol{N})\arrow[r,"\yen"] \arrow[drr,bend right=10,"\check{\yen}","\cong"'] \arrow[urr,bend left=10,"\hat{\yen}"]&\widetilde{\Af}^{\mathscr{LG}}(\ol{N})\arrow[ur,"\hat{\jmath}"] &\\
        &&\Af_{\rel}(\ol{N})\arrow[ul,"\check{\jmath}"]
    \end{tikzcd}
\end{equation} 
commutes, in which $\check{\yen}$ is a ${}^*$-isomorphism, and $\hat{\jmath}$ and $\check{\jmath}$ are the subalgebra inclusions.  
\end{lemma}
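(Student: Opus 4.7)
The proposal is to reduce the lemma to the single nontrivial step $\Af_{\rel}(\ol{N})\subset\widetilde{\Af}^{\mathscr{LG}}(\ol{N})$; the existence, uniqueness and isomorphism statements about $\yen$ and $\check{\yen}$ will then follow by elementary corestriction arguments.

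For the inclusion, I would view it as the quantum shadow of the classical containment $\widetilde{\Sol}_0(\ol{N})\subset\widetilde{\mathscr{G}}(\ol{N})^\perp$ from Prop.~\ref{prop:class_edge_equiv}. Since $\widetilde{\alpha}_{\mathscr{LG}}(\lambda)$ is a norm-continuous $*$-automorphism, its fixed-point set is a C$^*$-subalgebra, so it suffices to check invariance on generators. Writing $\widetilde{W}(\udl{u})$ for the generators of $\widetilde{\Af}(\ol{N})=\Weyl(\widetilde{\Sol}(\ol{N}),\widetilde{\sigma})$, using $\widetilde{U}_{\mathscr{LG}}(\lambda)=\widetilde{W}(\widetilde{\mathfrak{G}}(\lambda))$, and applying the elementary Weyl identity $W(u)W(v)W(u)^{*}=\exp(-i\sigma(u,v))W(v)$ that follows from~\eqref{eq:weyl3}, one computes
\begin{equation*}
\widetilde{\alpha}_{\mathscr{LG}}(\lambda)\bigl(\hat{\yen}(W_{0}([\udl{A}]))\bigr)=\exp\bigl(-i\widetilde{\sigma}(\widetilde{\mathfrak{G}}(\lambda),\hat{E}([\udl{A}]))\bigr)\,\hat{\yen}(W_{0}([\udl{A}])),
\end{equation*}
and the exponent vanishes because $\hat{E}([\udl{A}])\in\widetilde{\mathscr{G}}(\ol{N})^\perp$ by Prop.~\ref{prop:class_edge_equiv}. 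The same cancellation can be read off directly from the tensor-product formula~\eqref{eq:C*QRF}: the phase $\exp(i\ipc{\lambda}{h}_{\partial\ol{\Sigma}})$ produced by $\alpha_{\mathscr{LG}}(\lambda)$ acting on the bulk factor is cancelled exactly by the opposite phase produced by $\alpha^{\partial}_{\mathscr{LG}}(\lambda)$ on the surface factor, with $h=\nml_{\partial\ol{\Sigma}}\nml_{\ol{\Sigma}}\diff\udl{A}$. The argument is unchanged for general backgrounds $A\in\Sol^{J}(\ol{N})$, as $\hat{\yen}(W_{A}([\udl{A}]))$ differs from $\hat{\yen}(W_{0}([\udl{A}]))$ only by a scalar phase, which is invariant under conjugation.

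With the inclusion in hand, I would define $\yen$ as the corestriction of $\hat{\yen}$ to $\widetilde{\Af}^{\mathscr{LG}}(\ol{N})$; this is the unique unital $*$-homomorphism with $\hat{\jmath}\circ\yen=\hat{\yen}$, since $\hat{\jmath}$ is the injective subalgebra inclusion. Similarly, $\check{\yen}$ is obtained by further corestricting onto the image $\Af_{\rel}(\ol{N})$; it is surjective by construction and inherits injectivity from $\hat{\yen}$ via the factorisation $\hat{E}=\check{\iota}\circ\check{E}$ of Prop.~\ref{prop:class_edge_equiv}. Applying the Weyl functor gives $\hat{\yen}=\Weyl(\check{\iota})\circ\Weyl(\check{E})$: here $\Weyl(\check{E})$ is an isometric $*$-isomorphism since $\check{E}$ is a symplectomorphism, while $\Weyl(\check{\iota})$ is injective as the image under the Weyl functor of a symplectic subspace inclusion, both of which follow from the functoriality and uniqueness properties recalled at the start of Sec.~\ref{sec:semilocal obs}. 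The only obstacle worth flagging is careful sign tracking in the displayed computation; everything else is essentially formal.
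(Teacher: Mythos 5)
Your proposal is correct and follows essentially the same route as the paper: you verify $\widetilde{\alpha}_{\mathscr{LG}}$-invariance of $\hat{\yen}(W_A([\udl{A}]))$ by the same phase cancellation (the paper does this concretely via the tensor-product form in Eq.~\eqref{eq:LG_gen_trasf}, you additionally cast it as the vanishing of $\widetilde{\sigma}(\widetilde{\mathfrak{G}}(\lambda),\hat E([\udl{A}]))$ using $\hat E([\udl{A}])\in\widetilde{\mathscr{G}}(\ol{N})^\perp$, which is a nice supplementary reading of the same fact), and then $\yen$ and $\check{\yen}$ are obtained by corestriction. Your added paragraph explaining why $\check{\yen}$ is an isomorphism via $\hat E=\check{\iota}\circ\check{E}$ and functoriality of $\Weyl$ is a slightly more explicit account of what the paper dismisses with ``follow immediately,'' but it is not a different method.
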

\begin{proof}
    It us sufficient to check that for each $\lambda\in \mathscr{LG}_\angle(\ol{N})$ and $(A,\udl{A})\in T\Sol^J(\ol{N})$, we have $\tilde{\alpha}_{\mathscr{LG}}(\lambda)(\hat{\yen}(W_A([\udl{A}])))=W_A([\udl{A}])$.
    Note that for general $(f\oplus h)\in V^S(\ol{\Sigma})$, we have
    \begin{equation}
    \label{eq:LG_gen_trasf}
        \widetilde{\alpha}_{\mathscr{LG}}(\lambda)(W_A([\udl{A}])\otimes W^\partial(f\oplus h) )=\exp(i\ipc{\lambda}{\nml_{\partial\ol{\Sigma}}\nml_{\ol{\Sigma}}\diff \udl{A}-h})W_A([\udl{A}])\otimes W^\partial(f\oplus h).
    \end{equation}
    Hence in particular
    \begin{align}
        \tilde{\alpha}_{\mathscr{LG}}(\lambda)(\hat{\yen}(W_A([\udl{A}])))=&\tilde{\alpha}_{\mathscr{LG}}(\lambda)(W_A[\udl{A}]\otimes W^\partial(0\oplus \nml_{\partial\ol{\Sigma}}\nml_{\ol{\Sigma}}\diff \udl{A}))\nonumber\\=&W_A([\udl{A}])\otimes W^\partial(0\oplus \nml_{\partial\ol{\Sigma}}\nml_{\ol{\Sigma}}\diff \udl{A})=\hat{\yen}(W_A([\udl{A}])).
    \end{align}
    The existence of $\yen$ and $\check{\yen}$ and the commutativity of~\eqref{eq:LG_gen_trasf} follow immediately.
\end{proof}

In Sec~\ref{sec:POVMs} we make precise in what sense $\yen:\Af(\ol{N})\to\widetilde{\Af}^{\mathscr{LG}}(\ol{N})$ is a relativisation map, generalising the notion given in \cite{loveridgeSymmetryReferenceFrames2018a,glowackiQuantumReferenceFrames2024a}. We refer to its image $\Af_{\rel}(\ol{N})$ as the \emph{algebra of relativised observables}.

Having obtained the relativisation map $\yen$, we can characterise the algebra of joint large gauge invariant observables as the subalgebra generated by the relativised observables and the implementers of boundary large gauge automorphisms. The following result is analogous to the description of invariant von Neumann algebras in the presence of quantum reference frames as discussed in \cite{fewsterQuantumReferenceFrames2025}. This analogy will be made more precise in  Sec.~\ref{sec:POVMs},
\begin{theorem}
\label{thm:Weyl_comm_thm}
    The algebra $\widetilde{\Af}^{\mathscr{LG}}(\ol{N})\subset \widetilde{\Af}(\ol{N})$ is the smallest algebra containing all observables of the form
    \begin{equation}
        \yen(W_A([\udl{A}]))(1\otimes U_{\mathscr{LG}}^\partial(\lambda)),
    \end{equation}
    for $(A,\udl{A})\in T\Sol^J(\ol{N})$, $\lambda\in \mathscr{LG}_\angle(\ol{N})$. 
    Equivalently, $\widetilde{\Af}^{\mathscr{LG}}(\ol{N})$ is generated by 
    $\{\widetilde{W}(\udl{v}):\udl{v}\in \widetilde{\mathscr{G}}(\ol{N})^\perp\}$.
\end{theorem}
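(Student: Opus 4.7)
The plan is to reduce the theorem to the fixed-point characterisation of Weyl algebras under groups of inner automorphisms generated by Weyl unitaries — the same auxiliary result (Lem.~\ref{lem:Weyl_fixedpoint}) that was invoked in the proof of Prop.~\ref{prop:loc_localise}. In view of Prop.~\ref{prop:field_shift} applied to $\widetilde{\Af}(\ol{N})$, the joint large gauge action factors as $\widetilde{\alpha}_{\mathscr{LG}}(\lambda)=\Ad\widetilde{W}(\widetilde{\mathfrak{G}}(\lambda))$, so it is exactly of the type covered by that lemma, with generating vectors $\widetilde{\mathfrak{G}}(\lambda)\in \widetilde{\mathscr{G}}(\ol{N})\subset\widetilde{\Sol}(\ol{N})$. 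The lemma will then identify $\widetilde{\Af}^{\mathscr{LG}}(\ol{N})$ as the $C^*$-subalgebra generated by $\{\widetilde{W}(\udl{v}):\udl{v}\in\widetilde{\mathscr{G}}(\ol{N})^\perp\}$, giving the ``equivalently'' clause at once.

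Next I would show that the two listed generating sets coincide up to scalar phases (and hence generate the same $C^*$-subalgebra). Using~\eqref{eq:C*QRF} and Def.~\ref{def:bound_gauge_gen}, each product
\begin{equation}
    \yen(W_A([\udl{A}]))\bigl(1\otimes U_{\mathscr{LG}}^\partial(\lambda)\bigr)
\end{equation}
equals, after collecting the Weyl phase, a scalar multiple of $\widetilde{W}\bigl([\udl{A}]\oplus\lambda\oplus\nml_{\partial\ol{\Sigma}}\nml_{\ol{\Sigma}}\diff\udl{A}\bigr)$. To check that this vector lies in $\widetilde{\mathscr{G}}(\ol{N})^\perp$, I would compute $\sigma([\udl{A}],\mathfrak{G}(\lambda'))$ for arbitrary $\lambda'\in\mathscr{LG}_\angle(\ol{N})$: integration by parts on $\ol{\Sigma}$, together with the on-shell identity $\diff^*_{\ol{\Sigma}}\nml_{\ol{\Sigma}}\diff\udl{A}=-\nml_{\ol{\Sigma}}\diff^*\diff\udl{A}=0$, yields
\begin{equation}
    \sigma([\udl{A}],\mathfrak{G}(\lambda'))=\ipc{\lambda'}{\nml_{\partial\ol{\Sigma}}\nml_{\ol{\Sigma}}\diff\udl{A}}_{\partial\ol{\Sigma}},
\end{equation}
which exactly cancels the $\sigma^\partial$-contribution from the $V^S$-component, so the claimed symplectic orthogonality to $\widetilde{\mathfrak{G}}(\lambda')$ holds. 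For the converse direction, the explicit description~\eqref{eq:Gperp_set} of $\widetilde{\mathscr{G}}(\ol{N})^\perp$ together with $\nml_{\partial\ol{\Sigma}}\Omega^1_{\diff^*}(\ol{\Sigma})=\mathscr{LG}_\angle(\ol{N})$ means that any $\udl{v}\in\widetilde{\mathscr{G}}(\ol{N})^\perp$ already has the form $[\udl{A}]\oplus\varphi\oplus\nml_{\partial\ol{\Sigma}}\nml_{\ol{\Sigma}}\diff\udl{A}$ for suitable $[\udl{A}]$ and $\varphi\in\mathscr{LG}_\angle(\ol{N})$, whence $\widetilde{W}(\udl{v})$ is, up to a phase, $\yen(W_0([\udl{A}]))(1\otimes U_{\mathscr{LG}}^\partial(\varphi))$.

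Combining the two steps, the $C^*$-subalgebra generated by the products $\yen(W_A([\udl{A}]))(1\otimes U_{\mathscr{LG}}^\partial(\lambda))$ coincides with the $C^*$-subalgebra generated by the Weyl elements indexed by $\widetilde{\mathscr{G}}(\ol{N})^\perp$, which by the fixed-point lemma is precisely $\widetilde{\Af}^{\mathscr{LG}}(\ol{N})$. The only genuinely non-trivial input is the fixed-point characterisation itself, which the paper has already packaged as Lem.~\ref{lem:Weyl_fixedpoint}; once that is available the principal technical task is the boundary integration-by-parts computation above, and the remaining work is limited to bookkeeping of Weyl phases, which does not affect the generated $C^*$-algebra.
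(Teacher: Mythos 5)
Your proposal is correct and follows essentially the same route as the paper's proof: both apply Lem.~\ref{lem:Weyl_fixedpoint} to identify $\widetilde{\Af}^{\mathscr{LG}}(\ol{N})$ with the $C^*$-subalgebra generated by $\{\widetilde{W}(\udl{v}):\udl{v}\in\widetilde{\mathscr{G}}(\ol{N})^\perp\}$, and then match this generating set to the products $\yen(W_A([\udl{A}]))(1\otimes U_{\mathscr{LG}}^\partial(\lambda))$ up to Weyl phases via the explicit description~\eqref{eq:Gperp_set}. The only difference is cosmetic: you re-derive the symplectic orthogonality by integration by parts, whereas the paper simply cites~\eqref{eq:Gperp_set} and performs the Weyl-relation bookkeeping in the forward direction.
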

\begin{proof}
    Noting that, for $\udl{v}\in \widetilde{\Sol}(\ol{N})$ and $\lambda\in \mathscr{LG}_\angle(\ol{N})$, we have
    \begin{equation}
        \widetilde{\alpha}_{\mathscr{LG}}(\lambda)(\widetilde{W}(\udl{v}))=\exp(-i\widetilde{\sigma}(\mathfrak{G}(\lambda)\oplus\lambda\oplus 0,\udl{v}))\widetilde{W}(\udl{v}),
    \end{equation}
    it follows from Lem.~\ref{lem:Weyl_fixedpoint} that $\widetilde{\Af}^{\mathscr{LG}}(\ol{N})=(\widetilde{\Af}(\ol{N}))^{\widetilde{\alpha}_{\mathscr{LG}}}$ is generated the set of Weyl generators $\widetilde{W}(\udl{v})$ for $\udl{v}\in \mathscr{G}^\perp(\ol{N})$. Writing these Weyl generators in tensor product form and using Eq.~\eqref{eq:Gperp_set}, we find that $\widetilde{\Af}^{\mathscr{LG}}(\ol{N})$ is the smallest algebra containing all operators of the form $W_A([\udl{A}])\otimes W^\partial(f\oplus h)$ for which $(A,\udl{A})\in T\Sol^J(\ol{N})$, $f\in \mathscr{LG}_\angle(\ol{N})$ and $h=\nml_{\partial\ol{\Sigma}}\nml_{\ol{\Sigma}}\diff\udl{A}$.  
    We now compute
    \begin{align}
        W_A([\udl{A}])\otimes W^\partial(f\oplus h)=&\exp\left(-\frac{i}{2}\ipc{f}{h}\right)W_A([\udl{A}])\otimes (W^\partial(0\oplus h) W^\partial(f\oplus 0))\nonumber\\
        =&\exp\left(-\frac{i}{2}\ipc{f}{h}\right)(W_A([\udl{A}])\otimes W^\partial(0\oplus h) )(1\otimes W^\partial(f\oplus 0))\nonumber\\
        =&\exp\left(-\frac{i}{2}\ipc{f}{h}\right)\yen(W_A([\udl{A}])) (1\otimes U_{\mathscr{LG}}^\partial(f)),
    \end{align}
    which shows that $\widetilde{\Af}^{\mathscr{LG}}(\ol{N})$ is the smallest algebra containing all operators of the form $\yen(W_A([\udl{A}]))(1\otimes U_{\mathscr{LG}}^\partial(f))$ for $(A,\udl{A})\in T\Sol^J(\ol{N})$, $f\in \mathscr{LG}_\angle(\ol{N})$.
\end{proof}
 
The algebras $\Af_{\rel}(\ol{N})$ and $\widetilde{\Af}^{\mathscr{LG}}(\ol{N})$ may be identified with Weyl algebras of the pre-symplectic spaces appearing in Prop.~\ref{prop:class_edge_equiv}.
\begin{theorem}
    \label{thm:At_Weyl_q}
    There are unique $C^*$-isomorphisms $\hat{\xi}$ and $\check{\xi}$ making the following diagram commute:
    \begin{equation}  
    \begin{tikzcd}
      \Af_{\rel}(\ol{N})\arrow[r,"\check{\jmath}"] & \widetilde{\Af}^{\mathscr{LG}}(\ol{N}) \arrow[r,"\hat{j}"]  & \widetilde{\Af}(\ol{N})\arrow[d,equal]\\
      \Weyl(\widetilde{\Sol}_0(\ol{N}))\arrow[u,"\check{\xi}"]\arrow[r,"\Weyl(\check{\iota})"] &  \Weyl(\widetilde{\mathscr{G}}(\ol{N})^\perp)\arrow[u,"\hat{\xi}"]\arrow[r,"\Weyl(\hat{\iota})"] &  
      \Weyl(\widetilde{\Sol}(\ol{N})),
    \end{tikzcd}
\end{equation} 
    where $\check{\iota}$, $\hat{\iota}$ are the inclusion maps as in Prop.~\ref{prop:class_edge_equiv} and we have suppressed the pre-symplectic forms from the notation.
\end{theorem}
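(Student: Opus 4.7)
My strategy is to construct $\hat{\xi}$ and $\check{\xi}$ as corestrictions of the functorial $*$-homomorphisms $\Weyl(\hat{\iota})$ and $\Weyl(\check{\iota})$ onto the subalgebras $\widetilde{\Af}^{\mathscr{LG}}(\ol{N})$ and $\Af_{\rel}(\ol{N})$, respectively. Since $\hat{\iota}:\widetilde{\mathscr{G}}(\ol{N})^\perp\hookrightarrow \widetilde{\Sol}(\ol{N})$ and $\check{\iota}:\widetilde{\Sol}_0(\ol{N})\hookrightarrow \widetilde{\mathscr{G}}(\ol{N})^\perp$ are symplectic inclusions of pre-symplectic spaces, functoriality of $\Weyl$ yields the two unital $*$-homomorphisms that appear along the bottom row of the diagram, and that row commutes automatically from $\Weyl(\hat{\iota}\circ\check{\iota})=\Weyl(\hat{\iota})\circ\Weyl(\check{\iota})$. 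Everything else reduces to showing that these maps factor through the appropriate subalgebras as $*$-isomorphisms, and that the resulting factorisations are forced uniquely by the commutativity constraints.

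\textbf{Constructing and verifying the isomorphisms.} For any $\udl{v}\in \widetilde{\mathscr{G}}(\ol{N})^\perp$ and $\lambda\in\mathscr{LG}_\angle(\ol{N})$, the same calculation used in the proof of Theorem~\ref{thm:Weyl_comm_thm} gives $\widetilde{\alpha}_{\mathscr{LG}}(\lambda)(\widetilde{W}(\udl{v}))=e^{-i\widetilde{\sigma}(\widetilde{\mathfrak{G}}(\lambda),\udl{v})}\widetilde{W}(\udl{v})=\widetilde{W}(\udl{v})$, so the image of $\Weyl(\hat{\iota})$ lies in $\hat{\jmath}(\widetilde{\Af}^{\mathscr{LG}}(\ol{N}))$; this determines $\hat{\xi}:=\hat{\jmath}^{-1}\circ\Weyl(\hat{\iota})$, and surjectivity of $\hat{\xi}$ onto $\widetilde{\Af}^{\mathscr{LG}}(\ol{N})$ is precisely the generator characterisation given in Theorem~\ref{thm:Weyl_comm_thm}. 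For $\check{\xi}$, the factorisation $\hat{E}=\hat{\iota}\circ\check{\iota}\circ\check{E}$ from Prop.~\ref{prop:class_edge_equiv} and functoriality yield $\hat{\yen}=\Weyl(\hat{\iota})\circ\Weyl(\check{\iota})\circ\Weyl(\check{E})$, with $\Weyl(\check{E})$ a $C^*$-isomorphism because $\check{E}$ is a symplectomorphism. Since $\Af_{\rel}(\ol{N})=\hat{\yen}(\Af(\ol{N}))$, the image of $\hat{\xi}\circ\Weyl(\check{\iota})$ coincides with $\check{\jmath}(\Af_{\rel}(\ol{N}))$, and I set $\check{\xi}:=\check{\jmath}^{-1}\circ\hat{\xi}\circ\Weyl(\check{\iota})$. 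Surjectivity is automatic, while bijectivity follows from the identity $\check{\yen}=\check{\xi}\circ\Weyl(\check{E})$ (readable from the two diagrams) together with $\check{\yen}$ being an isomorphism by Lemma~\ref{lem:relativise}. Commutativity of the full diagram is then immediate on Weyl generators, and uniqueness of $\hat{\xi},\check{\xi}$ follows because these generators, together with the Weyl relations, determine any unital $*$-homomorphism out of $\Weyl(\widetilde{\mathscr{G}}(\ol{N})^\perp)$ and $\Weyl(\widetilde{\Sol}_0(\ol{N}))$.

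\textbf{Main obstacle.} The one substantive step is injectivity of $\Weyl(\hat{\iota})$: although $\widetilde{\Sol}_0(\ol{N})$ is symplectic and under control via $\check{E}$, the domain $\widetilde{\mathscr{G}}(\ol{N})^\perp$ is genuinely pre-symplectic with nontrivial radical $\widetilde{\mathscr{G}}(\ol{N})$, and the target $\widetilde{\Sol}(\ol{N})$ is symplectic. I would appeal here to the Binz--Honegger--Rieckers $C^*$-norm construction cited in the discussion preceding Prop.~\ref{prop:CCR_BPI}: the unique $C^*$-norm on $\Delta(\widetilde{\mathscr{G}}(\ol{N})^\perp,\widetilde{\sigma})$ makes any injective symplectic map between pre-symplectic spaces into an isometric, hence injective, $C^*$-embedding. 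As a useful sanity check, injectivity of $\hat{\yen}$ (Lemma~\ref{lem:relativise}) combined with bijectivity of $\Weyl(\check{E})$ already forces $\Weyl(\hat{\iota})\circ\Weyl(\check{\iota})$ to be injective, which covers the subalgebra needed for $\check{\xi}$; the full statement for $\hat{\xi}$ rests on the general functorial fact.
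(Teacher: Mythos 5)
Your structure and strategy match the paper's own proof: both define $\hat{\xi}$ and $\check{\xi}$ by factoring the functorial maps $\Weyl(\hat{\iota})$, $\Weyl(\check{\iota})$ through the subalgebras $\widetilde{\Af}^{\mathscr{LG}}(\ol{N})$ and $\Af_\rel(\ol{N})$, use Theorem~\ref{thm:Weyl_comm_thm} for surjectivity, and recognise that the one nontrivial point is injectivity of $\Weyl(\hat{\iota})$ when the domain has nontrivial radical. Your ``sanity check'' route for $\check{\xi}$ -- deriving its bijectivity from $\check{\yen}=\check{\xi}\circ\Weyl(\check{E})$, where both $\check{\yen}$ and $\Weyl(\check{E})$ are already known to be isomorphisms -- is a genuine small economy the paper doesn't spell out.

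However, there is a gap at exactly the step you flag as the main obstacle. You assert, as a quotable ``general functorial fact,'' that the Binz--Honegger--Rieckers norm makes \emph{any injective} symplectic map into an isometric $C^*$-embedding. That statement is not established earlier in the paper: the only property of $\Weyl$ that the paper records for non-surjective maps is that they give unital $*$-homomorphisms (and even that is noted in a footnote to require assembly from the proof of Prop.~3.8 of Binz et al., not a citable statement). The enhancement from symplectomorphisms (where an isometry is claimed via Prop.~3.8) to injective-but-not-surjective symplectic maps is precisely the content the paper has to prove, and it does so by an argument you have not reproduced: the norm of $\Weyl(\widetilde{\Sol}(\ol{N}),\widetilde\sigma)$ dominates the $2$-norm (Prop.~3.9 of Binz et al.); since $\hat{\iota}$ is injective, the pulled-back norm on $\Delta(\widetilde{\mathscr{G}}(\ol{N})^\perp,\widetilde\sigma)$ is a $C^*$-norm that also dominates the $2$-norm there; and by Cor.~3.11 of Binz et al., any such $C^*$-norm equals the Weyl norm, so $\Delta(\hat{\iota})$ is isometric and extends to the required injection. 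This is not a routine reference to a black box -- it is the whole content of the proof. Stating that the isometry ``rests on the general functorial fact'' without either proving it or pointing to where it is proved leaves the core of the theorem unjustified. Everything else you write is sound, but this step needs to be carried out, not merely invoked.
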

\begin{proof}
    Theorem~\ref{thm:Weyl_comm_thm} shows that $\widetilde{\Af}^{\mathscr{LG}}(\ol{N})$ is the closure of 
    \begin{equation}
    \Delta(\hat{\iota})(\Delta(\widetilde{\mathscr{G}}(\ol{N})^\perp,\widetilde{\sigma}))\subset \Delta(\widetilde{\Sol}(\ol{N}), \widetilde{\sigma})
    \end{equation}
    in the Weyl algebra norm.
    
    If $(V,\sigma)$ is a pre-symplectic space, the $2$-norm on $\Delta(V,\sigma)$ is defined by $\|\sum_j z_j W(v_j)\|_2^2=\sum_j |z_j|^2$ for finite sets of distinct elements $v_j\in V$ and arbitrary coefficients $z_j\in\CC$. By Prop.~3.9 of~\cite{binzConstructionUniquenessWeyl2004}, the norm of $\Weyl(\widetilde{\Sol}(\ol{N}), \widetilde{\sigma})$ dominates the $2$-norm on $\Delta(\widetilde{\Sol}(\ol{N}), \widetilde{\sigma})$ and hence the same is true for the inherited norm pulled back to $\Delta(\widetilde{\mathscr{G}}(\ol{N})^\perp,\widetilde{\sigma})$ by $\Delta(\hat{\iota})$, which consequently coincides with the norm of $\Weyl(\widetilde{\mathscr{G}}(\ol{N})^\perp,\widetilde{\sigma})$ by Cor.~3.11 of~\cite{binzConstructionUniquenessWeyl2004}.
    It follows that there is a unique unital $C^*$-isomorphism $\hat{\xi}:\Weyl(\widetilde{\mathscr{G}}(\ol{N})^\perp,\widetilde{\sigma})\to \widetilde{\Af}^{\mathscr{LG}}(\ol{N})$ so that
    $\Weyl(\hat{\iota})=\hat{\jmath}\circ\hat{\xi}$, given by the continuous extension of $\Delta(\hat{\iota})$.

    Finally, $\Af_\rel(\ol{N})$ is defined as the range of $\yen$ and is therefore generated by $\{\widetilde{W}(v):v\in\widetilde{\Sol}_0(\ol{N})\}$. The existence of the required map $\check{\xi}$ is established in the same way as for $\hat{\xi}$.
\end{proof}

The diagram~\eqref{eq:Ymap_diag} has the same structure as part of~\eqref{eq:Emap_diag2}, with $\Af_{\rel}(\ol{N})$ playing the role of $\widetilde{\Sol}_0(\ol{N})$.
We now show that the remainder of~\eqref{eq:Emap_diag2} also admits a quantum analogue, yielding a commuting diagram of unital $C^*$-algebras
\begin{equation}
\label{eq:Ymap_diag2}
    \begin{tikzcd}
        && \Af(\ol{N})\otimes \Af^\partial(\ol{\Sigma})&\\
        \Af(\ol{N})\arrow[r,"\yen"] \arrow[drr,bend right=10,"\cong"',"\check{\yen}"] \arrow[urr,bend left=10,"\hat{\yen}"]&\widetilde{\Af}^{\mathscr{LG}}(\ol{N})\arrow[ur,"\hat{\jmath}"] 
        \arrow[rr,"\won"]&&\widetilde{\Af}^{\mathscr{LG}}(\ol{N})/I \\
        &&\Af_{\rel}(\ol{N})\arrow[ul,"\check{\jmath}"]\arrow[ur,"\won\circ \check{\jmath}","\cong"']&
    \end{tikzcd}\,,
\end{equation}
where $I$ is a closed two-sided $*$-ideal, $\won$ is the quotient map and $\won\circ\check{\jmath}$ is an isomorphism. To this end, we construct a surjective $*$-homomorphism
\[
\Gamma:\widetilde{\Af}^{\mathscr{LG}}(\ol{N})\to 
\Af_{\rel}(\ol{N})\]
such that $\Gamma\circ\check{\jmath}=\id$. The claim then follows with $I=\ker \Gamma$ by the first isomorphism theorem.

We will demonstrate in Sec.~\ref{sec: superselection} that $\Gamma$ can be interpreted as a map onto the observables associated with a superselection sector of $\widetilde{\Af}^{\mathscr{LG}}(\ol{N})$.

To construct $\Gamma$, note that the gauge-fixing map $c\circ q$ (gauging away the surface degrees of freedom) from Prop.~\ref{prop:class_edge_equiv} is a symplectic map from $\widetilde{\mathscr{G}}(\ol{N})^\perp$ to $\widetilde{\Sol}_0(\ol{N})$ providing a left inverse to the inclusion $\check{\iota}:\widetilde{\Sol}_0(\ol{N})\to \widetilde{\mathscr{G}}(\ol{N})^\perp$. Using the quantisation of this gauge-fixing map, we can caonstruct $\Gamma$ by
\begin{equation}
    \Gamma  =  \check{\xi}\circ \Weyl(c\circ q)\circ\hat{\xi}^{-1} ,
\end{equation}
and loosely speaking we can think of it as a ``quantum gauge fixing map,'' whose main properties are given as follows.
\begin{theorem}
\label{thm:Gamma}
The map $\Gamma  =  \check{\xi}\circ \Weyl(c\circ q)\circ\hat{\xi}^{-1}: \widetilde{\Af}^{\mathscr{LG}}(\ol{N})\to 
\Af_{\rel}(\ol{N})$ is a surjective unital $C^*$-homomorphism obeying $\Gamma\circ\check{\jmath}=\id$, whose kernel is the closed two-sided $*$-ideal
generated by $\{\widetilde{U}_{\mathscr{LG}}(\lambda)-1\otimes 1:\lambda\in \mathscr{LG}_\angle(\ol{N})\}$. 
In particular, for every $\lambda\in \mathscr{LG}_\angle(\ol{N})$, one has $\Gamma(\widetilde{U}_{\mathscr{LG}}(\lambda))=1$. 
\end{theorem}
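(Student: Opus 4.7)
The map $\Gamma$ is a well-defined unital $*$-homomorphism as a composite of three: $\hat\xi^{-1}$ and $\check\xi$ are unital $C^*$-isomorphisms by Thm~\ref{thm:At_Weyl_q}, and $\Weyl(c\circ q)$ is a unital $*$-homomorphism by functoriality of $\Weyl$, provided that the gauge-fixing map $c\circ q:\widetilde{\mathscr{G}}(\ol{N})^\perp\to\widetilde{\Sol}_0(\ol{N})$ is symplectic. From Eq.~\eqref{eq:cdef} and Eq.~\eqref{eq:Gperp_set}, $c\circ q$ is the projection onto $\widetilde{\Sol}_0(\ol{N})$ along $\widetilde{\mathscr{G}}(\ol{N})$ with respect to the direct sum decomposition $\widetilde{\mathscr{G}}(\ol{N})^\perp=\widetilde{\Sol}_0(\ol{N})\oplus\widetilde{\mathscr{G}}(\ol{N})$. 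Symplecticity then reduces to the fact that $\widetilde{\mathscr{G}}(\ol{N})$ lies in the radical of $\widetilde\sigma|_{\widetilde{\mathscr{G}}(\ol{N})^\perp}$, which holds by definition of the symplectic complement. The identity $\Gamma\circ\check\jmath=\id$ follows from $(c\circ q)\circ\check\iota=\id_{\widetilde{\Sol}_0(\ol{N})}$ together with commutativity of the diagram in Thm~\ref{thm:At_Weyl_q}, and surjectivity of $\Gamma$ is then immediate since its range contains $\Af_\rel(\ol{N})=\check\jmath(\Af_\rel(\ol{N}))$.

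For the kernel, a direct calculation gives $\Gamma(\widetilde U_{\mathscr{LG}}(\lambda))=1\otimes 1$: writing $\widetilde U_{\mathscr{LG}}(\lambda)=\widetilde W(\widetilde{\mathfrak G}(\lambda))$ with $\widetilde{\mathfrak G}(\lambda)\in\widetilde{\mathscr{G}}(\ol{N})\subset \widetilde{\mathscr{G}}(\ol{N})^\perp$, the formula~\eqref{eq:cdef} gives $(c\circ q)(\widetilde{\mathfrak G}(\lambda))=0$, hence $\Gamma(\widetilde U_{\mathscr{LG}}(\lambda))=\widetilde W(0)=1\otimes 1$. Consequently the closed two-sided $*$-ideal $I\subset\widetilde{\Af}^{\mathscr{LG}}(\ol{N})$ generated by $\{\widetilde U_{\mathscr{LG}}(\lambda)-1\otimes 1:\lambda\in\mathscr{LG}_\angle(\ol{N})\}$ is contained in $\ker\Gamma$.

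The reverse inclusion $\ker\Gamma\subset I$ is the main content. My plan is to show that $\won\circ\check\jmath:\Af_\rel(\ol{N})\to\widetilde{\Af}^{\mathscr{LG}}(\ol{N})/I$ is surjective, where $\won$ is the quotient map; once this is established, $\Gamma\circ\check\jmath=\id$ forces the induced map $\overline\Gamma:\widetilde{\Af}^{\mathscr{LG}}(\ol{N})/I\to\Af_\rel(\ol{N})$ to satisfy $\overline\Gamma\circ(\won\circ\check\jmath)=\id$, making $\won\circ\check\jmath$ a two-sided inverse of $\overline\Gamma$, so that $\ker\Gamma=\ker\won=I$. To prove surjectivity of $\won\circ\check\jmath$, I invoke Thm~\ref{thm:Weyl_comm_thm}, which shows that $\widetilde{\Af}^{\mathscr{LG}}(\ol{N})$ is generated by elements of the form $\yen(W_A([\udl A]))\cdot(1\otimes U^\partial_{\mathscr{LG}}(\lambda))$. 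The decisive dressing identity is that $\hat\yen(U_{\mathscr{LG}}(\lambda))=U_{\mathscr{LG}}(\lambda)\otimes 1$: this follows from~\eqref{eq:C*QRF} because $\mathfrak G(\lambda)\in\mathscr{LG}(\ol{N})$ is a class of exact $1$-forms, so $\nml_{\partial\ol\Sigma}\nml_{\ol\Sigma}\diff\mathfrak G(\lambda)=0$. Combined with the factorisation $\widetilde U_{\mathscr{LG}}(\lambda)=(U_{\mathscr{LG}}(\lambda)\otimes 1)\cdot(1\otimes U^\partial_{\mathscr{LG}}(\lambda))$, this gives
\begin{equation}
1\otimes U^\partial_{\mathscr{LG}}(\lambda)=\yen(U_{\mathscr{LG}}(\lambda))^*\cdot\widetilde U_{\mathscr{LG}}(\lambda)\equiv\yen(U_{\mathscr{LG}}(\lambda)^*)\pmod{I},
\end{equation}
so each generator of $\widetilde{\Af}^{\mathscr{LG}}(\ol{N})$ is congruent modulo $I$ to $\yen(W_A([\udl A])\cdot U_{\mathscr{LG}}(\lambda)^*)\in\Af_\rel(\ol{N})$. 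Thus $\won\circ\check\jmath$ has dense range, and since $*$-homomorphisms between $C^*$-algebras automatically have closed range, it is surjective. The main conceptual step is this dressing relation expressing the pure surface unitary $1\otimes U^\partial_{\mathscr{LG}}(\lambda)$ as a relativised observable modulo $I$; the remaining work is a routine verification using functoriality of $\Weyl$ and the diagrams already established.
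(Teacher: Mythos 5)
Your proof is correct. The setup (well-definedness of $\Gamma$, $\Gamma\circ\check\jmath=\id$, surjectivity, $I\subset\ker\Gamma$) agrees with the paper's. The interesting divergence is in the reverse inclusion $\ker\Gamma\subset I$. The paper argues directly: it writes each Weyl generator $\widetilde{W}([\udl{A}]\oplus f\oplus h)$ as an additive telescope
\begin{equation*}
\widetilde{W}([\udl{A}]\oplus f\oplus h)=\yen(W([\udl{A}]-\mathfrak{G}(f)))\bigl(\widetilde{U}_{\mathscr{LG}}(f)-1\otimes 1\bigr)+(\check{\jmath}\circ\check{\yen})(W([\udl{A}]-\mathfrak{G}(f))),
\end{equation*}
deduces $(\id-\check\jmath\circ\Gamma)(\widetilde W(\cdot))\in I$ on generators, then extends by linearity and continuity to all of $\widetilde{\Af}^{\mathscr{LG}}(\ol N)$, so that $\Gamma(a)=0$ forces $a\in I$. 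You instead prove surjectivity of $\won\circ\check\jmath$ and close via the abstract argument that $\overline\Gamma$ is a two-sided inverse of $\won\circ\check\jmath$, forcing $\ker\Gamma=\ker\won=I$. The decisive identity you isolate, $1\otimes U^\partial_{\mathscr{LG}}(\lambda)\equiv \yen(U_{\mathscr{LG}}(\lambda)^*)\pmod I$, is a multiplicative reformulation of the same dressing relation that underlies the paper's telescope (both ultimately hinge on $\hat\yen(U_{\mathscr{LG}}(\lambda))=U_{\mathscr{LG}}(\lambda)\otimes 1$, which the paper notes immediately after Eq.~\eqref{eq:C*QRF}). Your version arguably reads more cleanly, and invoking the closed-range property of $*$-homomorphisms nicely replaces the paper's explicit linearity-and-continuity step; the trade-off is that you rely on a bit more abstract machinery (first isomorphism theorem, closed range) where the paper stays elementary. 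One cosmetic remark: when you write $\nml_{\partial\ol\Sigma}\nml_{\ol\Sigma}\diff\mathfrak G(\lambda)=0$, recall that $\mathfrak G(\lambda)$ is a gauge class, so the expression should be read on a representative $\diff\Lambda$, where it vanishes because $\diff^2=0$; worth saying explicitly.
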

\begin{proof}
    Surjectivity of $\Gamma$ follows from that of $c\circ q$, and the left-inverse property follows because
    \begin{equation} 
    \Gamma\circ \check{\jmath} = \check{\xi}\circ \Weyl(c\circ q)\circ\hat{\xi}^{-1} \circ \check{\jmath} = \check{\xi}\circ \Weyl(c\circ q) \circ \Weyl(\check{\iota}) \circ \check{\xi}^{-1}= \id,
    \end{equation}
    where we have used $c\circ q\circ \check{\iota}=\id$.

    Now let $I$ be the smallest closed two-sided ideal of $\widetilde{\Af}^{\mathscr{LG}}(\ol{N})$ containing $\{\widetilde{U}_{\mathscr{LG}}(\lambda)-1\otimes 1:\lambda\in \mathscr{LG}_\angle(\ol{N})\}$. We demonstrate $I=\ker \Gamma$ by showing that each side is a subspace of the other.
    To verify that $I\subset\ker(\Gamma)$, note first that $\widetilde{U}_{\mathscr{LG}}(\lambda) = \widetilde{W}(\mathfrak{G}(\lambda)\oplus\lambda\oplus 0)= \hat{\xi}(\widetilde{W}^\perp(\widetilde{\mathfrak{G}}(\lambda)))$,
    where $\widetilde{\mathfrak{G}}:\mathscr{LG}_\angle(\ol{N})\to\widetilde{\mathscr{G}}(\ol{N})^\perp$ by $\widetilde{\mathfrak{G}}(\lambda)=\mathfrak{G}(\lambda)\oplus\lambda\oplus 0$ as in Prop.~\ref{prop:class_edge_equiv} and we use $\widetilde{W}^\perp$ to label the Weyl generators of $\Weyl(\widetilde{\mathscr{G}}(\ol{N})^\perp)$
    As 
    $(c\circ q)(\widetilde{\mathfrak{G}}(\lambda))=0$, we obtain $\Gamma (\widetilde{U}_{\mathscr{LG}}(\lambda))=\check{\xi}(\widetilde{W}(0))=1$. By linearity, it follows that 
    \begin{equation}\label{eq:GammaUtilde}
        \Gamma(\widetilde{U}_{\mathscr{LG}}(\lambda)-1\otimes 1)=0
    \end{equation}
    and hence $I\subset \ker(\Gamma)$ follows because $I$ is an ideal and $\Gamma$ is a continuous ${}^*$-homomorphism. 
    
    To show $\ker(\Gamma)\subset I$, note that the Weyl generators of $\widetilde{\Af}^{\mathscr{LG}}(\ol{N})$ may be written
    \begin{align}
        \widetilde{W}([\udl{A}]\oplus f\oplus h)=&(W([\udl{A}]-\mathfrak{G}(f))\oplus W(0\oplus h))(\widetilde{W}(\widetilde{\mathfrak{G}}(f))-1\otimes 1)+W([\udl{A}]-\mathfrak{G}(f))\oplus W(0\oplus h)\nonumber\\
        =&\yen(W([\udl{A}]-\mathfrak{G}(f)))(\widetilde{U}_{\mathscr{LG}}(f)-1\otimes 1)+(\check{\jmath}\circ\check{\yen})(W([\udl{A}]-\mathfrak{G}(f))),
    \end{align}
    where $[\udl{A}]\in \Sol_{\mathscr{G}}(\ol{N})$, $f\in \mathscr{LG}_\angle(\ol{N})$ and $h=\nml_{\partial\ol{\Sigma}}\nml_{\ol{\Sigma}}\diff\udl{A}$.
    Using $\Gamma\circ\check{\jmath}=\id$ and~\eqref{eq:GammaUtilde}, we thus find 
    \begin{equation}
        \Gamma(\widetilde{W}([\udl{A}]\oplus f\oplus h))=\check{\yen}(W([\udl{A}]-\mathfrak{G}(f))),
    \end{equation}
    and consequently
    \begin{equation}
        (\id-(\check{\jmath}\circ\Gamma))(\widetilde{W}([\udl{A}]\oplus f\oplus h)))\in I.
    \end{equation}
    By linearity and continuity, we find for all $a\in \widetilde{\Af}^{\mathscr{LG}}(\ol{N})$
    \begin{equation}
        (\id-(\check{\jmath}\circ\Gamma))(a)\in I.
    \end{equation}
    Thus $\Gamma(a)=0$ implies $a=\id(a)\in I$, so $\ker(\Gamma)\subset I$. Finally, $\Gamma(\widetilde{U}_{\mathscr{LG}}(\lambda))$ was given in~\eqref{eq:GammaUtilde}.
\end{proof} 

In analogy with the classical surface field extension, we denote the extended algebra
\begin{equation}
\label{eq:aext}
    \Af_{\textnormal{ext}}(\ol{N}):=\widetilde{\Af}^{\mathscr{LG}}(\ol{N})/I
\end{equation}
As in the classical case, we have established equivalence between the surface field extension theory and the original formulation of electromagnetism,
which we summarise as follows.
\begin{corollary}
\label{cor:constr_iso}
    There are ${}^*$-isomorphisms
    \begin{equation}
        \Af(\ol{N})\cong \Af_{\rel}(\ol{N})\cong \Af_{\textnormal{ext}}(\ol{N}),
    \end{equation}
    given by $\check{\yen}:\Af(\ol{N})\to \Af_{\rel}(\ol{N})$, and $\won\circ\check{\jmath}:\Af_{\rel}(\ol{N})\to \Af_{\textnormal{ext}}(\ol{N})$.
\end{corollary}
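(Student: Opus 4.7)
The proof is essentially a bookkeeping exercise that extracts the two isomorphisms from the commuting diagrams already in place. The first isomorphism, $\check{\yen}:\Af(\ol{N})\to \Af_{\rel}(\ol{N})$, was delivered gratis by Lem.~\ref{lem:relativise}: there, $\check{\yen}$ was built as the factorisation of the injective $*$-homomorphism $\hat{\yen}$ through its range $\Af_{\rel}(\ol{N})$, so there is nothing further to do for that half. All the work is thus in identifying $\won\circ\check{\jmath}$ as a $*$-isomorphism.

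The plan for the second isomorphism is to reduce to the first isomorphism theorem for $C^*$-algebras. By Thm.~\ref{thm:Gamma}, the map $\Gamma:\widetilde{\Af}^{\mathscr{LG}}(\ol{N})\to\Af_{\rel}(\ol{N})$ is a surjective unital $*$-homomorphism with $\ker\Gamma=I$, so it descends to a $*$-isomorphism $\bar{\Gamma}:\Af_{\textnormal{ext}}(\ol{N})=\widetilde{\Af}^{\mathscr{LG}}(\ol{N})/I\to\Af_{\rel}(\ol{N})$ uniquely determined by $\bar{\Gamma}\circ\won=\Gamma$.

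Next I would show that $\won\circ\check{\jmath}$ is the inverse of $\bar{\Gamma}$. Composing,
\begin{equation}
    \bar{\Gamma}\circ(\won\circ\check{\jmath})=(\bar{\Gamma}\circ\won)\circ\check{\jmath}=\Gamma\circ\check{\jmath}=\id_{\Af_{\rel}(\ol{N})},
\end{equation}
where the last equality is the second statement of Thm.~\ref{thm:Gamma}. Since $\bar{\Gamma}$ is a bijection, any right inverse is automatically a two-sided inverse, so $\won\circ\check{\jmath}=\bar{\Gamma}^{-1}$ is itself a $*$-isomorphism. Combining with Lem.~\ref{lem:relativise} completes the proof.

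There is no serious obstacle here; the only potential subtlety is the continuity/well-definedness of the induced map $\bar{\Gamma}$ on the quotient, but this is the standard $C^*$-algebraic first isomorphism theorem once $\ker\Gamma=I$ has been pinned down (which was the real work, carried out in Thm.~\ref{thm:Gamma}). The corollary therefore reads as an immediate consolidation of the two preceding results.
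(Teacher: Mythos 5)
Your argument is exactly the paper's intended reasoning: the paper states just before Thm.~\ref{thm:Gamma} that "the claim then follows with $I=\ker\Gamma$ by the first isomorphism theorem," which is precisely your construction of $\bar{\Gamma}$ and the observation that $\won\circ\check{\jmath}=\bar{\Gamma}^{-1}$ via $\Gamma\circ\check{\jmath}=\id$. The proof is correct and matches the paper's route.
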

 
On one hand we have the \emph{semi-local picture} with observables given by $\Af(\ol{N})$, on the other the \emph{surface field extended picture} with observables given by $\Af_{\textnormal{ext}}(\ol{N})$.  The equivalence is given by the isomorphism $\won\circ\yen:\Af(\ol{N})\to \Af_{\textnormal{ext}}(\ol{N})$.
These results mirror the classical equivalence between the electromagnetism in $\ol{N}$ and its surface field extension established in Section~\ref{sec:classical_sfe}.

In the next section we show that the surface field extended picture may be regarded as a description of electromagnetism inside a charged shell, where the electric field outside the shell is set to vanish. This forms part of a broader discussion of superselection sectors of $\widetilde{\Af}^{\mathscr{LG}}(\ol{N})$.

\subsection{Superselection sectors of $\widetilde{\Af}^{\mathscr{LG}}(\ol{N})$}\label{sec: superselection}
We consider Hilbert space representations of $\widetilde{\Af}^{\mathscr{LG}}(\ol{N})$, the algebra of joint large gauge invariant electromagnetic and surface field observables. The unitary equivalence classes of its irreducible representations are known as \emph{superselection sectors} of the algebra, see e.g.~\cite{doplicherFieldsObservablesGauge1969,buchholzLocalityStructureParticle1982,buchholzPhysicalStateSpace1982}. The algebra $\widetilde{\Af}^{\mathscr{LG}}(\ol{N})$ has a non-trivial center, since, for example, the unitary operators $\widetilde{U}_{\mathscr{LG}}(\lambda)\in \widetilde{\Af}^{\mathscr{LG}}(\ol{N})$ for $\lambda\in \mathscr{LG}_\angle(\ol{N})$ are central.\footnote{By identifying $\widetilde{\Af}^{\mathscr{LG}}(\ol{N})$ with $\Weyl(\widetilde{\mathscr{G}}(\ol{N})^\perp,\widetilde{\sigma})$ as in Thm.~\ref{thm:At_Weyl_q}, we see that this is a consequence of $\widetilde{\mathscr{G}}(\ol{N})\subset \widetilde{\mathscr{G}}(\ol{N})^\perp$ being a degenerate subspace.} This means that for each irreducible representation, the operators $\widetilde{U}_{\mathscr{LG}}(\lambda)$ act through multiplication by a phase. Moreover, for a sufficiently regular irreducible representation $(\tilde{\pi},\tilde{\HH})$ of $\widetilde{\Af}^{\mathscr{LG}}(\ol{N})$, there exists a function $\Phi\in \mathscr{LG}_\angle(\ol{N})$ such that
\begin{equation}
    \tilde{\pi}(\widetilde{U}_{\mathscr{LG}}(\lambda))=\exp(i\ipc{\Phi}{\lambda}_{\angle\ol{N}})\one_{\tilde{\HH}}.
\end{equation}
We may interpret this phase function $\Phi$ as the electric flux density on the outside of the corner $\angle{\ol{N}}$, as such we refer to $\Phi$ as the \emph{external flux function}. Clearly, any two representations whose associated external fluxes differ are unitarily inequivalent. We show that, when restricting attention to sectors obeying the superselection criterion 
\begin{equation}\label{eq: sups criterion}
 \tilde{\pi}\restriction_{\Af_{\textup{rel}}(\ol{N})}=\pi \quad \textrm{ (up to unitary equivalence)}\,,
\end{equation}
for some fixed representation $(\pi,\HH)$ of $\Af_{\textup{rel}}$, the superselection sectors are uniquely fixed by the external flux functions. In particular, we construct $*$-homomorphisms $\Gamma_\Phi:\widetilde{\Af}^{\mathscr{LG}}(\ol{N})\to \Af_{\textup{rel}}(\ol{N})$ mapping onto the sector labelled by $\Phi$, so that $\Gamma_0$ is the map $\Gamma$ studied in Theorem~\ref{thm:Gamma}. Moreover, we describe how the irreducible representations of $\Af_{\textnormal{ext}}(\ol{N})$ may be identified with $\Phi=0$ irreducible representations of $\widetilde{\Af}^{\mathscr{LG}}(\ol{N})$.

Let $(\pi,\HH)$ be a (non-trivial) irreducible representation of $\Af_{\textnormal{rel}}(\ol{N})$. For each $\Phi\in \mathscr{LG}_\angle(\ol{N})$, we can construct a superselection sector of $\widetilde{\Af}^{\mathscr{LG}}(\ol{N})$ for which $\Phi$ is the external flux function. We denote by $\beta_\Phi\in\Aut(\widetilde{\Af}^{\mathscr{LG}}(\ol{N}))$ the map
    \begin{equation}
        \beta_\Phi=\Ad \widetilde{W}(0\oplus 0\oplus \Phi)\restriction_{\widetilde{\Af}^{\mathscr{LG}}(\ol{N})},
    \end{equation}
    where $\Ad \widetilde{W}(0\oplus 0\oplus \Phi)\in \Aut(\widetilde{\Af}(\ol{N}))$ is a background shift on $\widetilde{\Af}(\ol{N})$. Using the Weyl relations, it is straightforward to check that $\beta_\Phi\circ\check{\jmath}=\check{\jmath}$, where $\check{\jmath}:\Af_{\textup{rel}}(\ol{N})\to \widetilde{\Af}^{\mathscr{LG}}(\ol{N})$ is the inclusion map, and that \begin{equation}
        \label{eq:Ext_flux_field_shift}
            \beta_\Phi(\widetilde{U}_{\mathscr{LG}}(\lambda))=\exp(i\ipc{\Phi}{\lambda}_{\angle\ol{N}})\widetilde{U}_{\mathscr{LG}}(\lambda).
    \end{equation} Since $\widetilde{U}_{\mathscr{LG}}(\lambda)$ is an element of the center of $\widetilde{\Af}^{\mathscr{LG}}(\ol{N})$, it is clear that $\beta_\Phi$ is an inner automorphism if and only if $\Phi=0$. Using $\beta_\Phi$, we define a surjective homomorphism $\Gamma_\Phi:\widetilde{\Af}^{\mathscr{LG}}(\ol{N})\to \Af_{\textnormal{rel}}(\ol{N})$ by
    \begin{equation}
        \Gamma_\Phi:=\Gamma\circ\beta_\Phi,
    \end{equation}
    such that $\Gamma=\Gamma_0$. These maps allow us to define a pulled back representation $(\pi_\Phi,\HH)$ of $\widetilde{\Af}^{\mathscr{LG}}(\ol{N})$ by
    \begin{equation}
        \pi_\Phi=\pi\circ\Gamma_\Phi.
    \end{equation}
    The representations $(\pi_\Phi,\HH)$ have the following properties.
    \begin{theorem}
    \label{thm:sup_sel}
         Up to unitary equivalence, the representation $(\pi_\Phi,\HH)$ is the unique representation of $\widetilde{\Af}^{\mathscr{LG}}(\ol{N})$ such that $\pi_\Phi\circ \check{\jmath}$ is unitarily equivalent to $\pi$
        and where for all $\lambda\in \mathscr{LG}_\angle(\ol{N})$ we have
        \begin{equation}
            \pi_\Phi(\widetilde{U}_{\mathscr{LG}}(\lambda))=\exp(i\ipc{\Phi}{\lambda}_{\angle\ol{N}})\one.
        \end{equation}
        
        Furthermore, $(\pi_\Phi,\HH)$ is irreducible. Lastly, let $\Phi'\in \mathscr{LG}_\angle(\ol{N})$. Then $(\pi_\Phi,\HH)$ is unitarily equivalent to $(\pi_{\Phi'},\HH)$ if and only if $\Phi=\Phi'$.
    \end{theorem}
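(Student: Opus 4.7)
The plan is to establish the four assertions of the theorem in turn: (i) that $\pi_\Phi$ restricts to $\pi$ under $\check{\jmath}$ and acts by the prescribed phase on each central unitary $\widetilde{U}_{\mathscr{LG}}(\lambda)$; (ii) that any representation with these two properties must coincide with $\pi_\Phi$ up to unitary equivalence; (iii) that $(\pi_\Phi,\HH)$ is irreducible; and (iv) that distinct external fluxes yield inequivalent representations.

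For (i), I would argue by direct computation. A quick Weyl-algebra calculation shows $\beta_\Phi\circ\check{\jmath}=\check{\jmath}$, as already noted just before the theorem, so composing with $\Gamma$ and using $\Gamma\circ\check{\jmath}=\id$ from Thm.~\ref{thm:Gamma} gives $\Gamma_\Phi\circ\check{\jmath}=\id$, whence $\pi_\Phi\circ\check{\jmath}=\pi$. For the central unitaries, Eq.~\eqref{eq:Ext_flux_field_shift} yields $\beta_\Phi(\widetilde{U}_{\mathscr{LG}}(\lambda))=\exp(i\ipc{\Phi}{\lambda}_{\angle\ol{N}})\,\widetilde{U}_{\mathscr{LG}}(\lambda)$, and applying $\pi\circ\Gamma$ together with $\Gamma(\widetilde{U}_{\mathscr{LG}}(\lambda))=\one$ (Thm.~\ref{thm:Gamma}) delivers the required scalar action.

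For (ii), the key step will be the factorisation identity
\begin{equation}
\widetilde{W}([\udl{A}]\oplus f\oplus h)=\yen\bigl(W([\udl{A}]-\mathfrak{G}(f))\bigr)\,\widetilde{U}_{\mathscr{LG}}(f),
\end{equation}
valid on every Weyl generator of $\widetilde{\Af}^{\mathscr{LG}}(\ol{N})$, where $h=\nml_{\partial\ol{\Sigma}}\nml_{\ol{\Sigma}}\diff\udl{A}$ by~\eqref{eq:Gperp_set} and $\nml_{\partial\ol{\Sigma}}\nml_{\ol{\Sigma}}\diff\mathfrak{G}(f)=0$ ensures the right-hand side lies in $\Af_{\rel}(\ol{N})\cdot Z$ with $Z$ the centre generated by the $\widetilde{U}_{\mathscr{LG}}$'s. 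This is essentially a repackaging of the identity used inside the proof of Thm.~\ref{thm:Gamma}. Any representation $(\tilde{\pi},\tilde{\HH})$ with the two required properties is then forced by this identity, together with $\yen=\check{\jmath}\circ\check{\yen}$, to act on each generator as $U\pi(\check{\yen}(W([\udl{A}]-\mathfrak{G}(f))))U^*\exp(i\ipc{\Phi}{f}_{\angle\ol{N}})$, where $U$ is the intertwiner implementing $\tilde{\pi}\circ\check{\jmath}\cong\pi$; this coincides with $U\pi_\Phi(\,\cdot\,)U^*$ by the computation already carried out to verify (i).

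For (iii), since $\Gamma_\Phi$ is surjective onto $\Af_{\rel}(\ol{N})$, one has $\pi_\Phi(\widetilde{\Af}^{\mathscr{LG}}(\ol{N}))=\pi(\Af_{\rel}(\ol{N}))$, whose commutant in $B(\HH)$ is $\CC\cdot\one$ by irreducibility of $\pi$. For (iv), any unitary equivalence would intertwine scalar operators with themselves, forcing $\exp(i\ipc{\Phi-\Phi'}{\lambda}_{\angle\ol{N}})=1$ for every $\lambda\in\mathscr{LG}_\angle(\ol{N})$; scaling $\lambda$ by arbitrary reals yields $\ipc{\Phi-\Phi'}{\lambda}_{\angle\ol{N}}=0$ on the whole space, and choosing $\lambda=\Phi-\Phi'$ closes the argument by non-degeneracy of the $L^2$-inner product. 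The main obstacle I expect is the bookkeeping in (ii): one must simultaneously track the intertwining unitary $U$ implementing $\tilde{\pi}\circ\check{\jmath}\cong\pi$ and verify that the factorisation identity above exhausts a generating family for $\widetilde{\Af}^{\mathscr{LG}}(\ol{N})$ -- which is guaranteed by Thm.~\ref{thm:Weyl_comm_thm} via the characterisation of generators in terms of $\widetilde{\mathscr{G}}(\ol{N})^\perp$. All remaining steps reduce to straightforward applications of Thm.~\ref{thm:Gamma} and the Weyl relations.
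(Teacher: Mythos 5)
Your proof is correct and follows essentially the same route as the paper's: parts (i), (iii), and (iv) are identical, and in part (ii) your factorisation $\widetilde{W}([\udl{A}]\oplus f\oplus h)=\yen(W([\udl{A}]-\mathfrak{G}(f)))\,\widetilde{U}_{\mathscr{LG}}(f)$ is just an algebraic rearrangement of the identity $\yen(W([\udl{A}]))(1\otimes U^\partial_{\mathscr{LG}}(\lambda))=\yen(W([\udl{A}])U_{\mathscr{LG}}(-\lambda))\widetilde{U}_{\mathscr{LG}}(\lambda)$ that the paper invokes, both resting on the generating family from Thm.~\ref{thm:Weyl_comm_thm} and the cancellation of Weyl cocycle phases.
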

    \begin{proof}
        Using $\beta_\Phi\circ\check{\jmath}=\check{\jmath}$ and Thm.~\ref{thm:Gamma}, we have
        \begin{equation}
            \pi_\Phi\circ \check{\jmath}=\pi\circ\Gamma\circ\beta_\Phi\circ \check{\jmath}=\pi\circ\Gamma\circ\check{\jmath}=\pi.
        \end{equation}
        Furthermore, using Thm.~\ref{thm:Gamma} and Eq.~\eqref{eq:Ext_flux_field_shift}, we find
        \begin{equation}
            \pi_\Phi(\widetilde{U}_{\mathscr{LG}}(\lambda))=\exp(i\ipc{\Phi}{\lambda}_{\angle\ol{N}})\pi(\Gamma(\widetilde{U}_{\mathscr{LG}}(\lambda))=\exp(i\ipc{\Phi}{\lambda}_{\angle\ol{N}})\pi(1)=\exp(i\ipc{\Phi}{\lambda}_{\angle\ol{N}})\one_\HH.
        \end{equation}
        
        Now let $(\tilde{\pi},\tilde{\HH})$ be a representation of $\widetilde{\Af}^{\mathscr{LG}}(\ol{N})$ such that $\tilde{\pi}\circ\check{\jmath}$ is unitarily equivalent to $\pi$ and $\tilde{\pi}(\widetilde{U}_{\mathscr{LG}}(\lambda))=\exp(i\ipc{\Phi}{\lambda}_{\angle\ol{N}})\one_{\tilde{\HH}}$. Let $V:\HH\to \tilde{\HH}$ be the unitary operator such that
        \begin{equation}
            \tilde{\pi}(a)=V^*\pi(a)V=V^*\pi_\Phi(a)V,
        \end{equation}
        for each $a\in \Af_{\textup{rel}}(\ol{N})$. We show that for $[\udl{A}]\in \Sol_\mathscr{G}(\ol{N})$ and $\lambda\in \mathscr{LG}_\angle(\ol{N})$, we have
         \begin{equation}
            \tilde{\pi}(\yen(W([\udl{A}]))(1\otimes U^\partial_{\mathscr{LG}}(\lambda)))=V\pi_\Phi(\yen(W([\udl{A}]))(1\otimes U^\partial_{\mathscr{LG}}(\lambda)))V^*.
        \end{equation}
        Due to Thm.~\ref{thm:Weyl_comm_thm}, this relation uniquely fixes $\tilde{\pi}=\Ad V\circ\pi_\Phi$.

        Note that by Eq.~\eqref{eq:C*QRF}, we have $\yen(W([\udl{A}]))(1\otimes U^\partial_{\mathscr{LG}}(\lambda))=\yen(W([\udl{A}])U_{\mathscr{LG}}(-\lambda))\widetilde{U}_{\mathscr{LG}}(\lambda)$
        and thus
        \begin{equation}
            \tilde{\pi}(\yen(W([\udl{A}]))(1\otimes U^\partial_{\mathscr{LG}}(\lambda)))=\tilde{\pi}(\yen(W([\udl{A}])U_{\mathscr{LG}}(-\lambda)))\tilde{\pi}(\widetilde{U}_{\mathscr{LG}}(\lambda)).
        \end{equation}
        Since $\yen(W([\udl{A}])U_{\mathscr{LG}}(-\lambda))\in \Af_{\textup{rel}}(\ol{N})$, we have
        \begin{equation}
            \tilde{\pi}(\yen(W([\udl{A}])U_{\mathscr{LG}}(-\lambda)))=V\pi_\Phi(\yen(W([\udl{A}])U_{\mathscr{LG}}(-\lambda)))V^*.
        \end{equation}
        Furthermore,
        \begin{equation}
            \tilde{\pi}(\widetilde{U}_{\mathscr{LG}}(\lambda))=\exp(i\ipc{\Phi}{\lambda}_{\angle\ol{N}})\one_{\tilde{\HH}}=V\pi_\Phi(\widetilde{U}_{\mathscr{LG}}(\lambda))V^*.
        \end{equation}
        We conclude that
        \begin{align}
            \tilde{\pi}(\yen(W([\udl{A}]))(1\otimes U^\partial_{\mathscr{LG}}(\lambda)))=&V\pi_\Phi(\yen(W([\udl{A}])U_{\mathscr{LG}}(-\lambda)))\pi_\Phi(\widetilde{U}_{\mathscr{LG}}(\lambda))V^*\nonumber\\
            =&V\pi_\Phi(\yen(W([\udl{A}]))(1\otimes U^\partial_{\mathscr{LG}}(\lambda)))V^*.
        \end{align}
        This proves uniqueness of $(\pi_\Phi,\HH)$ up to unitary equivalence.

        Irreducibility of $\pi_\Phi$ follows from the fact that
        \begin{equation}
            \pi_\Phi(\widetilde{\Af}^{\mathscr{LG}}(\ol{N}))''=\pi(\Af_{\textup{rel}}(\ol{N}))''=B(\HH).
        \end{equation}
        
        Lastly, if $\pi_\Phi$ and $\pi_{\Phi'}$ are unitarily equivalent, $\exp(i\ipc{\Phi'}{\lambda}_{\angle\ol{N}}) =\exp(i\ipc{\Phi}{\lambda}_{\angle\ol{N}})$ 
        for all $\lambda\in\mathscr{LG}_\angle(\ol{N})$ and hence $\Phi=\Phi'$.
    \end{proof}
Consequently, irreducible representations of the form $(\pi_\Phi,\HH)$ for the algebra $\widetilde{\Af}^{\mathscr{LG}}(\ol{N})$ yield sectors with external flux function $\Phi$. In particular, the $\Phi=0$ sector is given by the unitary equivalence class of $\pi_0=\pi\circ \Gamma$. As such we understand the map $\Gamma$ as mapping onto the $\Phi=0$ superselection sector. In \cite{rielloHamiltonianGaugeTheory2024a,rielloNullHamiltonianYangMills2025,rielloSymplecticReductionYangMills2021} one considers a classical analog of superselection sectors understood as labels for symplectic leaves of the fully gauge-reduced reduced phase space (i.e. orbits of solutions with respect to the full gauge group including the large gauge transformations). These in our case would yield classical superselection sectors labelled by fluxes through the corner from inside $\ol{N}$. Since we consider semi-local, not just local observables, there would be operators interpolating between those sectors, so such fluxes are not superselected. This is similar to the proposal made in \cite{herdegenSemidirectProductCCR1998} in the context of asymptotic QED, where the semi-local degrees of freedom are also quantised (in contrast to the standard local quantum physics philosophy).

Finally, since $\Af_{\textup{rel}}(\ol{N})\cong \Af_{\textup{ext}}(\ol{N})$, any irreducible representation $(\pi_{\textup{ext}},\HH)$ of $\Af_{\textup{ext}}(\ol{N})$ corresponds uniquely to an irreducible representation $(\pi,\HH)$ of $\Af_{\textup{rel}}(\ol{N})$ through $\pi=\pi_{\textup{ext}}\circ\won\circ \check{\jmath}$. Comparing $\pi_{\textup{ext}}$ to $\pi_0$, we find the following.
\begin{corollary}
    The irreducible representations $(\pi_{\textup{ext}},\HH)$ of $\Af_{\textup{ext}}(\ol{N})$ and $(\pi_0,\HH)$ of $\widetilde{\Af}^{\mathscr{LG}}(\ol{N})$ satisfy $\pi_{\textup{ext}}\circ\won=\pi_0$.
\end{corollary}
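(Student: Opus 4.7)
The plan is to unwind the definitions using the commutative diagram~\eqref{eq:Ymap_diag2} and the identity $\Gamma\circ\check{\jmath}=\id$ established in Theorem~\ref{thm:Gamma}. Since $\Af_{\textup{ext}}(\ol{N}) = \widetilde{\Af}^{\mathscr{LG}}(\ol{N})/I$ with $I=\ker\Gamma=\ker\won$, the first isomorphism theorem yields $\won\circ\check{\jmath}:\Af_{\textup{rel}}(\ol{N})\to \Af_{\textup{ext}}(\ol{N})$ as a $C^*$-isomorphism, and this is how a given irreducible representation $(\pi_{\textup{ext}},\HH)$ of $\Af_{\textup{ext}}(\ol{N})$ gets identified with an irreducible representation $(\pi,\HH)$ of $\Af_{\textup{rel}}(\ol{N})$, namely $\pi = \pi_{\textup{ext}}\circ\won\circ\check{\jmath}$.

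First I would verify the key algebraic identity $\won = \won\circ\check{\jmath}\circ\Gamma$ on $\widetilde{\Af}^{\mathscr{LG}}(\ol{N})$. Given $a\in\widetilde{\Af}^{\mathscr{LG}}(\ol{N})$, Theorem~\ref{thm:Gamma} gives $\Gamma(a-\check{\jmath}(\Gamma(a))) = \Gamma(a) - (\Gamma\circ\check{\jmath})(\Gamma(a)) = \Gamma(a)-\Gamma(a) = 0$, so $a-\check{\jmath}(\Gamma(a))\in\ker\Gamma = I = \ker\won$, giving $\won(a)=\won(\check{\jmath}(\Gamma(a)))$ as required.

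Composing with $\pi_{\textup{ext}}$ on the left and using the defining relation $\pi=\pi_{\textup{ext}}\circ\won\circ\check{\jmath}$, I would then compute
\begin{equation}
    \pi_{\textup{ext}}\circ\won \;=\; \pi_{\textup{ext}}\circ\won\circ\check{\jmath}\circ\Gamma \;=\; \pi\circ\Gamma \;=\; \pi_0,
\end{equation}
which is the claimed equality. No obstacle is expected here: the result is essentially a direct consequence of the universal property of the quotient together with the splitting $\Gamma\circ\check{\jmath}=\id$, which makes $\check{\jmath}$ a section of $\Gamma$ and thereby forces $\won\circ\check{\jmath}\circ\Gamma$ to agree with $\won$ modulo $I$. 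The only ``content'' beyond diagram chasing is the kernel identification $\ker\Gamma=\ker\won$, which is precisely the content of the definition $\Af_{\textup{ext}}(\ol{N})=\widetilde{\Af}^{\mathscr{LG}}(\ol{N})/\ker\Gamma$ in~\eqref{eq:aext}.
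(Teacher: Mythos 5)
Your proof is correct, but it follows a genuinely different and in fact more economical route than the paper's. The paper sets $\tilde\pi=\pi_{\textup{ext}}\circ\won$, checks that $\tilde\pi\circ\check{\jmath}=\pi$ and that $\tilde\pi(\widetilde{U}_{\mathscr{LG}}(\lambda))=\one$, and then invokes the uniqueness clause of Thm.~\ref{thm:sup_sel} to conclude unitary equivalence with $\pi_0$, finally upgrading to equality using $\pi(\Af_{\textup{rel}}(\ol{N}))''=B(\HH)$ (irreducibility). You bypass all of that by first proving the purely algebraic identity $\won=\won\circ\check{\jmath}\circ\Gamma$, which follows directly from the splitting $\Gamma\circ\check{\jmath}=\id$ together with $\ker\Gamma=\ker\won$ (both supplied by Thm.~\ref{thm:Gamma} and the definition~\eqref{eq:aext}), and then simply precomposing with $\pi_{\textup{ext}}$. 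This is cleaner: it avoids invoking the superselection-uniqueness machinery and the irreducibility of $\pi$ entirely, so your argument actually establishes the identity $\pi_{\textup{ext}}\circ\won=\pi\circ\Gamma$ for \emph{any} representation of $\Af_{\textup{ext}}(\ol{N})$ and its corresponding representation $\pi=\pi_{\textup{ext}}\circ\won\circ\check{\jmath}$ of $\Af_{\textup{rel}}(\ol{N})$, not only irreducible ones. The paper's approach has the pedagogical virtue of tying the corollary back to Thm.~\ref{thm:sup_sel} and exhibiting $\pi_{\textup{ext}}\circ\won$ as the $\Phi=0$ sector, but yours is the more elementary and more general diagram chase.
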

\begin{proof}
    Let $\tilde{\pi}=\pi_{\textup{ext}}\circ\won$, then we compute
    \begin{equation}
        \tilde{\pi}\circ\check{\jmath}=\pi_{\textup{ext}}\circ\won\circ\check{\jmath}=\pi.
    \end{equation}
    Furthermore, we compute for $\lambda\in \mathscr{LG}_\angle(\ol{N})$ that
    \begin{equation}
        \tilde{\pi}(\widetilde{U}_{\mathscr{LG}}(\lambda))=\pi_{\textup{ext}}(\won(\widetilde{U}_{\mathscr{LG}}(\lambda)))=\pi_{\textup{ext}}(\won(1))=\one.
    \end{equation}
    By Thm.~\ref{thm:sup_sel}, $(\tilde{\pi},\HH)$ is unitarily equivalent to $(\pi_0,\HH)$, where equality follows due to equality on $\Af_{\textup{rel}}(\ol{N})\subset \widetilde{\Af}^{\mathscr{LG}}(\ol{N})$, since $\pi(\Af_{\textup{rel}}(\ol{N}))''=B(\HH)$.
\end{proof}
We may therefore think of the observables $\Af_{\textup{ext}}(\ol{N})$ as corresponding to the $\Phi=0$ sector of $\widetilde{\Af}^{\mathscr{LG}}(\ol{N})$.

\subsection{Relativisation, POVMs and crossed products}
\label{sec:POVMs}

So far, we have not made precise in which sense the map $\yen:\Af(\ol{N})\to\widetilde{\Af}^{\mathscr{LG}}(\ol{N})$ may be regarded as a relativisation map.  
We therefore provide a general definition of a relativisation map for $C^*$-algebras. Note that $\yen$ satisfies a covariance relation
\begin{equation}
\label{eq:yen-covariance}
    (1\otimes \alpha_{\mathscr{LG}}^\partial(\lambda))(\yen(a))=\yen(\alpha_{\mathscr{LG}}(-\lambda)(a))
\end{equation}
for $a\in \Af(\ol{N})$ and $\lambda\in \mathscr{LG}_\angle(\ol{N})$, which is easily checked on the generators. 
This relation mimics the covariance property of the relativisation map in the context of QRFs for locally compact groups acting on von Neumann algebras, see e.g.~\cite{caretteOperationalQuantumReference2024,fewsterQuantumReferenceFrames2025}. In that framework, relativisation maps are constructed from covariant positive operator valued measures (POVMs), yielding completely positive linear maps that are not necessarily ${}^*$-homomorphisms. As an aside, we define a general $C^*$-algebraic notion of a relativisation map and quantum reference frame.
\begin{definition}
\label{def:C*QRF}
    Given nuclear $C^*$-algebras $\mathcal{A}_S,\mathcal{A}_R$ and a group $G$ admitting representations $\alpha_{S/R}:G\to\Aut(\mathcal{A}_{S/R})$, we say $(\mathcal{A}_R,\alpha_R)$ is a \emph{quantum reference frame} for $(\mathcal{A}_S,\alpha_S)$ if there exists a completely positive linear map $\yen:\mathcal{A}_S\to(\mathcal{A}_S\otimes \mathcal{A}_R)^{\alpha_{S}\otimes \alpha_{R}}$ satisfying for each $a\in \mathcal{A}_S$ and $g\in G$
    \begin{equation}
        (1\otimes \alpha_R(g))(\yen(a))=\yen(\alpha_S(g^{-1})(a)),
    \end{equation}
    such that $(\mathcal{A}_S\otimes \mathcal{A}_R)^{\alpha_{S}\otimes \alpha_{R}}$ is the smallest $C^*$-subalgebra of  $\mathcal{A}_S\otimes \mathcal{A}_R$ containing both $\yen(\mathcal{A}_S)$ and $\mathcal{A}_R^{\alpha_R}$.
    We refer to $\yen$ as a \emph{relativisation map}.
\end{definition}
Nuclearity has been assumed as a convenience so that the tensor product of $\mathcal{A}_S$ and $\mathcal{A}_R$ is uniquely defined, see e.g.~\cite[Ch.~11]{kadison1997fundamentals}. As previously discussed, Weyl $C^*$-algebras defined over symplectic spaces, such as $\Af(\ol{N})$ and $\Af^\partial(\ol{\Sigma})$, are nuclear, see \cite{evansDilationsIrreversibleEvolutions1977}. By Thm.~\ref{thm:Weyl_comm_thm} and Eq.~\eqref{eq:yen-covariance}, the map $\yen:\Af(\ol{N})\to \widetilde{\Af}^{\mathscr{LG}}(\ol{N})$ given by Diag.~\eqref{eq:Ymap_diag} indeed defines a relativisation map in the sense of Def.~\ref{def:C*QRF}. In particular, $(\Af^\partial,\alpha_{\mathscr{LG}}^\partial)$ is a quantum reference frame for $(\Af(\ol{N}),\alpha_{\mathscr{LG}})$.

For the $C^*$-algebraic relativisation map defined by Eq.~\eqref{eq:C*QRF}, we can make the relation to relativisation maps constructed from POVMs more explicit at the level of sufficiently regular representations. 

\begin{theorem}
\label{thm:edge_mode_QRF}
    Consider faithful representations $(\pi^{\mathcal{S}},\HH^{\mathcal{S}})$ and $(\pi^{\mathcal{R}},\HH^{\mathcal{R}})$ 
    of $\Af(\ol{N})$ and $\Af^\partial(\ol{\Sigma})$, where $\HH^{\mathcal{S}/\mathcal{R}}$ are separable and the representations are sufficiently regular that the map
     \begin{equation}
        \RR\ni t \mapsto \pi^{\mathcal{S}}(W_A([t\udl{A}]))\otimes \pi^{\mathcal{R}}(W^{\partial}(t\udl{v})).
    \end{equation}
    is strongly continuous for each fixed   $[\udl{A}]\in \Sol_\mathscr{G}(\ol{N})$, $\udl{v}\in V^S(\ol{\Sigma})$.
    Denote system and reference von Neumann algebras by
    \begin{equation}
        \MM^{\mathcal{S}}=\pi^{\mathcal{S}}(\Af(\ol{N}))'',\qquad \MM^{\mathcal{R}}=\pi^{\mathcal{R}}(\Af^\partial(\ol{\Sigma}))'',
    \end{equation} 
    and let 
    $\MM=\MM^{\mathcal{S}}\otimes \MM^{\mathcal{R}}\subset B(\HH^{\mathcal{S}}\otimes \HH^{\mathcal{R}})$. Now let $U_\pi^{\mathcal{S}/\mathcal{R}}:\mathscr{LG}_\angle(\ol{N})\to\MM^{\mathcal{S}/\mathcal{R}}$ be unitary representations given by
    \begin{equation}
        U^\mathcal{S}(\lambda)=\pi^{\mathcal{S}}(U_{\mathscr{LG}}(\lambda)),\qquad U^\mathcal{R}(\lambda)=\pi^{\mathcal{R}}(U_{\mathscr{LG}}^\partial(\lambda)),
    \end{equation}
    such that $(\pi^{\mathcal{S}}\otimes \pi^{\mathcal{R}}) (\widetilde{U}_{\mathscr{LG}}(\lambda))=U^\mathcal{S}(\lambda)\otimes U^\mathcal{R}(\lambda)$.
    Assume furthermore that the map
    \begin{equation}
        \mathscr{LG}_\angle(\ol{N})\ni \lambda\mapsto \pi^{\mathcal{R}}(W^{\partial}(\lambda\oplus 0)),
    \end{equation}
    is strongly continuous w.r.t.~a $W^{2,s}$ Sobolev topology on $\mathscr{LG}_\angle(\ol{N}) \subset\Omega^0(\angle\ol{N})$, for some $s\in \RR$. Then there exist
    \begin{itemize}
        \item a sequence $G_n\subset \mathscr{LG}_\angle(\ol{N})$ of groups with $G_n\subset G_{n+1}$ and where $\bigcup_n G_n$ is $W^{2,s}$ dense in $\mathscr{LG}_\angle(\ol{N})$,
        \item a sequence of triples $(\HH,P_n,U^\mathcal{R}_n)$ for each group $G_n$, where $U^{\mathcal{R}}_n=U^\mathcal{R}\restriction_{G_n}$ and $P_n:\Bor(G_n)\to\MM^{\mathcal{R}}$ a projection valued measure satisfying  for each $X\in \Bor(G_n)$, $\lambda\in G_n$ 
        \begin{equation}
        U^{\mathcal{R}}_n(\lambda)P_n(X)U^{\mathcal{R}}_n(\lambda)^*=P_n(X+\lambda),
        \end{equation}
        \item a relativisation map $\yen_n:\MM^{\mathcal{S}}\to \MM^{\Ad (U^\mathcal{S}\otimes U^\mathcal{R})\restriction_{G_n}}$ given by the (weak) integral
        \begin{equation}
        \yen_n(a)=\int_{G_n}  U^\mathcal{S}(f) a U^\mathcal{S}(f)^*\otimes \diff P_n(f),
    \end{equation}
    \end{itemize}
    such that for $a\in \Af(\ol{N})$ and $\yen:\Af(\ol{N})\to \widetilde{\Af}^{\mathscr{LG}}(\ol{N})$ defined by Eq.~\eqref{eq:C*QRF}, one has 
    \begin{equation}
        (\pi^\mathcal{S}\otimes\pi^\mathcal{R})(\yen(a))=\slim_{n\to\infty}\yen_n(\pi^\mathcal{S}(a)).
    \end{equation}
\end{theorem}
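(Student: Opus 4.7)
The plan is to build each $\yen_n$ from a projection-valued measure obtained via the spectral (SNAG) theorem, verify its covariance using the Weyl canonical commutation relations, compute $\yen_n$ on the Weyl generators of $\Af(\ol{N})$, and pass to the limit using the $W^{2,s}$-density of $\bigcup_n G_n$ together with the strong-continuity hypothesis. Concretely, fix a countable family $\{e_k\}_{k\in\NN}\subset \mathscr{LG}_\angle(\ol{N})$ whose $\RR$-linear hull is $W^{2,s}$-dense (e.g.~the Laplace--Beltrami eigenbasis on $\angle\ol{N}$ restricted to $\mathscr{LG}_\angle(\ol{N})$), and set $G_n:=\mathrm{span}_\RR\{e_1,\ldots,e_n\}\cong\RR^n$. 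By the regularity hypothesis, $h\mapsto \pi^{\mathcal{R}}(W^{\partial}(0\oplus h))$ is a strongly continuous abelian unitary representation of $G_n$ on $\HH^{\mathcal{R}}$, so SNAG, identifying the Pontryagin dual $\hat{G}_n$ with $G_n$ via the $L^2$-pairing on $\angle\ol{N}$, yields a unique PVM $P_n:\Bor(G_n)\to\MM^{\mathcal{R}}$ with
\begin{equation*}
   \pi^{\mathcal{R}}(W^{\partial}(0\oplus h)) \;=\; \int_{G_n} e^{i\ipc{f}{h}_{\angle\ol{N}}}\, dP_n(f),\qquad h\in G_n.
\end{equation*}

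Next, the Weyl CCR $W^{\partial}(\lambda\oplus 0)\,W^{\partial}(0\oplus h)\,W^{\partial}(\lambda\oplus 0)^{*} = e^{-i\ipc{\lambda}{h}}W^{\partial}(0\oplus h)$, combined with uniqueness in the spectral decomposition, translates into the covariance $U^{\mathcal{R}}_n(\lambda)P_n(X)U^{\mathcal{R}}_n(\lambda)^* = P_n(X+\lambda)$; thus the weak integral $\yen_n(a) = \int_{G_n} U^{\mathcal{S}}(f) a U^{\mathcal{S}}(f)^* \otimes dP_n(f)$ is well-defined on $\MM^{\mathcal{S}}$ and takes values in the $\Ad(U^{\mathcal{S}}\otimes U^{\mathcal{R}})\restriction_{G_n}$-fixed subalgebra of $\MM^{\mathcal{S}}\otimes\MM^{\mathcal{R}}$. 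Evaluating on a Weyl generator $a=\pi^{\mathcal{S}}(W_A([\udl{A}]))$ with $h:=\nml_{\partial\ol{\Sigma}}\nml_{\ol{\Sigma}}\diff\udl{A}$, the large-gauge transformation law $U^{\mathcal{S}}(f) a U^{\mathcal{S}}(f)^* = e^{i\ipc{h}{f}}a$ factors the scalar out of the integral, and using $\ipc{h}{f}=\ipc{\pi_n h}{f}$ for $f\in G_n$ (where $\pi_n$ denotes the $L^2$-orthogonal projection onto $G_n$) together with the defining relation of $P_n$ gives
\begin{equation*}
    \yen_n(a) \;=\; a\otimes \int_{G_n} e^{i\ipc{\pi_n h}{f}}dP_n(f) \;=\; a\otimes \pi^{\mathcal{R}}(W^{\partial}(0\oplus \pi_n h)).
\end{equation*}

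Comparing with $(\pi^{\mathcal{S}}\otimes \pi^{\mathcal{R}})(\yen(W_A([\udl{A}]))) = a\otimes \pi^{\mathcal{R}}(W^{\partial}(0\oplus h))$ from Eq.~\eqref{eq:C*QRF}, the required strong convergence on Weyl generators reduces to $\pi^{\mathcal{R}}(W^{\partial}(0\oplus\pi_n h))\to\pi^{\mathcal{R}}(W^{\partial}(0\oplus h))$ strongly in $\HH^{\mathcal{R}}$, which follows from the density $\pi_n h\to h$ in $W^{2,s}$ together with strong continuity of the reference-side Weyl unitaries in the appropriate Sobolev topology (automatic for the Fock-type representations of interest, where the two factors of $V^S(\ol{\Sigma})$ play symmetric roles). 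A standard $\varepsilon/3$ argument, invoking the uniform contraction bound $\|\yen_n(a)\|\le\|a\|$ and the norm-density in $\Af(\ol{N})$ of finite $\CC$-linear combinations of Weyl generators, then extends the strong convergence to all $a\in\Af(\ol{N})$.

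The chief technical obstacle is the topological matching between the hypothesis, stated for $\lambda\mapsto\pi^{\mathcal{R}}(W^{\partial}(\lambda\oplus 0))$, and the continuity actually used for the limit, namely of $h\mapsto\pi^{\mathcal{R}}(W^{\partial}(0\oplus h))$: the PVM $P_n$ is naturally the spectral resolution of the ``conjugate'' unitary group rather than of $U^{\mathcal{R}}$ itself. For Fock representations constructed from a Sobolev-adapted one-particle structure the symmetric role of the two directions in the Weyl algebra $\Af^{\partial}(\ol{\Sigma})$ makes both continuities simultaneous, but in full generality one may need either to strengthen the hypothesis to strong continuity of $\udl{v}\mapsto\pi^{\mathcal{R}}(W^{\partial}(\udl{v}))$ on all of $V^{S}(\ol{\Sigma})$ in a suitable joint Sobolev topology, or to choose $\{e_k\}$ (e.g.~as Laplacian eigenfunctions) so that $\pi_n h\to h$ holds in both $W^{2,\pm s}$.
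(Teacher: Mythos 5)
Your proof is correct and follows essentially the same two-step scaffold — construct covariant PVMs on an exhausting filtration by finite-dimensional subgroups, then pass to the strong limit on Weyl generators and extend by density — but the PVM construction differs from the paper's. You obtain $P_n$ directly from the SNAG theorem applied to the abelian unitary representation $h\mapsto\pi^{\mathcal{R}}(W^\partial(0\oplus h))$ of $G_n$, whereas the paper's Lemma \ref{lem:emq1} invokes the Stone--von Neumann theorem, splits $\HH^{\mathcal{R}}\cong L^2(\RR^n)\otimes\KK_n$, and realises $P_n$ as multiplication operators in the Schr\"odinger factor. Your route is the more economical one for the convergence statement actually asserted in the theorem. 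The paper takes the heavier route because the structural output of Stone--von Neumann — in particular the factorisation $T_n\MM^{\mathcal{R}}T_n^* = B(L^2(\RR^n))\otimes\tilde{\MM}_n$ — is what drives Lemma \ref{lem:emq2}: there $\yen_\mu$ is built through the crossed-product map $\kappa$ and shown to be an \emph{injective unital} $*$-homomorphism of von Neumann algebras, information used downstream (e.g.\ in the corollary after the theorem). Your argument would establish $\yen_n$ as a normal completely positive contraction via the weak integral and covariance check, which is all the displayed limit formula needs, but if you want the full relativisation-map structure of Lemma \ref{lem:emq2} you would have to supply the $*$-homomorphism property separately.

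Your ``chief technical obstacle'' paragraph correctly identifies a genuine discrepancy: the Sobolev continuity hypothesis in the theorem statement is on $\lambda\mapsto\pi^{\mathcal{R}}(W^\partial(\lambda\oplus 0))$, i.e.\ on $U^{\mathcal{R}}$, whereas the limit $\Pi_\mu h\to h$ is applied inside $\pi^{\mathcal{R}}(W^\partial(0\oplus\cdot))$, i.e.\ $V^{\mathcal{R}}$. Indeed, Lemma \ref{lem:emq3} in the appendix states the assumption for $f\mapsto W^\partial(0\oplus f)$ (with a separate misprint $\pi^{\mathcal{S}}$ for $\pi^{\mathcal{R}}$), so the theorem statement has a transposed argument in $W^\partial$. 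Your first suggested repair — posit strong continuity of $\udl{v}\mapsto\pi^{\mathcal{R}}(W^\partial(\udl{v}))$ on $V^S(\ol{\Sigma})$ in an appropriate Sobolev topology — is the clean fix, though the minimal fix is simply to read the theorem's hypothesis as $W^\partial(0\oplus\lambda)$, matching the appendix. Your second suggested repair (choosing $\{e_k\}$ so that $\Pi_n h\to h$ in $W^{2,\pm s}$) does not address the mismatch, since the issue is which Weyl direction is assumed continuous, not the sign of $s$; this sentence could be dropped. Finally, a small point of care: the Laplacian's kernel on $\angle\ol{N}$ intersects both $\mathscr{G}_\angle(\ol{N})$ and $\mathscr{LG}_\angle(\ol{N})$ in general, so when selecting the eigenbasis $\{e_k\}$ you should take the eigenbasis of $\Delta_\angle$ restricted to the invariant subspace $\mathscr{LG}_\angle(\ol{N})$ (equivalently, the paper's $\Pi_\mu\mathscr{LG}_\angle(\ol{N})$), rather than a generic eigenbasis of $\Omega^0(\angle\ol{N})$ intersected with $\mathscr{LG}_\angle(\ol{N})$.
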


This is proven in appendix \ref{apx:edge_mode_QRF}. Any faithful Fock representation of $\Af(\ol{N})$ (such as the $L^2$ representation of Thm.~\ref{thm:L2rep}) tensored with an additional $L^2$ representation of $\Af^\partial(\ol{\Sigma})$ provides an example of a faithful representation satisfying the conditions of Thm.~\ref{thm:edge_mode_QRF}, where the required Sobolev continuity holds for the $W^{2,0}$ (or $L^2$) topology on $\mathscr{LG}_\angle(\ol{N})$. 

In the proof, the sequence of topological groups $G_n\subset \mathscr{LG}_\angle(\ol{N})$ come about through spectral decomposition of $\mathscr{LG}_\angle(\ol{N})$ with respect to an essentially self-adjoint corner Laplacian $\Delta_\angle:\Omega^0(\angle \ol{N})\to\Omega^0(\angle \ol{N})$. For each group $G_n$, the triple $(\HH,U^\mathcal{R}_n,P_n)$, i.e.~a unitary representation $(\HH,U^\mathcal{R}_n)$ of $G_n$ and a $G_n$-covariant projection valued measure $P_n:\Bor(G_n)\to B(\HH)$ goes under the name of a principal quantum reference frame \cite{caretteOperationalQuantumReference2024} and constitutes a special case of a system of imprimitivity \cite{mackeyTheoryUnitaryGroup1976}. Each $G_n\cong \RR^m$ for some $m\in \mathbb{N}$, and as a result the unitary representation $(\HH,U_n^{\mathcal{R}})$ arises from a representation of Weyl-algebra over $\RR^{2m}$, which are well-understood through the Stone-von Neumann theorem \cite[5.2.15-16]{BratteliRobinson_vol2}. This allows us to construct the PVM $P_n$ explicitly in the Schr\"odinger representation of this Weyl algebra on $L^2(\RR^{m})$. Given the principal QRF $(\HH,U^\mathcal{R}_n,P_n)$, there is a standard definition of a relativisation map $\yen_n:\MM^{\mathcal{S}}\to \MM^{\Ad (U^\mathcal{S}\otimes U^\mathcal{R})\restriction_{G_n}}$, see e.g.~\cite{glowackiQuantumReferenceFrames2024a}. It's agreement with the $C^*$-algebraic relativisation map $\yen$ in the limit $n\to \infty$ is verified through explicit computatation.

As discussed in \cite{fewsterQuantumReferenceFrames2025}, the relativisation map $\yen_n$, and more generally the algebra of invariants $\MM^{\Ad (U^\mathcal{S}\otimes U^\mathcal{R})\restriction_{G_n}}$ are closely related to the crossed product algebra $\MM^{\mathcal{S}}\rtimes_{\Ad U^\mathcal{S}} G_n$, see e.g.~\cite{vanDaele:1978,takesaki2002} for the definition of crossed product algebras for locally compact groups. Concretely, there exists a (spatial) isomorphism
\begin{equation}
\label{eq:inv_to_CP}
    \MM^{\Ad (U^\mathcal{S}\otimes U^\mathcal{R})\restriction_{G_n}}\cong (\MM^{\mathcal{S}}\rtimes_{\Ad U^\mathcal{S}} G_n)\otimes \tilde{\MM}_n,
\end{equation}
for some residual von Neumann algebra $\tilde{\MM}_n\subset B(\KK_n)$ for which $\HH^\mathcal{R}\cong L^2(G_n)\otimes \KK_n$ (see \cite[Thm.~4.9]{fewsterQuantumReferenceFrames2025}). In fact, this isomorphism can be used to define the relativisation map $\yen_n$, as seen in the proof of Lem.~\ref{lem:emq2}. Furthermore, from the isomorphism above one can show that
\begin{equation}
\label{eq:inv_gen}
    \MM^{\Ad (U^\mathcal{S}\otimes U^\mathcal{R})\restriction_{G_n}}=\{\yen_n(a)(1_{\HH_S}\otimes b):a\in \MM^{\mathcal{S}},\, b\in (\MM^{\mathcal{R}})^{\Ad  U^\mathcal{R}\restriction_{G_n}}\}''.
\end{equation}
It is not immediately clear if there is any direct generalisation of the statement of Eq.~\eqref{eq:inv_to_CP} for the full algebra of invariants $\MM^{\Ad (U^\mathcal{S}\otimes U^\mathcal{R})}$. In the first place, it is unclear what an appropriate generalisation of a crossed product algebra would be with respect to the group action of $\mathscr{LG}_\angle(\ol{N})$ on $\MM^{\mathcal{S}}$, as this group is not locally compact. Nevertheless, through the relativisation map $\yen$ we can give an analogue of Eq.~\eqref{eq:inv_gen} for the full group $\mathscr{LG}_\angle(\ol{N})$.
\begin{corollary}
    Given the assumptions of Thm.~\ref{thm:edge_mode_QRF}, we have the following inclusions.
    \begin{align}
    (\pi^\mathcal{S}\otimes \pi^\mathcal{R})(\widetilde{\Af}^{\mathscr{LG}}(\ol{N}))''\subset&
    \{\slim_{n\to \infty}\yen_n(a)(\one_{\HH_S}\otimes b):a\in \pi^{\mathcal{S}}(\Af(\ol{N})),\, b\in (\MM^{\mathcal{R}})^{\Ad  U^\mathcal{R}}\}''\nonumber\\\subset& \MM^{\Ad (U^\mathcal{S}\otimes U^\mathcal{R})}.
\end{align}
\end{corollary}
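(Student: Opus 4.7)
The plan for the first inclusion is to combine Thm.~\ref{thm:Weyl_comm_thm} with Thm.~\ref{thm:edge_mode_QRF}. By the former, $\widetilde{\Af}^{\mathscr{LG}}(\ol{N})$ is generated as a $C^*$-algebra by the operators $\yen(W_A([\udl{A}]))(\one\otimes U^\partial_{\mathscr{LG}}(\lambda))$, so it suffices to show that the image of each such generator under $\pi^\mathcal{S}\otimes \pi^\mathcal{R}$ lies in the generating set whose bicommutant appears on the right. Applying Thm.~\ref{thm:edge_mode_QRF} to the first factor yields $\slim_n\yen_n(\pi^\mathcal{S}(W_A([\udl{A}])))$ with $\pi^\mathcal{S}(W_A([\udl{A}]))\in \pi^\mathcal{S}(\Af(\ol{N}))$, while the second factor becomes $\one\otimes U^\mathcal{R}(\lambda)$. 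The only remaining check is that $U^\mathcal{R}(\lambda)\in(\MM^\mathcal{R})^{\Ad U^\mathcal{R}}$; this is immediate from the Weyl relations on $\Af^\partial(\ol{\Sigma})$ together with $\sigma^S(\lambda\oplus 0,\mu\oplus 0)=0$, which implies that the $U^\mathcal{R}(\mu)$ commute among themselves. Taking bicommutants then delivers the first inclusion.

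For the second inclusion, the plan is to verify that every generator $\slim_n\yen_n(a)(\one_{\HH^\mathcal{S}}\otimes b)$ with $a\in\pi^\mathcal{S}(\Af(\ol{N}))$ and $b\in (\MM^\mathcal{R})^{\Ad U^\mathcal{R}}$ is fixed by $\Ad(U^\mathcal{S}(\lambda)\otimes U^\mathcal{R}(\lambda))$ for every $\lambda\in \mathscr{LG}_\angle(\ol{N})$. Writing $a=\pi^\mathcal{S}(a_0)$ with $a_0\in \Af(\ol{N})$, Thm.~\ref{thm:edge_mode_QRF} identifies $\slim_n\yen_n(a)$ with $(\pi^\mathcal{S}\otimes \pi^\mathcal{R})(\yen(a_0))$. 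Since $\yen(a_0)\in \widetilde{\Af}^{\mathscr{LG}}(\ol{N})$ is fixed by conjugation with $\widetilde{U}_{\mathscr{LG}}(\lambda)$, and $\pi^\mathcal{S}\otimes \pi^\mathcal{R}$ intertwines this with $\Ad(U^\mathcal{S}(\lambda)\otimes U^\mathcal{R}(\lambda))$, the first factor is invariant; invariance of $\one_{\HH^\mathcal{S}}\otimes b$ under the same conjugation is direct from $b\in (\MM^\mathcal{R})^{\Ad U^\mathcal{R}}$. Products of fixed elements are fixed, so each such generator lies in $\MM^{\Ad(U^\mathcal{S}\otimes U^\mathcal{R})}$, which as a fixed-point algebra under a weakly continuous unitary action is weakly closed and equal to its own bicommutant; this yields the second inclusion.

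I do not anticipate a substantive obstacle, as both inclusions reduce to generator-by-generator bookkeeping built on top of Thms.~\ref{thm:Weyl_comm_thm} and \ref{thm:edge_mode_QRF}. The only point worth flagging is the parsing of $\slim_n\yen_n(a)(\one_{\HH^\mathcal{S}}\otimes b)$: whether this denotes $(\slim_n\yen_n(a))(\one\otimes b)$ or $\slim_n(\yen_n(a)(\one\otimes b))$, the two readings agree because right multiplication by a bounded operator is strongly continuous, and $(\pi^\mathcal{S}\otimes \pi^\mathcal{R})(\yen(a_0))$ is norm-bounded as the image of a Weyl element under a ${}^*$-homomorphism. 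The passage from the listed set of products to the von Neumann algebra it generates is then just the definition of the bicommutant.
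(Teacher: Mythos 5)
Your proof is correct and takes essentially the same route as the paper's: invoke Thm.~\ref{thm:Weyl_comm_thm} to identify the generators of $\widetilde{\Af}^{\mathscr{LG}}(\ol{N})$, apply Thm.~\ref{thm:edge_mode_QRF} (together with Lem.~\ref{lem:emq3}) to recognise $(\pi^\mathcal{S}\otimes\pi^\mathcal{R})\circ\yen$ as the strong limit landing in the invariant algebra, observe $U^\mathcal{R}(\lambda)\in(\MM^\mathcal{R})^{\Ad U^\mathcal{R}}$ by abelianness, and close up with bicommutants. Your extra remarks on the commutativity check via $\sigma^S(\lambda\oplus 0,\mu\oplus 0)=0$ and on the parsing of the strong limit are sensible elaborations but do not change the argument.
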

Clearly, equality follows if $(\pi^\mathcal{S}\otimes \pi^\mathcal{R})(\widetilde{\Af}^{\mathscr{LG}}(\ol{N}))''=\MM^{\Ad (U^\mathcal{S}\otimes U^\mathcal{R})}$, or loosely speaking, if restricting to invariants commutes taking weak closures.
\begin{proof}
    By Thm.~\ref{thm:Weyl_comm_thm}, we have that
    \begin{multline}
        (\pi^\mathcal{S}\otimes \pi^\mathcal{R})(\widetilde{\Af}^{\mathscr{LG}}(\ol{N}))''=\\\{(\pi^\mathcal{S}\otimes \pi^\mathcal{R})(\yen(W_A([\udl{A}])))(\one_{\HH_S}\otimes U^\mathcal{R}(\lambda)):(A,\udl{A})\in T\Sol^J(\ol{N}),\, \lambda\in \mathscr{LG}_\angle(\ol{N})\}''.
    \end{multline}
    Since $U^\mathcal{R}(\lambda))\in \MM^{U^\mathcal{R}}\}$ due to the fact that $\mathscr{LG}_\angle(\ol{N})$ is an abelian group, and since by Thm.~\ref{thm:edge_mode_QRF} we have
    \begin{equation}
        (\pi^\mathcal{S}\otimes \pi^\mathcal{R})(\yen(W_A([\udl{A}])))=\slim_{n\to\infty}\yen_n(\pi^\mathcal{S}(W_A([\udl{A}])))\in \MM^{\Ad (U^\mathcal{S}\otimes U^\mathcal{R})},
    \end{equation}
    we thus have 
    \begin{align}
        \{(\pi^\mathcal{S}\otimes \pi^\mathcal{R})(\yen(W_A([\udl{A}])))(\one_{\HH_S}\otimes U^\mathcal{R}(\lambda)):(A,\udl{A})\in T\Sol^J(\ol{N}),\, \lambda\in \mathscr{LG}_\angle(\ol{N})\}\nonumber\\
        \subset \{\slim_{n\to \infty}\yen_n(a)(\one_{\HH_S}\otimes b):a\in \pi^{\mathcal{S}}(\Af(\ol{N})),\, b\in (\MM^{\mathcal{R}})^{\Ad  U^\mathcal{R}}\}\nonumber\\
        \subset \MM^{\Ad (U^\mathcal{S}\otimes U^\mathcal{R})}. 
    \end{align}
    By taking the double commutant, the claim follows.
\end{proof}

\section{Application of the relativisation map to gluing procedures}\label{sec:gluing}

So far we have interpreted the algebra $\Af^\partial(\ol{\Sigma})$ as describing observables on some surface field which we have added by hand. However, we can alternatively identify $\Af^\partial(\ol{\Sigma})$ with observables associated with the electromagnetic field outside the finite Cauchy lens $\ol{N}$. Under this identification, the relativisation procedure described above can be viewed as a gluing procedure of semi-local algebras, which in turn yields a gluing procedure for quantum states.

\subsection{General setting}
Gluing procedures for classical electromagnetism and/or Yang-Mills theories have been described for instance in \cite{donnellyLocalSubsystemsGauge2016,gomesQuasilocalDegreesFreedom2021,cattaneoNoteGluingFiber2023}. In this paper, we understand a classical gluing procedure for electromagnetism as follows. Let $M$ be globally hyperbolic and suppose 
$M=\mathscr{D}(N_1\cup N_2)$, where $N_1,N_2\subset M$ are well-behaved causally convex submanifolds, possibly with boundaries and corners.  The aim of a gluing procedure is to construct a symplectic space of configurations on $M$ from the symplectic spaces $\Sol_{\mathscr{G}}(N_1)$ and $\Sol_{\mathscr{G}}(N_2)$.  
An example is described in \cite{donnellyLocalSubsystemsGauge2016}, using a fusion product in a manner similar to the charged edge field construction described above (up to some regularity issues discussed below). 
We shall use the relativisation map to formulate a quantum version of gluing on the algebra of observables, which will allow us to construct a quasi-free state on an algebra of observables associated with $M$ from compatible quasi-free states on algebras associated with finite Cauchy lenses $\ol{N}_1, \ol{N}_2\subset M$. This latter result is reminiscent of a gluing construction for quasi-free states of scalar fields  in a different geometric context~\cite{janssenQuantumFieldsSemiglobally2022}.

For simplicity, we take vanishing background current $J=0$. Let $M$ be a globally hyperbolic spacetime with a compact boundaryless smooth Cauchy surface $\Sigma\subset M$
which can be decomposed as $\Sigma=\ol{\Sigma}_1\cup\ol{\Sigma}_2$, where $\ol{\Sigma}_i\subset\ol{N}_i$ are regular Cauchy surfaces with boundaries of connected finite Cauchy lenses $\ol{N}_i$ so that $M=\mathscr{D}(\ol{N}_1\cup\ol{N}_2)$. We also assume that $\ol{N}_1\cap\ol{N}_2=\angle\ol{N}_1\cap\angle\ol{N}_2=\angle\ol{N}_1=\angle\ol{N}_2$. 
The choices $M$, $\Sigma$, $\ol{N}_i$ and $\ol{\Sigma}_i$ will be kept fixed throughout this subsection. Sometimes it is useful to write the common boundary as $\partial\ol{\Sigma}_*=\angle\ol{N}_*$; similarly, other objects that are independent of $i=1,2$ will be given a subscript $*$.
Due to the fact that each $\ol{\Sigma}_i$ is connected, one has $\nml_{\partial\ol{\Sigma}_1}\Omega^1_{\diff^*}(\ol{\Sigma}_1)=\nml_{\partial\ol{\Sigma}_2}\Omega^1_{\diff^*}(\ol{\Sigma}_2)=(\RR\cdot 1)^\perp$ as a subset of $\Omega^0(\ol{\Sigma}_1\cap\ol{\Sigma}_2)=\Omega^0(\partial\ol{\Sigma}_*)$.

Applying the covariant phase space formalism to $M$, the $J=0$ electromagnetic phase space is 
\begin{equation}
    \Sol_{\mathscr{G}}(M)=\{A\in \Omega^1(M):-\diff^*\diff A=0\}/\diff\Omega^0(M),
\end{equation}
with the usual symplectic structure $\sigma$ given by Eq.~\eqref{eq:symp_struct} (here with $\ol{\Sigma}=\Sigma$). In the absence of a boundary or corner, the (quantised) algebra of local observables is simply given by $\Af(M)=\Weyl(\Sol_{\mathscr{G}}(M),\sigma)$, see e.g.~\cite{dimockQuantizedElectromagneticField1992}, where we denote the Weyl generators by $W^M([\udl{A}])$ with $[\udl{A}]\in\Sol_{\mathscr{G}}(M)$. For any open set $U\subset M$, the observables localisable in $U$ are defined, as usual, to be the smallest $C^*$-subalgebra $\Af(M;U)\subset\Af(M)$ containing
\begin{equation}
    \{W^M([G^{\PJ}(f)]):f\in \Omega^1_{0\diff^*}(M),\, \supp(f)\subset U\}.
\end{equation}

From now on, we will identify the algebra of semi-local observables $\Af(\ol{N}_i)$ associated with the electromagnetic field on $\ol{N}_i$ with $\Af^C(\ol{\Sigma}_i)\otimes \Af^S(\ol{\Sigma}_*)$ using the quantum CHH-isomorphisms, also noting that the surface algebra depends only on the common boundary $\partial\ol{\Sigma}_*$ (but not the boundary orientation). The Weyl generators of $\Af^{C/S}(\ol{\Sigma}_i)$ will be labelled $W^{C/S}_i(\udl{v})$ for $\udl{v}\in V^{C/S}(\ol{\Sigma}_i)$.

\subsection{Gluing of algebras}

As already mentioned, the classical gluing construction involves a fusion product of $\Sol_\mathscr{G}(\ol{N}_1)$ with $\Sol_\mathscr{G}(\ol{N}_2)$ relative to joint large gauge transformations. By analogy with the discussion in Sec.~\ref{sec:quantum_sfe}, the glued QFT algebra is a quotient 
\begin{equation}\label{eq:glued_def}
    \Af_\textnormal{glue}(\ol{N}_1; \ol{N}_2)  = (\Af(\ol{N}_1)\otimes \Af(\ol{N}_2))^{\alpha_\mathscr{LG}\otimes\alpha_\mathscr{LG}}/I,
\end{equation}
where $I$ is the two-sided closed $*$-ideal generated by elements $U^{(1)}_\mathscr{LG}(\lambda)\otimes U^{(2)}_\mathscr{LG}(\lambda)-1\otimes 1$ for $\lambda\in \mathscr{LG}_\angle(\ol{N}_*)$. 
To characterise of the glued algebra, first note that there are mutually inverse $C^*$-isomorphisms $\Xi_{12}:\Af^C(\ol{N}_1)\otimes \Af (\ol{N}_2)\to
\Af(\ol{N}_1)\otimes \Af^C(\ol{N}_2)$ and $\Xi_{21}:
\Af(\ol{N}_1)\otimes \Af^C(\ol{N}_2)\to \Af^C(\ol{N}_1)\otimes \Af (\ol{N}_2)$ so that
\begin{equation}
    \label{eq:Xi21}
    \Xi_{21} (W_1^C(\udl{v}_1)\otimes W_1^S(f\oplus h)\otimes W^C_2(\udl{v}_2)) = W_1^C(\udl{v}_1)\otimes W^C_2(\udl{v}_2)\otimes W_1^S(-f\oplus -h),
\end{equation}
arising as Weyl quantisations of obvious symplectomorphisms.

\begin{theorem}\label{thm:glue}
    There are isomorphisms $\Af^C(\ol{\Sigma}_1)\otimes \Af (\ol{N}_2)\stackrel{\overset{\circ\bullet}{\psi}}{\rightarrow} \Af_\textnormal{glue}(\ol{N}_1; \ol{N}_2)\stackrel{\overset{\bullet\circ}{\psi}}{\leftarrow} 
         \Af (\ol{N}_1)\otimes\Af^C(\ol{\Sigma}_2)$  with 
    \begin{align}
     \overset{\bullet\circ}{\psi}(W^C_1(\udl{v}_1)\otimes W^S_1(f\oplus h)\otimes W^C_2(\udl{v}_2)) &=\won(W^C_1(\udl{v}_1)\otimes W^S_1(f\oplus h)\otimes W^C_2(\udl{v}_2) \otimes W^S_2(0\oplus -h))\nonumber\\
    \overset{\circ\bullet}{\psi}(W^C_1(\udl{v}_1)\otimes  W^C_2(\udl{v}_2)\otimes W^S_2(f\oplus h)) &=\won(W^C_1(\udl{v}_1)\otimes W^S_1(0\oplus -h) \otimes W^C_2(\udl{v}_2) \otimes W^S_2(f\oplus h)) ,
    \end{align}
    for all $\udl{v}_i\in V^C(\ol{\Sigma}_i)$ and $f\oplus h\in V^S(\ol{\Sigma}_*)$, 
    where $\won$ is the quotient map in~\eqref{eq:glued_def}, and so that the diagram 
    \begin{equation}\label{eq:glue_isos}
    \begin{tikzcd}
         \Af^C(\ol{\Sigma}_1)\otimes \Af (\ol{N}_2)
         \arrow[r,"\overset{\circ\bullet}{\psi}"]\arrow[rr,bend left=20,"\Xi_{12}"] &  \Af_\textnormal{glue}(\ol{N}_1; \ol{N}_2) &   
         \Af (\ol{N}_1)\otimes\Af^C(\ol{\Sigma}_2) \arrow[l,"\overset{\bullet\circ}{\psi}"']\arrow[ll,bend left=20,"\Xi_{21}"']
    \end{tikzcd}     
    \end{equation}
    commutes. In particular, $\Af_\textnormal{glue}(\ol{N}_1; \ol{N}_2) \cong \Af^C(\ol{\Sigma}_1)\otimes \Af^S(\ol{\Sigma}_*)\otimes \Af^C(\ol{\Sigma}_2)$.
\end{theorem}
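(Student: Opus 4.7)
The plan is to reduce everything to Weyl quantisations of concrete pre-symplectic spaces, following the pattern established for the surface-field extension in Theorems~\ref{thm:At_Weyl_q} and~\ref{thm:Gamma}. First I would identify
\begin{equation*}
    \Af(\ol{N}_1)\otimes \Af(\ol{N}_2)\;\cong\;\Weyl\bigl(\Sol_\mathscr{G}(\ol{N}_1)\oplus \Sol_\mathscr{G}(\ol{N}_2),\;\sigma_1\oplus\sigma_2\bigr),
\end{equation*}
noting that the diagonal joint large gauge action $\alpha_\mathscr{LG}\otimes\alpha_\mathscr{LG}$ is the automorphism group generated by conjugation by the Weyl generators of the subspace
\begin{equation*}
\widetilde{\mathscr{G}}_\mathrm{glue}=\{(\mathfrak{G}_1(\lambda),\mathfrak{G}_2(\lambda)):\lambda\in\mathscr{LG}_\angle(\ol{N}_*)\}.
\end{equation*}
By the argument of Lem.~\ref{lem:Weyl_fixedpoint} (applied exactly as in Thm.~\ref{thm:Weyl_comm_thm}), the fixed-point algebra coincides with the Weyl subalgebra over the symplectic complement $\widetilde{\mathscr{G}}_\mathrm{glue}^\perp$, and by the proof method of Thm.~\ref{thm:Gamma} (adapted to replace the product of a fusion and an edge algebra with the product of two semi-local algebras) the quotient by the ideal $I$ generated by $U^{(1)}_\mathscr{LG}(\lambda)\otimes U^{(2)}_\mathscr{LG}(\lambda)-1$ gives
\begin{equation*}
    \Af_\textnormal{glue}(\ol{N}_1;\ol{N}_2)\;\cong\;\Weyl\bigl(\widetilde{\mathscr{G}}_\mathrm{glue}^\perp/\widetilde{\mathscr{G}}_\mathrm{glue},\;\widetilde{\sigma}\bigr).
\end{equation*}

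The main analytic step is then to identify the fusion quotient $\widetilde{\mathscr{G}}_\mathrm{glue}^\perp/\widetilde{\mathscr{G}}_\mathrm{glue}$ symplectically with $V^C(\ol{\Sigma}_1)\oplus V^S(\ol{\Sigma}_*)\oplus V^C(\ol{\Sigma}_2)$. For this I would apply the CHH decomposition $\mathfrak{K}_{\ol{\Sigma}_i}$ on each factor, so that a generic element of $\Sol_\mathscr{G}(\ol{N}_1)\oplus\Sol_\mathscr{G}(\ol{N}_2)$ is described by eight Hodge components $(F_i,H_i,f_i,h_i)_{i=1,2}$. A short computation using the explicit form~\eqref{eq:KJSigma} of $\mathfrak{K}_{\ol{\Sigma}_i}$ shows that $\widetilde{\sigma}$ pairs $\widetilde{\mathscr{G}}_\mathrm{glue}$ with a general element only through the $h$-components, so membership in $\widetilde{\mathscr{G}}_\mathrm{glue}^\perp$ is equivalent to the matching condition $h_1+h_2=0$ (possibly up to a boundary-orientation sign inherited from the opposite normal conventions of $\ol{\Sigma}_1,\ol{\Sigma}_2$). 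Correspondingly, $\widetilde{\mathscr{G}}_\mathrm{glue}$ acts on this locus by $(f_1,f_2)\mapsto(f_1+\lambda,f_2+\lambda)$ and trivially on the remaining data, so the quotient is parametrised uniquely by $(F_1,H_1,F_2,H_2)$ together with one surface pair $(f_1-f_2,\,h_1)\in V^S(\ol{\Sigma}_*)$. The induced symplectic form is exactly $\sigma^C_{\ol{\Sigma}_1}\oplus\sigma^S_{\ol{\Sigma}_*}\oplus\sigma^C_{\ol{\Sigma}_2}$, giving the final identification $\Af_\textnormal{glue}(\ol{N}_1;\ol{N}_2)\cong \Af^C(\ol{\Sigma}_1)\otimes\Af^S(\ol{\Sigma}_*)\otimes\Af^C(\ol{\Sigma}_2)$.

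For the explicit maps $\overset{\bullet\circ}{\psi}$ and $\overset{\circ\bullet}{\psi}$, each corresponds to a \emph{gauge-fixing section} of the quotient: $\overset{\bullet\circ}{\psi}$ uses the representative with $f_2=0$, while $\overset{\circ\bullet}{\psi}$ uses $f_1=0$. Because in each case the section is a symplectomorphism from $V^C_1\oplus V^S_*\oplus V^C_2$ onto $\widetilde{\mathscr{G}}_\mathrm{glue}^\perp$ composed with the quotient map, it lifts via $\Weyl$ to a unital $*$-isomorphism whose action on generators is exactly the one stated, and which factors $\overset{\bullet\circ}{\psi}=\won\circ\iota_{\bullet\circ}$ (resp.\ $\overset{\circ\bullet}{\psi}=\won\circ\iota_{\circ\bullet}$) through the natural inclusions $\iota_{\bullet\circ},\iota_{\circ\bullet}$ into $\Af(\ol{N}_1)\otimes\Af(\ol{N}_2)$. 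Commutativity of the triangle in~\eqref{eq:glue_isos} is then a tensor-reshuffling check on Weyl generators: the elements
$\iota_{\bullet\circ}(\cdot)$ and $(\Xi_{21}\circ\iota_{\circ\bullet})(\cdot)$ differ by a Weyl generator in $\widetilde{\mathscr{G}}_\mathrm{glue}$, which is identified to $1$ in the quotient, and conversely for the other triangle.

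The main obstacle will be bookkeeping the orientation conventions at the common corner $\partial\ol{\Sigma}_*$: the normals $\nml_{\partial\ol{\Sigma}_1}$ and $\nml_{\partial\ol{\Sigma}_2}$ point in opposite directions, so the identification of $V^S(\ol{\Sigma}_1)$ and $V^S(\ol{\Sigma}_2)$ with $\nml_{\partial\ol{\Sigma}_*}\Omega^1_{\diff^*}(\ol{\Sigma}_*)^{\oplus 2}$ carries a sign on the $h$-component that must be tracked throughout, in particular to ensure that ``$h_1+h_2=0$'' correctly encodes continuity of the electric flux across the shared boundary rather than a spurious double-counting. Everything else is, in essence, routine application of the general principle that Weyl quantisation sends pre-symplectic reduction to $C^*$-quotients.
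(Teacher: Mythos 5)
Your strategy is sound and reaches the same formulae, but by a genuinely different route from the paper's. The paper first proves a preliminary lemma (using Lem.~\ref{lem:Weyl_fixedpoint}, exactly as you do) identifying the invariant subalgebra $(\Af(\ol{N}_1)\otimes\Af(\ol{N}_2))^{\alpha_\mathscr{LG}\otimes\alpha_\mathscr{LG}}$ with $\widetilde{\Af}(\ol{N}_1)\otimes\Af^C(\ol{\Sigma}_2)$ and its mirror by permuting tensor factors; the proof of the theorem then \emph{imports} the surface-field-extension machinery of Section~\ref{sec:quantum_sfe} wholesale. It shows that those permutations carry the ideal $I$ to $\ker\Gamma_1\otimes\Af^C(\ol{\Sigma}_2)$ and $\Af^C(\ol{\Sigma}_1)\otimes\ker\Gamma_2$, so Theorem~\ref{thm:Gamma} and Corollary~\ref{cor:constr_iso} hand over the isomorphisms $\won_i\circ\yen_i$ ready-made, and $\overset{\bullet\circ}{\psi},\overset{\circ\bullet}{\psi}$ are read off by a diagram chase. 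You instead propose a direct symplectic computation: identify the fusion reduction $\widetilde{\mathscr{G}}_{\mathrm{glue}}^{\perp}/\widetilde{\mathscr{G}}_{\mathrm{glue}}$ in CHH coordinates with $V^C(\ol{\Sigma}_1)\oplus V^S(\ol{\Sigma}_*)\oplus V^C(\ol{\Sigma}_2)$ and quantise the two gauge-fixing sections $f_2=0$ and $f_1=0$. This is conceptually cleaner — no diagram chase — but the price is that the step ``$\Weyl(\widetilde{\mathscr{G}}_{\mathrm{glue}}^{\perp})/I\cong\Weyl(\widetilde{\mathscr{G}}_{\mathrm{glue}}^{\perp}/\widetilde{\mathscr{G}}_{\mathrm{glue}})$,'' and in particular that $\ker\won$ is no \emph{larger} than $I$, is precisely the hard content of Theorem~\ref{thm:Gamma}; writing ``by the proof method of Thm.~\ref{thm:Gamma}'' is a bit too quick, since you would have to re-run the Weyl-generator decomposition argument in the two-bulk setting rather than a general principle about pre-symplectic reduction, which the paper never states or cites. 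The paper's detour through $\Gamma_i$ exists precisely to avoid repeating that work. One small correction to your closing concern: the sign in the constraint $h_2=-h_1$ is not a boundary-orientation artefact. As the paper notes, $\Af^S(\ol{\Sigma}_*)$ and $\sigma^S$ depend only on $\partial\ol{\Sigma}_*$ (not its orientation), and the sign falls out of the $\sigma^S$-pairing of a generic element with the diagonal gauge directions $0\oplus(\lambda\oplus 0)\oplus 0\oplus(\lambda\oplus 0)$ — there is no extra orientation bookkeeping to track.
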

Here, the notation $\bullet\circ$ indicates the involvement of the full algebra of the first Cauchy lens but only the closed loop factor from the second.
The first step towards proving Theorem~\ref{thm:glue} is to compute $(\Af(\ol{N}_1)\otimes \Af(\ol{N}_2))^{\alpha_\mathscr{LG}\otimes\alpha_\mathscr{LG}}$. 
In what follows, the notation $\pi_\sigma$ ($\sigma\in S_k$) will denote the isomorphism rearranging factors in an $k$-fold tensor product according to the permutation $\sigma$ expressed in cycle notation.
\begin{lemma}
    The algebra $(\Af(\ol{N}_1)\otimes \Af(\ol{N}_2))^{\alpha_\mathscr{LG}\otimes\alpha_\mathscr{LG}}$ is the smallest subalgebra of $\Af(\ol{N}_1)\otimes \Af(\ol{N}_2)$ 
    containing
    \begin{equation}
        W^C_1(\udl{v}_1)\otimes W_1^S(f_1\oplus h)\otimes W^C_2(\udl{v}_2) \otimes W^S_2(f_2\oplus -h)
    \end{equation}
    for all $\udl{v}_i\in V^C(\ol{\Sigma}_i)$, $f_1,f_2,h\in \mathscr{LG}_\angle(\ol{N}_*)$.
Consequently, there are isomorphisms 
\begin{equation}
    \Af^C(\ol{\Sigma}_1)\otimes\widetilde{\Af}(\ol{N}_2) \stackrel{\overset{\circ\bullet}{\upsilon}}{\longleftarrow}
    (\Af(\ol{N}_1)\otimes \Af(\ol{N}_2))^{\alpha_\mathscr{LG}\otimes\alpha_\mathscr{LG}}
    \stackrel{\overset{\bullet\circ}{\upsilon}}{\longrightarrow} \widetilde{\Af}(\ol{N}_1)\otimes \Af^C(\ol{\Sigma}_2)
\end{equation}
given by $\overset{\circ\bullet}{\upsilon}=\pi_{(324)}$ and $\overset{\bullet\circ}{\upsilon}=\pi_{(34)}$ restricted to the subalgebra $(\Af(\ol{N}_1)\otimes \Af(\ol{N}_2))^{\alpha_\mathscr{LG}\otimes\alpha_\mathscr{LG}}$ in the four-fold tensor product $\Af^C(\ol{\Sigma}_1)\otimes\Af^S(\ol{\Sigma}_1)\otimes \Af^C(\ol{\Sigma}_2)\otimes\Af^S(\ol{\Sigma}_2)$.
\end{lemma}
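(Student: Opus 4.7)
My plan is to characterise the joint large-gauge invariant subalgebra via the Weyl-algebra fixed-point result (Lem.~\ref{lem:Weyl_fixedpoint}), after passing to the tensor-product picture afforded by the quantum CHH-isomorphisms, and then to read off the $\upsilon$-isomorphisms from the functoriality of $\Weyl$ applied to symplectic permutations of the summands.

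I would first invoke the quantum CHH-isomorphism of Def.~\ref{def:kin_transf} to identify each $\Af(\ol{N}_i)\cong \Af^C(\ol{\Sigma}_i)\otimes\Af^S(\ol{\Sigma}_i)$, so that $\Af(\ol{N}_1)\otimes\Af(\ol{N}_2)$ becomes the fourfold Weyl algebra over the symplectic direct sum $V^C(\ol{\Sigma}_1)\oplus V^S(\ol{\Sigma}_*)\oplus V^C(\ol{\Sigma}_2)\oplus V^S(\ol{\Sigma}_*)$. Using the observation recorded after Prop.~\ref{prop:sympl_decomp} that $\mathfrak{K}_{\ol{\Sigma}_i}(\mathfrak{G}(\lambda))=0\oplus 0\oplus \lambda\oplus 0$, the joint large-gauge action $(\alpha_{\mathscr{LG}}\otimes\alpha_{\mathscr{LG}})(\lambda)$ is inner, implemented by the Weyl unitary $\one\otimes W^S_1(\lambda\oplus 0)\otimes \one\otimes W^S_2(\lambda\oplus 0)$ for each $\lambda\in\mathscr{LG}_\angle(\ol{N}_*)$.

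Next I would apply Lem.~\ref{lem:Weyl_fixedpoint}: a Weyl generator with symbol $\udl{v}^C_1\oplus (f_1\oplus h_1)\oplus\udl{v}^C_2\oplus (f_2\oplus h_2)$ lies in the invariant subalgebra iff it is symplectically orthogonal to every gauge symbol $0\oplus (\lambda\oplus 0)\oplus 0\oplus (\lambda\oplus 0)$. Using the explicit formula $\sigma^S(\lambda\oplus 0, f\oplus h)=\ipc{\lambda}{h}_{\partial\ol{\Sigma}_*}$, this reduces to $\ipc{\lambda}{h_1+h_2}_{\partial\ol{\Sigma}_*}=0$ for every $\lambda\in\mathscr{LG}_\angle(\ol{N}_*)$. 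Because $h_1+h_2$ itself belongs to $\mathscr{LG}_\angle(\ol{N}_*)$, which by Lem.~\ref{lem:Hodge} is the $L^2$-orthogonal complement of the constants on $\partial\ol{\Sigma}_*$, the condition forces $h_1+h_2=0$. Setting $h=h_1$ recovers the stated family of generators.

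For the consequence, the maps $\overset{\circ\bullet}{\upsilon}=\pi_{(324)}$ and $\overset{\bullet\circ}{\upsilon}=\pi_{(34)}$ arise by applying $\Weyl$-functoriality to the evident symplectic rearrangements of the four direct summands, and are therefore unital ${}^*$-isomorphisms of the full fourfold tensor algebra; their restrictions to the invariant subalgebra are injective ${}^*$-homomorphisms. After identifying the rearranged $\Af^S(\ol{\Sigma}_{3-i})$-factor with the surface-field algebra $\Af^\partial(\ol{\Sigma}_i)$ of Def.~\ref{def:bound_gauge_gen} (both being copies of $\Weyl(V^S(\ol{\Sigma}_*),\sigma^S)$), the image is generated by Weyl operators whose symbol lies in $\widetilde{\mathscr{G}}(\ol{N}_i)^\perp\oplus V^C(\ol{\Sigma}_{3-i})$. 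I expect the main technical care to sit here: verifying that this image fills out the stated target requires matching the inherited constraint $h_1+h_2=0$ with the symplectic-complement description underlying Thm.~\ref{thm:At_Weyl_q}, and pinning down the sign convention arising from the opposite orientations that $\ol{\Sigma}_1$ and $\ol{\Sigma}_2$ induce on the shared corner $\partial\ol{\Sigma}_*$, which fixes the precise identification $\Af^S(\ol{\Sigma}_{3-i})\cong\Af^\partial(\ol{\Sigma}_i)$ needed to make the rearrangement land in the target algebra.
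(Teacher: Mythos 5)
Your approach matches the paper's essentially step for step: both pass to the CHH picture to rewrite $\Af(\ol{N}_1)\otimes\Af(\ol{N}_2)$ as a four-fold Weyl tensor product, both invoke Lem.~\ref{lem:Weyl_fixedpoint} to characterise the fixed-point subalgebra via the symplectic complement of the gauge symbols, and both realise the $\upsilon$-maps as Weyl quantisations of permutations of the four symplectic summands. You make the constraint $h_1+h_2=0$ explicit where the paper simply displays the resulting $\mathscr{G}^\perp$; that is a wash.

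The concern you raise at the end about a sign convention arising from opposite boundary orientations is unfounded, and once it is removed your argument is complete. The paper has already identified $\nml_{\partial\ol{\Sigma}_1}\Omega^1_{\diff^*}(\ol{\Sigma}_1)=\nml_{\partial\ol{\Sigma}_2}\Omega^1_{\diff^*}(\ol{\Sigma}_2)$ as literally the same subset of $\Omega^0(\partial\ol{\Sigma}_*)$, and the $L^2$ pairing of $0$-forms is orientation-independent (reversing the orientation flips both the Hodge star and the integral, cancelling). So $V^S(\ol{\Sigma}_1)$, $V^S(\ol{\Sigma}_2)$ and the surface-field space underlying $\Af^\partial(\ol{\Sigma}_i)$ are all the \emph{same} copy of $(V^S(\ol{\Sigma}_*),\sigma^S)$, and the $\upsilon$-maps are bare permutations with no compensating sign. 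The minus signs you may have noticed in this section of the paper sit in $\Xi_{12}$, $\Xi_{21}$ and the $\psi$-maps, not in the $\upsilon$-maps. With that settled, the image of $\mathscr{G}^\perp$ under $\pi_{(34)}$ is $\{\udl{v}_1\oplus(f_1\oplus h)\oplus(f_2\oplus -h)\oplus\udl{v}_2\}=\widetilde{\mathscr{G}}(\ol{N}_1)^\perp\oplus V^C(\ol{\Sigma}_2)$, whose Weyl algebra is, by Thm.~\ref{thm:At_Weyl_q}, $\widetilde{\Af}^{\mathscr{LG}}(\ol{N}_1)\otimes\Af^C(\ol{\Sigma}_2)$ (and similarly for $\pi_{(324)}$). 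Note that this is exactly what your argument produces; the codomains $\widetilde{\Af}(\ol{N}_i)$ printed in the lemma's displayed diagram appear to be missing the superscript $\mathscr{LG}$ — the version that actually matches the proof and is used in Fig.~\ref{fig:glue} and the proof of Thm.~\ref{thm:glue} is $\widetilde{\Af}^{\mathscr{LG}}(\ol{N}_i)$, which is the subalgebra you correctly land on.
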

The isomorphism $\overset{\bullet\circ}{\upsilon}$ (resp., $\overset{\circ\bullet}{\upsilon}$) is designed to carry the $\Af(\ol{N}_1)$ (resp., $\Af(\ol{N}_2)$) factor in $(\Af(\ol{N}_1)\otimes \Af(\ol{N}_2))^{\alpha_\mathscr{LG}\otimes\alpha_\mathscr{LG}}$ to the bulk factor within $\widetilde{\Af}(\ol{N}_1)$ (resp., $\widetilde{\Af}(\ol{N}_1)$) in the surface field extension picture.
\begin{proof} 
Under the identification of $\Af(\ol{N}_i)$ with $\Weyl(V^C(\ol{\Sigma}_i))\otimes \Weyl(V^S(\ol{\Sigma}_i))= \Weyl(V^C(\ol{\Sigma}_i)\oplus V^S(\ol{\Sigma}_i))$ given by the quantum CHH-isomorphism, one has
\begin{equation}
    \alpha_{\mathscr{LG}}(\lambda) = \Ad W^{C\oplus S}_i( 0\oplus (\lambda\oplus 0)).
\end{equation}
Identifying $\Af(\ol{N}_1)\otimes \Af(\ol{N}_2)$ with $\Weyl(V^C(\ol{\Sigma}_1)\oplus V^S(\ol{\Sigma}_1)\oplus V^C(\ol{\Sigma}_2)\oplus V^S(\ol{\Sigma}_2))$, it follows that
the automorphism group $\alpha_{\mathscr{LG}}\otimes  \alpha_{\mathscr{LG}})$ is implemented by the adjoint action of Weyl generators labelled by elements of
\begin{equation}
    \mathscr{G} = \{ 0\oplus(\lambda\oplus 0)\oplus 0\oplus (\lambda\oplus 0):\lambda\in\mathscr{LG}_\angle(\ol{N}_*)\}.
\end{equation}
By Lemma~\ref{lem:Weyl_fixedpoint} the invariant subalgebra $(\Af(\ol{N}_1)\otimes \Af(\ol{N}_2))^{\alpha_\mathscr{LG}\otimes\alpha_\mathscr{LG}}$ is generated by Weyl generators labelled by elements of $\mathscr{G}^\perp$, 
taking the symplectic complement with respect to $\sigma_1^C\oplus \sigma^S_1\oplus \sigma^C_2\oplus \sigma^S_2$, given explicitly as
\begin{equation}\label{eq:Gperp}
\mathscr{G}^\perp=\{\udl{v}_1\oplus (f_1\oplus h)\oplus \udl{v}_2\oplus (f_2\oplus -h):
\udl{v}_i\in V^C(\ol{\Sigma}_i),~ f_1,f_2,h\in \mathscr{LG}_\angle(\ol{N}_*)\}.
\end{equation}
Returning to the tensor product form, we have proved the first statement. 

To complete the proof, we may use results from the classical surface field extension. In CHH-form, one has $\widetilde{\Sol}(\ol{N}_i)=V^C(\ol{\Sigma}_i)\oplus
V^S(\ol{\Sigma}_*)\oplus V^S(\ol{\Sigma}_*)$ and the joint large gauge directions are
\begin{equation}
    \mathscr{G}_i = \{ 0\oplus(\lambda\oplus 0)\oplus (\lambda\oplus 0):\lambda\in\mathscr{LG}_\angle(\ol{N}_*)\}
\end{equation}
with symplectic complement
\begin{equation}
    \mathscr{G}_i^\perp =\{\udl{v}_i\oplus (f_1\oplus h) \oplus (f_2\oplus -h):
\udl{v}_i\in V^C(\ol{\Sigma}_i),~ f_1,f_2,h\in \mathscr{LG}_\angle(\ol{N}_*)\}.
\end{equation}
Thus $V^C(\ol{\Sigma}_1)\oplus \mathscr{G}_2^\perp \overset{\cong}{\leftarrow} \mathscr{G}^\perp \overset{\cong}{\rightarrow} \mathscr{G}_1^\perp\oplus V^C(\ol{\Sigma}_2)$ where the isomorphisms are given by permutations $(324)$ or $(34)$ on the ordering of summands. Passing to the quantisation, these become the stated permutations of tensor factors. 
\end{proof}
Next, let $\yen_i: \Af(\ol{N}_i)\to \widetilde{\Af}^{\mathscr{LG}}(\ol{N}_i)$, $\Gamma_i: \widetilde{\Af}^{\mathscr{LG}}(\ol{N}_i)\to \Af_\textup{rel}(\ol{N}_i)$ and
$\won_i:\widetilde{\Af}^{\mathscr{LG}}(\ol{N}_i)\to \widetilde{\Af}^{\mathscr{LG}}(\ol{N}_i)/\ker\Gamma_i$ be the 
relativisation, superselection and quotient maps as constructed in Section~\ref{sec:quantum_sfe}, recalling that $\won_i\circ\yen_i$ is an isomorphism by diagram~\eqref{eq:Ymap_diag2}.
\begin{figure}
    \begin{center}
    \begin{tikzcd}
     \Af^C(\ol{\Sigma}_1)\otimes\Af(\ol{N}_2)\arrow[d,"\id\otimes\yen_2"] &
     \Af^C(\ol{\Sigma}_1)\otimes \Af^S(\ol{\Sigma}_*)\otimes \Af^C(\ol{\Sigma}_2)
     \arrow[l,"\pi_{(23)}"']\arrow[r,"\id"]
     & \Af(\ol{N}_1)\otimes \Af^C(\ol{\Sigma}_2) \arrow[d,"\yen_1\otimes\id"]
    \\
    \Af^C(\ol{\Sigma}_1)\otimes\widetilde{\Af}^{\mathscr{LG}}(\ol{N_2})
    \arrow[d,"\id\otimes\won_2"] 
    & (\Af(\ol{N}_1)\otimes\Af(\ol{N}_2))^{\alpha_\mathscr{LG}\otimes \alpha_\mathscr{LG}}\arrow[d,"\won"]
    \arrow[r,"\overset{\bullet\circ}{\upsilon}"]\arrow[l,"\overset{\circ\bullet}{\upsilon}"']& 
    \widetilde{\Af}^{\mathscr{LG}}(\ol{N}_1)\otimes \Af^C(\ol{\Sigma}_2)
    \arrow[d,"\won_1\otimes\id"] \\
    \Af^C(\ol{\Sigma}_1)\otimes \widetilde{\Af}^{\mathscr{LG}}(\ol{N}_2)/I_2
    &
    \Af_\textnormal{glue}(\ol{N}_1; \ol{N}_2)\arrow[l,"\overset{\circ\bullet}{\gamma}"']\arrow[r,"\overset{\bullet\circ}{\gamma}"]
    & \widetilde{\Af}^{\mathscr{LG}}(\ol{N}_1)/I_1\otimes \Af^C(\ol{\Sigma}_2) 
\end{tikzcd}
\end{center}
\caption{Diagram for Theorem~\ref{thm:glue}. In the top line we make use of the identifications $\Af^C(\ol{\Sigma}_i)\otimes \Af^S(\ol{\Sigma}_i)=\Af(\ol{N}_i)$ for $i=1,2$, in the bottom line we have set $I_i=\ker\Gamma_i$.
\label{fig:glue}}
\end{figure}
\begin{proof}[Proof of Theorem~\ref{thm:glue}.] 
Continuing with the notation from the previous proof, Theorem~\ref{thm:Gamma} shows that
$I_i=\ker\Gamma_i$ is the closed two-sided $*$-ideal generated by differences of Weyl operators 
labelled by elements of $\mathscr{G}_i$. On the other hand, $I$ is generated by differences of Weyl operators labelled by $\mathscr{G}$. In tensor product form, such Weyl operators are written $1\otimes W^S_1(\lambda\oplus 0)\otimes 1\otimes W^S_2(\lambda\oplus 0)$ and are mapped to $1\otimes \widetilde{W}_2(0\oplus(\lambda\oplus 0)\oplus (\lambda\oplus 0))$ and
$\widetilde{W}_1(0\oplus(\lambda\oplus 0)\oplus (\lambda\oplus 0))\otimes 1$ by  $\overset{\circ\bullet}{\upsilon}$ and $\overset{\bullet\circ}{\upsilon}$. 
It follows that $\overset{\circ\bullet}{\upsilon}(I) = \Af^C(\ol{\Sigma}_1)\otimes I_2=\ker(\Gamma_2\otimes\id)$ and $\overset{\bullet\circ}{\upsilon}(I) = I_1\otimes \Af^C(\ol{\Sigma}_2)=
\ker(\Gamma_1\otimes\id)$. Then, because 
    $\overset{\bullet\circ}{\upsilon}$ is an isomorphism, there is a unique isomorphism $\overset{\bullet\circ}{\gamma}$ making the lower-right square in Fig.~\ref{fig:glue} commute.  As $\won_1\circ\yen_1$ is an isomorphism, the required isomorphism $\Af^C(\ol{\Sigma}_1)\otimes \Af (\ol{N}_2)\to \Af_\textnormal{glue}(\ol{N}_1; \ol{N}_2)$ is 
    \begin{equation}
    \overset{\bullet\circ}{\psi}=
    \overset{\bullet\circ}{\gamma}{}^{-1}\circ ((\won_1\circ\yen_1)\otimes\id)
    \end{equation}
    obtained by following arrows in the right-hand half of the diagram in Fig.~\ref{fig:glue}. An analogous argument establishes the other isomorphism in~\eqref{eq:glue_isos}, using the left-hand half of Fig.~\ref{fig:glue} as
    \begin{equation}
    \overset{\circ\bullet}{\psi}=\overset{\circ\bullet}{\gamma}{}^{-1}\circ (\id\otimes (\won_2\circ\yen_2)).
    \end{equation}
    Using commutativity of the lower right square of Fig.~\ref{fig:glue}, we compute for any $\udl{v}_i\in V^C(\ol{\Sigma}_i)$ and $f\oplus h\in V^S(\ol{\Sigma}_*)$
    \begin{align}\label{eq:psibo}
        \MoveEqLeft[6] \overset{\bullet\circ}{\psi} (W^C_1(\udl{v}_1)\otimes W^S_1(f\oplus h)\otimes W^C_2(\udl{v}_2))\nonumber \\ 
         &=\won\circ \overset{\bullet\circ}{\upsilon}{}^{-1}( W^C_1(\udl{v}_1)\otimes  W^S_1(f\oplus h)\otimes W^S_2(0\oplus -h)\otimes W^C_2(\udl{v}_2) ) \nonumber\\
        &=\won( W^C_1(\udl{v}_1)\otimes  W^S_1(f\oplus h)\otimes W^C_2(\udl{v}_2)\otimes W^S_2(0\oplus -h) ).
    \end{align}
    A similar computation yields
    \begin{align}\label{eq:psiob}
        \MoveEqLeft[6] \overset{\circ\bullet}{\psi}(W^C_1(\udl{v}_1)\otimes W^C_2(\udl{v}_2)\otimes W^S_2(-f\oplus -h)) \nonumber \\  
        &=\won( W^C_1(\udl{v}_1)\otimes  W^S_1(0\oplus h)\otimes W^C_2(\udl{v}_2)\otimes W^S_2(-f\oplus -h) ).
    \end{align}
    
    To prove that the diagram~\eqref{eq:glue_isos} commutes, we must show that $\overset{\bullet\circ}{\psi}=\overset{\circ\bullet}{\psi}\circ\Xi_{21}$, or equivalently that the right hand sides of~\eqref{eq:psibo} and~\eqref{eq:psiob}
    agree for all $\udl{v}_i\in V^C(\ol{\Sigma}_i)$ and $f\oplus h\in V^S(\ol{\Sigma}_*)$.
    To show this, note that due to the Weyl relations, we have
    \begin{equation}
        W^S_1(0\oplus h)\otimes W^S_2(-f\oplus -h)=(W^S_1(-f\oplus 0)\otimes W^S_2(-f\oplus 0))(W^S_1(f\oplus h)\otimes W^S_2(0\oplus -h)),
    \end{equation}
    and that, by definition of the ideal $I=\ker\won$ we have
    \begin{equation}
        \won(1\otimes W^S_1(-f\oplus 0)\otimes 1\otimes W^S_1(-f\oplus 0))=\won(1\otimes 1\otimes 1\otimes 1).
    \end{equation}
    The equality of~\eqref{eq:psibo} and~\eqref{eq:psiob} follows by the fact that $\won$ is a homomorphism. We thus have $\overset{\circ\bullet}{\psi}\circ\Xi_{21}=\overset{\bullet\circ}{\psi}$, and hence also 
    $\overset{\bullet\circ}{\psi}\circ\Xi_{12}=\overset{\bullet\circ}{\psi}\circ\Xi_{21}^{-1}=\overset{\circ\bullet}{\psi}$. 
\end{proof}
The isomorphisms $\overset{\bullet\circ}{\psi}$ and $\overset{\circ\bullet}{\psi}$ (constructed using relativisation maps) can be interpreted as follows: the map $a\mapsto \overset{\circ\bullet}{\psi}(a\otimes 1)$ from
$\Af(\ol{N_1})$ to $\Af_\textnormal{glue}(\ol{N}_1;\ol{N}_2)$ maps semi-local observables  
on $\ol{N}_1$ to large gauge invariant observables on $\ol{N}_1\cup\ol{N}_2$. Following the analogy with semi-local observables and Wilson lines described in Sec.~\ref{sec:class_semi-loc}, one may loosely think of the relativisation map $\yen_1$ as mapping (smeared) Wilson lines on $\ol{\Sigma}_1$ with endpoints on the shared boundary $\partial\ol{\Sigma}_1=\partial\ol{\Sigma}_2$ to closed loop observables on $\ol{\Sigma}$, closing up the lines with a (smeared analogue of a) geodesic in $\ol{\Sigma}_2$ as illustrated in Fig.~\ref{fig:Wilson_relativise}.
\begin{figure}[h]
\includegraphics[width=\textwidth]{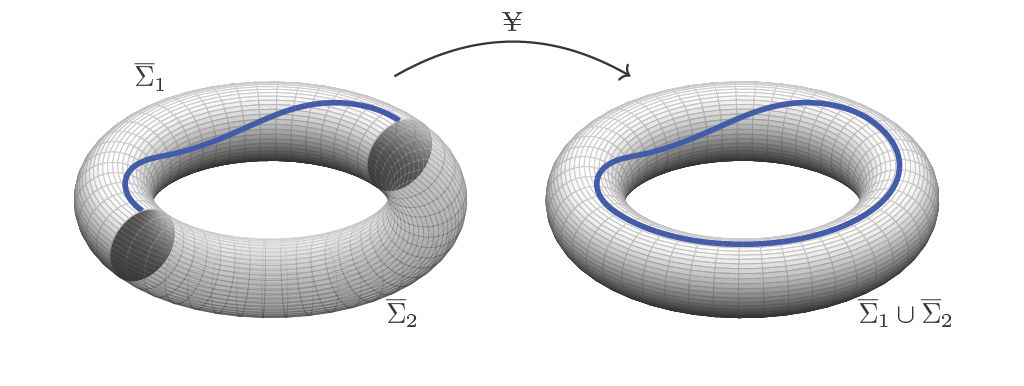}
\caption{Closing up Wilson lines by means of relativisation.}
\label{fig:Wilson_relativise}
\end{figure}

The algebra $\Af_{\textup{glue}}(\ol{N}_1,\ol{N}_2)$ can also be interpreted as 
an algebra of gauge invariant observables of the electromagnetic field on $M=\mathscr{D}(\ol{N}_1\cup\ol{N_2})$, extending $\Af(M)$. 
\begin{theorem}
\label{thm:Mtoglue}
There exists a unique injective ${}^*$-homomorphism $\iota:\Af(M)\to \Af_{\textup{glue}}(\ol{N}_1,\ol{N}_2)$ such that for each $[\udl{A}]\in \Sol_{\mathscr{G}}(M)$ there exists a representative $\udl{A}\in [\udl{A}]$ and $\udl{A}_i\in \Sol(\ol{N}_i)$ for $i\in \{1,2\}$ such that
\begin{equation}
 \udl{A}\restriction_{\ol{\Sigma}_i}=\udl{A}_i\restriction_{\ol{\Sigma}_i}, \qquad (\nml_{\ol{\Sigma}}\diff\udl{A})\restriction_{\ol{\Sigma}_i}=\nml_{\ol{\Sigma}_i}\diff\udl{A}_i,
\end{equation}
and
\begin{equation}
\label{eq:Mtoglue}
    \iota(W^M([\udl{A}]))=\won(W([\udl{A}_1])\otimes W([\udl{A}_2])).
\end{equation}
Furthermore, let $f_i\in \Omega^1_{0\diff^*}(M)$ for $i\in \{1,2\}$ with $\supp(f_i)\subset N_i$. Then we have
\begin{align}
    \iota(W^M([G^\PJ f_1])=\overset{\bullet\circ}{\psi}(W([G^\PJ f_1\restriction_{\ol{N}_1}])\otimes 1)=\won(W([G^\PJ f_1\restriction_{\ol{N}_1}])\otimes 1),\nonumber\\\iota(W^M([G^\PJ f_2])=\overset{\circ\bullet}{\psi}(1\otimes W([G^\PJ f_2\restriction_{\ol{N}_2}]))=\won(1\otimes W([G^\PJ f_2\restriction_{\ol{N}_2}])).
\end{align}
\end{theorem}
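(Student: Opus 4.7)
The plan is to build $\iota$ in stages: first define it on Weyl generators by the prescribed formula, then check it is well-defined, extend by functoriality of the Weyl construction, establish injectivity, and finally verify the concrete identities for compactly supported currents. Given $[\udl{A}]\in\Sol_\mathscr{G}(M)$, I would fix any $\udl{A}\in[\udl{A}]$ and use Proposition~\ref{prop:ext_ini_dat} (applied separately in each $\ol{N}_i$) to find $\udl{A}_i\in\Sol(\ol{N}_i)$ whose induced initial data on $\ol{\Sigma}_i$ coincides with $(\udl{A}\restriction_{\ol{\Sigma}_i},(-\nml_{\ol{\Sigma}}\diff\udl{A})\restriction_{\ol{\Sigma}_i})$. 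Set $\iota(W^M([\udl{A}])):=\won(W([\udl{A}_1])\otimes W([\udl{A}_2]))$. The first thing to check is that the pre-quotient tensor product belongs to the joint-invariant subalgebra of $\Af(\ol{N}_1)\otimes\Af(\ol{N}_2)$: since $\nml_{\partial\ol{\Sigma}_1}$ and $\nml_{\partial\ol{\Sigma}_2}$ point oppositely along $\angle\ol{N}_*$, the CHH flux components $h_i=\nml_{\partial\ol{\Sigma}_i}\mathbf{E}_i\restriction_\angle$ satisfy $h_1+h_2=0$; as each side transforms under $\alpha_\mathscr{LG}(\lambda)$ by a phase $\exp(-i\ipc{\lambda}{h_i}_\angle)$, the two phases cancel under the diagonal action, so $\won$ applies.

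Next I would settle well-definedness and the homomorphism property. Changing the representative $\udl{A}\mapsto\udl{A}+\diff\Lambda$ shifts each $\udl{A}_i$ by $\diff\Lambda_i$ with $\Lambda_i\restriction_\angle$ common. Writing $\lambda_\perp$ for the orthogonal projection of this common boundary trace onto $\mathscr{LG}_\angle(\ol{N}_*)$, the Weyl relations produce an extra factor $e^{-\frac{i}{2}(\sigma_1+\sigma_2)}\bigl(U^{(1)}_\mathscr{LG}(\lambda_\perp)\otimes U^{(2)}_\mathscr{LG}(\lambda_\perp)\bigr)$; the second factor equals $1\otimes 1$ modulo $I$, and the phase vanishes because integration by parts combined with the Gauss constraint $\diff^*\mathbf{E}_i=0$ gives $\sigma_i([\udl{A}_i],[\diff\Lambda_i])=\ipc{\Lambda\restriction_\angle}{h_i}_\angle$, whose sum is zero since $h_1+h_2=0$. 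Freedom in picking $\udl{A}_i$ among solutions with the same initial data corresponds precisely to adding elements of $\mathscr{G}(\ol{N}_i)$, which act trivially on Weyl generators. The Weyl relations are preserved because the global symplectic form splits additively across the decomposition of $\Sigma$, i.e.\ $\sigma_M([\udl{A}],[\udl{A}'])=\sigma_1([\udl{A}_1],[\udl{A}_1'])+\sigma_2([\udl{A}_2],[\udl{A}_2'])$, since both sides are computed as $\int_\Sigma(\udl{A}'\wedge\star\mathbf{E}-\udl{A}\wedge\star\mathbf{E}')$ and $\Sigma=\ol{\Sigma}_1\cup\ol{\Sigma}_2$ meet only along the measure-zero corner. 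The universal property of $\Weyl$ then produces a unique unital $^*$-homomorphism $\iota:\Af(M)\to\Af_\textnormal{glue}(\ol{N}_1;\ol{N}_2)$.

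For injectivity I would assume $\iota(W^M([\udl{A}]))=1$ and use the identification $\Af_\textnormal{glue}\cong\Weyl(\widetilde{\mathscr{G}}^\perp/\widetilde{\mathscr{G}})$ from the proof of Theorem~\ref{thm:glue} to conclude $([\udl{A}_1],[\udl{A}_2])\in\widetilde{\mathscr{G}}$, i.e.\ $[\udl{A}_i]=\mathfrak{G}_i(\lambda)$ for a common $\lambda\in\mathscr{LG}_\angle(\ol{N}_*)$. Then $\udl{A}_i\sim\diff\tilde{\Lambda}_i$ in $\Sol_\mathscr{G}(\ol{N}_i)$ with each $\tilde{\Lambda}_i\restriction_\angle\in\lambda+\mathscr{G}_\angle(\ol{N}_*)$, and by subtracting elements of $\Omega^0(\ol{N}_i)$ vanishing on $\angle\ol{N}_*$ one can arrange $\tilde{\Lambda}_1\restriction_\angle=\tilde{\Lambda}_2\restriction_\angle$. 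Gluing produces a smooth $\Lambda\in\Omega^0(\ol{N}_1\cup\ol{N}_2)$ with $\udl{A}\restriction_{\ol{N}_i}\sim\diff\Lambda\restriction_{\ol{N}_i}$, and extension to $M$ by integrating $\udl{A}$ along causal curves emanating from $\Sigma$ (well-defined because $\udl{A}-\diff\Lambda$ is already locally exact there, and $M=\mathscr{D}(\Sigma)$ is simply connected in the relevant sense) gives $[\udl{A}]=0$.

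Finally, for $f_1\in\Omega^1_{0\diff^*}(M)$ with $\supp(f_1)\subset N_1$, the observable $W([G^\PJ f_1\restriction_{\ol{N}_1}])$ lies in $\Af_\textup{loc}(\ol{N}_1)$ and is therefore invariant under $\alpha_\mathscr{LG}$; by Eq.~\eqref{eq:sigmadLambdaAp} the CHH flux component is $h=0$, so the explicit formula for $\overset{\bullet\circ}{\psi}$ supplied in Theorem~\ref{thm:glue} produces the surface completion $W^S_2(0\oplus 0)=1$, which gives the second equality $\overset{\bullet\circ}{\psi}(W([G^\PJ f_1\restriction_{\ol{N}_1}])\otimes 1)=\won(W([G^\PJ f_1\restriction_{\ol{N}_1}])\otimes 1)$. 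For the first equality I would take $\udl{A}_1=G^\PJ f_1\restriction_{\ol{N}_1}$ and attempt $\udl{A}_2=0$; this satisfies the matching condition on $\ol{\Sigma}_2$ if and only if both the initial potential and the electric field of $G^\PJ f_1$ are trivial there. I expect this causal propagation step to be the main obstacle: it reduces to showing that $\supp(G^\PJ f_1)\cap\ol{\Sigma}_2\subset\angle\ol{N}_*$, i.e.\ that no causal curve connects $\supp(f_1)\subset N_1$ to the interior of $\ol{\Sigma}_2$. This must be extracted from the combination of causal convexity of both $\ol{N}_i$ in $M$ and the causal closure axiom $\mathscr{J}^\pm(\angle\ol{N}_i)=\angle\ol{N}_i$, ruling out causal curves through the corner; any remaining data is absorbable via the joint ideal $I$ using $h=0$. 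The analogous argument establishes the second displayed formula for $f_2$ by symmetry.
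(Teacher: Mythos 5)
Your proposal follows the same overall outline as the paper's proof — define $\iota$ on generators via the initial-data matching, check well-definedness and the homomorphism property, then injectivity, then the explicit formulas — and the computations for well-definedness (the two phases cancel because the fluxes $h_1+h_2=0$ at the corner) and additivity of the symplectic form across $\Sigma = \ol{\Sigma}_1\cup\ol{\Sigma}_2$ are sound. The genuine difference is in the injectivity argument. The paper constructs a symplectic injection $\mathfrak{i}:\Sol_{\mathscr{G}}(M)\to\Sol_{\mathscr{G}}(\ol{N}_1)\oplus\Sol_{\mathscr{G}}(\ol{N}_2)$ (using the Coulomb-gauge representative to keep things smooth across the corner), applies the $\Weyl$ functor, and then gets injectivity of $\iota=\won\circ\Weyl(\mathfrak{i})$ for free from simplicity of $\Af(M)$ together with $\iota(1)=\won(1)\neq 0$. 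That argument is both shorter and more robust: it sidesteps the gluing of potentials entirely. Your direct route — tracing $\iota(W^M([\udl{A}]))=1$ back to $[\udl{A}_i]=\mathfrak{G}_i(\lambda)$ and re-gluing — can be made to work, but has two soft spots. First, ``gluing produces a smooth $\Lambda$'' requires more than agreement of the boundary traces: one needs the fact that the glued $\Lambda$ has distributional gradient equal to the smooth form $\mathbf{A}\restriction_{\Sigma}$ and invoke elliptic regularity, rather than concatenating two pieces naively. Second, the extension of $\Lambda$ to all of $M$ should not be argued via ``integrating along causal curves'' with a simply-connectedness claim, which is both undefined for $1$-forms and potentially false (the topology of $M$ need not be trivial); the right tool is uniqueness of the Cauchy problem (the kernel statement of Prop.~\ref{prop:IV_simple}), applied to $\udl{A}$ and $\diff\Lambda'$ where $\Lambda'$ is any smooth extension of $\Lambda$ to $M$ with matching normal derivative.

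For the second part of the theorem, you correctly identify that the entire argument hinges on the causal-propagation fact $\supp(G^{\PJ}f_1)\cap\ol{\Sigma}_2\subset\angle\ol{N}_*$, which forces $(G^{\PJ}f_1)\restriction_{\ol{\Sigma}_2}=0$ by smoothness. The paper relies on the same fact (stated without proof as ``Since $(G^{\PJ}f_1)\restriction_{\ol{\Sigma}_2}=0$, \dots'') but reaches the formula through a CHH-decomposition computation that tracks the non-vanishing exact part $\diff\alpha$, rather than by setting $\udl{A}_2=0$. Your shortcut $(\udl{A},\udl{A}_1,\udl{A}_2)=(G^{\PJ}f_1,G^{\PJ}f_1\restriction_{\ol{N}_1},0)$ is in fact simpler once the support claim is granted; but you are right to flag it, since neither you nor the paper prove it — it follows from acausality of $\Sigma$ together with $\supp f_1\subset N_1\subset\mathscr{D}(\ol{\Sigma}_1)$, which forces any causal curve from $\supp f_1$ to $\Sigma$ to strike $\ol{\Sigma}_1$ rather than $\Sigma_2$.
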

\begin{proof}
    We first define an injective symplectic map $\mathfrak{i}:\Sol_{\mathscr{G}}(M)\to \Sol_{\mathscr{G}}(\ol{N}_1)\oplus \Sol_{\mathscr{G}}(\ol{N}_2)$ as follows. Let $[\udl{A}]\in \Sol_{\mathscr{G}}(M)$. Using the Hodge-Helmholtz decomposition on $\ol{\Sigma}$, we may choose a representative $\udl{A}\in [\udl{A}]$ such that $\udl{A}\restriction_{\ol{\Sigma}}\in \Omega^1_{\diff^*}(\ol{\Sigma})$. Note that this uniquely determines $\mathbf{A}=\udl{A}\restriction_{\ol{\Sigma}}$ from $[\udl{A}]$. Similarly, $\mathbf{E}=-\nml_{\ol{\Sigma}}\diff\udl{A}$ is uniquely defined from $[\udl{A}]$. This allows us to define 
    \begin{equation}
    \label{eq:Mtoprod}
        \mathfrak{i}([\udl{A}])=\left(\mathfrak{I}_{\ol{\Sigma}_1}\right)^{-1}([\mathbf{A}\restriction_{\ol{\Sigma}_1}]\oplus \mathbf{E}\restriction_{\ol{\Sigma}_1})\oplus \left(\mathfrak{I}_{\ol{\Sigma}_2}\right)^{-1}([\mathbf{A}\restriction_{\ol{\Sigma}_2}]\oplus \mathbf{E}\restriction_{\ol{\Sigma}_2}).
    \end{equation} 
    Note that $\mathfrak{i}([\udl{A}])=0$ implies $\mathbf{E}=0$ and $\mathbf{A}\restriction_{\ol{\Sigma}_i}\in \mathscr{G}(\ol{\Sigma}_i)$, which by smoothness of $\mathbf{A}$ implies $\mathbf{A}\in \diff\Omega^0(\ol{\Sigma})$. Thus $[\udl{A}]=0$ and $\mathfrak{i}$ is injective. By explicit computation, one readily sees that the map $\mathfrak{i}$ is symplectic w.r.t. the obvious symplectic structures. We can thus define an injective ${}^*$-homomorphism 
    \begin{equation}
        \Weyl(\mathfrak{i}):\Af(M)\to\Af(\ol{N}_1)\otimes\Af(\ol{N}_2).
    \end{equation}
    It is straightforward to check that
    \begin{equation}
        \Weyl(\mathfrak{i})(\Af(M))\subset (\Af(\ol{N}_1)\otimes\Af(\ol{N}_2))^{\alpha_{\mathscr{LG}}\otimes\alpha_{\mathscr{LG}}}.
    \end{equation}
    As a result, we may define $\iota=\won\circ \Weyl(\mathfrak{i})$. It remains to be shown that $\iota$ is injective. 
    As $\Af(M)$ is the Weyl algebra of a symplectic space, it is simple\cite{slawnyFactorRepresentationsCalgebra1972}. Then  $\iota(1)=\won(1)\neq 0$ implies that $\ker(\iota)=\{0\}$.

    To show that $\iota$ is uniquely determined by Eq.~\eqref{eq:Mtoglue}, let $\udl{A}\in \Sol(M)$ with $\udl{A}\restriction_{\ol{\Sigma}}\in \Omega^1_{\diff^*}(\ol{\Sigma})$ as above. Then for any representative $\udl{A}'\in[\udl{A}]$ with $\mathbf{A}'=\udl{A}'\restriction_{\ol{\Sigma}}$ and $\mathbf{E}'=-\nml_{\ol{\Sigma}}\diff\udl{A}'$, we have $\mathbf{E}=\mathbf{E}'$ and $\mathbf{A}'=\mathbf{A}+\diff\Lambda$, where $\Lambda\in \Omega^0(\ol{\Sigma})$ can be chosen such that $\lambda:=\Lambda\restriction_{\partial\ol{\Sigma}_*}\in\mathscr{LG}_\angle(\ol{N}_*)$. Now let $\udl{A}_i'\in \Sol(\ol{N}_i)$ such that $\udl{A}_i'\restriction_{\ol{\Sigma}_i}=\mathbf{A}'\restriction_{\ol{\Sigma}_i}$ and $\nml_{\ol{\Sigma}_i}\diff\udl{A}_i'=-\mathbf{E}'\restriction_{\ol{\Sigma}_i}$. It follows that \begin{equation}
        [A_1']\oplus [A_2']=\mathfrak{i}([\udl{A}])+\mathfrak{G}(\lambda)\oplus \mathfrak{G}(\lambda),
    \end{equation}
    such that
    \begin{equation}
        W([A_1'])\otimes W([A_2'])=(\Weyl(\mathfrak{i})(W_M[\udl{A}]))(U^{(1)}_{\mathscr{LG}}(\lambda)\otimes U^{(2)}_{\mathscr{LG}}(\lambda)).
    \end{equation}
    From the definition of $I=\ker(\won)$, it follows that 
    \begin{equation}
        \won(W([A_1'])\otimes W([A_2']))=\iota(W_M([\udl{A}])).
    \end{equation}
    Since the operators of the form $W_M([\udl{A}])$ generate $\Af(M)$, uniqueness of $\iota$ follows.
    
    Due to the symmetric set-up it is sufficient to prove Eq.~\eqref{eq:Mtoprod} for $i=1$. Using the Hodge-Helmholtz decomposition, we decompose
     \begin{equation}
         (G^\PJ f_1)\restriction_{\Sigma}=\mathbf{A}+\diff\alpha,
     \end{equation} where $\mathbf{A}\in \Omega^1_{\diff^*}(\Sigma)$ and $\alpha\in \Omega^0(\Sigma)$. Using the CHH decomposition, we find
    \begin{equation}
        (\mathfrak{K}_{\ol{\Sigma}_1}\oplus\mathfrak{K}_{\ol{\Sigma}_2})(\mathfrak{i}([G^\PJ f_1]))=F_1\oplus H_1\oplus \alpha_1\restriction_{\partial\ol{\Sigma}_1}\oplus 0\oplus 0\oplus 0\oplus \alpha_2\restriction_{\partial\ol{\Sigma}_2}\oplus 0,
    \end{equation}
    where $H_1=(\nml_{\Sigma}\diff G^\PJ f_1)\restriction_{\ol{\Sigma}_1}$ and
    \begin{equation}
        \mathbf{A}\restriction_{\ol{\Sigma}_1}=F_1+\diff\alpha_1,\qquad \mathbf{A}\restriction_{\ol{\Sigma}_2}=\diff\alpha_2
    \end{equation}
   with $\alpha_i\in \Omega^0(\ol{\Sigma}_i)$ with $\Delta\alpha_i=0$ and $\alpha_i\restriction_{\partial\ol{\Sigma}_i}=\mathscr{LG}_\angle(\ol{N}_i)$. As a result, we find
    \begin{align}
        \iota(W^M([G^\PJ f_1]))=&\won(W^C_1(F_1\oplus H_1)\otimes W^S_1(\alpha_1\restriction_{\partial\ol{\Sigma}_1}\oplus 0)\otimes 1\otimes W^S_2(\alpha_2\restriction_{\partial\ol{\Sigma}_2}\oplus 0))\nonumber\\
        =&\won(W^C_1(F_1\oplus H_1)\otimes W^S_1(\alpha_1\restriction_{\partial\ol{\Sigma}_1}-\alpha_2\restriction_{\partial\ol{\Sigma}_2}\oplus 0)\otimes 1\otimes 1)\nonumber\\
        =&\overset{\bullet\circ}{\psi}(W^C_1(F_1\oplus H_1)\otimes W^S_1(\alpha_1\restriction_{\partial\ol{\Sigma}_1}-\alpha_2\restriction_{\partial\ol{\Sigma}_2}\oplus 0)\otimes 1),
    \end{align}
    having identified $\Af(\ol{N}_i)=\Af^{C\oplus S}(\ol{\Sigma}_i)$ using the CHH isomorphisms, the defining property of $I=\ker(\won)$ and Eq.~\eqref{eq:psibo}.

    Since $(G^\PJ f_1)\restriction_{\ol{\Sigma}_2}=0$, it follows that
    \begin{equation}
        \diff\alpha\restriction_{\ol{\Sigma}_2}=-\diff\alpha_2,
    \end{equation}
    In particular, we may choose $\alpha$ such that $\alpha\restriction_{\ol{\Sigma}_2}=-\alpha_2$, we find
    \begin{equation}
        (G^\PJ f_1)\restriction_{\ol{\Sigma}_1}=F_1+\diff(\alpha_1+\alpha),
    \end{equation}
    where $(\alpha_1+\alpha)\restriction_{\partial\ol{\Sigma}_1}=\alpha_1\restriction_{\partial\ol{\Sigma}_1}-\alpha_2\restriction_{\partial\ol{\Sigma}_2}$. In particular, we find that
    \begin{equation}
        \mathfrak{K}_{\ol{\Sigma}_1}([G^\PJ f_1\restriction_{\ol{N}_1}])=F_1\oplus H_1\oplus \alpha_1\restriction_{\partial\ol{\Sigma}_1}-\alpha_2\restriction_{\partial\ol{\Sigma}_2}\oplus 0.
    \end{equation}
    Using again the CHH isomorphism on $\ol{N}_1$, we find
    \begin{equation}
        \iota(W^M([G^\PJ f_1]))=\won([G^\PJ f_1\restriction_{\ol{N}_1}])\otimes 1)=\overset{\bullet\circ}{\psi}(W([G^\PJ f_1\restriction_{\ol{N}_1}])\otimes 1).
    \end{equation}
\end{proof}
In general, one does not expect $\iota:\Af(M)\to \Af_{\textup{glue}}(\ol{N}_1,\ol{N}_2)$ to be surjective. We say that the observables in $\Af_{\textup{glue}}(\ol{N}_1,\ol{N}_2)\setminus \iota(\Af(M))$ are singular on the corner $\angle\ol{N}_*$.
\subsection{Gluing of states}

So far, we have discussed how the observable algebras from $\ol{N}_1$ and $\ol{N}_2$ can be glued together. To conclude this section, we consider how gluing may be implemented at the level of states. Suppose one is given states $\omega_i$ on $\Af(\ol{N}_i)$ for $i=1,2$, perhaps obeying some compatibility condition, we aim to construct one or more glued states on $\Af_{\textnormal{glue}}(\ol{N}_1;\ol{N}_2)$ that represents a gluing of $\omega_1$ and $\omega_2$ at the shared boundary $\partial\ol{\Sigma}_*$. To do this, we impose some minimal conditions on the glued state in terms of its partial traces. 
\begin{definition}
    \label{def:glue_states}
    A state $\omega$ on $\Af_\textnormal{glue}(\ol{N}_1;\ol{N}_2)$ is a \emph{gluing} of
    states $\omega_i$ on $\Af(\ol{N}_i)$ ($i=1,2$) if the pulled back states $\overset{\circ\bullet}{\omega}=\omega\circ\overset{\circ\bullet}{\psi}$ 
    on $\Af^C(\ol{\Sigma}_1)\otimes\Af(\ol{N}_2)$ and  $\overset{\bullet\circ}{\omega}=\omega\circ\overset{\bullet\circ}{\psi}$ on $\Af(\ol{N}_1)\otimes\Af^C(\ol{\Sigma}_2)$ have partial traces
    \begin{equation}
        \overset{\circ\bullet}{\omega}{}^{C}_i = \omega^C_i =\overset{\bullet\circ}{\omega}{}^{C}_i, \qquad
        \overset{\bullet\circ}{\omega}{}^{S}_1 = \omega^S_1, \qquad \overset{\circ\bullet}{\omega}{}^{S}_2 = \omega^S_2.
    \end{equation}
    for $i=1,2$, where the notation $\varphi^{C/S}_{i}$ is a partial trace of a state $\varphi$ onto $\Af^{C/S}(\ol{\Sigma}_i)$, assuming that $\varphi$ is a state on a tensor product of algebras including a single tensor factor of $\Af^{C/S}(\ol{\Sigma}_i)$. We also abbreviate $(\omega_i){}^{C/S}_{i}$ to $\omega^{C/S}_i$.
\end{definition}
In fact, it can be shown that $\overset{\circ\bullet}{\omega}{}^{C}_i= \overset{\bullet\circ}{\omega}{}^{C}_i$, so not all of these conditions have to be checked. 
Gluings only exist for states that are compatible at the boundary in the following sense. 
\begin{lemma} \label{lem:comp}
If a gluing of $\omega_1$ and $\omega_2$ exists then $\omega_1^S(W(f\oplus h)) = \omega_2^S(W(-f\oplus -h))$ for
all $f\oplus h\in V^S(\ol{\Sigma}_*)$.
\end{lemma}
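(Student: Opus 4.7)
The plan is to express both sides of the claimed identity as evaluations of $\omega$ on two particular elements of $\Af_\textnormal{glue}(\ol{N}_1;\ol{N}_2)$, and then show that these two elements are in fact equal in the quotient. Throughout, I use the tensor product identification $\Af(\ol{N}_i)=\Af^C(\ol{\Sigma}_i)\otimes \Af^S(\ol{\Sigma}_*)$ from the quantum CHH-isomorphism.

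First, by the partial-trace conditions in Def.~\ref{def:glue_states},
\begin{align*}
    \omega_1^S(W^S(f\oplus h)) &= \overset{\bullet\circ}{\omega}{}^S_1(W_1^S(f\oplus h)) = \omega\bigl(\overset{\bullet\circ}{\psi}(1\otimes W_1^S(f\oplus h)\otimes 1)\bigr),\\
    \omega_2^S(W^S(-f\oplus -h)) &= \overset{\circ\bullet}{\omega}{}^S_2(W_2^S(-f\oplus -h)) = \omega\bigl(\overset{\circ\bullet}{\psi}(1\otimes 1\otimes W_2^S(-f\oplus -h))\bigr).
\end{align*}
Plugging into the explicit formulas for $\overset{\bullet\circ}{\psi}$ and $\overset{\circ\bullet}{\psi}$ from Thm.~\ref{thm:glue}, these become respectively
\[
\omega\bigl(\won\bigl(1\otimes W_1^S(f\oplus h)\otimes 1\otimes W_2^S(0\oplus -h)\bigr)\bigr)
\quad\text{and}\quad
\omega\bigl(\won\bigl(1\otimes W_1^S(0\oplus h)\otimes 1\otimes W_2^S(-f\oplus -h)\bigr)\bigr).
\]
So it suffices to show that the two arguments of $\omega$ coincide in $\Af_\textnormal{glue}(\ol{N}_1;\ol{N}_2)$.

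The key input will be the defining relation of the quotient: for every $\lambda\in\mathscr{LG}_\angle(\ol{N}_*)$, the element $U^{(1)}_\mathscr{LG}(\lambda)\otimes U^{(2)}_\mathscr{LG}(\lambda)=W_1^S(\lambda\oplus 0)\otimes W_2^S(\lambda\oplus 0)$ is mapped to $1$ by $\won$. I will apply this with $\lambda=-f$: left-multiplying $W_1^S(f\oplus h)\otimes W_2^S(0\oplus -h)$ by $W_1^S(-f\oplus 0)\otimes W_2^S(-f\oplus 0)$ leaves the class under $\won$ unchanged, and, by the Weyl relations in each factor, equals $e^{\frac{i}{2}\langle f,h\rangle_{\partial\ol{\Sigma}_*}}W_1^S(0\oplus h)\otimes e^{-\frac{i}{2}\langle f,h\rangle_{\partial\ol{\Sigma}_*}}W_2^S(-f\oplus -h)$; the two phases cancel, yielding precisely $W_1^S(0\oplus h)\otimes W_2^S(-f\oplus -h)$. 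Tensoring with the identity in the two closed-loop factors, this gives the desired equality inside $\Af_\textnormal{glue}(\ol{N}_1;\ol{N}_2)$, and hence the equality of the values of $\omega$.

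There is no real obstacle: the argument is a direct unwinding of the gluing-isomorphism formulas followed by a one-line Weyl calculation. The only point requiring a moment's care is that the two phase factors produced by the Weyl relations on the two sides of the tensor product have opposite signs (because the $S$-symplectic pairings involve $+h$ on the first factor and $-h$ on the second) and therefore cancel exactly, so the quotient identification is clean and no additional phase appears.
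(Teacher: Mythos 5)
Your proof is correct. The claimed partial-trace identities, the unwinding through the explicit formulas for $\overset{\bullet\circ}{\psi}$ and $\overset{\circ\bullet}{\psi}$, and the Weyl-phase cancellation are all verified correctly (the two phases $e^{\pm\frac{i}{2}\ipc{f}{h}}$ do indeed cancel because the symplectic pairings on the two $S$-factors involve $+h$ and $-h$ respectively).

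The route is slightly different from the paper's. The paper simply writes $\overset{\bullet\circ}{\omega}=\omega\circ\overset{\bullet\circ}{\psi}=\overset{\circ\bullet}{\omega}\circ\overset{\circ\bullet}{\psi}{}^{-1}\circ\overset{\bullet\circ}{\psi}=\overset{\circ\bullet}{\omega}\circ\Xi_{21}$, which follows at once from the commutativity of diagram~\eqref{eq:glue_isos} in Theorem~\ref{thm:glue}, and then reads off the result from the formula~\eqref{eq:Xi21} for $\Xi_{21}$. You instead go back to the explicit quotient presentation and redo the ideal manipulation (left-multiplying by $1\otimes W^S_1(-f\oplus 0)\otimes 1\otimes W^S_2(-f\oplus 0)$, whose $\won$-image is $1$, and invoking the Weyl relations). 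That computation is essentially a re-derivation, in the special case $\udl{v}_1=\udl{v}_2=0$, of the step used in the proof of Theorem~\ref{thm:glue} to establish that $\overset{\circ\bullet}{\psi}\circ\Xi_{21}=\overset{\bullet\circ}{\psi}$. So your argument is more self-contained and explicit, at the cost of repeating work that Theorem~\ref{thm:glue} already packages; the paper's version is shorter precisely because it cites the commutativity once and then uses only the formula for $\Xi_{21}$.
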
 
\begin{proof}
    Suppose $\omega$ is a gluing of $\omega_1$ and $\omega_2$ as above. Let $f\oplus h\in V^S(\ol{\Sigma}_*)$, then \begin{equation}
        \omega_1^S(W(f\oplus h))=\overset{\bullet\circ}{\omega}{}^{S}_1(W(f\oplus h))=\overset{\bullet\circ}{\omega}{}(1\otimes W_1(f\oplus h)\otimes 1),
    \end{equation}
    and
    \begin{equation}
        \omega_2^S(W(-f\oplus -h))=\overset{\circ\bullet}{\omega}{}^{S}_2(W(-f\oplus -h))=\overset{\circ\bullet}{\omega}{}(1\otimes 1 \otimes W_2(-f\oplus -h)).
    \end{equation}
    Now note that by Thm.~\ref{thm:glue}, we have
    \begin{equation}
        \overset{\bullet\circ}{\omega}=\omega \circ \overset{\bullet\circ}{\psi}=\overset{\circ\bullet}{\omega}\circ\overset{\circ\bullet}{\psi}{}^{-1} \circ \overset{\bullet\circ}{\psi}=\overset{\circ\bullet}{\omega}\circ\Xi_{21}.
    \end{equation} 
    The result follows using the definition of $\Xi_{21}$ in Eq.~\eqref{eq:Xi21}. 
    \end{proof}

Examples of glued states constructed from compatible $\omega_i$ are easily given,\footnote{These states could be constructed even if $\omega_1$ and $\omega_2$ do not obey the compatibility condition of Lemma~\ref{lem:comp} but the sense in which they represent gluings is less clear.}  such as
\begin{equation}
\omega_{1(2)}=(\omega_1\otimes \omega_2^{C})\circ \overset{\bullet\circ}{\psi}{}^{-1}, \qquad 
\omega_{(1)2}=(\omega_1^C\otimes \omega_2)\circ \overset{\circ\bullet}{\psi}{}^{-1}.
\end{equation} 
Moreover, any mixture of these states is a gluing of
$\omega_1$ and $\omega_2$, including the equal-weight mixture $\omega_{12}=\tfrac{1}{2}(\omega_{(1)2}+\omega_{1(2)})$ which treats $\omega_1$ and $\omega_2$ symmetrically. 

We can describe a number of properties of $\omega_{(1)2}$ and $ \omega_{1(2)}$.
\begin{proposition}
    \label{prop:glue_states}
    Let $\omega_{1,2}$ be compatible states on $\Af(\ol{N}_i)$ and $\omega_{(1)2}$ and $\omega_{1(2)}$ as above.
    Let $\omega_{(1)2}'=\omega_{(1)2}\circ \iota$ and $\omega_{1(2)}'=\omega_{1(2)}\circ \iota$ be induced states on $\Af(M)$, where $\iota:\Af(M)\to \Af_{\textnormal{glue}}(\ol{N}_1;\ol{N_2})$ is as in Thm.~\ref{thm:Mtoglue}. 
    \begin{enumerate}
        \item $\omega_{(1)2}=\omega_{1(2)}$ if and only if $\omega_i=\omega_i^C\otimes \omega_i^S$ for all $i\in\{1,2\}$. 
        \item If $\omega_i$ are quasi-free for both $i\in\{1,2\}$, then $\omega_{(1)2}'$ and $\omega_{1(2)}'$ are quasifree. 
        \item For $i\in\{1,2\}$, let $f_i\in \Omega^1_{0\diff^*}(\ol{M})$ such that $\supp(f_i)\subset N_i$. Then 
        \begin{equation}
            \omega_{1(2)}'(W^M([G^\PJ f_1]))=\omega_1(W([G^\PJ f_1\restriction_{\ol{N}_1}])),\qquad \omega_{(1)2}'(W^M([G^\PJ f_2]))=\omega_2(W([G^\PJ f_2\restriction_{\ol{N}_2}])).
        \end{equation}
    \end{enumerate}
\end{proposition}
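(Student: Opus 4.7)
My plan is to address the three parts in the order (3), (1), (2), exploiting the explicit formulae for $\iota$, $\overset{\bullet\circ}{\psi}$ and $\overset{\circ\bullet}{\psi}$ provided by Theorems~\ref{thm:Mtoglue} and~\ref{thm:glue}. Part (3) is essentially a one-line calculation. For $\supp(f_1)\subset N_1$, Thm.~\ref{thm:Mtoglue} gives $\iota(W^M([G^\PJ f_1])) = \overset{\bullet\circ}{\psi}(W([G^\PJ f_1\restriction_{\ol{N}_1}])\otimes 1)$, so the definition $\omega_{1(2)} = (\omega_1\otimes\omega_2^C)\circ\overset{\bullet\circ}{\psi}{}^{-1}$ yields
\begin{equation*}
    \omega_{1(2)}'(W^M([G^\PJ f_1])) = (\omega_1\otimes\omega_2^C)\bigl(W([G^\PJ f_1\restriction_{\ol{N}_1}])\otimes 1\bigr) = \omega_1(W([G^\PJ f_1\restriction_{\ol{N}_1}])),
\end{equation*}
and the identity for $f_2$ follows by the symmetric argument using $\overset{\circ\bullet}{\psi}$ and $\omega_{(1)2}$.

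For part (1), I will equate the two states on all generators of $\Af_\textnormal{glue}(\ol{N}_1;\ol{N}_2)\cong\Af^C(\ol{\Sigma}_1)\otimes\Af^S(\ol{\Sigma}_*)\otimes\Af^C(\ol{\Sigma}_2)$, expressed via $\overset{\bullet\circ}{\psi}$. From $\omega_{1(2)}\circ\overset{\bullet\circ}{\psi}=\omega_1\otimes\omega_2^C$ and the identity $\overset{\bullet\circ}{\psi}=\overset{\circ\bullet}{\psi}\circ\Xi_{21}$ of Thm.~\ref{thm:glue}, one obtains $\omega_{(1)2}\circ\overset{\bullet\circ}{\psi}=(\omega_1^C\otimes\omega_2)\circ\Xi_{21}$, and using the explicit form~\eqref{eq:Xi21} of $\Xi_{21}$ the equality $\omega_{(1)2}=\omega_{1(2)}$ becomes the numerical identity
\begin{equation*}
    \omega_1(W^C_1(\udl{v}_1)\otimes W^S_1(f\oplus h))\,\omega_2^C(W^C_2(\udl{v}_2)) = \omega_1^C(W^C_1(\udl{v}_1))\,\omega_2(W^C_2(\udl{v}_2)\otimes W^S_2(-f\oplus -h))
\end{equation*}
for all $\udl{v}_i\in V^C(\ol{\Sigma}_i)$ and $f\oplus h\in V^S(\ol{\Sigma}_*)$. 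Setting $\udl{v}_1=\udl{v}_2=0$ recovers the compatibility relation $\omega_1^S(W(f\oplus h))=\omega_2^S(W(-f\oplus -h))$ of Lem.~\ref{lem:comp}; next, putting $\udl{v}_2=0$ and invoking compatibility forces $\omega_1=\omega_1^C\otimes\omega_1^S$ on generators, hence everywhere, and by symmetry $\udl{v}_1=0$ yields $\omega_2=\omega_2^C\otimes\omega_2^S$. The converse is immediate, since factorisation together with compatibility makes both sides equal to $\omega_1^C(W^C_1(\udl{v}_1))\omega_2^C(W^C_2(\udl{v}_2))\omega_1^S(W^S_1(f\oplus h))$.

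Part (2) rests on the fact that quasi-freeness is preserved under partial traces, tensor products, and pullback by any unital $*$-homomorphism of Weyl algebras that sends Weyl generators to Weyl generators up to phase. Since $\omega_1,\omega_2$ are quasi-free, so are $\omega_1^C,\omega_2^C$ and the tensor products $\omega_1\otimes\omega_2^C$ and $\omega_1^C\otimes\omega_2$; as $\overset{\bullet\circ}{\psi}$ and $\overset{\circ\bullet}{\psi}$ are Weyl quantisations of symplectic isomorphisms, $\omega_{1(2)}$ and $\omega_{(1)2}$ are quasi-free on $\Af_\textnormal{glue}(\ol{N}_1;\ol{N}_2)$. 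It then suffices to show that $\iota=\won\circ\Weyl(\mathfrak{i}):\Af(M)\to\Af_\textnormal{glue}(\ol{N}_1;\ol{N}_2)$ sends each Weyl generator $W^M([\udl{A}])$ to a single Weyl generator of $\Af_\textnormal{glue}(\ol{N}_1;\ol{N}_2)\cong\Weyl(V^C(\ol{\Sigma}_1)\oplus V^S(\ol{\Sigma}_*)\oplus V^C(\ol{\Sigma}_2))$, whereupon $\omega_{1(2)}'$ and $\omega_{(1)2}'$ are quasi-free as pullbacks of quasi-free states by such a map. I expect this last verification to be the main technical obstacle: starting from the representatives $[\udl{A}_i]$ from the proof of Thm.~\ref{thm:Mtoglue}, their CHH scalar data at $\partial\ol{\Sigma}_*$ will generically differ, but smoothness of $\udl{A}$ across the corner confines the difference to $\mathscr{LG}_\angle(\ol{N}_*)$, and the defining generators $\widetilde{U}_\mathscr{LG}(\lambda)-1\otimes 1$ of $I$ identify $W^S_1(\lambda\oplus 0)\otimes 1$ with $1\otimes W^S_2(-\lambda\oplus 0)$ modulo $I$; absorbing the mismatch this way makes $\iota(W^M([\udl{A}]))$ descend to a single Weyl generator in the three-factor decomposition, completing the proof.
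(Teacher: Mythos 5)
Parts (3) and (1) are correct and coincide with the paper's own route, modulo reordering; the derivation in part (1) of the factorisation $\omega_i=\omega_i^C\otimes\omega_i^S$ from the generator identity (and the converse) is exactly the paper's argument.

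Part (2) is on the right track but is left unfinished at precisely the point the paper devotes the most effort to, and one assertion along the way is stated too loosely. You write that quasi-freeness is preserved under pullback by a homomorphism sending Weyl generators to Weyl generators \emph{up to phase}; this is false for the paper's Definition~\ref{def:quasi-free-kin}, since any nontrivial phase $e^{i\phi(v)}$ would make the pulled-back state complex-valued and hence not of the form $\exp(-\tfrac12\|\cdot\|_\mu^2)$. So the content of part (2) really is the \emph{phase-free} statement $\iota(W^M([\udl{A}]))=\overset{\bullet\circ}{\psi}(W(\mathfrak{i}_{1(2)}[\udl{A}]))$ for an explicit linear map $\mathfrak{i}_{1(2)}$. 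You correctly identify the mechanism --- the mismatch between the scalar CHH boundary data $f_1$ and $f_2$ of the two restricted initial data lies in $\mathscr{LG}_\angle(\ol{N}_*)$, and the ideal $I$ identifies $W_1^S(\lambda\oplus 0)\otimes 1$ with $1\otimes W_2^S(-\lambda\oplus 0)$ --- but then stop, saying you ``expect this to be the main technical obstacle''. The paper carries this out: it introduces $\mathfrak{i}_{1(2)}([\udl{A}])=\udl{v}_1^C\oplus (f_1-f_2\oplus h_1)\oplus\udl{v}_2^C$ (which is linear, hence the pulled-back two-point form is an inner product $\mu_{1(2)}=(\mu_1\oplus\mu_2^C)\circ\mathfrak{i}_{1(2)}^{\otimes 2}$), and checks via the Weyl cocycle that when one multiplies $W_1^S(f_1-f_2\oplus h_1)\otimes W_2^S(0\oplus -h_1)$ by the $I$-generator $W_1^S(f_2\oplus 0)\otimes W_2^S(f_2\oplus 0)$ the two phases $e^{\pm\frac{i}{2}\langle f_2,h_1\rangle}$ cancel identically, giving the phase-free equality you need. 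Without that cancellation the conclusion would fail, so it is not a detail one can defer.
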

\begin{proof}
    1. Note that $\omega_{(1)2}=\omega_{1(2)}$ if and only if $(\omega_1^C\otimes \omega_2)=(\omega_1\otimes \omega_2^C)\circ \Xi_{12}$ if and only if $(\omega_1^C\otimes \omega_2)\circ \Xi_{21}=(\omega_1\otimes \omega_2^C)$. Let $\udl{v}_i^C\in V^C(\ol{\Sigma}_i)$ and $\udl{v}^S\in V^S(\ol{\Sigma}_*)$ then
    \begin{align}
        \Xi_{12}(1\otimes W_2^C(\udl{v}_2^C)\otimes W_2^S(\udl{v}^S))=1\otimes W_1^S(-\udl{v}^S)\otimes W_2^C(\udl{v}_2^C),\nonumber\\
        \Xi_{21}(W_1^C(\udl{v}_1^C)\otimes W_1^S(\udl{v}^S)\otimes 1)=1\otimes W_2^C(\udl{v}_2^C)\otimes W_2^S(-\udl{v}^S).
    \end{align}
    Thus suppose $\omega_{(1)2}=\omega_{1(2)}$, then
    \begin{align}
        \omega_2(W_2^C(\udl{v}_2^C)\otimes W_2^S(\udl{v}^S))=&((\omega_1\otimes \omega_2^C)\circ\Xi_{12})(1\otimes W_2^C(\udl{v}_2^C)\otimes W_2^S(\udl{v}^S))\nonumber\\
        =&(\omega_1\otimes \omega_2^C)(1\otimes W_1^S(-\udl{v}^S)\otimes W_2^C(\udl{v}_2^C))\nonumber\\=&\omega_1^S(W_1^S(-\udl{v}^S))\omega_2^C(W_2^C(\udl{v}_2^C))=\omega_2^C(W_2^C(\udl{v}_2^C))\omega_2^S(W_2^S(\udl{v}^S)).
    \end{align}
    Similarly 
    \begin{equation}
        \omega_1(W_1^C(\udl{v}_1^C)\otimes W_1^S(\udl{v}^S))=\omega_1^C(W_1^C(\udl{v}_1^C))\omega_1^S(W_1^S(\udl{v}^S)).
    \end{equation}
    It follows that $\omega_i=\omega_i^C\otimes \omega_i^S$. To show the reverse, one simply shows that $(\omega_1^C\otimes \omega_2^C\otimes \omega_2^S)=(\omega_1^C\otimes \omega_1^S\otimes \omega_2^C)\circ \Xi_{12}$ holds on all Weyl generators.

    2. For each $[\udl{A}]\in \Sol_\mathscr{G}(M)$, let 
    \begin{equation}
        \udl{v}_1^C\oplus (f_1\oplus h_1)\oplus \udl{v}_2^C\oplus (f_2\oplus -h_1)=(\mathfrak{K}_{\ol{\Sigma}_1}\oplus \mathfrak{K}_{\ol{\Sigma}_2})(\mathfrak{i}([\udl{A}])),
    \end{equation}
    we then define a linear map $\mathfrak{i}_{1(2)}:\Sol_\mathscr{G}(M)\to V^C(\ol{\Sigma}_1)\oplus V^S(\ol{\Sigma}_1)\oplus V^C(\ol{\Sigma}_2)$
    by
    \begin{equation}
        \mathfrak{i}_{1(2)}([\udl{A}])=\udl{v}_1^C\oplus(f_1-f_2\oplus h_1)\oplus\udl{v}_2^C.
    \end{equation}
    We compute
    \begin{align}
        \iota(W^M([\udl{A}]))&=\won(W_1^C(\udl{v}_1^C)\otimes W_1^S(f_1-f_2\oplus h_1)\otimes W_2^C(\udl{v}_2^C)\otimes W_2^S(0\oplus -h_1)) \nonumber\\
        &=\overset{\bullet\circ}{\psi}(W_1^C(\udl{v}_1^C)\otimes W_1^S(f_1-f_2\oplus h_1)\otimes W_2^C(\udl{v}_2^C)).
    \end{align}
    Denoting by $\mu_1$ the inner product on $V^C(\ol{\Sigma}_1)\oplus V^S(\ol{\Sigma}_1)$ associated with $\omega_1$ and by $\mu_2^C$ the inner product on $V^C(\ol{\Sigma}_2)$ associated with $\omega_2^C$, we find 
    \begin{align}
        \omega_{1(2)}(\iota(W^M([\udl{A}])))=&\exp\left(-\frac{1}{2}\left(\Vert \udl{v}_1^C\oplus(f_1-f_2\oplus h_1)\Vert_{\mu_1}^2+\Vert \udl{v}_2^C\Vert_{\mu_2^C}^2\right)\right)\nonumber\\
        =&\exp\left(-\frac{1}{2}\Vert \mathfrak{i}_{1(2)}([\udl{A}])\Vert_{\mu_1\oplus\mu_2^C}^2\right).
    \end{align}
    Thus $\omega_{1(2)}'=\omega_{1(2)}\circ\iota$ is quasi-free with associated inner product 
    \begin{equation}
        \mu_{1(2)}=(\mu_1\oplus \mu_2^C)\circ(\mathfrak{i}_{1(2)})^{\otimes 2}.
    \end{equation}
    One finds a similar result for $\omega_{(1)2}'$.

    3. Using Thm.~\ref{thm:Mtoglue}, we find
    \begin{align}
        \omega_{1(2)}'(W^M([G^\PJ f_1]))=&((\omega_1\otimes \omega_2^{C})\circ \overset{\bullet\circ}{\psi}{}^{-1})(\iota(W^M([G^\PJ f_1])))\nonumber\\
        =&((\omega_1\otimes \omega_2^{C})\circ \overset{\bullet\circ}{\psi}{}^{-1})(\overset{\bullet\circ}{\psi}(W^M([G^\PJ f_1\restriction_{\ol{N}_1}])\otimes 1))\nonumber\\
        =&(\omega_1\otimes \omega_2^{C})(W^M([G^\PJ f_1\restriction_{\ol{N}_1}])\otimes 1)=\omega_1(W^M([G^\PJ f_1\restriction_{\ol{N}_1}])).
    \end{align}
    Similarly, one finds the analogous result for $\omega_{(1)2}'$.
\end{proof}

A shortcoming of this procedure is that the glued state
may exhibit singular behaviour at the boundary, such as the failure of the Hadamard property (developed for semilocal observables in~\cite{fewsterHadamardStatesSemilocal}) -- perhaps indicating the existence of a surface defect. In future work, it would be interesting to consider smoothed gluing prescriptions that might preserve the Hadamard property. For example, 
Hadamard states of scalar fields can be glued using partition of unity methods~\cite{gerardConstructionHadamardStates2014}, which can be regarded as gluing over a smoothed boundary. Alternatively, one could consider regularisation procedures analogous to~\cite{brumVacuumlikeHadamardStates2014}. Such prescriptions might sacrifice a small neighbourhood of the boundary, in which the state does not match either of the original states, resulting in a family of approximate gluings rather than a unique glued state.

\section{Conclusions and Outlook}\label{sec:conclusion}
In this work, we have developed a systematic algebraic framework for describing quantum electromagnetism on spacetimes with boundaries and corners, adopting the perspective that semi-local observables provide the correct language to capture the physics of QFT with boundary. Focusing on finite Cauchy lenses, we constructed the classical covariant phase space, exhibited a  symplectic decomposition into closed-loop and surface components, and identified a corresponding Poisson algebra of semi-local observables extending the traditional algebra of local observables. Quantisation via a Weyl $C^*$-algebra led to a semi-local quantum field theory whose boundary-sensitive degrees of freedom transform non-trivially under large gauge transformations. 

One of the central conceptual insights of this paper is to interpret edge modes (in this case corner surface fields) as operational \emph{quantum reference frames} (QRFs) for large gauge transformations. Accordingly, we extended the electromagnetic system by a surface field and constructed a \emph{relativisation map} that dresses semi-local observables into fully gauge-invariant ones. In this extension, the relativised observables $\Af_\rel(\ol{N})$ can be used to formulate a superselection criterion \eqref{eq: sups criterion} for the algebra of joint large gauge invariant observables $\widetilde{\Af}^{\mathscr{LG}}(\ol{N})$, whose corresponding sectors are labelled by external electric fluxes. 
In contrast to previous literature on operational QRFs, our construction of a relativisation map proceeds on the level of $C^*$-algebras and hence is state-independent.

We have also shown how the QRF viewpoint can be applied to glue theories on two finite Cauchy lenses that share a common boundary. This novel approach to gluing emphasizes the physical idea that smeared Wilson lines on Cauchy lenses that touch the common boundary should be glued together to form Wilson loops in the algebra of the glued Cauchy lens. Finally, the operational interpretation of the relativisation map via covariant projection-valued measures further emphasizes the connection of our current results to with our previous work \cite{fewsterQuantumReferenceFrames2025}.

\paragraph{Future research directions} 

Our constructions suggest a number of promising avenues for further research. A key outstanding problem is to construct physically motivated states on $\Af(\ol{N})$ — in particular, quasi-free Hadamard states compatible with the presence of boundaries and with gluing procedures. Such states would also enable the definition of Wick polynomials and the study of renormalisation in concrete settings. We will address this problem in a forthcoming paper~\cite{fewsterHadamardStatesSemilocal}.

Furthermore, we would like to use our current setting to study superselection sectors of the theory, by considering appropriate classes of representations and potentially provide a link to Doplicher-Haag-Roberts or Buchholz-Fredenhagen superselection sector analysis (see e.g. chapter IV of \cite{haagLocalQuantumPhysics1996}). This could also shed some light on infrared problem in QED, where infrared degrees of freedom and corresponding superselection sectors play an important role.

Another obvious research direction is extension to non-abelian and gravitational gauge theories. Many qualitative features considered here — particularly the emergence of edge modes and the role of large gauge transformations — are expected to extend to Yang–Mills theories and general relativity. Adapting our semi-local framework to non-abelian gauge symmetry and diffeomorphism symmetry would provide a powerful tool for the quantisation of gauge theories on manifolds with corners. We also hope to make contact with the framework for Lorentzian cobordisms and gluing provided in \cite{bunkLorentzianBordismsAlgebraic2025} and develop a functorial formulation of semi-local quantum physics, in analogy to locally covariant QFT~\cite{brunettiGenerallyCovariantLocality2003a}. 
 
More broadly, semi-local quantum physics should provide a natural language for subsystem decomposition in gauge theories. We anticipate applications to entanglement entropy, relative entropy and quantum information-theoretic quantities, where edge mode contributions are known to be crucial.
 
Finally, the name `semi-local observable' begs the question of how such objects could be measured. It is therefore important to extend the measurement framework of~\cite{fewsterQuantumFieldsLocal2020} to the semi-local arena. 

In summary, our analysis positions semi-local observables and quantum reference frames as central organising principles for gauge theories on manifolds with boundary. We anticipate that these tools will serve as a foundation for a broad range of developments in mathematical physics, from quantum gravity and infrared QFT to quantum information theory.

\paragraph{Acknowledgments} We thank Aldo Riello, Michele Schiavina and Laurent Freidel for valuable conversations during the course of this work. Part of the research was conducted during visits of DJ and KR to the Perimeter Institute, which we thank for its hospitality.
This work was supported by EPSRC Grant EP/Y000099/1 to the University of York.
For the purpose of open access, the authors have applied a creative commons attribution (CC BY) licence to any author accepted manuscript version arising.

\paragraph{Data statement}
Data sharing is not applicable to this article as no new data were created or analysed in this study.

\begin{appendices}
\section{Table of relevant symbols and spaces}\label{apx:symbols}
Below $\ol{M}$ is used for a generic manifold $\ol{M}$ (possibly with boundaries) with $M=\Intr\ol{M}$ a manifold without boundaries, $\ol{N}$ for a finite Cauchy lens, $\ol{\Sigma}\subset \ol{N}$ for a regular Cauchy surface of $\ol{N}$ and $S$ for a generic Riemannian manifold (possibly with boundaries). 

\renewcommand{\arraystretch}{1.2}
\begin{longtable}{||r|c|l||} 
 \hline
 Symbol & Sect./Eq. & Space  \\ [0.5ex] 
 \hline\hline
 $\Omega^k(\ol{M})$& \S\ref{sec:forms} &differential $k$-forms on manifold $\ol{M}$ \\
 $\Omega^k_{\diff}(\ol{M})$ & \S\ref{sec:forms} & closed $k$-forms on manifold $\ol{M}$ \\
 $\Omega^k_{\diff^*}(\ol{M})$ & \S\ref{sec:forms}& co-closed $k$-forms on manifold $\ol{M}$ \\
 $\Omega^k_0(\ol{M})$& \S\ref{sec:forms}&  differential $k$-forms with compact support on manifold $M$ \\
 $\Omega^k_{0\diff}(\ol{M})$& \S\ref{sec:forms} & closed $k$-forms with compact support on manifold $M$ \\
 $\Omega^k_{0\diff^*}(\ol{M})$& \S\ref{sec:forms} & co-closed $k$-forms with compact support on manifold $M$ \\
 $\Omega^k_{\tang\diff^*}(\ol{M})$& \eqref{eq:Omegaktdstar_def} & co-closed $k$-forms $\alpha$ on manifold $\ol{M}$ with $\nml_{\partial\ol{M}}\alpha=0$ \\
 $\Sol^J(\ol{N})$& \eqref{eq:SolJN_def} & affine space of $A\in \Omega^k(\ol{N})$ with $-\diff^*\diff A=J$\\
 $\Sol(\ol{N})$& \eqref{eq:solspace_lin} & vector space of $A\in \Omega^k(\ol{N})$ with $-\diff^*\diff A=0$\\
 $T\Sol^J(\ol{N})$& \eqref{eq:TSolJNbar_def} & $\Sol^J(\ol{N})\times \Sol(\ol{N})$\\
 $\mathscr{G}(\ol{N})$& \eqref{eq:GNbardef}&  vector space of $\diff \Lambda$ for $\Lambda\in \Omega^0(\ol{N})$ with $\Lambda\restriction_{\angle\ol{N}}=0$\\
$\mathscr{G}_{\angle}(\ol{N})$ & \eqref{eq:G_angle} & $\Omega^0_{\diff}(\ol{N})\restriction_{\angle\ol{N}}$ \\ 
 $\Sol^J_{\mathscr{G}}(\ol{N})$& \eqref{eq:SolJGN_def} & $\Sol^J(\ol{N})/\mathscr{G}(\ol{N})$\\
 $\Sol_{\mathscr{G}}(\ol{N})$& \eqref{eq:SolJGN_def} & $\Sol(\ol{N})/\mathscr{G}(\ol{N}) = \Sol^0_{\mathscr{G}}(\ol{N})$\\
 $T\Sol^J_{\mathscr{G}}(\ol{N})$& Def.~\ref{def:red_phas} & $\Sol^J_{\mathscr{G}}(\ol{N})\times \Sol_{\mathscr{G}}(\ol{N})$\\
 $\mathscr{G}(\ol{\Sigma})$& \eqref{eq:Gsigma_def} & $\mathscr{G}(\ol{N})\restriction_{\ol{\Sigma}}$\\
$\mathcal{A}(\ol{\Sigma})$ & \eqref{eq:Asigma_def} & $\Omega^1(\ol{\Sigma})/\mathscr{G}(\ol{\Sigma})$ \\
$\mathcal{E}^\rho(\ol{\Sigma})$ & \eqref{eq:Erho_def} & $\{\mathbf{E}\in \Omega^1(\ol{\Sigma}):-\diff^*_{\ol{\Sigma}}\mathbf{E}=\rho\}$ \\
 $\mathscr{LG}(\ol{N})$ & \eqref{eq:large_gauge_def} & 
    $\{[\diff\Lambda]\in \Sol_\mathscr{G}(\ol{N}):\Lambda\in \Omega^0(\ol{N})\}$\\
    $\mathscr{LG}_\angle(\ol{N})$ & Prop.~\ref{prop:large_gauge_boundary} & 
    orthogonal complement of $\mathscr{G}_{\angle}(\ol{N})$ in $\Omega^0(\angle\ol{N})$\\
 $V^C(\ol{\Sigma})$& \eqref{eq:VCVS_def} & $\Omega^1_{\tang\diff^*}(\ol{\Sigma})^{\oplus 2}$\\
 $V^S(\ol{\Sigma})$& \eqref{eq:VCVS_def} & $\left(\nml_{\partial\ol{\Sigma}}\Omega^1_{\diff^*}(\ol{\Sigma})\right)^{\oplus 2}$\\
 $\Floc^{}
    (\ol{N})$&Def.~\ref{def:class-loc-obs}& classical algebra of local observables\\
$\Fsloc^{}
    (\ol{N})$&Def.~\ref{def:class-sloc-obs}& classical algebra of semi-local observables\\
    $\Fsloc^{C/S}
    (\ol{\Sigma})$&Def.~\ref{def:class_obs_decomp}& classical algebra closed loop/surface observables\\
 $\Af(\ol{N})$&Prop.~\ref{prop:CCR_BPI}& quantum algebra of semi-local observables \\
 $\Af^{C/S}(\ol{\Sigma})$&Def.~\ref{def:bulk_corn_weyls}& quantum algebra of closed loop/surface observables \\
 $\Af_{\textup{loc}}(\ol{N})$&Def.~\ref{def:loc_obs_qft}& quantum algebra of local observables \\
 $\Af(\ol{N};U)$&Def.~\ref{def:localisability}& subalgebra of semi-local observables localisable in $U\subset\ol{N}$\\
 $\widetilde{\Sol}(\ol{N})$&\eqref{eq:soltilde}&$\Sol_\mathscr{G}(\ol{N})\oplus V^S(\ol{\Sigma})$\\
 $\widetilde{\mathscr{G}}(\ol{N})$&\eqref{eq:jlgd}&joint large gauge directions in $\widetilde{\Sol}(\ol{N})$\\
 $\Sol_{\textnormal{ext}}(\ol{N})$&\eqref{eq:sfeps}&surface field extended phase space $ \widetilde{\mathscr{G}}(\ol{N})^\perp/\widetilde{\mathscr{G}}(\ol{N})$\\
 $\widetilde{\Af}(\ol{N})$&\eqref{eq:atilde}&$\Weyl(\widetilde{\Sol}(\ol{N}),\widetilde{\sigma})$\\
 $\widetilde{\Af}^{\mathscr{LG}}(\ol{N})$&\eqref{eq:atilde}&joint large gauge invariant observables in $\widetilde{\Af}(\ol{N})$\\
 $\Af_\rel(\ol{N})$&Lem.~\ref{lem:relativise}& quantum algebra of relativised observables\\
 $\Af_{\textnormal{ext}}(\ol{N})$&Def.~\ref{def:bound_gauge_gen}& quantum surface field extended algebra\\
 $\Af_{\textup{glue}}(\ol{N}_1,\ol{N}_2)$&\eqref{eq:glued_def}& glued algebra of finite Cauchy lensens $\ol{N}_{1,2}$ touching at the corner
 \\[1ex] 
 \hline
\end{longtable} 

\section{Manifolds with corners and Lorentzian geometry}
\label{apx:corners}
As mentioned in Sec.~\ref{sec:set-up}, manifolds with corners are locally modelled on $\ol{\RR}_+^n:=[0,\infty)^n$ (see e.g.~\cite{leeIntroductionSmoothManifolds2012}). By convention, a function on (a relatively open subset of) $\ol{\RR}_+^n$ is said to be smooth if it can be smoothly extended to an open $\RR^n$-neighbourhood of any point in its domain. A chart with corners on an $n$-dimensional topological manifold $\ol{M}$ is a pair $(U,\varphi)$ with $\varphi$ a homeomorphism of open subset $U\subset \ol{M}$ onto a relatively open subset of $\ol{\RR}_+^n$. 
A smooth $n$-manifold with corners is an $n$-dimensional topological manifold admitting an maximal atlas $\{(U_i,\varphi_i):i\in I\}$ of smoothly compatible charts with corners, where
smooth compatibility is adapted to smoothness on $\ol{\RR}_+^n$.
Any smooth $n$-manifold with corners $\ol{M}$ admits \emph{smooth extensions}, i.e., homeomorphic embedding $\iota:\ol{M}\to\tilde{M}$ in a smooth $n$-manifold $\tilde{M}$ (without boundaries or corners) from which its maximal atlas may be obtained by restriction and pull-back -- see e.g.~\cite[Prop.~3.1]{douadyArrondissementVarietesCoins1973}.
As such, the smooth functions $C^\infty(\ol{M},\RR)$ can be defined intrinsically using the maximal atlas, or equivalently as the range of a pullback $C^\infty(\ol{M},\RR)=\iota^*(C^\infty(\tilde{M},\RR))$ for any $\iota:\ol{M}\to\tilde{M}$ as above, see e.g.~\cite{gurerDifferentialFormsManifolds2019}. Smooth maps between manifolds with corners are defined in terms of smooth maps between charts with corners in the obvious way. 

If $\pi_k:\ol{\RR}_+^n\to\RR$ is the $k$'th coordinate projection, $F^n_k:=\pi_k^{-1}(\{0\})$ ($1\le k\le n)$ and $E^n_{k,l}:=F^n_k\cap F^n_{l}$ ($1\le k<l\le n$) are the faces and edges of $\ol{\RR}_+^n$.
 Then the boundary of a smooth $n$-manifold with corners $\ol{M}$ with atlas $\{(U_i,\varphi_i)\}_{i\in I}$ is
\begin{equation}
    \partial\ol{M}:=\bigcup_{i\in I}\varphi_i^{-1}\left(\bigcup_{k=1}^n F^n_k\right)=\bigcup_{i\in I}\varphi_i^{-1}\left(\partial\ol{\RR}_+^n\right),
\end{equation}
and the (first-order) corner is
\begin{equation}
    \angle\ol{M}:=\bigcup_{i\in I}\varphi_i^{-1}\left(\bigcup_{1\leq k<j\leq n}E^n_{k,j}\right)=\bigcup_{i\in I}\varphi_i^{-1}\left(\angle\ol{\RR}_+^n\right).
\end{equation}
Higher order corners may also be defined, but will not be used here. The boundary of a manifold with corners decomposes as $\partial\ol{M}=\bigcup_{j\in J}\ol{S}_j$ for some family of submanifolds with corners $\ol{S}_j\subset \ol{M}$; however, $\partial\ol{M}$ is not necessarily in itself a smooth manifold with corners.

As any real interval $I$ (whether open, closed, or semiclosed) is trivially a manifold with (possibly empty) boundary there is a clear notion of a smooth curve in $\ol{M}$ as a smooth map $\gamma:I\to \ol{M}$.
The (co)tangent bundle and their tensor products $T^{(k,l)}\ol{M}:=(T\ol{M})^{\otimes k}\otimes (T^*\ol{M})^{\otimes l}$, as well as their smooth sections, (co)vector- and tensor fields $\mathfrak{X}^{(k,l)}(\ol{M})$, can be defined either intrinsically or, as for the space of smooth functions, via a pullback from a smooth extension $\iota:\ol{M}\to\tilde{M}$. 
In particular, a metric tensor on $\ol{M}$ is a smooth symmetric tensor field $g\in \mathfrak{X}^{(0,2)}(\ol{M})$ that is nondegenerate and therefore of fixed, possibly indefinite, signature. Lorentzian metrics will have mostly minus signature. Notions of (time)-orientations of $\ol{M}$ can be defined as on manifolds without boundary.

On a (time-oriented) Lorentzian manifold with corners $(\ol{M},g)$, a smooth curve $\gamma:I\to\ol{M}$ on real interval $I$ is timelike\slash null\slash causal\slash  spacelike\slash past-directed\slash future-directed if its tangent vector is everywhere of the given type,
including at any endpoints of $I$, by which we mean points of $I\cap\partial I$. A piecewise smooth curve on an interval $[a,b]$ is a function $\gamma:[a,b]\to \ol{M}$ together with a finite partition $a=t_0<\cdots<t_n=b$ so that each $\gamma|[t_r,t_{r+1}]$ is smooth; for general interval $I\subset \RR$, $\gamma:I\to\ol{M}$ is piecewise smooth if $\gamma|_{[a,b]}$ is piecewise smooth for all $a,b\in I$ with $a<b$.

To describe the causal structure of a Lorentzian time-oriented spacetime with corners $(\ol{M},g)$, we follow Penrose~\cite{Penrose:1972} and define a \emph{trip} to be a piecewise smooth curve whose segments are future-directed timelike geodesics, and a \emph{trip from $p$ to $q$} to be a trip $\gamma:[a,b]\to \ol{M}$ with $p=\gamma(a)$, $q=\gamma(b)$. \emph{Causal trips} (between points) are defined in the same way but allowing possibly degenerate causal geodesics in place of timelike geodesics. Then the chronological (resp., causal) future/past $\mathscr{I}^\pm(U)$ (resp., $\mathscr{J}^\pm(U)$) of $U\subset \ol{M}$ are defined so that $p\in\mathscr{I}^+(U)$ (resp., $p\in \mathscr{J}^+(U)$) if and only if there is a trip (resp., causal trip) starting in $U$ and ending at $p$, and with analogous definitions for the chronological and causal pasts. We also write 
$\mathscr{J}(U)=\mathscr{J}^+(U)\cup \mathscr{J}^-(U)$. 

A future-directed piecewise smooth causal curve
$\gamma:I\to \ol{M}$ is past-extendible (resp., future-extendible) if $\gamma(t)$ has a limit $p\in\ol{M}$ as $t\to \inf I$ (resp., $t\to \sup I$) whose causal past (resp., future) is strictly larger than $\{p\}$. The extra condition is not needed for manifolds without boundary, but here we may encounter curves that have endpoints at the boundary or corners of $\ol{M}$. 
The domain of dependence $\mathscr{D}(U)$ of
$U\subset \ol{M}$ is the set of points $p\in\ol{M}$ so that $U$ is met by every future-directed causal trip through $p$ that is neither future-extendible nor past-extendible.

A submanifold $S\subset M$ (possibly with corners) is space-like if $g_S=-\iota^*g$ defines an \emph{induced Riemannian metric} on $S$, where $\iota:S\to \ol{M}$ is the identity embedding. A Cauchy surface in a time-oriented Lorentzian manifold without boundary is an achronal subset whose domain of dependence is the full spacetime, and which is necessarily closed; if a Cauchy surface exists then the spacetime is globally hyperbolic and can be foliated by Cauchy surfaces that are smooth spacelike hypersurfaces (see e.g.~\cite{bernalSmoothCauchyHypersurfaces2003,minguzziCausalHierarchySpacetimes2008a}). 

\section{Gauge transformations and the initial value problem for pure electromagnetism on Cauchy lenses}
\label{Ap:IV_prob_lens}

\subsection{Initial value problem and gauge transformations}
\label{apx:IV_prob_lens}
Here, we will compute the space of gauge transformations $\mathscr{G}(\ol{N})$ defined in~\eqref{eq:GNbardef} to establish~\eqref{eq:GNbar}, and also prove Proposition~\ref{prop:ext_ini_dat} on the initial value problem in the reduced phase space of electromagnetism on a Cauchy lens.

It will be useful to start with a version of the initial value problem in globally hyperbolic spacetimes without boundary. If $\Sigma$ is a smooth spacelike hypersurface in smooth Lorentzian manifold $M$ (possibly with corners), we define $\Data^p_\Sigma:\Omega^p(M)\to \Omega^p(\Sigma)\times\Omega^p(\Sigma)$ by
\begin{equation}
    \Data^p_\Sigma(A) = (A\restriction_\Sigma,(-1)^p\nml_\Sigma \diff A).
\end{equation}
The main use will be $p=1$, and we drop the superscript in this case. 
\begin{proposition}\label{prop:IV_simple}
    Let $\Sigma$ be a Cauchy surface in globally hyperbolic spacetime $M$. For any $J\in\Omega^1_{\diff^*}(M)$, 
    \begin{equation}
        \Data_\Sigma(\Sol^J(M))=
        \{(\mathbf{A},\mathbf{E})\in 
        \Omega^1(\Sigma)^{\times 2}: -\diff^*_\Sigma\mathbf{E}=\nml_\Sigma J\}. 
    \end{equation} 
    Furthermore, whenever $\Data_{\ol{\Sigma}}(A_1)=\Data_{\ol{\Sigma}}(A_2)$ for $A_1,A_2\in \Sol^J(M)$, there exists a $\Lambda\in \Omega^0(M)$ with $\Lambda\restriction_{\Sigma}=0$ such that $A_1-A_2=\diff\Lambda$.
\end{proposition}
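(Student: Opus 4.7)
Here is the plan.

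\textbf{Paragraph 1 (the constraint, easy direction).} I would first verify that $\Data_\Sigma(\Sol^J(M))$ lies inside the constrained set. If $A\in\Sol^J(M)$, set $\mathbf{E}=-\nml_\Sigma\diff A$. Using the identity $\diff_\Sigma^*\nml_\Sigma\alpha = -\nml_\Sigma\diff^*\alpha$ from Section~\ref{sec:forms} applied to $\alpha=\diff A$, I obtain
\begin{equation}
    -\diff_\Sigma^* \mathbf{E} = \diff_\Sigma^*\nml_\Sigma\diff A = -\nml_\Sigma\diff^*\diff A = \nml_\Sigma J,
\end{equation}
which is the Gauss constraint. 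Hence $\Data_\Sigma(\Sol^J(M))$ is contained in the stated set.

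\textbf{Paragraph 2 (existence).} For the reverse inclusion, given $(\mathbf{A},\mathbf{E})\in\Omega^1(\Sigma)^{\times 2}$ obeying the constraint, my strategy is to pass to Lorenz gauge so that the Maxwell equation~\eqref{eq:Maxwell} becomes the normally hyperbolic wave equation $-(\diff^*\diff+\diff\diff^*)A=J$ (using $\diff^*A=0$ and $\diff^*J=0$), to which the Green operator machinery of Prop.~\ref{prop:green} (with $\ol{N}$ replaced by $M$) applies. Concretely, I would choose a smooth timelike unit normal $n$ to $\Sigma$, extend $\mathbf{A}$ to a $1$-form $\mathbf{A}^\#$ on $M$ with vanishing normal component at $\Sigma$, then construct Cauchy data $(A_0,A_1)$ for the wave equation at $\Sigma$ such that $A_0\restriction_\Sigma=\mathbf{A}$ (modulo normal part), $-\nml_\Sigma\diff A_0 = \mathbf{E}$ (matching $\mathbf{E}$), and simultaneously $\diff^*A=0$ and $\partial_n(\diff^*A)=0$ hold at $\Sigma$; the latter two equations determine the normal component of $A$ and its normal derivative at $\Sigma$ uniquely given the spatial data, using the Gauss constraint on $\mathbf{E}$. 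Solving the inhomogeneous wave equation with Green operators yields $A\in\Omega^1(M)$, and the scalar $\diff^*A$ satisfies the scalar wave equation with zero Cauchy data, hence vanishes globally. Thus $A$ is in Lorenz gauge and satisfies $-\diff^*\diff A=J$, with $\Data_\Sigma(A)=(\mathbf{A},\mathbf{E})$ by construction.

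\textbf{Paragraph 3 (uniqueness modulo $\{\diff\Lambda:\Lambda|_\Sigma=0\}$).} Suppose $A_1,A_2\in\Sol^J(M)$ agree on $\Sigma$ in the sense of $\Data_\Sigma$. Then $A:=A_1-A_2\in\Sol(M)$ has $\Data_\Sigma(A)=0$. I look for $\Lambda\in\Omega^0(M)$ with $\Lambda\restriction_\Sigma=0$ such that $\tilde A:=A-\diff\Lambda$ vanishes identically. I would choose $\Lambda$ as the unique solution of the scalar wave equation $\Delta_0\Lambda=\diff^*A$ with Cauchy data $\Lambda\restriction_\Sigma=0$ and $\partial_n\Lambda\restriction_\Sigma$ prescribed so that $\diff^*\tilde A$ vanishes on $\Sigma$ (solvable because $\Delta_0$ on $0$-forms is the wave operator in the sign conventions adopted in Section~\ref{sec:forms}). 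Then $\tilde A$ is in Lorenz gauge globally (by propagation) and solves the $1$-form wave equation with Cauchy data that I will verify to be zero: $\tilde A\restriction_\Sigma=0$ because both $A\restriction_\Sigma$ and $\diff_\Sigma(\Lambda\restriction_\Sigma)$ vanish; $\nml_\Sigma\diff\tilde A=0$ because $\nml_\Sigma\diff A=0$ and $\diff^2\Lambda=0$; the remaining scalar Cauchy data for the wave equation (the normal component and its normal derivative at $\Sigma$) are then fixed by $\diff^*\tilde A=0$ and $\partial_n(\diff^*\tilde A)=0$. Uniqueness of solutions of normally hyperbolic equations on globally hyperbolic spacetimes then forces $\tilde A=0$, giving $A=\diff\Lambda$ with $\Lambda\restriction_\Sigma=0$ as required.

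\textbf{Main obstacle.} The calculations of Paragraphs 1 and 3 are essentially routine applications of Green operator machinery, the intertwining identities for $\diff,\diff^*$ with $\nml_\Sigma$, and the standard trick of propagating the Lorenz gauge via a scalar wave equation. The subtle step is the bookkeeping in Paragraph 2: one needs to show that the covariant Cauchy data $(\mathbf{A},\mathbf{E})$, together with the Gauss constraint, uniquely determine suitable wave-equation Cauchy data $(A\restriction_\Sigma,\partial_nA\restriction_\Sigma)$ for which both the gauge condition $\diff^*A\restriction_\Sigma=0$ and its time derivative $\partial_n(\diff^*A)\restriction_\Sigma=0$ hold simultaneously. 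The constraint $-\diff^*_\Sigma\mathbf{E}=\nml_\Sigma J$ is exactly what is needed to close this algebraic system, so the whole argument hinges on correctly identifying how the normal and tangential components at $\Sigma$ split under the Lorenz gauge.
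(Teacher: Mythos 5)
Your proposal takes the same route as the paper's sketch in Appendix~\ref{apx:IV_prob_lens}: the easy direction by the intertwining identity $\diff_\Sigma^*\nml_\Sigma\diff A=-\nml_\Sigma\diff^*\diff A=\nml_\Sigma J$; existence by solving the normally hyperbolic system $-(\diff^*\diff+\diff\diff^*)A=J$ with Cauchy data chosen so that $\diff^*A$ has vanishing data at $\Sigma$ and hence vanishes globally; uniqueness by passing to $\tilde{A}=A-\diff\Lambda$ in Lorenz gauge with all Cauchy data zero. This is what the paper does, citing \cite{barWaveEquationsLorentzian2007} and \cite{pfenningQuantizationMaxwellField2009}.

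One small but worth-flagging slip: in the existence step you invoke ``the Green operator machinery of Prop.~\ref{prop:green}'', but that proposition supplies the retarded/advanced Green operators $G^\pm$ for \emph{compactly supported} sources, which is not the right tool for an initial value problem with arbitrary smooth Cauchy data and a non-compactly supported conserved current $J$. What the argument actually needs is the global well-posedness of the Cauchy problem for normally hyperbolic operators on globally hyperbolic spacetimes (e.g.~\cite[Thm.~3.2.11]{barWaveEquationsLorentzian2007}), which is what the paper cites. Apart from that citation, the structure of your argument — including the observation that the Gauss constraint on $\mathbf{E}$ is exactly what closes the algebraic system determining the normal data so that both $\diff^*A$ and $\nabla_n(\diff^*A)$ vanish at $\Sigma$ — matches the paper's proof.
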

Much more can be said about the continuity of solution on initial data,  extensions to distributional data and other support systems -- see e.g., \cite{sandersElectromagnetismLocalCovariance2014}. The above result is proved by solving the normally hyperbolic initial value problem (see ~\cite{barWaveEquationsLorentzian2007}) $-(\diff^*\diff + \diff\diff^*)A=J$
subject to Cauchy data comprising $\Data_\Sigma^1(A)$ together with data $(\nml_\Sigma A,\nml_\Sigma \nabla_n A)\in \Omega^0(\Sigma)^{\times 2}$, where $n$ is the future-directed unit normal vector to $\Sigma$ (extended to a neighbourhood thereof). The additional data is chosen so that $\Data_\Sigma^0(\diff^*A)=(0,0)$, which is arranged by solving the Gauss constraint. As $\diff^*\diff (\diff^*A)=0$, it follows that $\diff^*A=0$ globally and so $A$ is a Lorenz gauge 
solution to the Maxwell equations with the required data. Conversely, the constraint on $\mathbf{E}$ is a necessary condition for solvability: if $-\diff^*\diff A = J$ then
$\diff_\Sigma^*\nml_\Sigma\diff A=-\nml_\Sigma\diff^*\diff A = \nml_\Sigma J$. Uniqueness can be shown by finding for $A\in \Sol(M)$ with $\Data_{\Sigma}(A)=0$, a $\Lambda\in \Omega^0(M)$ with  $-\diff^*\diff\Lambda=-\diff^*A$ and where for $\tilde{A}=A-\diff\Lambda$ one has $\Data_{\Sigma}(\tilde{A})=0$ and $(\nml_{\Sigma}\tilde{A},\nml_{\Sigma}\nabla_{n}\tilde{A})=0$. For such $\Lambda$, it follows that $\tilde{A}=0$ or equivalently $A=\diff\Lambda$. For a detailed proof, see e.g.~\cite{pfenningQuantizationMaxwellField2009}.
 
From now on, let $\ol{N}$ be a fixed finite Cauchy lens, $J\in \Omega^1_{\diff^*}(\ol{N})$ a background current and $\ol{\Sigma}\subset\ol{N}$ a regular Cauchy surface with boundaries, so the pre-symplectic structure $\sigma$ on $\Sol(\ol{N})$ defined in Sec.~\ref{sec:cov_phas} is 
\begin{equation}
    \sigma(\udl{A},\udl{A}')=\int_{\ol{\Sigma}}A'\wedge \star\diff A-A\wedge \star\diff A'.
\end{equation}
We also write $\rho=\nml_{\ol{\Sigma}}J\in \Omega^0(\ol{\Sigma})$ and set
    \begin{equation}
        \mathcal{E}^\rho(\ol{\Sigma})=\{\mathbf{E}\in \Omega^1(\ol{\Sigma}):-\diff^*_{\ol{\Sigma}}\mathbf{E}=\rho\}.
    \end{equation}

Let us first prove existence of solutions given appropriate initial data.
\begin{lemma}
\label{lem:IV_constr} 
With the above definitions,
\begin{equation}
\Data_{\ol{\Sigma}} (\Sol^J(\ol{N}))= \Omega^1(\ol{\Sigma})\times \mathcal{E}^\rho(\ol{\Sigma}).
\end{equation}
\end{lemma}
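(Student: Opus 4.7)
The forward inclusion $\Data_{\ol{\Sigma}}(\Sol^J(\ol{N}))\subset\Omega^1(\ol{\Sigma})\times\mathcal{E}^\rho(\ol{\Sigma})$ is immediate from the identity $\diff_{\ol{\Sigma}}^*\nml_{\ol{\Sigma}}\alpha=-\nml_{\ol{\Sigma}}\diff^*\alpha$ recorded in Sec.~\ref{sec:forms}: for any $A\in\Sol^J(\ol{N})$ one has $-\diff_{\ol{\Sigma}}^*(-\nml_{\ol{\Sigma}}\diff A)=-\nml_{\ol{\Sigma}}\diff^*\diff A=\nml_{\ol{\Sigma}}J=\rho$. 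All the work is in the reverse inclusion.

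For the reverse inclusion, the strategy is to use the causally convex isometric embedding $\iota:\ol{N}\to M$ from Def.~\ref{def:cauchy_lens}, together with a Cauchy surface $\tilde{\Sigma}\subset M$ of the globally hyperbolic spacetime $M$ with $\iota(\ol{\Sigma})\subset\tilde{\Sigma}$ (whose existence is guaranteed by the remark after Def.~\ref{def:cauchy_lens}), and thereby reduce to the boundaryless case settled by Prop.~\ref{prop:IV_simple}. Concretely, given $(\mathbf{A},\mathbf{E})\in\Omega^1(\ol{\Sigma})\times\mathcal{E}^\rho(\ol{\Sigma})$, I would proceed in four steps:
\begin{enumerate}[(i)]
    \item Extend $J$ to a coclosed $\tilde{J}\in\Omega^1_{\diff^*}(M)$ restricting to $J$ on $\iota(\ol{N})$. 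First pick any smooth extension $J_0\in\Omega^1(M)$ of $\iota_*J$, available because the smoothness conventions on manifolds with corners are defined precisely by smooth extension across the boundary. Since $\diff^*J=0$ on $\ol{N}$, the function $g:=\diff^*J_0\in\Omega^0(M)$ vanishes on $\iota(\ol{N})$. Correct $J_0$ by subtracting a $\delta\in\Omega^1(M)$ with $\supp\delta\cap\iota(\Intr\ol{N})=\emptyset$ and $\diff^*\delta=g$, producing $\tilde{J}=J_0-\delta$.
    \item Smoothly extend $\mathbf{A}$ to some $\tilde{\mathbf{A}}\in\Omega^1(\tilde{\Sigma})$.
    \item Construct a smooth extension $\tilde{\mathbf{E}}\in\Omega^1(\tilde{\Sigma})$ of $\mathbf{E}$ satisfying the Gauss constraint $-\diff_{\tilde{\Sigma}}^*\tilde{\mathbf{E}}=\nml_{\tilde{\Sigma}}\tilde{J}$ on all of $\tilde{\Sigma}$. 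Start with an arbitrary smooth extension $\mathbf{E}_0$; the defect $h:=-\diff_{\tilde{\Sigma}}^*\mathbf{E}_0-\nml_{\tilde{\Sigma}}\tilde{J}\in\Omega^0(\tilde{\Sigma})$ vanishes on $\iota(\ol{\Sigma})$ by the Gauss constraint for $(\mathbf{A},\mathbf{E})$ and compatibility of $\tilde{J}$ with $J$. Correct $\mathbf{E}_0$ by a $\zeta\in\Omega^1(\tilde{\Sigma})$ with $\supp\zeta\cap\iota(\ol{\Sigma})=\emptyset$ and $\diff_{\tilde{\Sigma}}^*\zeta=-h$, setting $\tilde{\mathbf{E}}=\mathbf{E}_0+\zeta$.
    \item Apply Prop.~\ref{prop:IV_simple} on $(M,\tilde{\Sigma})$ to obtain $\tilde{A}\in\Sol^{\tilde{J}}(M)$ with $\Data_{\tilde{\Sigma}}(\tilde{A})=(\tilde{\mathbf{A}},\tilde{\mathbf{E}})$, and take $A:=\iota^*\tilde{A}$. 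Then $-\diff^*\diff A=\iota^*\tilde{J}=J$ and $\Data_{\ol{\Sigma}}(A)=(\mathbf{A},\mathbf{E})$.
\end{enumerate}

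\textbf{Main obstacle.} The hard parts are the correction arguments in (i) and (iii): both require solving a divergence equation for a $1$-form with prescribed support disjoint from a closed subset ($\iota(\ol{N})$ in $M$, respectively $\iota(\ol{\Sigma})$ in $\tilde{\Sigma}$), given a source that vanishes smoothly on that subset. The key input is a Poincar\'e-type statement: if $g$ is a smooth function on a manifold vanishing on a closed submanifold with boundary $K$, then $g=\diff^*\delta$ for some smooth form $\delta$ whose support avoids a prescribed neighbourhood of $K$. This can be produced using a collar of $K$, a partition of unity, and inverting $\diff^*$ locally (e.g.\ by integrating along a transverse direction), with care taken where $\partial K$ meets the collar. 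Once these corrections are in hand, the remaining steps (ii) and (iv) are straightforward smooth-extension and propagator arguments.
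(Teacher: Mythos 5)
Your forward inclusion is fine and your high-level plan for the reverse inclusion — embed $\ol{N}$ in a globally hyperbolic $M$, extend the data to a Cauchy surface $\tilde{\Sigma}\supset\iota(\ol{\Sigma})$, and apply Prop.~\ref{prop:IV_simple} — is exactly the paper's strategy. The gap is in the two correction steps you flag as the ``main obstacle'': you have not actually established the needed ``Poincar\'e-type statement'' that a smooth function vanishing on a closed set is $\diff^*$ of something supported away from that set. This is not a routine lemma. Solving $\diff^*\delta=g$ on $M$ (or $\diff^*_{\tilde{\Sigma}}\zeta=-h$ on $\tilde{\Sigma}$) with support control is an underdetermined elliptic problem with no local inversion formula, and in the case where $\tilde{\Sigma}$ is compact there is moreover a cohomological obstruction: one needs $\int_{\tilde{\Sigma}}\star h=0$, which is not automatic from your arbitrary extensions $\mathbf{E}_0$ and $\tilde{J}$. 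Your sketch (``integrate along a transverse direction'') does not address either issue, and you also leave uncontrolled the interaction between the two corrections (changing $\tilde{J}$ in step (i) changes the source $h$ in step (iii)).

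The paper's proof closes this gap with a single move that bypasses both divergence equations. Instead of correcting $\tilde{J}$ and $\tilde{\mathbf{E}}$ separately, one keeps an arbitrary compactly supported extension $\tilde{J}$ (not necessarily co-closed) and solves a \emph{hyperbolic} Cauchy problem for an auxiliary scalar $g\in\Omega^0_{\sc}(M)$:
\begin{equation}
    -\diff^*\diff g=\diff^*\tilde{J},\qquad g\restriction_{\tilde{\Sigma}}=0,\qquad \nml_{\tilde{\Sigma}}\diff g=\tilde{\rho}-\nml_{\tilde{\Sigma}}\tilde{J},
\end{equation}
where $\tilde{\rho}=-\diff^*_{\tilde{\Sigma}}\tilde{\mathbf{E}}$ for a compactly supported extension $\tilde{\mathbf{E}}$. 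Setting $\tilde{J}'=\tilde{J}+\diff g$ then accomplishes everything at once: $\diff^*\tilde{J}'=0$ by the PDE; the Gauss compatibility $\nml_{\tilde{\Sigma}}\tilde{J}'=\tilde{\rho}=-\diff^*_{\tilde{\Sigma}}\tilde{\mathbf{E}}$ holds on all of $\tilde{\Sigma}$ by the Neumann datum; and crucially $g\equiv 0$ on $\ol{N}$ by finite propagation speed, because both the source $\diff^*\tilde{J}$ and the Cauchy data of $g$ vanish there, so $\iota^*\tilde{J}'=J$. Hyperbolic well-posedness (\cite[Thm.~3.2.11]{barWaveEquationsLorentzian2007}) has no cohomological obstruction and builds in the support control you were trying to engineer by hand. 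If you want to salvage your elliptic route you would need to prove your Poincar\'e-type lemma in full generality, handle the integral obstruction, and coordinate the two corrections; switching to the wave-equation trick is much cleaner.
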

\begin{proof} 
The inclusion of the left-hand side in the right holds by same argument as for Proposition~\ref{prop:IV_simple}.      
    Now consider an isometric causally convex embedding $\iota:\ol{N}\to M$ such that $M$ a globally hyperbolic manifold without boundaries and $\iota(\ol{\Sigma})\subset \tilde{\Sigma}$ with $\tilde{\Sigma}\subset M$ a Cauchy surface. Given $(\mathbf{A},\mathbf{E})\in \Omega^1(\ol{\Sigma})\times\mathcal{E}^\rho(\ol{\Sigma})$, choose  $(\tilde{\mathbf{A}},\tilde{\mathbf{E}})\in \Omega^1_0(\tilde{\Sigma})^{\times 2}$ 
    and $\tilde{J}\in \Omega^0_0(M)$ such that 
    \begin{equation}
        (\mathbf{A},\mathbf{E})=(\iota^*\tilde{\mathbf{A}},\iota^*\tilde{\mathbf{E}}), \qquad J=\iota^*\tilde{J},
    \end{equation}
    noting that $\tilde{J}$ is not necessarily co-closed. To remedy this, let $g\in \Omega^0_{\sc}(M)$ solve the initial value problem (see \cite[Thm.~3.2.11]{barWaveEquationsLorentzian2007})
    \begin{equation}
        -\diff^*\diff g= \diff^*\tilde{J}\in \Omega^0_0(M),\qquad g\restriction_{\tilde{\Sigma}}=0,\qquad\nml_{\ol{\Sigma}}\diff g=\tilde{\rho}-\nml_{\ol{\Sigma}} \tilde{J}\in \Omega_0^0(\tilde{\Sigma}),
    \end{equation}
    where $\tilde{\rho}=-\diff^*_{\ol{\Sigma}}\tilde{\mathbf{E}}\in \Omega_0^0(\tilde{\Sigma})$, noting that $g$ vanishes in $\ol{N}$ because its data vanish on $\ol{\Sigma}$ the source $\diff^*\tilde{J}$ vanishes on $\ol{N}$.
    Then $\tilde{J}':=\tilde{J}+\diff g\in \Omega^1_{\diff^*,sc}(M)$, and  $-\diff^*_{\ol{\Sigma}}\tilde{\mathbf{E}}=
    \tilde{\rho}=\nml_{\tilde{\Sigma}}\tilde{J}'$. 
Accordingly Proposition~\ref{prop:IV_simple} yields $\tilde{A}\in \Sol^{\tilde{J}'}(M)$ with $\Data_{\tilde{\Sigma}}(A)=(\tilde{\mathbf{A}},\tilde{\mathbf{E}})$. 
    As $\iota^* \tilde{J}'= J$, we now have
    $A=\iota^*\tilde{A}\in \Sol^J(\ol{N})$ with $\Data_{\ol{\Sigma}}(A)=(\mathbf{A},\mathbf{E})$ as required.
\end{proof}

Using this existence result, we can give a description of all degenerate directions of the symplectic structure $\sigma$ on $\Sol(\ol{N})$.
\begin{proposition}
\label{prop:sym_red}
The set $\mathscr{G}(\ol{N})=\{\udl{A}\in \Sol(\ol{N}):\sigma(\udl{A},\udl{A}')=0\text{ for all }\udl{A}'\in \Sol(\ol{N})\}$ is given by 
\begin{equation}\label{eq:GNbar}
    \mathscr{G}(\ol{N})=\{\diff\Lambda:\Lambda\in \Omega^0(\ol{N}),\,\Lambda\restriction_{\angle \ol{N}} =0\}.
\end{equation}
\end{proposition}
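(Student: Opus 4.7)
The ``$\supseteq$'' inclusion is a direct calculation. Given $\Lambda\in\Omega^0(\ol{N})$ with $\Lambda\restriction_{\angle\ol{N}}=0$ and any $\udl{A}'\in\Sol(\ol{N})$, the first term of $\sigma(\diff\Lambda,\udl{A}')$ vanishes because $\diff^2=0$, and for the second term I would apply the integration-by-parts identity~\eqref{eq:diffcodiffbound} on $\ol{\Sigma}$ to move $\diff$ off $\Lambda$:
\begin{equation*}
    \ipc{\diff\Lambda}{\nml_{\ol{\Sigma}}\diff\udl{A}'}_{\ol{\Sigma}} = \ipc{\Lambda}{\diff^*_{\ol{\Sigma}}\nml_{\ol{\Sigma}}\diff\udl{A}'}_{\ol{\Sigma}} + \ipc{\Lambda\restriction_{\partial\ol{\Sigma}}}{\nml_{\partial\ol{\Sigma}}\nml_{\ol{\Sigma}}\diff\udl{A}'}_{\partial\ol{\Sigma}}.
\end{equation*}
The bulk term vanishes since $\diff^*_{\ol{\Sigma}}\nml_{\ol{\Sigma}}\diff\udl{A}'=-\nml_{\ol{\Sigma}}\diff^*\diff\udl{A}'=0$, and the boundary term vanishes since $\partial\ol{\Sigma}=\angle\ol{N}$ and $\Lambda\restriction_{\angle\ol{N}}=0$.

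For ``$\subseteq$'', let $\udl{A}\in\mathscr{G}(\ol{N})$ with data $(\mathbf{A},\mathbf{E})=\Data_{\ol{\Sigma}}(\udl{A})$. A short computation rewrites $\sigma(\udl{A},\udl{A}')=\ipc{\mathbf{A}}{\mathbf{E}'}_{\ol{\Sigma}}-\ipc{\mathbf{A}'}{\mathbf{E}}_{\ol{\Sigma}}$ in terms of the data $(\mathbf{A}',\mathbf{E}')$ of $\udl{A}'$. By Lemma~\ref{lem:IV_constr} (with $J=0$), $(\mathbf{A}',\mathbf{E}')$ ranges over $\Omega^1(\ol{\Sigma})\times\Omega^1_{\diff^*}(\ol{\Sigma})$, so vanishing of $\sigma(\udl{A},\udl{A}')$ for all $\udl{A}'$ is equivalent to $\mathbf{E}=0$ together with $\mathbf{A}\perp\Omega^1_{\diff^*}(\ol{\Sigma})$. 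The latter, by the tangential Hodge decomposition~\eqref{eq:nontang_Hodge}, forces $\mathbf{A}=\diff\alpha$ for some $\alpha\in\Omega^0(\ol{\Sigma})$ with $\alpha\restriction_{\partial\ol{\Sigma}}=0$. I would then extend $\alpha$ to a smooth $\Lambda_1\in\Omega^0(\ol{N})$ with $\Lambda_1\restriction_{\ol{\Sigma}}=\alpha$ (hence $\Lambda_1\restriction_{\angle\ol{N}}=0$), so that $\diff\Lambda_1\in\Sol(\ol{N})$ and $\Data_{\ol{\Sigma}}(\diff\Lambda_1)=(\diff\alpha,0)=\Data_{\ol{\Sigma}}(\udl{A})$.

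It remains to prove the auxiliary claim: \emph{any $\tilde{\udl{A}}\in\Sol(\ol{N})$ with $\Data_{\ol{\Sigma}}(\tilde{\udl{A}})=0$ is of the form $\diff\Lambda_2$ with $\Lambda_2\restriction_{\angle\ol{N}}=0$}. Applied to $\tilde{\udl{A}}:=\udl{A}-\diff\Lambda_1$ this would give $\udl{A}=\diff(\Lambda_1+\Lambda_2)$ with $(\Lambda_1+\Lambda_2)\restriction_{\angle\ol{N}}=0$, completing the proof. My strategy is Lorenz-gauge fixing on $\ol{N}$ combined with wave-equation uniqueness on the globally hyperbolic extension. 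Concretely, I would pick $\Lambda_2\in\Omega^0(\ol{N})$ solving the initial-value problem $-\diff^*\diff\Lambda_2=-\diff^*\tilde{\udl{A}}$ with $\Lambda_2\restriction_{\ol{\Sigma}}=0$ and $\nml_{\ol{\Sigma}}\diff\Lambda_2=\nml_{\ol{\Sigma}}\tilde{\udl{A}}$ (existence via scalar wave-equation theory applied in the globally hyperbolic extension and restricted back to $\ol{N}=\mathscr{D}(\ol{\Sigma})$). Setting $B:=\tilde{\udl{A}}-\diff\Lambda_2$, one checks $B\in\Sol(\ol{N})$, $\diff^* B=0$, and $B$ vanishes on $\ol{\Sigma}$ together with all its tangential and normal components there; consequently $B$ satisfies the normally hyperbolic wave equation $-(\diff^*\diff+\diff\diff^*)B=0$ with vanishing full Cauchy data on $\ol{\Sigma}$, forcing $B\equiv 0$ on $\mathscr{D}(\ol{\Sigma})=\ol{N}$. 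Since $\Lambda_2\restriction_{\ol{\Sigma}}=0$ implies $\Lambda_2\restriction_{\angle\ol{N}}=0$, this finishes the argument.

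The main obstacle is this last uniqueness-up-to-gauge step. The subtlety is that $\ol{N}$ has boundary, so one cannot invoke globally hyperbolic theorems directly; I would import them via the causally convex isometric embedding $\iota:\ol{N}\to M$ guaranteed by Definition~\ref{def:cauchy_lens}, extending the data on $\ol{\Sigma}$ to a Cauchy surface $\tilde{\Sigma}\supset\iota(\ol{\Sigma})$ of $M$ and using that $\iota(\ol{N})=\mathscr{D}_M(\iota(\ol{\Sigma}))$ by causal convexity to transfer uniqueness back. The verification that $B$ has vanishing full Cauchy data (including the normal derivative) is the point that requires the Lorenz-gauge constraints $\diff^* B=0$ and $\nml_{\ol{\Sigma}}\diff B=0$ to be combined carefully, and is where most of the technical bookkeeping lives.
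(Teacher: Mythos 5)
Your proof is correct, and the structure of the argument parallels the paper's closely up to the last step, where you take a genuinely different route. The characterisation of $\mathscr{G}(\ol{N})$ via Cauchy data — using Lemma~\ref{lem:IV_constr} to show that $(\mathbf{A}',\mathbf{E}')$ ranges over $\Omega^1(\ol{\Sigma})\times\Omega^1_{\diff^*}(\ol{\Sigma})$ and thereby forcing $\mathbf{E}=0$ and $\mathbf{A}\in\Omega^1_{\diff^*}(\ol{\Sigma})^\perp$, then invoking the Hodge decomposition~\eqref{eq:nontang_Hodge} — is exactly what the paper does (it just phrases the ``$\supseteq$'' direction as a consequence of that characterisation rather than by direct integration by parts, but these are trivially equivalent).

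The divergence is in reconstructing the pure-gauge nature of $\tilde{\udl{A}}=\udl{A}-\diff\Lambda_1$, the residual solution with vanishing data on $\ol{\Sigma}$. The paper applies Proposition~\ref{prop:IV_simple} to the globally hyperbolic interior $N=\Intr(\ol{N})$, obtaining $\tilde{\udl{A}}\restriction_N=\diff\Lambda$ with $\Lambda\restriction_\Sigma=0$, and then argues that $\Lambda$ extends smoothly to $\ol{N}$ because $\diff\Lambda=\tilde{\udl{A}}$ does — a boundary-regularity step of the flavour of Lemma~\ref{lem:0form_ext}. You instead pose the Lorenz gauge-fixing IVP for $\Lambda_2$ directly (solving on the globally hyperbolic extension $M$ and restricting), so that $\Lambda_2\in\Omega^0(\ol{N})$ automatically and no extension of the gauge function from $N$ to $\ol{N}$ is needed. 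The trade-off: the paper's route reuses the already-stated black box of Prop.~\ref{prop:IV_simple}, leaving only a regularity argument at the boundary; yours inlines the content of that black box (Lorenz gauge-fixing plus normally hyperbolic Cauchy uniqueness on $\mathscr{D}_M(\iota(\ol{\Sigma}))=\iota(\ol{N})$), which is a bit more work but sidesteps the extension subtlety entirely. Both are valid, and your verification that $B=\tilde{\udl{A}}-\diff\Lambda_2$ has vanishing full Cauchy data — combining $\Data_{\ol{\Sigma}}(\tilde{\udl{A}})=0$, the prescribed data for $\Lambda_2$, and $\diff^*B=0$ on $\ol{\Sigma}$ — is indeed the load-bearing computation and is correctly identified as such.
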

If $\diff\Lambda\in\mathscr{G}(\ol{N})$ then $\Lambda$ may of course be modified by any locally constant function on $\ol{N}$ while yielding the same gauge direction. The proof relies on the Hodge decomposition given in Lem.~\ref{lem:Hodge} as well as the kernel of the $\Data$ map on $\Sol(N)$ described in Prop.~\ref{prop:IV_simple}. 

\begin{proof}
Suppose $\udl{A}\in\Sol(\ol{N})$. Then $\udl{A}\in \mathscr{G}(\ol{N})$ if and only if
\begin{equation}
\ipc{\udl{A}}{\nml_{\ol{\Sigma}}\diff \udl{A'}}- \ipc{\nml_{\ol{\Sigma}}\diff\udl{A}}{\udl{A}'}=0
\end{equation}
for all $\udl{A}'\in\Sol(\ol{N})$. By Lemma~\ref{lem:IV_constr}, $\Data_\Sigma(\Sol(\ol{N}))$ contains $\{0\}\times \Omega_{\diff^*}^1(\ol{\Sigma})$ and
$\Omega^1(\ol{\Sigma})\times \{0\}$. Hence, $\udl{A}\in \mathscr{G}(\ol{N})$ if and only if $\udl{A}\restriction_{\ol{\Sigma}}\in 
\Omega^1_{\diff^*}(\ol{\Sigma})^\perp$ and $\nml_{\ol{\Sigma}}\diff\udl{A}=0$. Using Lemma~\ref{lem:Hodge}, we now have
$\udl{A}\in \mathscr{G}(\ol{N})$ if and only if $\nml_{\ol{\Sigma}}\diff\udl{A}=0$ and $\udl{A}\restriction_{\ol{\Sigma}}=
 \diff\lambda$ for some $\lambda\in\Omega^0(\ol{\Sigma})$ with $\lambda|_{\partial\ol{\Sigma}}=0$. In particular, this shows immediately that the
 right-hand side of~\eqref{eq:GNbar} contains the left by taking $\lambda=\Lambda|_{\ol{\Sigma}}$.

Conversely, suppose that $\udl{A}\in \mathscr{G}(\ol{N})$; then $\udl{A}\restriction_{\ol{\Sigma}}=
 \diff\lambda$ for some $\lambda\in\Omega^0(\ol{\Sigma})$ with $\lambda|_{\partial\ol{\Sigma}}=0$. Choose $\tilde{\Lambda}\in\Omega^0(\ol{N})$ with 
$\tilde{\Lambda}\restriction_{\ol{\Sigma}}=\lambda$, noting that $\diff\tilde{\Lambda}\in \mathscr{G}(\ol{N})$. 
Then $\tilde{\udl{A}}=\udl{A}-\diff\tilde\Lambda\in \Sol(\ol{N})$ has $\Data_{\ol{\Sigma}}(\tilde{\udl{A}})=(0,0)$. By Proposition~\ref{prop:IV_simple}, we conclude that $\tilde{\udl{A}}\restriction_{N}=\diff\Lambda$ for some $\Lambda\in \Omega^0(N)$ with $\Lambda\restriction_{\Sigma}=0$ (where $N=\Intr(\ol{N})$ and $\Sigma=\Intr(\ol{\Sigma})$). Since $\udl{\tilde{A}}\in \Omega^1(\ol{N})$, 
it follows that we can smoothly extend $\diff\Lambda$ to $\ol{N}$, from which it follows that $\Lambda$ can be smoothly extended to $\ol{N}$. It follows that $\udl{\tilde{A}}=\diff\Lambda$ for some $\Lambda\in \Omega^0(\ol{N})$ with $\Lambda\restriction_{\ol{\Sigma}}=0$. Hence $\udl{A}=\diff(\Lambda+\tilde{\Lambda})$ and $(\Lambda+\tilde{\Lambda})|_{\angle \ol{N}}=(\Lambda+\tilde{\Lambda})|_{\partial \ol{\Sigma}}=0$.
The reverse inclusion is proved. 
\end{proof}

\subsection{Boundary value problems and initial data decompositions} 
\label{apx:data_hodge_lemmas}

Finally, we prove some results relevant to the decomposition of initial data described in Sec.~\ref{sec:init_dat_hodge}. These rely on solvability of the Poisson equations with Neumann and Dirichlet boundary conditions.
\begin{proposition}\label{prop:BVP}
   Let $\ol{\Sigma}$ be a compact Riemannian manifold with boundary. Then the Neumann boundary problem
    \begin{equation}\label{eq:NeumannBVP}
        -\diff^*\diff \varphi = \rho,\qquad \nml_{\partial\ol{\Sigma}}\diff\varphi=f,
    \end{equation}
    for $\rho\in \Omega^0(\ol{\Sigma})$ and $f\in \Omega^0(\partial\ol{\Sigma})$ is solvable for $\varphi\in \Omega^0(\ol{\Sigma})$ if and only if $\ipc{\kappa}{f}_{\partial\ol{\Sigma}}=\ipc{\kappa}{\rho}_{\ol{\Sigma}}$ for all locally constant
    $\kappa\in \Omega^0_{\diff}(\ol{\Sigma})$. For any two solutions $\varphi,\varphi'$ to this problem, one has $\diff(\varphi-\varphi')=0$.

    Furthermore, the Dirichlet boundary problem
    \begin{equation}\label{eq:DirBVP}
        -\diff^*\diff \varphi = \rho,\qquad \varphi\restriction_{\partial\ol{\Sigma}}=f,
    \end{equation}
    for $\rho\in \Omega^0(\ol{\Sigma})$ and $f\in \Omega^0(\partial\ol{\Sigma})$ has a unique solution $\varphi\in \Omega^0(\ol{\Sigma})$.
\end{proposition}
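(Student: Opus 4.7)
The plan is to prove the two statements by standard elliptic theory for the Laplace--Beltrami operator on a compact Riemannian manifold with boundary, combined with the integration-by-parts identity in Eq.~\eqref{eq:diffcodiffbound}. Necessity and uniqueness will follow from short computations with that identity; existence is the only substantive step, and I would invoke it rather than reprove it.

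For the Neumann problem~\eqref{eq:NeumannBVP}, I would first verify necessity of the compatibility condition by applying Eq.~\eqref{eq:diffcodiffbound} with $\alpha = \kappa \in \Omega^0_\diff(\ol{\Sigma})$ and $\beta = \diff\varphi$. Because $\diff\kappa = 0$, this reduces to
\begin{equation*}
    -\ipc{\kappa}{\diff^*\diff\varphi}_{\ol{\Sigma}} = \ipc{\kappa}{\nml_{\partial\ol{\Sigma}}\diff\varphi}_{\partial\ol{\Sigma}},
\end{equation*}
which, after substituting the defining equations, gives $\ipc{\kappa}{\rho}_{\ol{\Sigma}} = \ipc{\kappa}{f}_{\partial\ol{\Sigma}}$. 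For uniqueness up to closed forms, I would apply the same identity to $\psi = \varphi - \varphi'$ with $\alpha = \psi$ and $\beta = \diff\psi$; the two boundary-value conditions for $\psi$ (vanishing source and vanishing normal derivative) yield $\Vert\diff\psi\Vert_{L^2(\ol{\Sigma})}^2 = 0$, hence $\diff\psi = 0$. For existence under the compatibility condition, I would cite the standard Fredholm theory for the Laplace--Beltrami operator with Neumann boundary conditions on a compact Riemannian manifold with boundary (e.g.\ along the lines of \cite{schwarzHodgeDecompositionMethod1995}): this operator is self-adjoint with finite-dimensional kernel and cokernel both equal to the space of locally constant functions on $\ol{\Sigma}$, and the compatibility condition is precisely the statement that the data pair $(\rho,f)$ pairs trivially with this cokernel; elliptic regularity then upgrades the resulting weak solution to a smooth one.

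For the Dirichlet problem~\eqref{eq:DirBVP}, uniqueness follows from the energy identity: for the difference $\psi = \varphi - \varphi'$ of two solutions, Eq.~\eqref{eq:diffcodiffbound} gives
\begin{equation*}
\Vert\diff\psi\Vert_{L^2(\ol{\Sigma})}^2 = \ipc{\psi\restriction_{\partial\ol{\Sigma}}}{\nml_{\partial\ol{\Sigma}}\diff\psi}_{\partial\ol{\Sigma}} = 0,
\end{equation*}
so $\psi$ is locally constant; since every connected component of $\ol{\Sigma}$ meets $\partial\ol{\Sigma}$ (a standing assumption implicit in the intended application to regular Cauchy surfaces of finite Cauchy lenses), the vanishing boundary trace forces $\psi = 0$. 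Existence is again standard elliptic theory applied to the Dirichlet Laplacian, which is coercive on the Sobolev space of functions vanishing on $\partial\ol{\Sigma}$ and has trivial kernel by the same energy argument, so it is invertible and smoothness of the solution follows by elliptic regularity.

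The main obstacle is the existence theory for smooth solutions of the two boundary value problems, which I would not reprove but rather cite from a standard reference; the remaining pieces are essentially verification exercises using the identity~\eqref{eq:diffcodiffbound} and the definition of locally constant functions. A minor bookkeeping issue to address is the possibility of multiple connected components of $\ol{\Sigma}$ with (in the Dirichlet case) boundary on each, which is covered by the phrasing in terms of $\Omega^0_\diff(\ol{\Sigma})$ for the Neumann compatibility condition.
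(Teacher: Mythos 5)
Your argument is correct and takes essentially the same route as the paper, which proves Prop.~\ref{prop:BVP} solely by citing Corollary 3.4.8 and Theorem 3.4.10 of Schwarz's monograph on Hodge-theoretic boundary value problems; the necessity, uniqueness and energy computations you supply via Eq.~\eqref{eq:diffcodiffbound} are a correct unpacking of what those cited results contain, and existence is deferred to the same standard elliptic/Fredholm machinery in both cases. Your caveat that uniqueness for the Dirichlet problem requires every connected component of $\ol{\Sigma}$ to meet $\partial\ol{\Sigma}$ is apt and is built into the convention underlying the cited theorems.
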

For the proof, we refer to \cite[Cor.~3.4.8]{schwarzHodgeDecompositionMethod1995} and \cite[Thm.~3.4.10]{schwarzHodgeDecompositionMethod1995}. These boundary value problems can now be used to show the following two lemmas.
\begin{lemma}\label{lem:nmlcoclosed}
Let $\ol{\Sigma}$ be a compact Riemannian manifold with nonempty boundary. Then $\Omega^0(\partial\ol{\Sigma})$ admits the orthogonal decomposition
\begin{equation}
    \Omega^0(\partial\ol{\Sigma})=\nml_{\partial\ol{\Sigma}}\Omega^1_{\diff^*}(\ol{\Sigma})\oplus \left(\Omega^0_{\diff}(\ol{\Sigma})\restriction_{\partial\ol{\Sigma}}\right).
\end{equation} 
\end{lemma}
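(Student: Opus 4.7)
The plan is to establish the claimed orthogonal decomposition by treating orthogonality and completeness separately, with the Neumann boundary value problem from Proposition~\ref{prop:BVP} providing the key tool for the surjectivity part.

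First, I would verify orthogonality in the $L^2$ inner product on $\partial\ol{\Sigma}$. Given $\mathbf{E}\in \Omega^1_{\diff^*}(\ol{\Sigma})$ and $\kappa\in \Omega^0_{\diff}(\ol{\Sigma})$, the Stokes-type identity~\eqref{eq:diffcodiffbound} applied to the pair $(\kappa,\mathbf{E})$ gives
\begin{equation}
\ipc{\kappa\restriction_{\partial\ol{\Sigma}}}{\nml_{\partial\ol{\Sigma}}\mathbf{E}}_{\partial\ol{\Sigma}} = \ipc{\diff\kappa}{\mathbf{E}}_{\ol{\Sigma}} - \ipc{\kappa}{\diff^*\mathbf{E}}_{\ol{\Sigma}} = 0,
\end{equation}
since $\diff\kappa=0$ and $\diff^*\mathbf{E}=0$. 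This handles orthogonality in one clean step.

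Next, I would address completeness: every $f\in \Omega^0(\partial\ol{\Sigma})$ splits as a sum of elements from the two subspaces. The key observation is that $\Omega^0_{\diff}(\ol{\Sigma})\restriction_{\partial\ol{\Sigma}}$ is finite-dimensional (one basis vector per connected component of $\ol{\Sigma}$ that meets $\partial\ol{\Sigma}$), so the $L^2$-orthogonal projection of $f$ onto this subspace is well-defined and produces a smooth function $f_2$; setting $f_1 = f - f_2$ then gives a smooth remainder satisfying $\ipc{\kappa}{f_1}_{\partial\ol{\Sigma}}=0$ for every locally constant $\kappa\in \Omega^0_{\diff}(\ol{\Sigma})$. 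Crucially, this is exactly the solvability criterion for the Neumann problem~\eqref{eq:NeumannBVP} with $\rho=0$ and boundary data $f_1$. Applying Proposition~\ref{prop:BVP} yields $\varphi\in \Omega^0(\ol{\Sigma})$ with $-\diff^*\diff\varphi=0$ and $\nml_{\partial\ol{\Sigma}}\diff\varphi=f_1$. Setting $\mathbf{E}=\diff\varphi$ gives an element of $\Omega^1_{\diff^*}(\ol{\Sigma})$ with $\nml_{\partial\ol{\Sigma}}\mathbf{E}=f_1$, so $f_1\in \nml_{\partial\ol{\Sigma}}\Omega^1_{\diff^*}(\ol{\Sigma})$ as required. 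Uniqueness of the decomposition is then automatic from orthogonality.

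The main obstacle, or at least the step that requires the most care, is recognising that the compatibility condition for solvability of the Neumann problem is \emph{precisely} the orthogonality condition to $\Omega^0_{\diff}(\ol{\Sigma})\restriction_{\partial\ol{\Sigma}}$, so that the natural splitting $f=f_1+f_2$ via projection onto the finite-dimensional subspace automatically puts $f_1$ in the range of $\nml_{\partial\ol{\Sigma}}$ restricted to coclosed forms. Everything else is bookkeeping: the smoothness of the projection $f_2$ follows because the projection operator has a smooth integral kernel on the finite-dimensional span of smooth indicator-type functions, and the elliptic regularity for the solution of the Neumann problem is already embedded in the statement of Proposition~\ref{prop:BVP}.
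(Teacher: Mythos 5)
Your proof is correct and follows essentially the same route as the paper's: Stokes' theorem (Eq.~\eqref{eq:diffcodiffbound}) for orthogonality, then $L^2$-projection onto the finite-dimensional space of restricted locally constant functions, and Prop.~\ref{prop:BVP} applied to the Neumann problem with $\rho=0$ to realise the orthogonal complement as $\nml_{\partial\ol{\Sigma}}\Omega^1_{\diff^*}(\ol{\Sigma})$. The observation that the Neumann compatibility condition is \emph{precisely} orthogonality to $\Omega^0_{\diff}(\ol{\Sigma})\restriction_{\partial\ol{\Sigma}}$ is indeed the crux, and you have identified it correctly.
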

\begin{proof}
    For any $\mathbf{F}\in \Omega^1_{\diff^*}(\ol{\Sigma})$ and $f\in \Omega^0_{\diff}(\ol{\Sigma})$, Stokes' theorem gives
    \begin{equation}
        \ipc{f\restriction_{\partial\ol{\Sigma}}}{\nml_{\partial\ol{\Sigma}}\mathbf{F}}_{\partial\ol{\Sigma}}=\ipc{\diff f}{\mathbf{F}}_{\ol{\Sigma}}-\ipc{f}{\diff^*\mathbf{F}}_{\ol{\Sigma}}=0,
    \end{equation} thus $\nml_{\partial\ol{\Sigma}}\Omega^1_{\diff^*}(\ol{\Sigma})$ and $\Omega^0_{\diff}(\ol{\Sigma})\restriction_{\partial\ol{\Sigma}}$ are orthogonal. 
    Given $h\in \Omega^0(\partial\ol{\Sigma})$, let $f\in \Omega^0_{\diff}(\ol{\Sigma})$
    be the unique locally constant $0$-form so that $f\restriction_{\partial\ol{\Sigma}}$ is the orthogonal projection of $h$ onto $\left(\Omega^0_{\diff}(\ol{\Sigma})\restriction_{\partial\ol{\Sigma}}\right)$
    (a closed subspace of $L^2(\partial\ol{\Sigma})$)
    whereupon $h-f\restriction_{\partial\ol{\Sigma}}\in \left(\Omega^0_{\diff}(\ol{\Sigma})\restriction_{\partial\ol{\Sigma}}\right)^\perp$. By Prop.~\ref{prop:BVP}, it follows that the boundary value problem $-\diff^*\diff\varphi=0$ and $\nml_{\partial\ol{\Sigma}}\diff\varphi=h-f\restriction_{\partial\ol{\Sigma}}$ has a solution $\varphi\in \Omega^0(\ol{\Sigma})$, and hence $h-f\restriction_{\partial\ol{\Sigma}}\in \nml_{\partial\ol{\Sigma}}\Omega^1_{\diff^*}(\ol{\Sigma})$.
    \end{proof}
\begin{lemma}
\label{lem:A_boundary_uni}
    Let $\ol{N}$ be a finite Cauchy lens, $\ol{\Sigma}\subset \ol{N}$ a regular Cauchy surface with boundary. For each $\alpha\in \Omega^0(\ol{\Sigma})$, there exists a unique $\tilde{\alpha}\in \Omega^0(\ol{\Sigma})$ such that $\Delta\tilde{\alpha}=0$, $\tilde{\alpha}\restriction_{\partial\ol{\Sigma}}\in \nml_{\partial\ol{\Sigma}}\Omega^1_{\diff^*}(\ol{\Sigma})$ and $\diff(\alpha-\tilde{\alpha})\in \mathscr{G}(\ol{\Sigma})=\mathscr{G}(\ol{N})\restriction_{\ol{\Sigma}}$.
\end{lemma}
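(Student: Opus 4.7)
The plan is to translate the three conditions into the solvability and uniqueness of a Dirichlet boundary value problem on $\ol{\Sigma}$ modulo the finite-dimensional space of restrictions of locally constant functions, using Lem.~\ref{lem:nmlcoclosed} to handle the boundary-value ambiguity. The key observation is that condition~3 can be rewritten: $\diff(\alpha-\tilde{\alpha})\in\mathscr{G}(\ol{\Sigma})$ means precisely that $\alpha-\tilde{\alpha}=\Lambda+c$ for some $\Lambda\in\Omega^0(\ol{\Sigma})$ with $\Lambda\restriction_{\partial\ol{\Sigma}}=0$ and some $c\in\Omega^0_\diff(\ol{\Sigma})$.

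For existence, I would first apply Prop.~\ref{prop:BVP} to produce the unique solution $\Lambda\in\Omega^0(\ol{\Sigma})$ of the Dirichlet problem
\begin{equation*}
\Delta\Lambda=\Delta\alpha,\qquad \Lambda\restriction_{\partial\ol{\Sigma}}=0,
\end{equation*}
so that $\alpha-\Lambda$ is harmonic. Next, I would decompose its boundary trace using Lem.~\ref{lem:nmlcoclosed},
\begin{equation*}
(\alpha-\Lambda)\restriction_{\partial\ol{\Sigma}}=g+c\restriction_{\partial\ol{\Sigma}},\qquad g\in\nml_{\partial\ol{\Sigma}}\Omega^1_{\diff^*}(\ol{\Sigma}),\ c\in\Omega^0_\diff(\ol{\Sigma}),
\end{equation*}
and set $\tilde{\alpha}:=\alpha-\Lambda-c$. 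Then $\Delta\tilde{\alpha}=0$ (since $\Delta c=0$), $\tilde{\alpha}\restriction_{\partial\ol{\Sigma}}=g$ lies in $\nml_{\partial\ol{\Sigma}}\Omega^1_{\diff^*}(\ol{\Sigma})$, and $\diff(\alpha-\tilde{\alpha})=\diff\Lambda\in\mathscr{G}(\ol{\Sigma})$.

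For uniqueness, suppose $\tilde{\alpha}_1,\tilde{\alpha}_2$ both satisfy the three conditions, and set $\delta=\tilde{\alpha}_1-\tilde{\alpha}_2$. Condition~3 applied to both gives $\delta=\Lambda'+c'$ with $\Lambda'\restriction_{\partial\ol{\Sigma}}=0$ and $c'\in\Omega^0_\diff(\ol{\Sigma})$, hence $\delta\restriction_{\partial\ol{\Sigma}}=c'\restriction_{\partial\ol{\Sigma}}\in \Omega^0_\diff(\ol{\Sigma})\restriction_{\partial\ol{\Sigma}}$. But condition~2 forces $\delta\restriction_{\partial\ol{\Sigma}}\in\nml_{\partial\ol{\Sigma}}\Omega^1_{\diff^*}(\ol{\Sigma})$, and by the orthogonality part of Lem.~\ref{lem:nmlcoclosed} this intersection is trivial, so $c'\restriction_{\partial\ol{\Sigma}}=0$. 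Since every connected component of $\ol{\Sigma}$ meets $\partial\ol{\Sigma}=\angle\ol{N}$ (by the structure of a finite Cauchy lens), a locally constant function vanishing on $\partial\ol{\Sigma}$ vanishes identically; thus $c'=0$ and $\delta=\Lambda'$ solves the homogeneous Dirichlet problem $\Delta\Lambda'=0$, $\Lambda'\restriction_{\partial\ol{\Sigma}}=0$, which by Prop.~\ref{prop:BVP} forces $\Lambda'=0$.

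The argument is essentially a bookkeeping exercise in combining the Dirichlet BVP with the orthogonal decomposition of boundary functions; no single step is genuinely hard. The only point requiring a moment of care is the uniqueness step where I rule out a locally constant residual: this relies on each connected component of $\ol{\Sigma}$ carrying a nonempty piece of the corner, which is built into the notion of a finite Cauchy lens and should be stated explicitly at that point of the proof.
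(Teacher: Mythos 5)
Your proof is correct and takes essentially the same route as the paper — Prop.~\ref{prop:BVP} (the Dirichlet problem) plus the orthogonal decomposition of Lem.~\ref{lem:nmlcoclosed} — with the two existence steps applied in the opposite order (and producing the same $\tilde{\alpha}$). One small streamlining: in the uniqueness argument you do not need the claim that every component of $\ol{\Sigma}$ meets the corner (which Def.~\ref{def:cauchy_lens} does not obviously guarantee), since $c'\restriction_{\partial\ol{\Sigma}}=0$ already gives $\delta\restriction_{\partial\ol{\Sigma}}=0$, and together with $\Delta\delta=0$ the Dirichlet uniqueness in Prop.~\ref{prop:BVP} forces $\delta=0$ directly, which is exactly how the paper concludes.
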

\begin{proof}
By subtracting a locally constant function from $\alpha$, one may obtain $\alpha'\in\Omega^0(\ol{\Sigma})$ such that $\diff\alpha'=\diff\alpha$ and $\alpha'\restriction_{\partial\ol{\Sigma}}\in \left(\Omega^0_{\diff}(\ol{\Sigma})\restriction_{\partial\ol{\Sigma}}\right)^\perp$, and thus $\alpha'\in \nml_{\partial\ol{\Sigma}}\Omega^1_{\diff^*}(\ol{\Sigma})$ by Lem.~\ref{lem:nmlcoclosed}.
    Now let $\alpha''$ be the unique solution to $\alpha''\restriction_{\partial\ol{\Sigma}}=0$ and $\Delta\alpha''=\Delta\alpha=\Delta \alpha'$ (see \eqref{eq:DirBVP} in Prop.~\ref{prop:BVP}). Setting $\tilde{\alpha}=\alpha'-\alpha''$, we find that $\Delta\tilde{\alpha}=0$, $\tilde{\alpha}\in \nml_{\partial\ol{\Sigma}}\Omega^1_{\diff^*}(\ol{\Sigma})$ and $\diff(\alpha-\tilde{\alpha})=\diff \alpha''\in \mathscr{G}(\ol{\Sigma})$.

    That $\tilde{\alpha}$ is specified uniquely follows from the fact that $\diff(\tilde{\alpha}_1-\tilde{\alpha}_2)\in \mathscr{G}(\ol{\Sigma})$ implies $(\tilde{\alpha}_1-\tilde{\alpha}_2)\restriction_{\partial\ol{\Sigma}}
    =c\restriction_{\partial{\ol{\Sigma}}}$ for some locally constant $c\in\Omega^0_{\diff}(\ol{\Sigma})$. 
If, additionally, $(\tilde{\alpha}_1-\tilde{\alpha}_2)\restriction_{\partial\ol{\Sigma}}\in \nml_{\partial\ol{\Sigma}}\Omega^1_{\diff^*}(\ol{\Sigma})$, then $c=0$ by Lem.~\ref{lem:nmlcoclosed}, so $(\tilde{\alpha}_1-\tilde{\alpha}_2)\restriction_{\partial\ol{\Sigma}}=0$. If, additionally, $\Delta(\tilde{\alpha}_1-\tilde{\alpha}_2)=0$, this now implies $\tilde{\alpha}_1-\tilde{\alpha}_2=0$.
\end{proof}

\subsection{Proof of Prop.~\ref{prop:sympl_decomp}}\label{apx:proof_of_VCVS_decomp}
Prop.~\ref{prop:sympl_decomp} is proved by constructing $\mathfrak{K}^J_{\ol{\Sigma}}$, using the map $\mathfrak{I}^J_{\ol{\Sigma}}$ from Prop.~\ref{prop:ext_ini_dat} and a Hodge decomposition, together with some lemmas given in Appx~\ref{apx:data_hodge_lemmas}. 
\begin{proof}[Proof of Prop.~\ref{prop:sympl_decomp}]
We will start at the level of initial data, establishing an affine bijection
\begin{equation}
    \mathfrak{H}_{\ol{\Sigma}}^\rho:\mathcal{A}(\ol{\Sigma})\times \mathcal{E}^\rho(\ol{\Sigma})\to \left(\Omega^1_{\tang\diff^*}(\ol{\Sigma})\right)^2\times \left(\nml_{\partial\ol{\Sigma}}\Omega^1_{\diff^*}(\ol{\Sigma})\right)^2,
\end{equation}
whose linearisation is a linear bijection
\begin{equation}
    \mathfrak{H}_{\ol{\Sigma}}:\mathcal{A}(\ol{\Sigma})\oplus \mathcal{E}(\ol{\Sigma})\to \left(\Omega^1_{\tang\diff^*}(\ol{\Sigma})\right)^{\oplus 2}\oplus \left(\nml_{\partial\ol{\Sigma}}\Omega^1_{\diff^*}(\ol{\Sigma})\right)^{\oplus 2}.
\end{equation}
For $(\mathbf{A}+\mathscr{G}(\ol{\Sigma}),\mathbf{E})\in \mathcal{A}(\ol{\Sigma})\times \mathcal{E}^\rho(\ol{\Sigma})$, the tangential Hodge--Helmholtz decomposition gives
\begin{align}
    \mathbf{A}+\mathscr{G}(\ol{\Sigma})=&\mathbf{A}^{\tang}+\diff\alpha+\mathscr{G}(\ol{\Sigma}),\nonumber\\
    \mathbf{E}=&\mathbf{E}^{\tang}+\diff \varepsilon,
\end{align}
where $\mathbf{A}^{\tang},\mathbf{E}^{\tang}\in \Omega^1_{\tang\diff^*}(\ol{\Sigma})$, $\diff\alpha+\mathscr{G}(\ol{\Sigma})\in \diff\Omega^0(\ol{\Sigma})/\mathscr{G}(\ol{\Sigma})$ and $\varepsilon\in\{\varphi\in \Omega^0(\ol{\Sigma}):-\diff^*\diff\varphi=\rho\}$. The pair $(\mathbf{A}^{\tang},\mathbf{E}^{\tang})$ will constitute the component of $\mathfrak{H}_{\ol{\Sigma}}^\rho(\mathbf{A}+\mathscr{G}(\ol{\Sigma}),\mathbf{E})$ in $V^C(\ol{\Sigma})$.

By Lem.~\ref{lem:A_boundary_uni}, we can choose for each equivalence class $\diff\alpha+ \mathscr{G}(\ol{\Sigma})$ a unique $\alpha\in \Omega^0(\ol{\Sigma})$ such that $\alpha\restriction_{\partial\ol{\Sigma}}\in\nml_{\partial\ol{\Sigma}}\Omega^1_{\diff^*}(\ol{\Sigma})$ and $\Delta \alpha=0$. As such functions are uniquely specified by their boundary values $\alpha\restriction_{\partial\ol{\Sigma}}$, the map $\diff\alpha+ \mathscr{G}(\ol{\Sigma}) \mapsto \alpha\restriction_{\partial\ol{\Sigma}}$ defines a linear isomorphism $\diff\Omega^0(\ol{\Sigma})/\mathscr{G}(\ol{\Sigma})\cong \nml_{\partial\ol{\Sigma}}\Omega^1_{\diff^*}(\ol{\Sigma})$.

Let $\varphi\in\Omega^0(\ol{\Sigma})$ solve the Neumann boundary problem for the Poisson equation
\begin{equation}
    -\diff^*\diff\varphi=\rho\in \diff^*\Omega^1(\ol{\Sigma}),\qquad \nml_{\partial\ol{\Sigma}}\diff\varphi=f^\rho\in \Omega^0(\partial\ol{\Sigma}),
\end{equation}
where, on the boundary of each component of $\ol{\Sigma}$, $f^\rho$ is constant 
$\ipc{c\restriction_{\ol{\Sigma}}}{f^\rho}_{\partial\ol{\Sigma}}=\ipc{c}{\rho}_{\ol{\Sigma}}$ for every locally constant $c\in\Omega^0(\ol{\Sigma})$.
By Prop.~\ref{prop:BVP}, the solution $\varphi$ exists and is unique up to the addition of locally constant 0-forms. One furthermore finds
\begin{equation}
    \diff\Omega^0(\ol{\Sigma})\cap \mathcal{E}^\rho(\ol{\Sigma})=\diff\varphi +\diff\{h\in \Omega^0(\ol{\Sigma}):\diff^*\diff h=0\}\cong \nml_{\partial\ol{\Sigma}}\Omega^1_{\diff^*}(\ol{\Sigma}),
\end{equation}
via the map $\diff\varepsilon\mapsto \nml_{\partial\ol{\Sigma}}\diff\varepsilon- f^\rho$ (surjectivity following from Prop.~\ref{prop:BVP}). The pair $(\alpha\restriction_{\partial\ol{\Sigma}}
,\nml_{\partial\ol{\Sigma}}\diff\varepsilon- f^\rho)$ will be the component of
 $\mathfrak{H}_{\ol{\Sigma}}^\rho(\mathbf{A}+\mathscr{G}(\ol{\Sigma}),\mathbf{E})$ in $V^S(\ol{\Sigma})$.

Combining these observations, the required map $\mathfrak{H}_{\ol{\Sigma}}^\rho$ is defined by
\begin{equation}
    \mathfrak{H}_{\ol{\Sigma}}^\rho(\mathbf{A}+\mathscr{G}(\ol{\Sigma}),\mathbf{E})=\left(\mathbf{A}^{\tang},\mathbf{E}^{\tang};\alpha\restriction_{\partial\ol{\Sigma}},\nml_{\partial\ol{\Sigma}}\diff\varepsilon-f^\rho\right),
\end{equation}
and is an affine bijection whose linearisation defines the required linear bijection $\mathfrak{H}_{\ol{\Sigma}}$.

Using the map $\mathfrak{I}_{\ol{\Sigma}}^J:\Sol_\mathscr{G}^J(\ol{N})\to \mathcal{A}(\ol{\Sigma})\times \mathcal{E}^\rho(\ol{\Sigma})$ from Prop.~\ref{prop:ext_ini_dat}, we now define the bijection \begin{equation}
    \mathfrak{K}^J_{\ol{\Sigma}}:=\mathfrak{H}_{\ol{\Sigma}}^\rho\circ \mathfrak{I}_{\ol{\Sigma}}^J,
\end{equation}
which is an affine map with linearisation 
\begin{equation}
    \mathfrak{K}_{\ol{\Sigma}}:=\mathfrak{H}_{\ol{\Sigma}}\circ \mathfrak{I}_{\ol{\Sigma}}.
\end{equation}
What remains to be shown is that $\mathfrak{K}_{\ol{\Sigma}}:\Sol_{\mathscr{G}}(\ol{N})\to V^C(\ol{\Sigma})\oplus V^S(\ol{\Sigma})$ is a linear symplectomorphism w.r.t.\ the given symplectic forms. This follows from orthogonality of the tangential Hodge--Helmholtz decomposition. Let $[A_1],[A_2]\in \Sol_{\mathscr{G}}(\ol{N})$ and, for $i=1,2$, write
\begin{equation}
\label{eq:K_lin_map}
    \mathfrak{K}_{\ol{\Sigma}}([A_i])=\left(\mathbf{A}_i^{\tang},\mathbf{E}_i^{\tang};\alpha_i\restriction_{\partial\ol{\Sigma}},\nml_{\partial\ol{\Sigma}}\diff\varepsilon_i\right),
\end{equation}
where $[A_i]\restriction_{\ol{\Sigma}}=\mathbf{A}_i^{\tang}+\diff\alpha+\mathscr{G}(\ol{\Sigma})$ and $\nml_{\ol{\Sigma}}\diff A_i=\mathbf{E}_i^{\tang}+\diff\varepsilon_i$. Using orthogonality, one then computes
\begin{align}
    \sigma([A_1],[A_2])=&\ipc{\mathbf{A}^{\tang}_1}{\mathbf{E}^{\tang}_2}_{\ol{\Sigma}}-\ipc{\mathbf{A}^{\tang}_2}{\mathbf{E}^{\tang}_1}_{\ol{\Sigma}}+\ipc{\diff\alpha_1}{\diff\varepsilon_2}_{\ol{\Sigma}}-\ipc{\diff\alpha_2}{\diff\varepsilon_1}_{\ol{\Sigma}}\nonumber\\
    =&\ipc{\mathbf{A}^{\tang}_1}{\mathbf{E}^{\tang}_2}_{\ol{\Sigma}}-\ipc{\mathbf{A}^{\tang}_2}{\mathbf{E}^{\tang}_1}_{\ol{\Sigma}}+\ipc{\alpha_1\restriction_{\partial\ol{\Sigma}}}{\nml_{\partial\ol{\Sigma}}\diff\varepsilon_2}_{\partial\ol{\Sigma}}-\ipc{\alpha_2\restriction_{\partial\ol{\Sigma}}}{\nml_{\partial\ol{\Sigma}}\diff\varepsilon_1}_{\partial\ol{\Sigma}}\\
    &=(\sigma^C\oplus\sigma^S)(\mathfrak{K}_{\ol{\Sigma}}([A_1]),\mathfrak{K}_{\ol{\Sigma}}([A_2])) \nonumber,
\end{align}
having used~\eqref{eq:diffcodiffbound} and $-\diff^*\diff\varepsilon_i=0$.
\end{proof}

\section{Localisability of observables}
\label{apx:localise}
Here we prove some technical lemmas required for the proof of Prop.~\ref{prop:loc_localise}.

The first lemma concerns invariant subalgebras of Weyl algebras.
\begin{lemma}\label{lem:Weyl_fixedpoint}
    Let $(V,\sigma)$ be a symplectic space and let $Z\subset V^*$ be a subspace of the dual. Regarding $Z$ as an abelian group, define a $Z$-action $\gamma:z\to\Aut(\Weyl(V,\sigma))$ by
    \begin{equation}
        \gamma(z)(W(v))=e^{iz(v)} W(v)
    \end{equation}
    for each $v\in V$. Then the fixed-point subalgebra 
    \begin{equation} 
    \Weyl(V,\sigma)^{\gamma}=\{a\in\Weyl(V,\sigma): \gamma(z)a=a~\textnormal{for all}~z\in Z\}
    \end{equation}
    is the subalgebra generated by $\{W(v): W(v)\in  \Weyl(V,\sigma)^{\gamma}\}=\{W(v): v\in\bigcap_{z\in Z} \ker z\}$.
\end{lemma}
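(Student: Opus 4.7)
The inclusion $\supset$ is immediate: for $v\in W:=\bigcap_{z\in Z}\ker z$, we have $\gamma(z)W(v)=e^{iz(v)}W(v)=W(v)$ for every $z\in Z$, so each such $W(v)$ lies in the closed $*$-subalgebra $\Weyl(V,\sigma)^\gamma$, which therefore contains the subalgebra they generate.

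For the reverse inclusion, my plan is to produce a contractive conditional expectation $E\colon\Weyl(V,\sigma)\to C^*\{W(v):v\in W\}$ that acts on the dense $*$-subalgebra $\Delta(V,\sigma)$ by
\[
E\Bigl(\sum_j c_j W(v_j)\Bigr)=\sum_{j:\,v_j\in W}c_j W(v_j),
\]
and to show $E(a)=a$ for every $a\in\Weyl(V,\sigma)^\gamma$. The claim then follows by approximating any invariant $a$ in $C^*$-norm by elements $a_n\in\Delta(V,\sigma)$; contractivity of $E$ guarantees that $E(a_n)\in\mathrm{span}\{W(v):v\in W\}$ converges in $C^*$-norm to $E(a)=a$.

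The key tool for both the construction and the contractivity of $E$ is the faithful tracial state $\tau\colon\Weyl(V,\sigma)\to\CC$ with $\tau(W(v))=\delta_{v,0}$; its existence and faithfulness are guaranteed by the domination of the $2$-norm by the $C^*$-norm (Prop.~3.9 of \cite{binzConstructionUniquenessWeyl2004}), and its $\gamma$-invariance follows either from uniqueness of the trace (using simplicity of $\Weyl(V,\sigma)$) or by direct computation. In the GNS representation $(\mathcal{H}_\tau,\pi_\tau,\Omega_\tau)$ the vectors $\Omega_v:=\pi_\tau(W(v))\Omega_\tau$ form an orthonormal basis and $\gamma$ is implemented by unitaries $U(z)$ with $U(z)\Omega_v=e^{iz(v)}\Omega_v$. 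The orthogonal projection $P$ onto $\overline{\mathrm{span}}\{\Omega_v:v\in W\}$ is precisely the projection onto the $U(Z)$-invariant subspace, and a direct computation on Weyl generators shows that $\pi_\tau(E(a))$ and $P\pi_\tau(a)P$ agree on $P\mathcal{H}_\tau$. Identifying $P\mathcal{H}_\tau$ with the GNS space of the restricted trace on $\Weyl(W,\sigma|_W)$, which embeds isometrically into $\Weyl(V,\sigma)$ by Cor.~3.11 of \cite{binzConstructionUniquenessWeyl2004}, then yields the contractive bound $\|E(a)\|\le\|a\|$.

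To verify the identity-on-invariants property, I would introduce the continuous Fourier functionals $c_v(a):=\tau(W(-v)a)$, which by $\gamma$-invariance of $\tau$ transform as $c_v(\gamma(z)a)=e^{iz(v)}c_v(a)$. For $a\in\Weyl(V,\sigma)^\gamma$ and $v\notin W$, choosing $z\in Z$ with $z(v)\notin 2\pi\mathbb{Z}$ (which exists because $Z$ is a \emph{linear} subspace) forces $c_v(a)=0$, so $a$ and $E(a)$ have identical $\tau$-Fourier coefficients; faithfulness of $\tau$ then gives $a=E(a)$. The main technical obstacle is the contractivity of $E$: while the algebraic projection on $\Delta(V,\sigma)$ is transparent, extending it continuously to the whole $C^*$-algebra requires the spatial realisation via the $\gamma$-invariant trace and its GNS representation, which is the essential input from the specific structure of Weyl algebras.
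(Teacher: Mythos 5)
Your plan is correct but follows a genuinely different route from the paper. The paper's proof is short and elementary: given an invariant $a$ and $\varepsilon>0$, it chooses an approximant $a_m=\sum_{n=1}^m c_n W(v_n)$ with the \emph{minimal} number $m$ of Weyl generators satisfying $\|a-a_m\|<\varepsilon$, and forms the symmetrised average $a_m(s)=\tfrac12(\gamma(sz)a_m+\gamma(-sz)a_m)=\sum_n c_n\cos(sz(v_n))W(v_n)$. Since $\gamma(sz)a=a$ and automorphisms are isometries, $\|a-a_m(s)\|<\varepsilon$ for all $s$; minimality of $m$ then forbids any coefficient $\cos(sz(v_n))$ from ever vanishing, which forces $z(v_n)=0$ for every $n$ and every $z\in Z$. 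Each approximant therefore already lies in the span of $\{W(v):v\in\bigcap_{z}\ker z\}$, and one concludes by passing to the closure. The only inputs are linear independence of the $W(v)$, the invariance of $a$, and norm-preservation of automorphisms; no tracial state, no GNS construction, no simplicity. Your route, constructing the Fourier conditional expectation $E$ and showing $E(a)=a$ on invariants via vanishing of the coefficients $c_v(a)=\tau(W(-v)a)$ for $v\notin\bigcap_z\ker z$, is sound and has the bonus of producing a contractive $\tau$-preserving expectation as a by-product — but the contractivity, as you correctly flag, carries the real technical weight: you need the tracial GNS representation of $\Weyl(V,\sigma)$ to be isometric (simplicity of $\Weyl(V,\sigma)$, so nondegeneracy of $\sigma$), the GNS norm of the restricted trace on $\Delta(W,\sigma|_W)$ to coincide with the Binz–Honegger–Rieckers norm (Cor.~3.11 of~\cite{binzConstructionUniquenessWeyl2004} applied to a $C^*$-norm dominating the $2$-norm), and the identification of $P\mathcal{H}_\tau$ with that GNS space. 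Both arguments exploit that $Z$ is a linear subspace, yours via scaling $z$ to avoid $z(v)\in 2\pi\mathbb{Z}$, the paper's via scaling $s$ to land on a zero of $\cos$; the paper's version simply avoids the extra machinery.
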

\begin{proof}
    For each $\varepsilon>0$ there exists a minimal $m\in \mathbb{N}$, and an $a_m\in \Weyl(V,\sigma)$ of the form
    \begin{equation}
        a_m=\sum_{n=1}^m c_nW(v_n),
    \end{equation}
    with $v_n\in V$ and $c_n\in \CC$, such that
    $\Vert a-a_m\Vert<\varepsilon$.
    For any $z\in Z$ and $s\in\mathbb{R}$, we have
    \begin{equation}
        a_m(s):= \frac{1}{2}\left(\gamma(sz)(a_m) + \gamma(-sz)(a_m)
        \right) = \sum_{n=1}^m c_n \cos (s z(v_n))
        W (v_n).
    \end{equation}
    As $\gamma(sz)(a)=a$ for all $s$, 
    we may estimate
    \begin{align}
        \Vert a-a_m(s)\Vert\leq& \frac{1}{2}\Vert a-\gamma(sz)(a_m)\Vert+\frac{1}{2}\Vert a- \gamma(-sz)(a_m)\Vert\nonumber\\
        =& \frac{1}{2}\Vert \gamma(sz)(a-a_m)\Vert+\frac{1}{2}\Vert  \gamma(-sz)(a-a_m)\Vert \nonumber\\
        =& \Vert a-a_m\Vert<\varepsilon,
    \end{align}
    using the fact that any automorphism is norm-preserving.
   By minimality of $m$, we have $\cos (sz(v_n))\neq 0$ for all $s\in \mathbb{R}$ and hence
$z(v_n)=0$ for all $n$ and also all $z\in Z$. Thus $\Weyl(V,\sigma)^\gamma$ is contained in, and indeed equal to, the subalgebra generated by $\{W(v): W(v)\in  \Weyl(V,\sigma)^{\gamma}\}$. 
\end{proof} 

We also require the following consequence of Poincar\'e duality.
\begin{lemma}
\label{lem:poincare_dual}
    Let $M$ be an oriented manifold without boundaries equipped with a non-degenerate metric. If $F\in \Omega^k(M)$ obeys $\ipc{f}{F}_M=0$ 
    for all $f\in \Omega^k_{0\diff^*}(M)$, then $F\in \diff\Omega^{k-1}(M)$.
\end{lemma}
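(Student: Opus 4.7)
}
The plan is to reformulate the hypothesis via the Hodge star, deduce that $F$ is closed, and then invoke Poincar\'e duality to conclude that $F$ is exact.

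First I would recast the condition. Since $\star$ bijectively carries $\Omega^k_{0,\diff^*}(M)$ onto $\Omega^{n-k}_{0,\diff}(M)$ (because $\diff^*f=\pm(\star)^{-1}\diff\star f$), and since $\int_M f\wedge\star F = \int_M F\wedge h$ with $h=\star f$ (by symmetry of the Hodge pairing), the hypothesis is equivalent to
\begin{equation*}
\int_M F\wedge h = 0 \qquad\text{for every } h\in \Omega^{n-k}_{0,\diff}(M).
\end{equation*}

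Next I would show that $F$ is closed. Pick any $g\in \Omega^{k+1}_0(M)$ and set $f=\diff^*g$; then $f\in\Omega^k_{0,\diff^*}(M)$ because $\diff^{*2}=0$. As $M$ has empty boundary, the integration-by-parts identity~\eqref{eq:diffcodiffbound} yields $\ipc{\diff^*g}{F}_M=\ipc{g}{\diff F}_M$, and the hypothesis forces the left-hand side to vanish. Varying $g$ over $\Omega^{k+1}_0(M)$ gives $\diff F=0$ as a distribution, and hence smoothly.

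With $F$ closed I would then invoke Poincar\'e duality for oriented manifolds without boundary. The pairing $([F],[h])\mapsto\int_M F\wedge h$ descends to a well-defined bilinear pairing
\begin{equation*}
H^k_{\mathrm{dR}}(M)\times H^{n-k}_c(M)\to\RR,
\end{equation*}
(well-definedness is a one-line Stokes computation, using closedness of $F$ together with compact support of $h$, or closedness of $h$ together with the boundarylessness of $M$), and this pairing is non-degenerate in the first factor. The reformulated hypothesis states precisely that $[F]$ annihilates every class in $H^{n-k}_c(M)$, so $[F]=0$ in $H^k_{\mathrm{dR}}(M)$, meaning $F\in\diff\Omega^{k-1}(M)$ as required.

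The main obstacle, and the only nontrivial external input, is the non-degeneracy of the Poincar\'e duality pairing, which needs $M$ to be paracompact (second countable) -- automatic in the context of the paper, since the lemma is applied with $M$ the interior of a manifold-with-corners embedded in a globally hyperbolic spacetime. The remaining steps are routine integration by parts together with the standard correspondence between test co-closed $k$-forms and test closed $(n-k)$-forms under $\star$.
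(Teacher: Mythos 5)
Your proof is correct and follows essentially the same route as the paper's: first deducing $\diff F=0$ by testing against $\diff^*$ of compactly supported forms, then invoking non-degeneracy of the Poincar\'e duality pairing in the first factor. The only (cosmetic) difference is that you translate via the Hodge star to the usual wedge pairing $H^k_{\mathrm{dR}}(M)\times H^{n-k}_c(M)\to\RR$, whereas the paper works directly with the metric pairing between $\Omega^k_\diff(M)/\diff\Omega^{k-1}(M)$ and the $\star$-equivalent space $\Omega^k_{0\diff^*}(M)/\diff^*\Omega^{k+1}_0(M)$.
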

\begin{proof}
    Consider first $f=\diff^*h$ for $h\in \Omega_0^{k+1}(M)$, then
    \begin{equation}
        \ipc{h}{\diff F}_M=\ipc{\diff^*h}{F}_M=0.
    \end{equation}
    Since this holds for arbitrary compactly supported forms $h$, it follows that $\diff F=0$, so $F$ belongs to a de Rham cohomology class 
    \begin{equation}
        \tilde{F}:=F+\diff\Omega^{k-1}(M)\in \Omega^k_{\diff}(M)/(\diff\Omega^{k-1}(M)).
    \end{equation}
    The cohomology classes $\Omega^k_{\diff}(M)/(\diff\Omega^{k-1}(M))$ and $\Omega^k_{0\diff^*}(M)/(\diff^*\Omega_0^{k+1}(M))$ have a well defined pairing through
    \begin{equation}
        \ipc{h+\diff^*\Omega_0^{k+1}(M)}{H+\diff\Omega^{k-1}(M)}=\ipc{h}{H}_M,
    \end{equation}
    for $h\in \Omega^k_{0\diff^*}(M)$ and $H\in \Omega^k_{\diff}(M)$. In particular, for any $\tilde{H}\in \Omega^k_{0\diff}(M)/(\diff\Omega_0^{k-1}(M))$, this pairing defines a linear functional $D_{\tilde{H}}:\Omega^k_{0\diff^*}(M)/(\diff^*\Omega_0^{k+1}(M))\to \RR$, with
    \begin{equation}
        D_{\tilde{H}}\tilde{h}=\ipc{\tilde{h}}{\tilde{H}};
    \end{equation}
    moreover Poincar\'e duality (see e.g.~\cite[Sec.~5.12]{greubConnectionsCurvatureCohomology1972}) implies that the map $\tilde{H}\mapsto D_{\tilde{H}}$ is injective. In our case, $D_{\tilde{F}}=0$ and therefore the cohomology class $\tilde{F}$ is trivial, i.e., $F\in \diff\Omega^{k-1}(M)$.
\end{proof}

Furthermore, we need the following lemma on extending forms on the interior of a manifold with corners to its boundary.
\begin{lemma}
\label{lem:0form_ext}
    Let $\ol{M}$ an $n$-manifold with corners and non-degenerate smooth metric, and $M=\Intr(\ol{M})$. Let $\alpha\in \Omega^0(M)$ with $\beta\in \Omega^{1}(\ol{M})$ such that
    \begin{equation}
        \beta\restriction_{M}=\diff\alpha.
    \end{equation}
    Then $\alpha$ can be extended to a smooth form $\ol{\alpha}\in \Omega^0(\ol{M})$.
\end{lemma}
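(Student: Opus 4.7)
The plan is to reduce the statement to a local one in a chart with corners and then patch the resulting local extensions. Since smoothness on a manifold with corners is a local condition, it suffices to produce, for each $p\in\partial\ol M$, a smooth extension of $\alpha$ on some open neighbourhood of $p$ in $\ol M$. Any two such local extensions automatically agree on their overlaps, being continuous functions that coincide on the dense subset $M$, so they glue together into a global $\ol\alpha\in\Omega^0(\ol M)$. Note that the non-degenerate metric plays no role in the argument, which is purely a regularity statement.

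For the local step, fix $p\in\partial\ol M$ and choose a chart with corners $(U,\varphi)$ around $p$, shrinking $U$ so that $V:=\varphi(U)=B\cap\ol{\RR}^n_+$ for some open convex ball $B\subset\RR^n$. Write $\hat\alpha=\alpha\circ\varphi^{-1}$ on $V\cap\RR^n_+$ and $\hat\beta=(\varphi^{-1})^*\beta$ on $V$. By the very definition of smoothness on manifolds with corners, $\hat\beta$ extends to a smooth $1$-form $\tilde\beta$ on an open neighbourhood $W\subset\RR^n$ of $V$, which may be taken to be convex. Pick a base point $x_0\in V\cap\RR^n_+$ and define
\begin{equation*}
F(x)=\int_0^1 \tilde\beta_{x_0+t(x-x_0)}(x-x_0)\,\diff t,\qquad x\in W,
\end{equation*}
which is smooth on $W$ by differentiation under the integral sign.

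The crucial claim will be that $F=\hat\alpha-\hat\alpha(x_0)$ on $V\cap\RR^n_+$; once this is established, $F+\hat\alpha(x_0)$ is a smooth function on $W$ whose restriction to $V\cap\RR^n_+$ equals $\hat\alpha$, yielding a smooth extension of $\alpha$ to $U$ after pulling back by $\varphi$. To prove the claim, observe that on the open set $V\cap\RR^n_+$ one has $\tilde\beta=\diff\hat\alpha$, so $\diff\tilde\beta=0$ there. Moreover $V\cap\RR^n_+$ is convex (being the intersection of two convex sets) and contains $x_0$, so for each $x\in V\cap\RR^n_+$ the segment from $x_0$ to $x$ lies entirely in $V\cap\RR^n_+$, where $\tilde\beta$ is closed. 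The standard Poincar\'e lemma computation along these segments then gives $\diff F=\tilde\beta$ on $V\cap\RR^n_+$; connectedness of this set together with $F(x_0)=0$ force $F=\hat\alpha-\hat\alpha(x_0)$.

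The main obstacle I anticipate is that the smooth extension $\tilde\beta$ of $\hat\beta$ need not be closed on all of $W$: closedness is available only on the open subset $V\cap\RR^n_+$ where $\tilde\beta=\diff\hat\alpha$. The resolution is to arrange the chart (making $V$ convex) and the base point (inside the open orthant) so that every straight-line path entering the Poincar\'e formula lies inside $V\cap\RR^n_+$, exploiting the joint convexity of $V$ and of $\RR^n_+$. Everything else, in particular smoothness of $F$ and the patching of the local extensions, is routine.
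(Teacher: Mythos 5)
Your proof is correct, and the overall strategy is the same as the paper's: reduce to a convex chart neighbourhood, pick a base point in the interior, and construct the extension by integrating a $1$-form along straight segments. The difference lies in which object you integrate. The paper integrates $\beta$ itself along segments in $\ol M = U\cap\ol{\RR}^n_+$ to define $\ol\alpha$ directly on $\ol M$, then proves a Lipschitz estimate to get continuity, and finally concludes smoothness by observing that all derivatives of $\ol\alpha$ extend continuously to $\ol M$ — which implicitly relies on the nontrivial direction of a Whitney/Seeley-type characterisation (continuous extendibility of all partial derivatives $\Leftrightarrow$ existence of a smooth open-neighbourhood extension). You instead invoke the definition of smoothness of $\beta$ on the manifold with corners to first produce a smooth extension $\tilde\beta$ on a convex open $W\supset V$, and integrate $\tilde\beta$; the resulting $F$ is then manifestly smooth on the open set $W$ by differentiation under the integral, so no extension characterisation is needed. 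Your route is therefore somewhat more self-contained. Two small remarks: the ``Poincar\'e lemma computation'' you cite to get $\diff F = \tilde\beta$ on $V\cap\RR^n_+$ is really just the fundamental theorem of calculus (since $\tilde\beta=\diff\hat\alpha$ along those segments, $F(x)=\hat\alpha(x)-\hat\alpha(x_0)$ directly, without needing to first verify $\diff F=\tilde\beta$ and then appeal to connectedness); and to justify that the local extension $\tilde\beta$ agrees with $\hat\beta$ on all of $W\cap\ol{\RR}^n_+$ (not just near $\varphi(p)$), one should note that the definition of smoothness in a chart with corners demands agreement on the full intersection of the extension's domain with the closed orthant, which allows you to take $W$ as that domain (shrunk to a ball) and $V=W\cap\ol{\RR}^n_+$. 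Your observation that the metric plays no role is also correct.
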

\begin{proof}
    It suffices to prove this for $\ol{M}=U\cap \ol{\RR}_+^n$ for some $U\subset \RR^n$ open and convex. Let $\alpha\in \Omega^0(M)$ such $\diff\alpha$ can be smoothly extended to $\ol{M}$, i.e. all derivatives of the functions $\{\partial_i\alpha:i\in\{1,...,n\})\in C^\infty(M)\}$ are continuously extendible to $\ol{M}$. Choose $x_0\in M$ and denote for each $x\in \ol{M}$ the function $\gamma_x:[0,1]\to \ol{M}$ by 
    \begin{equation}
        \gamma_x(t) = x_0+t(x-x_0).
    \end{equation}
    Now define 
    \begin{equation}
        \ol{\alpha}(x)=\alpha(\gamma_x(0))+\int_0^1 \iota_{\dot{\gamma}_x(t)}\beta(\gamma_x(t))\diff t.
    \end{equation}
    One readily checks that for $x\in M$, we have $\gamma_x([0,1])\in M$ and thus
    \begin{equation}
        \ol{\alpha}(x)=\alpha(\gamma_x(0))+\int_0^1 \iota_{\dot{\gamma}_x(t)}\diff\alpha(\gamma_x(t))\diff t=\alpha(x).
    \end{equation}
    Furthermore, for each $x,x'\in \ol{M}$, we have
    \begin{align}
        \ol{\alpha}(x)-\ol{\alpha}(x')=&\int_0^1 \left(\iota_{\dot{\gamma}_x(t)}\beta(\gamma_x(t))\iota_{\dot{\gamma}_{x'}(t)}\beta(\gamma_{x'}(t))\right)\diff t\nonumber\\
        =&\alpha(\gamma_x(s))-\alpha(\gamma_{x'}(s))+\int_s^1 \left(\iota_{\dot{\gamma}_x(t)}\beta(\gamma_x(t))-\iota_{\dot{\gamma}_{x'}(t)}\beta(\gamma_{x'}(t))\right)\diff t\nonumber\\
        =&\int_0^1 \iota_{\dot{\gamma}_{x,x',s}(t)}\beta(\gamma_{x,x',s}(t))\diff t+\int_s^1 \left(\iota_{\dot{\gamma}_x(t)}\beta(\gamma_x(t))-\iota_{\dot{\gamma}_{x'}(t)}\beta(\gamma_{x'}(t))\right)\diff t,
    \end{align}
    for each $s\in [0,1)$, where $\gamma_{x,x',s}(t)=x_0+s(x-x_0)+ts(x'-x)$. Now let $K\subset \ol{N}$ compact and convex such that $x_0,x,x'\in K$, then we may estimate
    \begin{align}
        \vert\ol{\alpha}(x)-\ol{\alpha}(x')\vert\leq&\left(s\Vert x-x'\Vert +(1-s)(\Vert x-x_0\Vert+\Vert x'-x_0\Vert\right)C_K,
    \end{align}
    where 
    \begin{equation}
        C_K=\sup_{x\in K}\sqrt{\sum_{i=1}^n(\beta_i(x))^2},
    \end{equation}
    where $\beta_i\in C^\infty(M)$ the coefficients of $\beta$ in $\RR^n$-coordinates.
    Thus $\vert\ol{\alpha}(x)-\ol{\alpha}(x')\vert\leq\Vert x-x'\Vert C_K$, which implies that $\ol{\alpha}$ is continuous on $\ol{M}$. Since $\diff\alpha$ is smoothly extendible to $\ol{M}$, it thus follows that all derivatives of $\ol{\alpha}$ on $M$ are continuously extendible to $\ol{M}$ and thus $\ol{\alpha}\in \Omega^0(\ol{M})$.
\end{proof}

Now, we prove the following result on existence certain co-closed forms with specified normals.
\begin{lemma}
\label{lem:form_deform}
    Let $\ol{M}$ be a $d$-dimensional compact oriented manifold with smooth Riemannian metric $g$ and a smooth boundary that decomposes into compact connected components $\partial\ol{M}=\bigcup_{i\in I}S_i$. Let $J\subset I$ and suppose $f\in \nml_{\partial\ol{M}}\Omega^1_{\diff^*}(\ol{M})$ vanishes identically on $S_J:=\bigcup_{i\in J} S_i$.  
    Then there exists a neighbourhood $V$ of $S_J$ and an $F\in \Omega^1_{\diff^*}(\ol{M})$ such that $\nml_{\partial\ol{M}}F=f$ and $F\restriction_V=0$.
\end{lemma}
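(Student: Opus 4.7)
The strategy is to construct $F = T + \tilde F$ as a sum of two co-closed 1-forms, each vanishing in an open neighborhood of $S_J$. The first summand $T$ is built from ``flux tubes'' that absorb the integrals $c_i := \ipc{1}{f}_{S_i}$ on each component $S_i \subset \partial\ol{M}\setminus S_J$, so that $\tilde f := f - \nml_{\partial\ol{M}} T$ satisfies $\ipc{1}{\tilde f}_{S_j}=0$ for every $j \in I\setminus J$ and $\tilde f|_{S_J}=0$. The second summand $\tilde F$ is realised as $-\diff^*\omega$ for a 2-form $\omega$ supported in a collar of $\partial\ol{M}\setminus S_J$; the identity $\nml_S\diff^* = -\diff^*_S\nml_S$ from Sec.~\ref{sec:forms} then matches its boundary trace to $\tilde f$.

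The construction is component-wise. Components of $\ol{M}$ disjoint from $S_J$ are handled directly by Lem.~\ref{lem:nmlcoclosed} (the vanishing-near-$S_J$ condition being vacuous), while components whose entire boundary lies in $S_J$ admit the trivial choice $F = 0$. For a component $M'$ meeting both $S_J$ and $\partial\ol{M}\setminus S_J$, fix a reference $S_{i_0} \subset \partial M'\setminus S_J$; the assumption $f \in \nml_{\partial\ol{M}}\Omega^1_{\diff^*}(\ol{M})$ together with $f|_{S_J}=0$ yields $\sum_i c_i = 0$ within $M'$. For each other $S_i \subset \partial M'\setminus S_J$, pick a smooth embedded curve $\gamma_i$ from $S_i$ to $S_{i_0}$ in $M'\setminus S_J$, transverse to $\partial M'$ at its endpoints, and a tubular neighborhood $U_i \cong [0,1]\times D^{d-1}$ disjoint from $S_J$ (and, after a small perturbation, from the other $U_j$). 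Select a bump $\rho_i \in C^\infty_c(\Intr D^{d-1})$ with $\int \rho_i\,\diff x^1\cdots\diff x^{d-1} = (-1)^{d-1} c_i$, and define $\omega_i := \rho_i(x)\,\diff x^1\wedge\cdots\wedge\diff x^{d-1}$ in the tube coordinates, extended by zero outside $U_i$. Since $\rho_i$ has no $t$-dependence and is compactly supported away from the lateral boundary of $U_i$, the extension is a smooth globally closed $(d-1)$-form on $\ol{M}$; consequently $T_i := \star_g \omega_i$ is a smooth co-closed 1-form supported in $U_i$, hence vanishing near $S_J$. A direct computation using $\star\star\omega_i = (-1)^{d-1}\omega_i$ and the identity $\star_S(\nml_S T_i) = \star_g T_i|_S$ for a boundary component $S$ shows $\ipc{1}{\nml T_i}_{S_i} = c_i$, $\ipc{1}{\nml T_i}_{S_{i_0}} = -c_i$, and $\nml T_i = 0$ on every other component. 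Setting $T := \sum_i T_i$ then makes $\tilde f := f - \nml T$ mean-zero on every $S_j$ ($j \in I\setminus J$) and zero on $S_J$.

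Since each $S_j$ ($j \in I\setminus J$) is a closed compact Riemannian manifold and $\ipc{1}{\tilde f}_{S_j}=0$, Hodge theory yields $\mu_j \in \Omega^1(S_j)$ with $\diff^*_{S_j} \mu_j = \tilde f|_{S_j}$; assemble these into $\mu \in \Omega^1(\partial\ol{M})$ with $\mu|_{S_J}=0$, a smooth 1-form because the boundary components are disjoint. Pick a collar embedding $\partial\ol{M}\times [0,\epsilon)\hookrightarrow\ol{M}$ and set $\omega := \chi(t)\,\mu\wedge\diff t$ with $\chi \in C^\infty_c([0,\epsilon))$ equal to $1$ near $t=0$, extended by zero to a smooth 2-form on $\ol{M}$. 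Since $\mu$ vanishes on $S_J$, so does $\omega$ in an open neighborhood of $S_J$. A short collar calculation gives $\nml_{\partial\ol{M}} \omega = \mu$, and $\tilde F := -\diff^* \omega$ is then a smooth co-closed 1-form vanishing on the same neighborhood of $S_J$ as $\omega$, with $\nml \tilde F = \diff^*_{\partial\ol{M}} \mu = \tilde f$ by the boundary identity. Finally $F := T + \tilde F$ is co-closed, vanishes in the common open neighborhood of $S_J$ where both summands vanish, and satisfies $\nml F = \nml T + \tilde f = f$.

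The main technical hurdle is the flux tube step: one must verify that $\omega_i$ extends smoothly to a globally closed $(d-1)$-form on $\ol{M}$ and that the flux integral of $T_i = \star_g\omega_i$ can be computed in a general (non-product) Riemannian metric. Both points work out because closedness of $\omega_i$ in the fixed tube coordinates is a metric-independent statement, and the flux reduces via $\star\star = (-1)^{d-1}\mathrm{id}$ to the purely coordinate integral $\int \rho_i\,\diff x^1\cdots\diff x^{d-1}$, which is entirely under our control.
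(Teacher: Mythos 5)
Your argument is correct, but it takes a genuinely different route from the paper. The paper starts from an arbitrary $\tilde F\in\Omega^1_{\diff^*}(\ol{M})$ with $\nml_{\partial\ol{M}}\tilde F=f$ (which exists by hypothesis), passes to a geodesic collar $U\cong[0,\varepsilon)\times S_J$ of $S_J$, and replaces $\tilde F\restriction_U$ by a \emph{reparametrised and rescaled} version: the collar coordinate $s$ is composed with a cutoff $h(s)$ vanishing near $s=0$, and the tangential/normal components are multiplied by the Jacobian factors $\sqrt{\det g_{h(s)}/\det g_s}$ (and $h'(s)$), precisely so that $\diff^*F = \sqrt{\det g_{h(s)}/\det g_s}\,h'(s)\,(\diff^*\tilde F)(h(s),\cdot)=0$. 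This is a single, self-contained deformation local to the collar of $S_J$; it never touches the rest of $\ol{M}$ and never appeals to any boundary Hodge theory. Your proof instead \emph{constructs} $F$ from scratch: flux tubes carry away the net flux on each boundary component not in $S_J$ (using the constraint $\sum_i c_i=0$ per component coming from $f\in\nml_{\partial\ol M}\Omega^1_{\diff^*}(\ol M)$), and then the mean-zero residue $\tilde f$ is realised as $\nml_{\partial\ol M}(-\diff^*\omega)$ for a collar 2-form, via the Hodge decomposition on each closed boundary component together with the identity $\nml_S\diff^*=-\diff^*_S\nml_S$. The paper's proof is shorter and entirely local to the collar of $S_J$; yours has more moving parts but is perhaps more transparent about \emph{why} the construction is possible (the flux and cohomological constraints are made explicit) and, as a by-product, reproves that $\nml_{\partial\ol{M}}\Omega^1_{\diff^*}(\ol{M})$ is characterised by the mean-zero condition on each component of $\ol M$.

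Two minor technical points. First, for the collar calculation $\nml_{\partial\ol M}(\mu\wedge\diff t)=\pm\mu$ to hold you should take a \emph{geodesic} (or at least unit-normal) collar; for an arbitrary collar embedding $\diff t\restriction_{\partial\ol M}=a\,\nu$ with $\nu$ the unit conormal and $a=\lvert\diff t\rvert_g$ a non-constant positive function, so $\nml_{\partial\ol M}\omega=\pm a\mu$ and the intended cancellation against $\diff^*_{\partial\ol M}\mu=\tilde f$ fails. Second, the embedded arcs $\gamma_i$ can be taken pairwise disjoint by general position when $d\ge 3$; for $d=2$ the transversality count is marginal and one should note that on a compact surface with boundary one can still choose pairwise disjoint arcs from each $S_i$ to $S_{i_0}$ explicitly. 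Neither point affects the substance of the argument.
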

\begin{proof}
    Let $U\subset \ol{M}$ be a geodesic collar neighbourhood of $S_{J}$, parametrised by $(s,p)$ for $s\in[0,\varepsilon)$ and $p\in S_J$, where $(s,p)$ is a point at geodesic distance $s$ from $p\in S_J$ along the normal geodesic. Thus $U\cong [0,\varepsilon)\times S_J$ and we also have a smooth `distance to boundary' map $s:U\to [0,\epsilon)$. By slight abuse of notation, we can write for the metric $g$
    \begin{equation}
        g\restriction_{U}=\diff s^2\oplus g_s,
    \end{equation}
    where $g_s$ is an $s$-dependent metric on $S_J$. 

    By assumption, $f$ may be written $f=\nml_{\partial\ol{M}}\tilde{F}$ for some $\tilde{F}\in \Omega^1_{\diff^*}(\ol{M})$, written locally as
    \begin{equation}
        \tilde{F}\restriction_{U}=\tilde{F}_0\oplus \tilde{H}_{s},
    \end{equation}
    where $\tilde{F}_0\in \Omega^0([0,\varepsilon)\times S_J)$ and $\tilde{H}_s$ an $s$-dependent $1$-form on $S_J$. Choose a smooth function $h:[0,\varepsilon)\to [0,\varepsilon)$ obeying 
    \begin{equation}
        h(s)=\begin{cases}
            0&s<\varepsilon/3\\
            s&s>2\varepsilon/3.
        \end{cases}
    \end{equation}
    We then define $F\in \Omega^1(\ol{M})$ 
    to agree with $\tilde{F}$ on $\ol{M}\setminus U$, whereas $F\restriction_{U}=F_0\oplus H_s$, where given local coordinates $\{x^i\}$ in a neighbourhood of $p\in S_J$, we set
    \begin{align}
       F_0(s,p)=\sqrt{\frac{\det( g_{h(s)}(p))}{\det( g_{s}(p))}}\tilde{F}_0(h(s),p),\\H_s(p)=h'(s)\sqrt{\frac{\det( g_{h(s)}(p))}{\det( g_{s}(p))}}\left(g_s\right)_{ij}(p) \left(g_{h(s)}\right)^{jk}(p)\left(\tilde{H}_{h(s)}(p)\right)_k\diff x^i.
    \end{align}
    Noting that
    $F_0\oplus H_s$ and $\tilde{F}_0\oplus \tilde{H}_s$
    agree for $s>2\varepsilon/3$, we see that $F$ is smooth. Moreover, $F$ vanishes on 
    \begin{equation}
        V=\{x\in U:s(x)<\varepsilon/3\}.
    \end{equation}
    It remains to show that $\diff^*F=0$, which clearly holds on $\ol{M}\setminus U$, while on $U$ we calculate
    \begin{align}
        (\diff^*F)(s,p)=&-\frac{1}{\sqrt{\det(g_s)}}\left(\partial_s \sqrt{\det(g_s)} F_0(s,p)+\partial_i\sqrt{\det(g_s)}\left(g_s\right)^{ij}\left(H_s\right)_j\right)\nonumber\\
        =&-\frac{1}{\sqrt{\det(g_s)}}\left(\partial_s \sqrt{\det(g_{h(s)})} \tilde{F}_0(h(s),p)+h'(s)\partial_i\det(g_{h(s)})\left(g_{h(s)}\right)^{ij}\left(\tilde{H}_{h(s)}\right)_j\right)\nonumber\\
        =&\sqrt{\frac{\det(g_{h(s)})}{\det(g_s)}}h'(s)(\diff^*\tilde{F})(h(s),p)=0.
    \end{align}
\end{proof}
Lastly, we require the following result on support of solutions to the Maxwell equations.
\begin{lemma}
\label{lem:E_sol_support}
    Let $\ol{\Sigma}\subset\ol{N}$ a regular Cauchy surface with boundaries of a finite Cauchy lens. Let $[\udl{A}]\in \Sol_{\mathscr{G}}(\ol{N})$ the unique gauge orbit of linear solutions such that $\mathfrak{I}_{\ol{\Sigma}}([\udl{A}])=0\oplus \mathbf{E}$ for $\mathbf{E}\in \mathcal{E}(\ol{\Sigma})$ as in Prop.~\ref{prop:ext_ini_dat}. Then there exists an $\udl{A}\in[\udl{A}]$ such that $\supp(\udl{A})\subset \mathscr{J}(\supp(\mathbf{E}))$.
\end{lemma}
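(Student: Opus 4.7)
My plan is a three-stage gauge reduction followed by a causality argument.

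First, start from any representative $\udl{A}_0 \in [\udl{A}]$. The hypothesis $\mathfrak{I}_{\ol{\Sigma}}([\udl{A}]) = 0 \oplus \mathbf{E}$ gives $\udl{A}_0|_{\ol{\Sigma}} \in \mathscr{G}(\ol{\Sigma})$, so $\udl{A}_0|_{\ol{\Sigma}} = \diff \mu$ with $\mu \in \Omega^0(\ol{\Sigma})$ and $\mu|_{\partial\ol{\Sigma}} = 0$. Extending $\mu$ smoothly to $\tilde{\mu} \in \Omega^0(\ol{N})$ with $\tilde{\mu}|_{\angle\ol{N}} = 0$ (using \cite[Lem.~5.34]{leeIntroductionSmoothManifolds2012}), the replacement $\udl{A}_1 := \udl{A}_0 - \diff\tilde\mu \in [\udl{A}]$ achieves $\udl{A}_1|_{\ol{\Sigma}} = 0$.

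Second, I would pass to temporal gauge. Choose a smooth future-directed timelike vector field $n$ on $\ol{N}$ proportional to the unit normal of $\ol{\Sigma}$, whose flow $\phi_s$ foliates $\ol{N}\setminus\angle\ol{N}$ by curves meeting $\ol{\Sigma}$ transversely (consistent with $\mathscr{J}^\pm(\angle\ol{N})=\angle\ol{N}$, which forces the corner to be a fixed set of the flow). For $p\in\ol{N}$ let $q(p)\in\ol{\Sigma}$ and $t(p)\in\RR$ be the unique data with $p=\phi_{t(p)}(q(p))$, and set
\[\Lambda(p)=\int_0^{t(p)}\udl{A}_1(n)\bigl(\phi_s(q(p))\bigr)\,\diff s.\]
Since $\Lambda|_{\ol{\Sigma}}=0$ implies $\Lambda|_{\angle\ol{N}}=0$, one has $\diff\Lambda\in\mathscr{G}(\ol{N})$, so $\udl{A}_2:=\udl{A}_1-\diff\Lambda\in[\udl{A}]$ satisfies both $\udl{A}_2|_{\ol{\Sigma}}=0$ and the temporal-gauge condition $\udl{A}_2(n)=0$.

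Third, I would bound the support via causality. The field strength $F=\diff\udl{A}_2$ obeys $\Box F=0$ with Cauchy data $(0,-\mathbf{E})$ on $\ol{\Sigma}$; embedding $\ol{N}$ in the globally hyperbolic $M$ of Def.~\ref{def:cauchy_lens} with $\ol{\Sigma}\subset\tilde{\Sigma}$ (as in the proof of Lem.~\ref{lem:IV_constr}) and combining the finite propagation speed of normally hyperbolic operators (\cite[Cor.~3.4.3]{barWaveEquationsLorentzian2007}) with the causal convexity of $\iota(\ol{N})\subset M$ yields $\supp(F|_{\ol{N}})\subset\mathscr{J}(\supp\mathbf{E})$. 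In temporal gauge, Maxwell's equations reduce on each leaf $\phi_s(\ol{\Sigma})$ to $\partial_s\mathbf{A}=-\mathbf{E}$ on tangential components, so integrating from the initial condition $\mathbf{A}|_{\ol{\Sigma}}=0$ gives
\[\udl{A}_2(p)=-\int_0^{t(p)}\mathbf{E}\bigl(\phi_s(q(p))\bigr)\,\diff s\]
(with $\mathbf{E}$ identified as a horizontal $1$-form via the foliation). If $p\notin\mathscr{J}(\supp\mathbf{E})$, any intermediate point $\phi_{s_0}(q(p))\in\mathscr{J}^+(\supp\mathbf{E})$ could be concatenated with the future-directed timelike segment of the integral curve from $\phi_{s_0}(q(p))$ to $p$, placing $p$ in $\mathscr{J}^+(\supp\mathbf{E})$ and giving a contradiction; a time-reversed argument excludes $\mathscr{J}^-(\supp\mathbf{E})$. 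Hence the integrand vanishes identically along $\{\phi_s(q(p)):s\in[0,t(p)]\}$ and $\udl{A}_2(p)=0$, giving the desired support bound.

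The main obstacle is verifying smoothness of $\Lambda$ (and of $n$ itself) at the corner $\angle\ol{N}$, where the flow $\phi_s$ degenerates because the corner is causally inert. Continuity follows from $t(p)\to 0$ as $p\to\angle\ol{N}$, but smoothness in the manifold-with-corners sense needs care, most cleanly handled by pushing all constructions forward to the globally hyperbolic extension $M$, carrying them out there, and pulling back the result to $\ol{N}$.
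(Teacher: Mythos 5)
Your approach differs substantially from the paper's. The paper constructs the desired representative directly: it solves the normally hyperbolic wave equation $-(\diff^*\diff+\diff\diff^*)\udl{A}=0$ with Cauchy data $\Data_{\ol{\Sigma}}(\udl{A})=(0,\mathbf{E})$ together with the auxiliary normal data $(\nml_{\ol{\Sigma}}\udl{A},\nml_{\ol{\Sigma}}\nabla_n\udl{A})=0$, applies the support estimate from \cite[Thm.~3.2.11]{barWaveEquationsLorentzian2007} at the level of $\udl{A}$ itself, and then checks a posteriori that $\diff^*\udl{A}=0$ (from $\diff^*_{\ol{\Sigma}}\mathbf{E}=0$, so $\diff^*\udl{A}$ also solves the wave equation with vanishing Cauchy data) and that the gauge orbit is correct via $\mathfrak{I}_{\ol{\Sigma}}$. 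This is essentially the Lorenz-gauge version of what you are trying to do in temporal gauge, but it requires no gauge-fixing of a previously given representative, no foliation, and no detour through the field strength.

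Your plan is sound in outline, but there are three places where the details are not yet correct as written. First, the temporal-gauge construction at the corner is genuinely delicate: since $\mathscr{J}^\pm(\angle\ol{N})=\angle\ol{N}$, the tangent cone of $\ol{N}$ at a corner point contains no timelike directions, so no timelike vector field on $\ol{N}$ can point inward at $\angle\ol{N}$; you acknowledge this, and pushing the construction into the extension $M$ does fix it (the flow segment from any $p\in\ol{N}$ to $\ol{\Sigma}$ stays in $\ol{N}$ by causal convexity, so $\Lambda\restriction_{\ol{N}}$ is independent of the extension of $\udl{A}_1$), but this needs to be spelled out rather than deferred. Second, the assertion that $F$ solves $\Box F=0$ with ``Cauchy data $(0,-\mathbf{E})$'' is too loose: the Cauchy data for a $2$-form are the full restriction $F\restriction_{\ol{\Sigma}}$ and $\nabla_n F\restriction_{\ol{\Sigma}}$, and you must use both constraint equations $\diff F=0$ and $\diff^*F=0$ on $\ol{\Sigma}$ to show that every component is supported in $\supp\mathbf{E}$; the paper avoids this entirely by working directly with $\udl{A}$. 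Third, the concluding causality argument mislabels the exclusion of $\mathscr{J}^-(\supp\mathbf{E})$ when $t(p)>0$: concatenating $q(p)\to\phi_{s_0}(q(p))\to\supp\mathbf{E}$ would produce a nontrivial causal curve between two points of the acausal Cauchy surface $\ol{\Sigma}$, and it is this acausality, not ``time reversal'' (which is the correct argument only for $t(p)<0$), that rules out the case. None of these is fatal, but each needs to be repaired, and the combined bookkeeping is considerably heavier than the one-paragraph Lorenz-gauge argument in the paper.
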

\begin{proof}
    Let $n$ a vectorfield normal to $\ol{\Sigma}$ and choose $\udl{A}\in \Omega^1(\ol{N})$ such that 
    \begin{equation}
        -(\diff^*\diff+\diff\diff^*)\udl{A}=0,\qquad (\nml_{\ol{\Sigma}}\udl{A},\nml_{\ol{\Sigma}} \nabla_n \udl{A})=0,\qquad \Data_{\ol{\Sigma}}(\udl{A})=(0,\mathbf{E}),
    \end{equation} 
    using the map $\Data_{\ol{\Sigma}}:\Omega^1(\ol{N})\to \Omega^1(\ol{\Sigma})\times \Omega^1(\ol{\Sigma})$ as in Prop.~\ref{prop:IV_simple}. By \cite[Thm.~3.2.11]{barWaveEquationsLorentzian2007}, it follows that $\supp(\udl{A})\subset \mathscr{J}(\supp(\mathbf{E}))$. Since $\diff^*_{\ol{\Sigma}}\mathbf{E}=0$, we compute  $\diff^*\udl{A}\restriction_{\Sigma}=\nabla_n\diff^*\udl{A}\restriction_{\Sigma}=0$, which implies $\diff^*\udl{A}=0$ and thus $\udl{A}\in \Sol(\ol{N})$. Hence $[\udl{A}]\in \Sol_{\mathscr{G}}(\ol{N})$ and $\mathfrak{I}_{\ol{\Sigma}}([\udl{A}])=0\oplus \mathbf{E}$.
\end{proof}
\section{A relativisation map for large gauge transformations}
\label{apx:edge_mode_QRF}

Throughout, $\ol{N}$ is a fixed finite Cauchy lens and $\ol{\Sigma}$ is a fixed
regular Cauchy surface with boundary of $\ol{N}$. Thus $\partial\ol{\Sigma}=\angle\ol{N}$
is a compact smooth Riemannian manifold without boundary and the Laplace--de Rham operator $\Delta_\angle$ on $\partial\ol{\Sigma}$ is essentially self-adjoint and nonnegative on $\Omega^0(\angle\ol{N})\otimes\CC\subset L^2(\angle\ol{N})$, whose closure is also denoted $\Delta_\angle$. The $W^{2,s}$ Sobolev topology on $\Omega^0(\angle\ol{N})$ can be defined by the inner product
    \begin{equation}
        \ipc{f}{g}_{W^{2,s}}=\ipc{(1+\Delta_\angle)^{\frac{s}{2}}f}{(1+\Delta_\angle)^{\frac{s}{2}}g}_{L^2}.
    \end{equation}
The spectrum $\sigma(\Delta_\angle)\subset[0,\infty)$ is a discrete unbounded set; the eigenvalues have finite degeneracy and any eigenfunction is smooth, see e.g.~\cite[Ch.~8]{taylorPartialDifferentialEquations2023}. Moreover, if $\Pi_\mu$ is the spectral projector onto $[0,\mu]$ then
$\Pi_\mu\Omega^0(\angle\ol{N})\subset\Omega^0(\angle\ol{N})$. As $\mu\to\infty$,   $\Pi_\mu$ tends strongly to the identity on $\Omega^0(\angle\ol{N})$ with respect to the $L^2$-topology and, since $\Pi_\mu$ commutes with $(1+\Delta_\angle)$, the same holds true in the $W^{2,s}$-topology for every $s\in\RR$.
With respect to the $L^2$-inner product, there is an orthogonal decomposition 
 \begin{equation}
        \Omega^0(\ol{\Sigma})=\mathscr{LG}_\angle(\ol{N})\oplus \mathscr{G}_\angle(\ol{N}),
    \end{equation}
 (see Lem.~\ref{lem:nmlcoclosed}), where $\mathscr{LG}_\angle(\ol{N})$ is defined in Prop.~\ref{prop:large_gauge_boundary} and we recall that
        \begin{equation}
            \mathscr{G}_\angle(\ol{N})=\left\{\Lambda\restriction_{\angle\ol{N}}:\Lambda\in \Omega^0_{\diff}(\ol{N})\right\}
        \end{equation} is finite-dimensional.
As $\mathscr{G}_\angle(\ol{N})\subset \ker(\Delta_\angle)$, the decomposition of $\Omega^0(\angle\ol{N})$ is orthogonal in every $W^{2,s}$-inner product and one has $\Pi_\mu \mathscr{LG}_\angle(\ol{N})\subset \mathscr{LG}_\angle(\ol{N})$. 

\begin{lemma}
\label{lem:LG_filt}
    The subspaces $G_\mu=\Pi_\mu \mathscr{LG}_\angle(\ol{N})$, indexed by
    $\mu\in\sigma(\Delta_\angle)$, 
    form an ascending filtration of $\mathscr{LG}_\angle(\ol{N})$ by finite-dimensional real inner product spaces. Equipping $G_\mu$ with its unique Hausdorff vector topology, the inclusion of $G_\mu$ in $G_{\mu'}$ for $\mu'>\mu$ is continuous, and $\bigcup_{\mu\in\sigma(\Delta_\angle)} G_{\mu}$ is dense in $\mathscr{LG}_\angle(\ol{N})$ with respect to the $W^{2,s}$-topology for every $s\in\RR$.
    \end{lemma}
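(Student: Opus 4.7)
The plan is to verify each of the stated properties in turn, exploiting spectral calculus for $\Delta_\angle$ together with the invariance $\Pi_\mu\mathscr{LG}_\angle(\ol{N})\subset\mathscr{LG}_\angle(\ol{N})$ already noted in the text.

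First I would establish finite dimensionality and the filtration property. Since $\sigma(\Delta_\angle)\subset[0,\infty)$ is discrete, the set $\sigma(\Delta_\angle)\cap[0,\mu]$ is finite for each $\mu\in\sigma(\Delta_\angle)$, and as every eigenvalue has finite multiplicity the range $\Pi_\mu L^2(\angle\ol{N})$ is finite-dimensional. Hence so is its subspace $G_\mu=\Pi_\mu\mathscr{LG}_\angle(\ol{N})$. For $\mu\le\mu'$ in $\sigma(\Delta_\angle)$, the spectral relation $\Pi_{\mu'}\Pi_\mu=\Pi_\mu$ together with $\Pi_\mu\mathscr{LG}_\angle(\ol{N})\subset\mathscr{LG}_\angle(\ol{N})$ gives, for any $f\in\mathscr{LG}_\angle(\ol{N})$, that $\Pi_\mu f=\Pi_{\mu'}(\Pi_\mu f)\in\Pi_{\mu'}\mathscr{LG}_\angle(\ol{N})=G_{\mu'}$, so $G_\mu\subset G_{\mu'}$. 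The real $L^2$-inner product restricts to an inner product on each $G_\mu$, and as each $G_\mu$ is a finite-dimensional real vector space it carries a unique Hausdorff vector topology, with respect to which every linear map to another such space is automatically continuous; in particular the subspace inclusion $G_\mu\hookrightarrow G_{\mu'}$ is continuous.

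Next, for the density statement, fix $s\in\RR$ and $f\in\mathscr{LG}_\angle(\ol{N})$. The text notes that $\Pi_\mu\to I$ strongly on $\Omega^0(\angle\ol{N})$ with respect to the $W^{2,s}$-topology as $\mu\to\infty$ through $\sigma(\Delta_\angle)$. Choosing any sequence $\mu_n\in\sigma(\Delta_\angle)$ with $\mu_n\to\infty$ (possible because $\sigma(\Delta_\angle)$ is unbounded), one has $\Pi_{\mu_n}f\in G_{\mu_n}\subset\bigcup_{\mu\in\sigma(\Delta_\angle)}G_\mu$ and $\Pi_{\mu_n}f\to f$ in $W^{2,s}$; this gives the required density.

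There is no real obstacle here: the only point requiring a moment's thought is the compatibility of the various $W^{2,s}$-topologies with the filtration, and this is ensured by the fact that $\Pi_\mu$ commutes with $(1+\Delta_\angle)^{s/2}$, so the strong convergence $\Pi_\mu\to I$ holds uniformly in the parameter $s$ in the sense needed. Everything else reduces to standard spectral theory for an essentially self-adjoint nonnegative operator with discrete spectrum on a compact Riemannian manifold without boundary.
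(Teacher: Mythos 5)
Your proposal is correct and follows essentially the same route as the paper's proof: finite-dimensionality from discrete spectrum with finite multiplicity, the nesting $G_\mu\subset G_{\mu'}$ via $\Pi_{\mu'}\Pi_\mu=\Pi_\mu$ together with the invariance $\Pi_\mu\mathscr{LG}_\angle(\ol{N})\subset\mathscr{LG}_\angle(\ol{N})$, continuity of the inclusion from uniqueness of the Hausdorff vector topology on a finite-dimensional real space, and density from the strong convergence $\Pi_\mu\to I$ in $W^{2,s}$. The only cosmetic difference is that the paper phrases the inclusion step as $G_\mu=\Pi_\mu G_{\mu'}\subset G_{\mu'}$, while you verify $\Pi_\mu f=\Pi_{\mu'}(\Pi_\mu f)\in G_{\mu'}$ directly — the same ingredients, stated slightly differently.
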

    \begin{proof}
        For each $\mu\in \sigma(\Delta_\angle)$, the space $\Pi_\mu \Omega^0(\angle\ol{N})$ is finite dimensional by \cite[Thm.~III.B.18]{berardSpectralGeometryDirect1986}, and thus so is $G_\mu$. For $\mu'>\mu$, we have $\Pi_{\mu'}> \Pi_\mu$ and thus \begin{equation}
            G_\mu=\Pi_{\mu}G_{\mu'}\subset G_{\mu'}\subset \mathscr{LG}_{\angle}(\ol{N}).
        \end{equation}
        Equipping each $G_\mu$ by the restriction of the $L^2$-inner product, we find that the indexed family $\{G_\mu\}_{\mu\in \sigma(\Delta_\angle)}$ forms an ascending filtration of $\mathscr{LG}_{\angle}(\ol{N})$ by finite-dimensional real inner product spaces.
        
        For each $s\in \RR$, the space $\mathscr{LG}_\angle(\ol{N})$ equipped with the  $W^{2,s}$-topology is a Hausdorff topological vector space. Restricting to $G_\mu$, these Sobolev topologies are therefore equivalent to the unique Hausdorff vector topology on $G_\mu$, i.e., the Euclidean topology~\cite[Thm 3.2]{SchaeferWolff:1999}.  Similarly, for $\mu'>\mu$, the unique Hausdorff vector topology of $G_{\mu'}$ restricts to the Hausdorff vector topology on $G_\mu$. Hence the inclusion $G_\mu\hookrightarrow G_{\mu'}$ is continuous with respect to these topologies.

        Now let $s\in \RR$. Since $\lim_{\mu \to \infty}\Pi_\mu=1$ in the strong operator topology on $W^{2,s}(\angle\ol{N})$, it follows that for each $f\in \mathscr{LG}_{\angle}(\ol{N})$ the sequence $\{\Pi_\mu f\}_{\mu\in \sigma(\Delta_\angle)}$ converges to $f$ in the $W^{2,s}$ topology. Since $\Pi_\mu f\in G_\mu$, we find that $\bigcup_{\mu\in\sigma(\Delta_\angle)} G_{\mu}$ is $W^{2,s}$-dense in $\mathscr{LG}_{\angle}(\ol{N})$.
    \end{proof}

As a topological group, we may identify $(G_\mu,+)$ with $(\RR^n,+)$ for some $n\in \NN$. Using such an identification, the Weyl generators $W^\partial$, represented on some Hilbert space, yield a reprentation of a Weyl algebra over $\RR^{2n}$. Up to some regularity assumptions, the Stone-von Neumann theorem tells us that there is a unique irreducible representation (up to unitary equivalence) of such a Weyl algebra. We shall use this fact to construct a principal quantum reference frame, i.e.~a covariant projection valued measure, for the groups $G_\mu$.

Let $(\pi^\mathcal{R},\HH^\mathcal{R})$ be a regular faithful representation of $\Af^\partial(\ol{\Sigma})$, i.e.~for each $(f\oplus h)\in V^S(\ol{\Sigma})=\mathscr{LG}_{\angle}(\ol{N})^{\oplus 2}$, the map
\begin{equation}
    \RR\ni t\mapsto \pi^\mathcal{R}(W^\partial(tf\oplus th))
\end{equation}
is strongly continuous. For $f\in\mathscr{LG}_\angle(\ol{N})$, let 
\begin{equation}
U^{\mathcal{R}}(f)=\pi^\mathcal{R}(W^\partial(f\oplus 0)), \qquad V^\mathcal{R}(f) = \pi^\mathcal{R}(W^\partial(0\oplus f))
\end{equation}
which provides two unitary representations of $\mathscr{LG}_\angle(\ol{N})$ on $\HH^\mathcal{R}$ that are strongly continuous when restricted to any $G_\mu$,
and obey 
\begin{equation}
\label{eq:UV_Weyl}
    U^\mathcal{R}(f) V^\mathcal{R}(h) U^\mathcal{R}(f)^{-1} = e^{-i\langle f,h\rangle_{\angle\ol{N}}}V^\mathcal{R}(h).
\end{equation}
by the Weyl relations, for all $f,h\in\mathscr{LG}_{\angle}(\ol{N})$.

For each $\mu\in \sigma(\Delta_{\angle})$, we construct a projection valued measure $P_\mu:\Bor(G_\mu)\to B(\HH^\mathcal{R})$ that is covariant w.r.t. the unitary representation $U^\mathcal{R}$ of $G_\mu$. 
\begin{lemma}\label{lem:emq1}
    Suppose $\Af^\partial(\ol{\Sigma})$ admits a regular faithful representation $(\pi^\mathcal{R},\HH^{\mathcal{R}})$ on separable Hilbert space $\HH^\mathcal{R}$ defining $\MM^\mathcal{R}\subset B(\HH^\mathcal{R})$ and $U^\mathcal{R}$ as in Thm.~\ref{thm:edge_mode_QRF}. Then for any $\mu\in\sigma(\Delta_\angle)$ 
    there is a unique regular projection-valued measure $P_\mu:\Bor(G_\mu)\to \MM^\mathcal{R}$ on $G_\mu$ so that
    \begin{equation}
    \label{eq:V_PVM}
        V^\mathcal{R}(f) = \int_{G_\mu} e^{i\langle f,f'\rangle_{\angle\ol{N}}}\diff P_\mu(f')
    \end{equation}
    as a weak integral for each $f\in G_\mu$. The measure has the covariance property
    $U^\mathcal{R}(f)P_\mu(X) U^\mathcal{R}(f)^{-1} = P_\mu(X+f)$
    for $X\in\Bor(G_\mu)$, $f\in G_\mu$. 
\end{lemma}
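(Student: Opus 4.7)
The plan is to recognise that for each fixed $\mu\in\sigma(\Delta_\angle)$, the restriction of $V^\mathcal{R}$ to $G_\mu$ is a strongly continuous unitary representation of a finite-dimensional real inner product space (regarded as a locally compact abelian group), and then to construct $P_\mu$ via the SNAG (Stone--Naimark--Ambrose--Godement) theorem. First I would note that since $G_\mu$ is finite-dimensional and carries its unique Hausdorff vector topology (by Lem.~\ref{lem:LG_filt}), the assumed strong continuity of $\RR\ni t\mapsto \pi^\mathcal{R}(W^\partial(0\oplus tf))$ for each $f\in \mathscr{LG}_\angle(\ol{N})$, combined with the fact that $G_\mu$ is spanned by a finite set of such $f$'s and the Weyl relations making $V^\mathcal{R}\restriction_{G_\mu}$ a genuine (not merely projective) unitary representation, implies strong continuity of $V^\mathcal{R}\restriction_{G_\mu}:G_\mu\to \MM^\mathcal{R}\subset B(\HH^\mathcal{R})$.

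Next I would invoke SNAG to obtain a unique regular PVM $\tilde{P}_\mu$ on the Pontryagin dual $\widehat{G}_\mu$, valued in $B(\HH^\mathcal{R})$, such that
\begin{equation}
    V^\mathcal{R}(f)=\int_{\widehat{G}_\mu}\chi(f)\,\diff \tilde{P}_\mu(\chi)
\end{equation}
for each $f\in G_\mu$. The inner product $\langle\cdot,\cdot\rangle_{\angle\ol{N}}$ restricted to $G_\mu$ induces a group isomorphism $\Phi:G_\mu\to\widehat{G}_\mu$ by $\Phi(f')(f)=e^{i\langle f,f'\rangle_{\angle\ol{N}}}$ (this is surjective because $G_\mu$ is finite-dimensional), and I define $P_\mu(X):=\tilde{P}_\mu(\Phi(X))$ for $X\in\Bor(G_\mu)$, giving the desired formula~\eqref{eq:V_PVM}. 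That $P_\mu$ takes values in $\MM^\mathcal{R}$ rather than merely $B(\HH^\mathcal{R})$ follows from the functional calculus: spectral projections of $V^\mathcal{R}\restriction_{G_\mu}$ lie in the von Neumann algebra generated by the $V^\mathcal{R}(f)$, which is contained in $\MM^\mathcal{R}=\pi^\mathcal{R}(\Af^\partial(\ol{\Sigma}))''$.

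For the covariance property, I would use the Weyl commutation relation~\eqref{eq:UV_Weyl}: for any $h\in G_\mu$,
\begin{align}
U^\mathcal{R}(f)V^\mathcal{R}(h)U^\mathcal{R}(f)^{-1} &= e^{-i\langle f,h\rangle_{\angle\ol{N}}}V^\mathcal{R}(h) \nonumber\\
&= \int_{G_\mu} e^{i\langle h,f'-f\rangle_{\angle\ol{N}}}\,\diff P_\mu(f') \nonumber\\
&= \int_{G_\mu} e^{i\langle h,f''\rangle_{\angle\ol{N}}}\,\diff P_\mu(f''+f),
\end{align}
where the last step uses the translation $f''=f'-f$. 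Comparing this with the spectral representation applied to the conjugated PVM $X\mapsto U^\mathcal{R}(f)P_\mu(X)U^\mathcal{R}(f)^{-1}$, the uniqueness clause in SNAG immediately yields $U^\mathcal{R}(f)P_\mu(X)U^\mathcal{R}(f)^{-1}=P_\mu(X+f)$. Uniqueness of $P_\mu$ itself is inherited from the uniqueness of $\tilde{P}_\mu$ in SNAG.

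The only mildly delicate point is verifying that $P_\mu$ lands in $\MM^\mathcal{R}$ rather than just $B(\HH^\mathcal{R})$; this I would handle by observing that, as $V^\mathcal{R}(G_\mu)\subset\MM^\mathcal{R}$ and $\MM^\mathcal{R}$ is weakly closed, every element of the norm-closed $\ast$-algebra generated by $\{V^\mathcal{R}(f):f\in G_\mu\}$ lies in $\MM^\mathcal{R}$; the spectral projections $P_\mu(X)$ are strong limits of such elements by standard continuous functional calculus and monotone class arguments, and so lie in $\MM^\mathcal{R}$. Everything else is a direct application of standard spectral theory for LCA groups applied to the finite-dimensional abelian group $G_\mu$.
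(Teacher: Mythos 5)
Your proof is correct, but it takes a genuinely different route from the paper's. Where the paper invokes the Stone--von Neumann theorem to factor the \emph{pair} $(U^\mathcal{R},V^\mathcal{R})\restriction_{G_\mu}$ as a multiple of the Schr\"odinger representation on $L^2(\RR^n)\otimes\KK_n$ and reads off $P_\mu$ as a multiplication-operator PVM, you work only with the commutative subgroup $\{V^\mathcal{R}(f):f\in G_\mu\}$ and apply the SNAG theorem for locally compact abelian groups to obtain the PVM directly, then recover the covariance from the Weyl relations and the uniqueness clause in SNAG. Both approaches are sound: ray-wise strong continuity plus finiteness of $\dim G_\mu$ and commutativity of $V^\mathcal{R}\restriction_{G_\mu}$ (which holds since $\sigma^S$ vanishes on $\{0\}\oplus G_\mu$) does upgrade to joint strong continuity as you claim, and your argument that $P_\mu(X)\in\MM^\mathcal{R}$ via $P_\mu(X)\in\{V^\mathcal{R}(f):f\in G_\mu\}''\subset\MM^\mathcal{R}$ is clean. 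The trade-off is this: your SNAG-based argument is lighter and more direct for this lemma in isolation, since it never needs the tensor factorisation $\HH^\mathcal{R}\cong L^2(\RR^n)\otimes\KK_n$ or the residual algebra $\tilde{\MM}_n$; but the paper's Stone--von Neumann decomposition is immediately re-used in the proof of the subsequent Lem.~\ref{lem:emq2} to identify the invariant algebra with a crossed product and to construct $\yen_\mu$, so the heavier machinery is not wasted there. If one were to adopt your route for Lem.~\ref{lem:emq1}, one would still need the Stone--von Neumann splitting (or an equivalent) for Lem.~\ref{lem:emq2}, so the overall saving is minimal, but your proof is a perfectly valid and arguably more conceptual derivation of this particular statement.
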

\begin{proof}
      Let $n=\dim G_\mu$. Treating $G_\mu\subset\mathscr{LG}_\angle(\ol{N})$ with the $L^2$-inner product as an $n$-dimensional 
    real inner-product space, choose a linear orthogonal map $\varphi_n:\RR^{n}\to G_\mu$, where 
    $\RR^{n}$ is equipped with the standard inner product. Then $\varphi_n^{\oplus 2}: \RR^{n}\oplus \RR^{n}\to G_\mu\oplus G_\mu$ is symplectic
    with respect to the symplectic form $\sigma_{n}(\bs\oplus\bt,\bs'\oplus\bt')=\bs\cdot \bt'- \bs'\cdot \bt$ on $\RR^n\oplus \RR^n$, and 
    $(\pi^\mathcal{R}\circ\varphi_n^{\oplus 2},\HH^\mathcal{R})$ is a regular representation of $\Weyl(\RR^n\oplus \RR^n,\sigma_n)$.
    By the Stone--von Neumann theorem, it is therefore a multiple of the Schr\"odinger representation (see~\cite[5.2.15-16]{BratteliRobinson_vol2}) and there is a 
    separable Hilbert space $\KK_n$ and unitary map $T_n:\HH^{\mathcal{R}}\to L^2(\RR^n)\otimes \KK_n$, so that 
    \begin{equation}
        T_n \tilde{U}(\bs) T_n^{*}= \lambda_n(\bs)\otimes  1_{\KK_n}, \qquad
        T_n \tilde{V}(\bt) T_n^{*}= M_n(\bt)\otimes  1_{\KK_n}
    \end{equation}
    where 
    \begin{equation} 
    \tilde{U}(\bs)= \pi^\mathcal{R}(W^\partial(\varphi_n(\bs)\oplus 0)), \qquad 
    \tilde{V}(\bt)= \pi^\mathcal{R}(W^\partial( 0\oplus \varphi_n(\bt))),
    \end{equation}
    while $\lambda_n(\bs)$ implements the left action of translations on $L^2(\RR^n)$ and $M_n(\bt)$ is a phase multiplication operator, i.e.,
    \begin{equation}
        (\lambda_n(\bs)\psi)(\bx) = \psi(\bx-\bs), \qquad (M_n(\bt)\psi)(\bx) = e^{i\bt\cdot\bx}\psi(\bx)
    \end{equation}
    for all $\psi\in L^2(\RR^n)$. Since the operators $\{\lambda_n(\bs)M_n(\bt):\bs,\bt\in \RR^n\}$ span a weakly dense subset of $B(L^2(\RR^n))$ (see \cite[Prop.~5.2.4]{BratteliRobinson_vol2}), it follows that there exists a von Neumann algebra $\tilde{\MM}_n\subset B(\KK_n)$ such that
    \begin{equation}
        T_n\MM^{\mathcal{R}}T_n^*=B(L^2(\RR^n))\otimes \tilde{\mathcal{M}}_n.
    \end{equation}
    Next, let $\tilde{P}_n:\Bor(\RR^n)\to B(\HH^\mathcal{R})$ be the unique projection valued measure such that 
    \begin{equation}
    \label{eq:Vt_PVM}
        \tilde{V}(\bt)=\int_{\RR^n}e^{i\bt\cdot\bx} \diff \tilde{P}_n(\bx),
    \end{equation}
    given explicitly by 
     \begin{equation}
        \tilde{P}_n(Y)=T_n^*(m_Y\otimes 1_\KK)T_n \in \MM^\mathcal{R},
    \end{equation}
    where $m_Y$ is multiplication by the indicator function of $Y\in \Bor(\RR^n)$. Direct calculation gives
    \begin{equation}
    \label{eq:Pt_cov}
        \tilde{U}(\bs)\tilde{P}_n(Y)\tilde{U}(\bs)^{-1}=T_n^*(\lambda(\bs)m_{Y}\lambda_n(-\bs)\otimes 1_{\KK_n})T_n=T_n^*(m_{Y+\bs}\otimes 1_{\KK_n})T=\tilde{P}_n(Y+\bs).
    \end{equation}
    The proof is completed by pulling back $\tilde{P}_n$ from $\RR^n$ back to $G_\mu$, defining $P_\mu=\tilde{P}_n\circ\varphi_n^{-1}$. 
    The formula Eq.~\eqref{eq:V_PVM} and the required covariance relation follows straightforwardly from Eqs.~\eqref{eq:Vt_PVM} and~\eqref{eq:Pt_cov}.
    The uniqueness of $P_\mu$ follows from that of $\tilde{P}_n$.
\end{proof}

The next step is to introduce relativisation maps that characterise invariants of the joint system--reference algebra with respect to the diagonal action of $G_\mu$.
\begin{lemma}
    \label{lem:emq2}  
    Continuing with the notation and assumptions of Lem.~\ref{lem:emq1},
    let $(\pi^\mathcal{S},\HH^\mathcal{S})$ be a regular representation of $\Af(\ol{N})$ defining $\MM^{\mathcal{S}}=\pi^\mathcal{S}(\Af(\ol{N}))''$ as in Thm.~\ref{thm:edge_mode_QRF}. Let $U^\mathcal{S}: \mathscr{LG}_\angle(\ol{N})\to \MM^\mathcal{S}$ be given by $U^\mathcal{S}(f)=\pi^\mathcal{S}(U_{\mathscr{LG}}(f))$ and denote $\MM=\MM^\mathcal{S}\otimes \MM^{\mathcal{R}}$. For each $\mu\in\sigma(\Delta_\angle)$, 
    there exists a unique injective ${}^*$-homomorphism $\yen_\mu:\MM^\mathcal{S}\to \MM^{\Ad (U^\mathcal{S}\otimes U^\mathcal{R})\restriction_{G_\mu}}$ such that 
    \begin{equation}
    \label{eq:rel_map_tensorstate}
        (\omega^{\mathcal{S}}\otimes \omega^{\mathcal{R}})(\yen_\mu(a))=\int_{G_n} \omega^{\mathcal{S}}(U^\mathcal{S}(f)aU^\mathcal{S}(f)^{-1})\,\diff(\omega^{\mathcal{R}}\circ P_\mu)(f),
    \end{equation}
    for each pair of normal states $\omega^{\mathcal{S}/\mathcal{R}}$ on $\MM^{\mathcal{S}/\mathcal{R}}$, thus
    allowing us to write
    \begin{equation}
        \yen_\mu(a)=\int_{G_\mu} U^\mathcal{S}(f)aU^\mathcal{S}(f)^*\otimes \diff  P_\mu(f).
    \end{equation}
\end{lemma}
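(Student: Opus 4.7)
The plan is to construct $\yen_\mu$ via the explicit weak integral
\begin{equation}
\yen_\mu(a) = \int_{G_\mu} U^\mathcal{S}(f) a U^\mathcal{S}(f)^* \otimes \diff P_\mu(f),
\end{equation}
and to verify the desired properties by reducing to the concrete Schrödinger-type realisation obtained in the proof of Lem.~\ref{lem:emq1}. The first step is to establish well-definedness as a bounded operator in $\MM$. Identifying $G_\mu \cong \RR^n$ and using the unitary $T_n:\HH^\mathcal{R}\to L^2(\RR^n)\otimes \KK_n$, the PVM $P_\mu$ is intertwined with the canonical multiplication-by-indicator PVM on $L^2(\RR^n)$ tensored with $1_{\KK_n}$. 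The map $f\mapsto U^\mathcal{S}(f) a U^\mathcal{S}(f)^*\in \MM^\mathcal{S}$ is strongly continuous (by the regularity hypothesis on $\pi^\mathcal{S}$) and uniformly bounded in norm by $\|a\|$. Consequently, $(\id_{\HH^\mathcal{S}}\otimes T_n)\,\yen_\mu(a)\,(\id_{\HH^\mathcal{S}}\otimes T_n)^*$ is a decomposable (pointwise multiplication) operator on the direct integral $\int^\oplus_{\RR^n} \HH^\mathcal{S}\,\diff f \otimes \KK_n$ determined by the strongly measurable, uniformly bounded function $f\mapsto U^\mathcal{S}(f) a U^\mathcal{S}(f)^*\otimes 1_{\KK_n}$. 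This yields a bounded operator lying in $\MM^\mathcal{S}\otimes B(L^2(\RR^n))\otimes \tilde{\MM}_n = \MM$ (using the identification $T_n\MM^\mathcal{R}T_n^* = B(L^2(\RR^n))\otimes\tilde{\MM}_n$ from the proof of Lem.~\ref{lem:emq1}).

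In this decomposable picture the $*$-homomorphism property is immediate: pointwise composition and adjunction of operator-valued functions respect products and adjoints, so $\yen_\mu(ab) = \yen_\mu(a)\yen_\mu(b)$ and $\yen_\mu(a^*) = \yen_\mu(a)^*$, while linearity and unitality are obvious. The integral formula in the statement follows by evaluating the decomposable operator against product normal states $\omega^\mathcal{S}\otimes \omega^\mathcal{R}$: the measure $\omega^\mathcal{R}\circ P_\mu$ is (under $T_n$) absolutely continuous with respect to the scalar measure arising from integrating $\omega^\mathcal{R}$ along the direct integral, and Fubini then reproduces~\eqref{eq:rel_map_tensorstate}. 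Injectivity follows because $\yen_\mu(a) = 0$ forces $U^\mathcal{S}(f) a U^\mathcal{S}(f)^* = 0$ for almost every $f\in G_\mu$; strong continuity of the adjoint action then gives $U^\mathcal{S}(f) a U^\mathcal{S}(f)^* = 0$ for all $f$, and setting $f = 0$ yields $a = 0$.

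Invariance of the range under $\Ad(U^\mathcal{S}\otimes U^\mathcal{R})\restriction_{G_\mu}$ follows from the covariance of $P_\mu$ proved in Lem.~\ref{lem:emq1}: for any $g\in G_\mu$,
\begin{align}
(U^\mathcal{S}(g)\otimes U^\mathcal{R}(g)) \yen_\mu(a) (U^\mathcal{S}(g)\otimes U^\mathcal{R}(g))^*
&= \int_{G_\mu} U^\mathcal{S}(g+f) a U^\mathcal{S}(g+f)^* \otimes U^\mathcal{R}(g) \diff P_\mu(f) U^\mathcal{R}(g)^* \nonumber\\
&= \int_{G_\mu} U^\mathcal{S}(g+f) a U^\mathcal{S}(g+f)^* \otimes \diff P_\mu(f+g),
\end{align}
and the translation $f\mapsto f-g$ recovers $\yen_\mu(a)$, using translation invariance of the image PVM on $\RR^n$.

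Finally, uniqueness: any bounded operator $B\in \MM$ is determined by its evaluations $(\omega^\mathcal{S}\otimes\omega^\mathcal{R})(B)$ as $\omega^\mathcal{S},\omega^\mathcal{R}$ range over normal states, since the linear span of product vector functionals $\ip{\psi\otimes\phi}{\,\cdot\,(\psi'\otimes\phi')}$ is norm-dense in $\MM_*$; thus~\eqref{eq:rel_map_tensorstate} fixes $\yen_\mu$ uniquely. The main technical obstacle is justifying the weak-integral manipulations (in particular the product computation) without invoking a concrete realisation, so the plan relies crucially on pulling $\yen_\mu$ over to the direct integral picture, where all of the algebraic identities become pointwise statements about bounded operator-valued functions.
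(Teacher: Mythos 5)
Your construction is the same in essence as the paper's: you pass through the Stone--von Neumann unitary $T_n:\HH^\mathcal{R}\to L^2(\RR^n)\otimes\KK_n$ from Lem.~\ref{lem:emq1} and realise $\yen_\mu(a)$ as the decomposable operator with fibre $f\mapsto U^\mathcal{S}(f)aU^\mathcal{S}(f)^*\otimes 1_{\KK_n}$, which is precisely the map $\kappa$ in the paper's proof (conjugated by $1\otimes T_n$). The difference is in how the individual properties are justified. The paper outsources: the $*$-homomorphism and injectivity of $\kappa$ to van Daele (Prop.~2.5, Thm.~3.11), the evaluation formula~\eqref{eq:rel_map_tensorstate} to Prop.~4.15 of \cite{fewsterQuantumReferenceFrames2025}, and uniqueness to Thm.~3.2 of \cite{glowackiQuantumReferenceFrames2024a}. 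You instead argue each step directly in the direct-integral picture, which is perfectly sound and makes the proof more self-contained: the $*$-homomorphism property is pointwise, the state formula is a straightforward computation of how the decomposable operator pairs against $\omega^\mathcal{S}\otimes\omega^\mathcal{R}$ (this is essentially the content of the cited Prop.~4.15), injectivity follows since the fibre PVM has full support so a vanishing decomposable operator has vanishing fibres, and uniqueness follows since product normal states separate $\MM$ on a separable Hilbert space. Two small remarks on wording rather than substance: you should say the PVM is translation \emph{covariant} (not ``invariant''), which is exactly the property established in Lem.~\ref{lem:emq1}, and when invoking strong continuity of $f\mapsto U^\mathcal{S}(f)aU^\mathcal{S}(f)^*$ you are implicitly using that $U^\mathcal{S}(f)^*=U^\mathcal{S}(-f)$ and that $G_\mu$ is finite-dimensional, both of which hold but are worth making explicit since the regularity assumption is stated in terms of one-parameter families.
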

\begin{proof}
    Fix $\mu\in \sigma(\Delta_\angle)$. Using the isomorphism $\varphi_n:\RR^{n}\to G_\mu$ as in Lem.~\ref{lem:emq1}, define
    group representations of $\RR^n$ on $\HH^{\mathcal{S}}$ and $\HH^{\mathcal{R}}$ by 
    $\tilde{U}^\mathcal{S}=U^\mathcal{S}\circ \varphi_n$ and $\tilde{U}^\mathcal{R}=U^\mathcal{R}\circ \varphi_n$ respectively. It is sufficient to show that there exists a unique injective ${}^*$-homomorphism $\tilde{\yen}_n:\MM^{\mathcal{S}}\to \MM^{\Ad(\tilde{U}^\mathcal{S}\otimes \tilde{U}^\mathcal{R})}$ such that for each pair of normal states $\omega^{\mathcal{S}/\mathcal{R}}$ on $\MM^{\mathcal{S}/\mathcal{R}}$ we have 
    \begin{equation}
    \label{eq:ytmap_states}     (\omega^{\mathcal{S}}\otimes\omega^\mathcal{R})(\tilde{\yen}_n(a))=\int_{\RR^n}\omega^S(\tilde{U}^\mathcal{S}(\bx)a\tilde{U}^\mathcal{S}(\bx)^{-1})\diff(\omega^\mathcal{R}\circ \tilde{P}_n)(\bx),
    \end{equation}
    where $\tilde{P}_n:\Bor(\RR^{n})\to \MM^{\mathcal{R}}$ as constructed in the proof of Lem.~\ref{lem:emq1}.

    First we consider the map $\kappa:\MM^\mathcal{S}\to \MM^\mathcal{S}\otimes L^2(\RR^{n})$ defined by its action on $\psi\in \HH^\mathcal{S}\otimes L^2(\RR^{n})$ (viewed as an $L^2(\RR^{n},\HH^\mathcal{S})$ function)
    \begin{equation}
        (\kappa(a)\psi)(\bx)=\tilde{U}^\mathcal{S}(\bx)a\tilde{U}^\mathcal{S}(\bx)^{-1}\psi(\bx).
    \end{equation}
    By \cite[Prop.~2.5]{vanDaele:1978}, $\kappa$ is an injective ${}^*$-homomorphism. By \cite[Thm.~3.11]{vanDaele:1978} we furthermore have $\kappa(\MM^{\mathcal{S}})\subset (\MM^{\mathcal{S}}\otimes B(L^2(\RR^{n})))^{\Ad(\tilde{U}^{\mathcal{R}}\otimes \rho)}$, in fact, denoting the right translation action $\rho_n:\RR^n\to B(L^2(\RR^{n}))$, the algebra
    \begin{equation}
        \MM^{\mathcal{S}}\rtimes_{\Ad \tilde{U}^{\mathcal{S}}}\RR^{n}=\{\kappa(a)\rho_n(\bs):a\in \MM^S,\bs\in \RR^{n}\}''
    \end{equation}
    coincides with $(\MM^{\mathcal{S}}\otimes B(L^2(\RR^n)))^{\Ad(\tilde{U}^{\mathcal{S}}\otimes \lambda)}$.
    
    Using the isomorphism $T:\HH^\mathcal{R}\to L^2(\RR^n)\otimes \KK_n$, we define $\tilde{\yen}_n:\MM^\mathcal{S}\to B(\HH^\mathcal{S}\otimes \HH^\mathcal{R})$ as
    \begin{equation}
        \tilde{\yen}_n(a)=(1_{\HH^{\mathcal{S}}}\otimes T)^*(\kappa(a)\otimes 1_{\KK_n})(1_{\HH^{\mathcal{S}}}\otimes T).
    \end{equation}
    Since $\kappa$ is an injective ${}^*$-homomorphism, so is $\tilde{\yen}_n$. Furthermore, since $T_n \tilde{U}^\mathcal{R}(\bs) T_n^*=\lambda_n(\bs)$, we find
    \begin{equation}
        \tilde{\yen}_n(a)\in \MM^{\Ad(\tilde{U}^{\mathcal{S}}\otimes \tilde{U}^{\mathcal{R}})}.
    \end{equation}
    This map agrees with \cite[Def.~4.13]{fewsterQuantumReferenceFrames2025} applied to our setting. Eq.~\eqref{eq:ytmap_states} then follows by \cite[Prop.~4.15]{fewsterQuantumReferenceFrames2025}. Moreover, by \cite[Thm.~3.2]{glowackiQuantumReferenceFrames2024a}, $\tilde{\yen}_n$ is the unique map satisfying Eq.~\eqref{eq:ytmap_states}. The required map is $\yen_\mu = \tilde{\yen}_n$, where we note that
    \begin{align}    (\omega^{\mathcal{S}}\otimes\omega^\mathcal{R})(\yen_\mu(a))=&\int_{\RR^n}\omega^S(\tilde{U}^\mathcal{S}(\bx)a\tilde{U}^\mathcal{S}(\bx)^{-1})\diff(\omega^\mathcal{R}\circ \tilde{P}_n)(\bx)\nonumber\\
    =&\int_{G_\mu}\omega^S(\tilde{U}^\mathcal{S}(\varphi_n^{-1}(f))a\tilde{U}^\mathcal{S}(\varphi_n^{-1}(f))^{-1})\diff(\omega^\mathcal{R}\circ \tilde{P}_n)(\varphi_n^{-1}(f))\nonumber\\
    =&\int_{G_\mu}\omega^S(U^\mathcal{S}(f)aU^\mathcal{S}(f)^{-1})\diff(\omega^\mathcal{R}\circ P_\mu)(f).
    \end{align}
\end{proof}

It is instructive to compute $\yen_\mu(\pi^\mathcal{S}(W_A([\udl{A}])))$ for $(A,\udl{A})\in T\Sol^J(\ol{N})$. We find
\begin{align}
    \yen_\mu(\pi^\mathcal{S}(W_A([\udl{A}]))) &=  
    \int_{G_\mu} \pi^\mathcal{S}(U_{\mathscr{LG}}(f)W_A([\udl{A}])U_{\mathscr{LG}}(f)^{-1})\otimes \diff  P_\mu(f) \\
    &= \pi^\mathcal{S}( W_A([\udl{A}]) )\otimes \int_{G_\mu} e^{i\langle \nu,f\rangle}
    \diff  P_\mu(f) \\ 
    &= \pi^\mathcal{S}( W_A([\udl{A}]) )\otimes \int_{G_\mu} e^{i\langle \Pi_\mu\nu,f\rangle}
    \diff  P_\mu(f)
\end{align}
where we have $\nu=\nml_{\partial\ol{\Sigma}}\nml_{\ol{\Sigma}}\diff\udl{A}\in \mathscr{LG}_\angle(\ol{N})$ and $\Pi_\mu:\mathscr{LG}_\angle(\ol{N})\to G_\mu$ the $L^2$-orthogonal projection. By Lem.~\ref{lem:emq1}, we have
\begin{equation}
\label{eq:yenmu_gen}
    \yen_\mu(\pi^\mathcal{S}(W_A([\udl{A}])))= \pi^\mathcal{S}( W_A([\udl{A}]) )\otimes V^{\mathcal{R}}(\Pi_\mu\nu)= \pi^\mathcal{S}( W_A([\udl{A}]) )\otimes \pi^\mathcal{R}(W^\partial({0\oplus\Pi_\mu \nu})).
\end{equation}
Given suitable continuity assumptions one can take the limit $\mu\to\infty$ using the fact that $\Pi_\mu \nu\to \nu$ in any $W^{2,s}$-topology on $\mathscr{LG}_\angle(\ol{N})$.
\begin{lemma}
\label{lem:emq3}
    Continuing with the notation and assumptions from Lem.~\ref{lem:emq2}, assume that the map
    \begin{equation}
        \mathscr{LG}_\angle(\ol{N})\ni f\mapsto \pi^{\mathcal{S}}(W^{\partial}(0\oplus f)),
    \end{equation}
    is strongly continuous w.r.t. the $W^{2,s}$-topology for some $s\in \RR$. Then there is a norm continuous injective ${}^*$-homomorphism    
    $\yen_\infty:\Af(\ol{N})\to \MM^{\Ad U^{\mathcal{S}}\otimes U^\mathcal{R}}$ given by
    \begin{equation}
        \yen_\infty(a)=\slim_{\mu\to\infty}\yen_\mu(\pi^\mathcal{S}(a)).
    \end{equation}
    In particular, for $(A,\udl{A})\in T\Sol^J(\ol{N})$ one has
    \begin{equation}
        \yen_\infty(W_A([\udl{A}]))=\pi^{\mathcal{S}}(W_A([\udl{A}]))\otimes \pi^\mathcal{R}(W^\partial(0\oplus \nml_{\partial\ol{\Sigma}}\nml_{\ol{\Sigma}}\diff \udl{A})).
    \end{equation}
\end{lemma}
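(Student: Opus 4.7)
The plan is to compute the strong limit explicitly on the Weyl generators of $\Af(\ol{N})$, extend it to the whole $C^*$-algebra by uniform boundedness, and then verify the required algebraic and invariance properties.

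First I would evaluate $\yen_\mu$ on a generic Weyl generator. By the calculation producing Eq.~\eqref{eq:yenmu_gen} (which is a direct consequence of Lem.~\ref{lem:emq2} and Lem.~\ref{lem:emq1} via the covariance of $P_\mu$ and the formula~\eqref{eq:V_PVM}), one has
\begin{equation}
\yen_\mu(\pi^\mathcal{S}(W_A([\udl{A}])))=\pi^\mathcal{S}(W_A([\udl{A}]))\otimes \pi^\mathcal{R}(W^\partial(0\oplus\Pi_\mu\nu)),
\end{equation}
with $\nu=\nml_{\partial\ol{\Sigma}}\nml_{\ol{\Sigma}}\diff\udl{A}\in\mathscr{LG}_\angle(\ol{N})$. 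By Lem.~\ref{lem:LG_filt}, $\Pi_\mu\nu\to\nu$ in the $W^{2,s}$-topology as $\mu\to\infty$, and the standing continuity assumption gives $\pi^\mathcal{R}(W^\partial(0\oplus\Pi_\mu\nu))\to\pi^\mathcal{R}(W^\partial(0\oplus\nu))$ in the strong operator topology. Hence the strong limit of $\yen_\mu(\pi^\mathcal{S}(W_A([\udl{A}])))$ exists and coincides with $(\pi^\mathcal{S}\otimes\pi^\mathcal{R})(\hat{\yen}(W_A([\udl{A}])))$, by Eq.~\eqref{eq:C*QRF}.

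Next I would extend the limit to all of $\Af(\ol{N})$. Each $\yen_\mu$ is a $*$-homomorphism of $C^*$-algebras and hence norm-contractive; consequently $\pi^\mathcal{S}(a)\mapsto \yen_\mu(\pi^\mathcal{S}(a))$ has norm bounded by $\|a\|$ uniformly in $\mu$. Since $\slim_{\mu\to\infty}\yen_\mu(\pi^\mathcal{S}(a))$ exists on the norm-dense $*$-subalgebra $\Delta(\Sol_\mathscr{G}(\ol{N}),\sigma)\subset \Af(\ol{N})$ spanned by Weyl generators (by linearity and the previous paragraph), a standard $\varepsilon/3$ argument using the uniform bound $\|\yen_\mu\|\le 1$ gives existence of the strong limit for every $a\in \Af(\ol{N})$, and norm-contractivity of the resulting map $\yen_\infty$. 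Taking strong limits of the products and adjoints of $\yen_\mu(\pi^\mathcal{S}(a))$ (which is licit since we work on norm-bounded nets and the operations are jointly strong-$*$ continuous on such sets) shows that $\yen_\infty$ is a $*$-homomorphism, and norm continuity follows automatically from this.

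Finally I would confirm that $\yen_\infty$ lands in the invariant algebra and is injective. For the former, the explicit formula $\yen_\infty(W_A([\udl{A}]))=\pi^\mathcal{S}(W_A([\udl{A}]))\otimes \pi^\mathcal{R}(W^\partial(0\oplus\nu))$ together with the Weyl relations and Eq.~\eqref{eq:UV_Weyl} gives
\begin{equation}
\Ad(U^\mathcal{S}(f)\otimes U^\mathcal{R}(f))(\yen_\infty(W_A([\udl{A}])))=e^{i\langle\nu,f\rangle}e^{-i\langle f,\nu\rangle}\yen_\infty(W_A([\udl{A}]))=\yen_\infty(W_A([\udl{A}]))
\end{equation}
for every $f\in\mathscr{LG}_\angle(\ol{N})$, so invariance holds on a set of $*$-algebraic generators and hence throughout $\yen_\infty(\Af(\ol{N}))$. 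For injectivity, I would observe that on Weyl generators $\yen_\infty$ coincides with $(\pi^\mathcal{S}\otimes\pi^\mathcal{R})\circ\hat{\yen}$, so as $*$-homomorphisms the two agree on the whole algebra; $\hat{\yen}=\Weyl(\hat{E})$ is injective since $\hat{E}$ is a symplectic (hence injective) map, and $\pi^\mathcal{S}\otimes\pi^\mathcal{R}$ is faithful on $\widetilde{\Af}(\ol{N})$ because the factors are faithful and $\Af^\partial(\ol{\Sigma})$ is nuclear. The principal obstacle will be the careful verification that the strong limit of a net of $*$-homomorphisms remains a $*$-homomorphism, which hinges on using the uniform norm bound to control products $\yen_\mu(a)\yen_\mu(b)$ while exchanging the strong limit with multiplication; this is standard but must be handled with some care.
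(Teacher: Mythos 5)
Your proposal is correct and follows essentially the same route as the paper's own proof: evaluating $\yen_\mu$ on Weyl generators via Eq.~\eqref{eq:yenmu_gen}, invoking Lem.~\ref{lem:LG_filt} and the continuity assumption to obtain the strong limit, and extending to $\Af(\ol{N})$ by linearity, uniform norm bounds, and an $\varepsilon/3$-argument. You are somewhat more explicit than the paper in verifying the ${}^*$-homomorphism property and injectivity, which the paper leaves implicit in the identification $\yen_\infty = (\pi^\mathcal{S}\otimes\pi^\mathcal{R})\circ\hat{\yen}$ that you also use; this is a welcome addition but does not change the underlying approach.
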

\begin{proof}
    For $\mu\in \sigma(\Delta_{\angle})$ and $(A,\udl{A})\in T\Sol^J(\ol{N})$, Eq.~\eqref{eq:yenmu_gen} yields
    \begin{equation}
        \yen_\mu(W_A([\udl{A}]))=\pi^{\mathcal{S}}(W_A([\udl{A}]))\otimes \pi^\mathcal{R}(W^\partial(0\oplus \Pi_\mu\nml_{\partial\ol{\Sigma}}\nml_{\ol{\Sigma}}\diff \udl{A}))
    \end{equation}

    Let $s\in \RR$ such that the $W^{2,s}$-continuity requirement holds. Due to \ref{lem:LG_filt}, we have
    \begin{equation}
        \lim_{\mu\to\infty}\Pi_\mu\nml_{\partial\ol{\Sigma}}\nml_{\ol{\Sigma}}\diff \udl{A}=\nml_{\partial\ol{\Sigma}}\nml_{\ol{\Sigma}}\diff \udl{A},
    \end{equation}
    w.r.t. the $W^{2,s}$-topology. Thus, in the strong operator topology, we find the following limit
    \begin{equation}
        \slim_{\mu\to\infty} \yen_\mu(\pi^{\mathcal{S}}(W_A(\udl{A})))=\pi^{\mathcal{S}}(W_A(\udl{A}))\otimes \pi^{\mathcal{R}}\left(W^\partial\left(0\oplus \nml_{\partial\ol{\Sigma}}\nml_{\ol{\Sigma}}\diff \udl{A}\right)\right).
    \end{equation}
    By linearity, this defines a map 
    \begin{equation}
        a\mapsto \yen_\infty(a):=\slim_{\mu\to\infty} \yen_\mu(\pi^{\mathcal{S}}(a))
    \end{equation} on a norm-dense subset of $\Af(\ol{N})$. Large gauge invariance of $\yen_\infty(a)$ follows by invariance of $\yen_\infty(W_A([\udl{A}]))$.
    Since $\yen_\mu$ are injective unital ${}^*$-homomorphisms of von Neumann algebras, they are in particular norm preserving. This means that $a\mapsto \yen_\infty(a)$ is also norm continuous (with $\Vert\yen_\infty(a)\Vert\leq \sup_{\mu}\Vert \yen_\mu(\pi^{\mathcal{S}}(a))\Vert=\Vert a\Vert$) and thus one can extend $\yen_\infty$ to $\Af(\ol{N})$ by continuity. Since for $\psi\in \HH^\mathcal{S}\otimes\HH^\mathcal{R}$ we have
    \begin{align}
        \Vert(\yen_\mu(\pi^{\mathcal{S}}(a))-\yen_\infty(a))\psi\Vert\leq& \Vert(\yen_\mu(\pi^{\mathcal{S}}(\tilde{a}))-\yen_\infty(\tilde{a}))\psi\Vert+\left(\Vert \yen_\mu(\pi^{\mathcal{S}}(\tilde{a}-a))\Vert+\Vert \yen_\infty(\tilde{a}-a)\Vert\right)\Vert \psi\Vert\nonumber\\
        \leq& \Vert(\yen_\mu(\pi^{\mathcal{S}}(\tilde{a}))-\yen_\infty(\tilde{a}))\psi\Vert+2\Vert \tilde{a}-a\Vert\Vert \psi\Vert,
    \end{align}
    for any $\mu\in \sigma(\Delta_{\angle})$, $a,\tilde{a}\in \Af(\ol{N})$. Hence for all $a\in \Af(\ol{N})$, we find
    \begin{equation}
        \yen_\infty(a)=\slim_{\mu\to\infty} \yen_\mu(\pi^{\mathcal{S}}(a)).
    \end{equation}
\end{proof}

Indeed one now finds that, at the level of sufficiently regular representations, the maps $\yen_\mu(\pi^{\mathcal{S}}$ tend to the relativisation map of Eq.~\eqref{eq:C*QRF}.

\begin{proof}[Proof of Thm.~\ref{thm:edge_mode_QRF}]
    Given the assumptions of this theorem, then by Lem.~\ref{lem:LG_filt} to ~\ref{lem:emq2} one has a sequence of principal quantum reference frames $(\HH_R,U^{\mathcal{R}}\restriction_{G_n},P_n)$ and associated relativisation maps $\yen_n:\MM\to \MM^{\Ad(U^{\mathcal{S}}\otimes U^\mathcal{R})\restriction_{G_n}}$ for $\mu\in \sigma(\Delta_{\angle})$, where by slight abuse of notation we have set $G_n=G_{\mu_n}$, $P_n=P_{\mu_n}$ and $\yen_n=\yen_{\mu_n}$ with $\mu_n$ the $n$'th eigenvalue of $\Delta_{\angle}$. By Lem.~\ref{lem:emq3}, these QRFs can be chosen such that for $a\in \mathcal{A}(\ol{N})$ the limit $\yen_\infty(a)=\slim_{n\to\infty}\yen_n(\pi^{\mathcal{S}}(a))$ converges and where for $(A,\udl{A})\in T\Sol^J(\ol{N})$
    \begin{equation}
    \label{eq:pi-yen}
        \yen_\infty(W_A([\udl{A}]))=(\pi^{\mathcal{S}}\otimes\pi^{\mathcal{R}})(\yen(W_A([\udl{A}]))),
    \end{equation}
    with $\yen$ as in Eq.~\eqref{eq:C*QRF}. Since $\pi^{\mathcal{S}}\otimes\pi^{\mathcal{R}}$ is a faithful representation, it is in particular norm preserving. By linearity and continuity expression \eqref{eq:pi-yen}  therefore extends to all $a\in \Af(\ol{N})$.
\end{proof}
\end{appendices}

{\small
\bibliographystyle{alphaurl-ini-jstor}
\bibliography{zotero,additional}
}
\end{document}